\newtheorem{theorem}{Theorem}
\newtheorem{claim}[theorem]{Claim}
\newtheorem{lemma}[theorem]{Lemma}
\newtheorem{property}{Property}
\newtheorem{definition}{Definition}
\newtheorem{invariant}{Invariant}
\newtheorem{assumption}{Assumption}
\newcommand{\set}[1]{\left\{#1 \right\}}
\newcommand{\can}{\textsf{can}}
\DeclarePairedDelimiter{\tup}{\langle}{\rangle}
\DeclarePairedDelimiter{\parens}{(}{)}
\newcommand{\CalU}{\mathcal{U}}
\newcommand{\insr}{\textsf{insert}}
\newcommand{\del}{\textsf{delete}}
\newcommand{\lookup}{\textsf{lookup}}
\newcommand{\localind}[1]{\overset{#1}{\sim}}
\newcommand{\rank}{\mathit{rank}}
\newcommand{\dist}{\mathit{dist}}
\newcommand{\mem}{\mathit{val}}
\newcommand{\lookahead}{\mathit{next}}
\newcommand{\flag}{\mathit{mark}}
\newcommand{\LL}{\textsf{LL}}
\newcommand{\SC}{\textsf{SC}}
\newcommand{\DSC}{\textsf{DSC}}
\newcommand{\VL}{\textsf{VL}}
\newcommand{\seq}{\mathit{seq}}
\newcommand{\pos}{\mathit{pos}}
\newcommand{\resp}{\textit{resp}}
\newcommand{\pt}{\textit{pt}}
\newcommand{\canmult}{\can_{\textit{mult}}}
\newcommand{\canla}{\can^2}
\newcommand{\helpop}{\textsf{help\_op}}
\newcommand{\prop}{\textsf{propagate}}
\newcommand{\ncell}{m}
\newcommand{\nelem}{n}
\newcommand{\ndomain}{u}
\newcommand{\prio}[2]{#1_{p_{#2}}}
\newcommand{\Stable}{\mathit{S}}
\newcommand{\I}{\mathit{I}}
\newcommand{\D}{\mathit{D}}
\newcommand{\abort}{\textsf{abort}}
\newcommand{\posit}{\mathit{pos}}
\newcommand{\calU}{\CalU}
\newcommand{\Op}{\mathit{Op}}
\newcommand{\E}{\mathbb{E}}
\newcommand{\invref}[1]{Invariant~\hyperref[#1]{\ref*{#1}}}
\newcommand{\invenumref}[2]{Invariant~\hyperref[#1#2]{\ref*{#1}\ref*{#1#2}}}
\newcommand{\remove}[1]{}
\newcommand{\ns}[1]{{\color{blue} #1}}
\algnewcommand{\IfThenElse}[3]{
  \State \algorithmicif\ #1\ \algorithmicthen\ #2\ \algorithmicelse\ #3}
\newcommand{\newlinearrow}{\textcolor{gray}{\(\hookrightarrow\)}}
\algnewcommand\algorithmiccommentof[1]{\hfill \newlinearrow \quad \textcolor{blue}{\(\triangleright\) #1}}
\algnewcommand\Commentof{\Statex \algorithmiccommentof}
\algnewcommand{\LineComment}[1]{\Statex\textcolor{blue}{\(\triangleright\) #1}}
\title{History-Independent Concurrent Hash Tables}
\author{
  Hagit Attiya\\
  Technion, Israel\\
  \texttt{hagit@cs.technion.ac.il}
  \and
  Michael A. Bender\\
  Stony Brook University, USA \\
  \texttt{bender@cs.stonybrook.edu}
  \and
  Mart{\'\i}n Farach-Colton\\
  New York University, USA \\
  \texttt{martin@farach-colton.com}
  \and
  Rotem Oshman\\
  Tel-Aviv University, Israel\\
  \texttt{roshman@tau.ac.il}
  \and
  Noa Schiller\\
  Tel-Aviv University, Israel\\
  \texttt{noaschiller@mail.tau.ac.il}
}
\date{}
\begin{document}

\maketitle

\begin{abstract}
A \emph{history-independent} data structure does not reveal the history of operations applied to it, 
only its current logical state, even if its internal state is examined.
This paper studies history-independent concurrent \emph{dictionaries}, 
in particular, hash tables, 
and establishes inherent bounds on their space requirements. 

This paper shows
that there is a lock-free history-independent 
concurrent hash table, in which each memory cell stores two elements and two bits,
based on Robin Hood hashing. 
Our implementation is linearizable, 
and uses the shared memory primitive $\LL$/$\SC$.
The expected amortized step complexity of the hash table is $O(c)$,
where $c$ is an upper bound on the number of concurrent operations that access the 
same element, assuming the hash table is not overpopulated.
We complement this positive result by showing that even if we have only two concurrent processes,
no history-independent 
concurrent dictionary 
that supports sets of any size,
with \emph{wait-free} membership queries and \emph{obstruction-free} insertions and deletions,
can store only two elements of the set and a constant number of bits in each memory cell.
This holds even if the step complexity of operations on the dictionary is unbounded.

\end{abstract}

\section{Introduction}\label{sec:intro}

A data structure is said to be \emph{history independent} (HI)~\cite{Micciancio97,NaorTe01} if its representation (in memory or on disk) depends only the current logical state of the data structure and does not reveal the history of operations that led to this state.
History independence is a privacy property: for example,
a history-independent file system does not reveal that a file previously existed in the system but was then deleted,
or that one file was created before another.
The notion of history independence was introduced by Micciancio~\cite{Micciancio97},
and
Naor and Teague~\cite{NaorTe01} then formalized two now-classical notions of history independence: a data structure is \emph{weakly history independent} (WHI)  if it leaks no information
to an observer
who sees the representation a single time;
it is \emph{strongly history independent} (SHI) 
if it leaks no information
even to an observer who 
sees the representation at multiple points in the 
execution.
History independence has been extensively studied in sequential 
data structures~\cite{Micciancio97, NaorTe01, HartlineHoMo05,HartlineHoMo02,AcarBlHa04, BuchbinderPe03, BlellochGo07, NaorSeWi08, Golovin09, Golovin10,BenderBeJo16} (see Section~\ref{sec:related}),
and the foundational algorithmic work on history independence has 
found its way into secure storage systems and voting machines~\cite{BethencourtBoWa07, BajajSi13b,BajajChSi16,BajajChSi15, BajajSi13a,PoddarBoPo16,RocheAvCh16,ChenSi15,GKMT16}.
Surprisingly, despite the success of history independence
in the sequential setting, 
there are very few results on history independence in \emph{concurrent} data structures~\cite{AtBeFaCoOsSc24,ShunBlellochSPAA14}.
In this work we initiate the study of \emph{history-independent concurrent hash tables}.
A hash table implements a \emph{dictionary},
an abstract data type that supports insertions, deletions and lookups of items in a set.
Hash tables make for an interesting test case for history-independent concurrent data structures:
on the one hand,
they naturally disperse contention between threads
and allow for true parallelism~\cite{HerlihyShavitBook, PurcellH05, HerlihyShavitTzafrir08, ShalevShavit03, Michael02, Click2007}. %
On the other hand, many techniques that are useful in concurrent data structures are by nature not history independent:
for example,
\emph{timestamps} and
\emph{process identifiers} 
are often encoded into the data structure
to help manage contention between threads,
but they reveal information about the order of operations invoked in the past
and which process invoked them;
and \emph{tombstones} are sometimes used to mark an element as deleted without physically removing it, 
decreasing contention, 
but they disclose information about deleted elements
that are no longer in the dictionary.
In addition to not being history independent,
these standard tools of the trade require \emph{large memory cells},
often impractically so,
whereas hash tables are meant to serve as a space-efficient, lightweight dictionary.
Ideally,
if the universe size is $|\calU| = u$,
then
each memory cell should be of width $O(\log u)$,
storing a single element in the set and not much more.




\paragraph{Our goal.} 

%
It is tricky to define history independence in a concurrent setting~\cite{AtBeFaCoOsSc24}.
The bare minimum that one could ask for is \emph{correctness} 
(i.e., linearizability~\cite{HerlihyWing90}) in concurrent executions, 
and \emph{history independence} when restricted to sequential executions.  
To date there has been no hash table satisfying even this minimal notion.
A slightly stronger definition was proposed in~\cite{AtBeFaCoOsSc24}: that the data structure be linearizable 
in concurrent executions,
and be history independent when no state-changing operation is in progress.  This is called \emph{state-quiescent history independence (SQHI)}.
The open problem that we consider is:

\begin{quote}
Is there a linearizable lock-free SQHI hash table that
 uses $O(\ncell)$ memory cells, of width $O(\log u)$ each, to represent a set of $\ncell$ elements from a domain of size $u$, independent of the number of concurrent operations?
\end{quote}

\paragraph{Our results.}
Our main result is a positive one:

\begin{restatable}{theorem}{positive}
    \label{thm:upper}
    There is a lock-free%
    \footnote{\emph{Lock-freedom} requires that at any point in time, if we wait long enough, \emph{some} operation will complete.} 
    SQHI concurrent  hash table using the shared memory primitive $\LL$/$\SC$,%
    \footnote{See Section~\ref{sec:prelim} for the definition of $\LL$/$\SC$.}
    where each memory cell stores two elements and two bits
    (i.e., $2\lceil \log u\rceil  + 2$ bits). 
    The expected amortized step complexity of the hash table is $O(c)$, where $c$ is an upper bound on the number of concurrent operations that access the same element,
    assuming that the load%
    \footnote{The \emph{load} is the ratio of the number of elements stored in the hash table to the number of memory cells.}
     on the hash table is bounded away from 1.
\end{restatable}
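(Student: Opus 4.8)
The plan is to exhibit an explicit lock-free algorithm built on Robin Hood hashing and then establish its four guarantees --- linearizability, lock-freedom, state-quiescent history independence, and the $O(c)$ step bound --- one at a time. Recall that sequential Robin Hood hashing has a \emph{canonical} layout: for a fixed hash function $h$, the array contents are a deterministic function of the stored set $S$, because within each probe sequence the elements are kept sorted by displacement-from-home with a fixed tie-breaking rule. This canonical layout is the target invariant for every quiescent configuration, which is exactly what SQHI asks for (for a fixed $h$; averaging over the choice of $h$ then yields the distributional statement). The two spare bits per cell will act as \emph{status flags} that mark a cell as the source or the sink of a pending shift, and the second element slot in each cell is scratch space that lets one element be ``in transit'' between adjacent cells without ever being lost or duplicated --- this is what makes a single-cell $\LL$/$\SC$ enough to move elements around safely.

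First I would specify the three operations. A \lookup\ scans the probe sequence from the key's home cell and stops once it either sees the key or reaches an occupant with strictly smaller displacement (which certifies absence); this scan touches each cell once and never writes, so it is wait-free. An \insr\ first computes the canonical position where the new key belongs, then runs a forward \emph{propagate} cascade: a sequence of single-cell $\LL$/$\SC$ steps, each shifting one displaced element forward by one cell, coordinated through the scratch slot and the status bits so that each hand-off is atomic and any partially executed hand-off is recoverable. A \del\ is symmetric, using a backward propagate cascade. The key structural fact is that, given the current set, the required cascade is \emph{deterministic}; hence a process that encounters a half-finished cascade can read the flags and \emph{help} push it to completion before resuming (or abandoning) its own operation, which is the standard route to lock-freedom.

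The core of the proof --- and the step I expect to be hardest --- is a single invariant describing every reachable configuration as ``the canonical layout for the current set $S$, with at most one in-progress cascade applied up to some prefix,'' together with the bookkeeping of status bits and scratch slots that pins down that prefix. I would prove this invariant is preserved by every atomic step, including helping steps, and then read off both SQHI (no state-changing operation in progress $\Rightarrow$ no cascade in progress $\Rightarrow$ exactly the canonical layout) and linearizability (each \insr/\del\ is linearized at the unique $\LL$/$\SC$ that changes the element count, and each \lookup\ at the read of the decisive cell). The delicate case is concurrent state-changing operations on \emph{different} keys whose cascades would overlap in the array: I would argue that at any time at most one cascade is active within any maximal run of occupied cells, so operations with disjoint ranges never interfere, while operations whose ranges collide are forced to serialize through the shared cells (each, upon detecting a foreign cascade, helps it finish and then re-reads the relevant region). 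Showing that this never deadlocks and never corrupts the canonical layout --- in particular that interrupted cascades are always rolled forward, leaving no stray marks or orphaned elements --- is the technical heart of the argument.

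For the step complexity I would invoke the standard Robin Hood bounds: with load bounded away from $1$, every probe sequence has expected length $O(1)$, so the ``useful'' work of an operation --- the scan plus one cascade --- is $O(1)$ in expectation. The remaining cost comes from retries due to interference and from helping foreign cascades, and each such unit of work can be charged to one of the at most $c$ operations that access the same element, yielding the claimed $O(c)$ expected amortized step complexity. I expect this probabilistic accounting to be routine relative to the invariant argument above.
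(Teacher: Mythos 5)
Your plan diverges from what can actually work in two places, and both are load-bearing. First, the $\lookup$ you describe is read-only, wait-free, and certifies absence the moment it meets an occupant with strictly smaller displacement. Under concurrency that certificate is unsound: while the scanner is stalled between steps, concurrent deletions can pull the sought key \emph{backwards} past cells the scanner has already visited, so it reaches a lower-priority occupant and reports \textit{false} even though the key was in the table throughout --- a false negative that breaks linearizability. This is precisely the ``moves behind a lookup's back'' problem the paper's construction is built around: the second element per cell is not just scratch space for an element in transit, it is a \emph{lookahead} that lets a lookup decide presence or absence from a single atomic $\LL$ of one cell (or, when the evidence is split by an in-progress deletion, from two adjacent cells validated with $\VL$), and lookups must also restart when they detect backward motion and must help (i.e., perform $\SC$s on behalf of) operations blocking them. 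Moreover, the paper shows that with a one-cell lookahead plus $O(1)$ bits of metadata a \emph{wait-free} lookup is impossible --- only lock-freedom is achievable --- so the wait-free, write-free lookup you posit cannot be part of any correct SQHI implementation in this space regime; it is not a detail to be patched later.

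Second, your central invariant --- ``canonical layout with at most one in-progress cascade applied up to a prefix, at most one active cascade per maximal run, colliding operations serialize'' --- is not maintainable and is far weaker than what the proof needs. Many insertions and deletions propagate concurrently inside the same run (they may not overtake one another, but they coexist), so reachable configurations are canonical layouts of a \emph{multiset} with duplicated elements and several marked cells; the paper's argument rests on an extended ordering invariant that tolerates these duplicates, a precise stable/unstable cell structure for the hand-over-hand propagation, and crucially a responsibility-transfer argument: a process can return while ``its'' cascade is unfinished (for instance after a deletion punctures a run into two pieces), and SQHI holds only because some still-pending $\insr$ or $\del$ inherits the obligation to finish the stranded propagation --- lookups are even forbidden from puncturing runs precisely so that a lookup is never the sole process driving an unfinished update. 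Your implication ``no state-changing operation in progress $\Rightarrow$ no cascade in progress'' silently assumes every operation completes its own cascade before returning, which the helping mechanism does not guarantee. The complexity accounting likewise has to charge wasted steps to operations in the same \emph{run} (not only to the at most $c$ operations on the same element) and to control higher moments of the run length, but those are repairable; the lookup soundness and the single-cascade-per-run invariant are genuine gaps that would sink the proof as planned.
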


Our construction shows that issues of concurrency that are often handled by heavyweight mechanisms such as global sequence numbers, maintaining a list of all ongoing operations,
and so on,
can be resolved by maintaining in each memory cell a \emph{lookahead} that stores the element in the next cell,
and two bits indicating whether an insertion (resp.\ deletion) is ongoing in this cell.

Our implementation is based on Robin Hood hashing~\cite{CelisLaMu85},
a linear-probing hashing scheme
where 
keys that compete for the same location $i$ in the hash table are sorted by the distance of $i$ from their hash value 
(see Section~\ref{sec:hash_overview}).  We call the distance of cell $i$ from the key's hash value the \emph{priority} of the key in cell $i$.
Priority-based 
linear-probing hash tables have been used for history independence in a sequential~\cite{NaorTe01,BlellochGo07} or partially-concurrent~\cite{ShunBlellochSPAA14} setting,%
\footnote{The hash table constructed in~\cite{ShunBlellochSPAA14} is \emph{phase-concurrent}: it can handle concurrent operations of the same type (insert, delete or lookup), but not concurrent operations of different types (e.g., inserts and lookups).}
but these constructions are insensitive to the choice of priority mechanism.
We show that among all linear probing hash tables based on priority mechanisms with a lookahead, \emph{only} the priority mechanism used in Robin Hood hashing 
has the property that we can conclude an element is not in the table by only reading a single cell---a key property that our algorithm relies on.

As we said above, our hash table stores two keys and two extra bits in each cell.
We complement our construction with a negative result showing that it is \emph{necessary} to store extra information in the cells of the table beyond just a single key,
in order to have both SQHI and wait freedom:

\begin{restatable}{theorem}{tradeoff}[Informal]
\label{thm:natural-assig}
    There is no concurrent SQHI hash table representing sets of up to $\ncell < \ndomain$
    elements using m memory cells, each of which stores either a key that is in the set or $\bot$, and
    supporting wait-free lookups and obstruction-free insertions and deletions.%
    \footnote{
        \emph{Wait-freedom} requires that every operation must terminate 
        in a finite number of steps by the executing thread, 
        regardless of concurrency and contention from other threads. \emph{Obstruction-freedom} only requires an operation to terminate if the process executing it runs by itself for sufficiently long.}
    Furthermore, if each cell also includes $\lceil \log u \rceil$ and 
    constant number of additional bits,
    this impossibility holds for sufficiently large $u$ and $\ncell < \sqrt{\ndomain}$.
\end{restatable}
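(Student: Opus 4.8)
The plan is to derive a contradiction from the assumption that such a hash table exists, exploiting the tension between history independence (which pins down the memory representation as a function of the logical set alone) and the progress guarantees (which force insertions/deletions to make visible changes).

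First I would formalize the ``canonical representation'' implied by SQHI: since the table is history independent when no state-changing operation is in progress, for each logical set $S$ of size at most $\ncell$ there is a well-defined distribution (and in the deterministic-placement case, a single configuration) $\can(S)$ describing the contents of the $\ncell$ cells. Because each cell stores only a key in $S$ or $\bot$ (and, in the stronger version, $\lceil\log u\rceil + O(1)$ extra bits), the state-quiescent configuration is essentially a placement function from $S$ into cells. The key combinatorial observation is a counting/pigeonhole argument: the number of distinct state-quiescent configurations is limited by how the extra bits and key slots can be arranged, and I would compare this against the number of sets $S$ that must be distinguishable. In the bare version ($\bot$ or a key, no extra bits), a cell holding key $k$ is consistent with many different sets, so some single cell's contents cannot encode enough to let a wait-free lookup of some element $x\notin S$ terminate correctly — but to make this precise I need the execution-based argument below rather than pure counting.

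The heart of the argument is an indistinguishability/covering construction in the style of classical impossibility proofs (e.g., the Herlihy-style ``solo execution'' arguments and the covering arguments used for lower bounds on registers). I would fix a lookup of some element $x$, run it solo from a quiescent state representing a set $S$ with $x\notin S$; by wait-freedom it terminates, reading some finite set of cells $C$ and returning \textsf{false}. Now I would find a different set $S'$ with $x\in S'$ such that the canonical configuration of $S'$ agrees with that of $S$ on every cell in $C$ — then the same solo lookup run from $S'$ reads identical values, terminates identically, and wrongly returns \textsf{false}, a contradiction. The existence of such an $S'$ is where the parameters $\ncell<\ndomain$ (resp.\ $\ncell<\sqrt{\ndomain}$ with the extra bits) enter: because $x$ placed somewhere forces at most the cells in $C$ to change, and since there are many candidate elements to swap in and out while keeping $C$'s contents fixed, a pigeonhole over the (few) possible configurations of the cells in $C$ versus the (many) sets that must be represented yields two sets $S,S'$ — differing exactly in whether they contain $x$ — with matching contents on $C$. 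To control $|C|$ I would first use obstruction-freedom of insert/delete to bound how the configuration can change: inserting or deleting a single element, run solo, must terminate, and I would argue it cannot be forced to rewrite too many cells, or alternatively I would not bound $|C|$ directly but instead quantify: a solo lookup touches finitely many cells, and I choose $S'$ after seeing which ones.

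I expect the main obstacle to be making the ``matching configuration'' step robust to two complications: randomization (the canonical representation is a distribution, so I must argue about likely configurations or use a union bound, incurring the ``sufficiently large $u$'' hypothesis), and the extra $\lceil\log u\rceil + O(1)$ bits per cell (which multiply the number of possible cell-states by a factor of roughly $u\cdot 2^{O(1)}$ per cell, eating into the pigeonhole slack and forcing the stronger constraint $\ncell<\sqrt{\ndomain}$ rather than $\ncell<\ndomain$). Concretely, if a solo lookup of $x$ reads $t$ cells, those cells can be in at most $(u+1)\cdot 2^{O(1)}$ states each in the stronger model, i.e.\ at most $u^{O(t)}$ joint states, while the number of $\ncell$-subsets containing a fixed set of ``forced'' elements and varying in whether they contain $x$ is roughly $\binom{\ndomain}{\ncell}$-ish; balancing these, together with the need to also keep the terminating behavior of the lookup fixed, is what yields the $\sqrt{\ndomain}$ threshold. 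I would structure the final proof so that the bare-bits case is the clean pigeonhole and the extra-bits case is the same argument with the counting redone carefully, isolating exactly where large $u$ is needed.
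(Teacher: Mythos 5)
Your central step --- run the $\lookup(x)$ \emph{solo} from the canonical representation of a set $S$ with $x\notin S$, record the cells $C$ it reads, and then produce by pigeonhole a set $S'$ with $x\in S'$ whose canonical representation agrees with $\can(S)$ on all of $C$ --- cannot work, and no amount of careful counting will repair it. The theorem makes no step-complexity assumption, so the lookup may read every cell (possibly several times), and for a natural assignment two distinct sets never have identical full representations; more importantly, even for efficient layouts the required $S'$ simply need not exist. Sequential history-independent hash tables with one key per cell do exist (priority-ordered linear probing / Robin Hood, as the paper itself emphasizes in the sentence ``This result separates sequential hash tables from concurrent ones''), and in such a layout the cells a solo lookup of $x$ reads from $\can(S)$ --- the run starting at $h(x)$ up to an empty or lower-priority cell --- are jointly inconsistent with $\can(S')$ for \emph{every} $S'$ containing $x$. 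So any purely sequential indistinguishability argument is refuted by a concrete construction; the impossibility is inherently about concurrency, which your proposal never actually uses (your appeal to obstruction-freedom, ``to bound how many cells change,'' plays no such role in a correct proof).

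The paper's argument instead pits a wait-free reader against a concurrent writer. A counting lemma (your pigeonhole instinct is used here, but per cell, not per execution) shows that for a natural assignment with $\ncell<\ndomain$ cells (resp.\ $\ncell<\sqrt{\ndomain}$ with $\lceil\log u\rceil+O(1)$ metadata bits and large $u$) there is an element $v$ such that \emph{no single cell} determines membership of $v$: for every cell $\ell$ there are states $q\ni v$ and $q'\not\ni v$ with $\can(q)[\ell]=\can(q')[\ell]$; and additionally no cell is constant across all states containing $v$ (or all states excluding $v$). Then an adversary interleaves, between consecutive steps of the $\lookup(v)$, complete sequences of obstruction-free inserts/deletes that drive the object between such states, building two executions --- one in which every traversed state excludes $v$ and one in which every traversed state contains $v$ --- that are indistinguishable to the reader step by step; the second property is needed to force any $\VL$/$\SC$ by the reader to fail in both executions (your proposal ignores $\LL/\SC$ entirely, even though the matching upper bound uses it). Since SQHI only pins down the representation at state-quiescent points, the writer's completed operations keep resetting the memory under the reader's feet, and linearizability forces contradictory answers in the two executions, so the lookup can never return, contradicting wait-freedom. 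This concurrent-adversary structure, together with the reduction from ``natural assignment'' to the two per-cell properties, is the missing core of your proposal.
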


This result separates sequential hash tables from concurrent ones: sequential hash tables \emph{can} be made history independent while storing one element per cell~\cite{NaorTe01,BlellochGo07}.
Moreover, the impossibility result makes no assumptions about the step complexity of the hash table, so it applies even to highly inefficient implementations (e.g., implementations where individual operations are allowed to read or modify the entire hash table).
The assumption that the dictionary can store up to $\ncell$ elements using $\ncell$ cells holds for our construction from Theorem~\ref{thm:upper}, and it is necessary in Theorem~\ref{thm:natural-assig} to avoid empty cells that are used \emph{only} for synchronization, not for storing values from the set.
We also prove that Theorem~\ref{thm:upper} is tight in the sense that we cannot achieve a \emph{wait-free} history-independent implementation using a one-cell lookahead plus $O(1)$ bits of metadata
(only lock-freedom is possible).

\section{Preliminaries}
\label{sec:prelim}


\paragraph{The asynchronous shared-memory model.}
We use the standard 
model, in which a finite number of concurrent processes
communicate through 
shared memory consisting of 
$\ncell$
\emph{memory cells}.
The shared memory is accessed by executing \emph{primitive operations}.
Our implementation uses the primitive load-link/store-conditional ($\LL$/$\SC$),
which supports the following operations:
$\LL(x)$ returns the value stored in
memory cell $x$, and $\SC(x, \mathit{new})$ writes the value $\mathit{new}$ to $x$,
if it was not written since the 
last time the process performed an $\LL(x)$ operation (otherwise, $x$ is not modified). $\SC$ returns \textit{true} if it writes successfully, and \textit{false} otherwise.
We also use the \emph{validate} instruction, $\VL(x)$, which checks whether $x$ has been written to since the last time the process performed $\LL(x)$.

An implementation of an abstract data type (ADT) specifies
a program
for each process
and operation of the ADT;
when receiving an \emph{invocation} of an operation,
the process takes \emph{steps} according to this program. 
Each step by a process consists of some local computation,
followed by a single primitive operation on the shared memory.
The process may change its local state after a step, and it
may return a \emph{response} to the operation of the ADT.
We focus on implementations of a \emph{dictionary}, representing an unordered set of elements.%
\footnote{For simplicity, we focus on storing the keys and ignore the values that are sometimes associated with the keys.}
It supports the operations 
$\insr(v)$, $\del(v)$ and $\lookup(v)$, each operation takes an input element $v$ and returns \emph{true} or \emph{false}, indicating whether the element is present in the set based on the specific operation.

A \emph{configuration} $C$ specifies the state of every process and of every memory cell.
The \emph{memory representation} of a configuration $C$
is a vector specifying the state of each memory cell; 
this does not include local private variables held by each process, only the shared memory.
The system may have several initial configurations.
An \emph{execution} $\alpha$ is an alternating sequence of configurations and steps,
which can be finite or infinite.
We say that an operation \emph{completes} in
execution $\alpha$
if $\alpha$ includes both the invocation and response of the operation;
if $\alpha$ includes the invocation of an operation, but no matching response, 
then the operation is \emph{pending}.
We say that $\mathit{op}_1$ \emph{precedes} $\mathit{op}_2$ in execution $\alpha$, if $\mathit{op}_1$ response precedes $\mathit{op}_2$ invocation in $\alpha$.

The standard correctness condition for concurrent implementations is \emph{linearizability}~\cite{HerlihyWing90}: intuitively,
it requires that each operation appears to take place instantaneously at some point between its invocation and its return.
Formally, an execution $\alpha$ is \emph{linearizable} if there is a sequential permutation $\pi$ of the completed, and possibly some uncompleted, operations in $\alpha$, that matches the sequential specification of the ADT, with uncompleted operations assigned some output value.
Additionally,
if $\mathit{op}_1$ precedes $\mathit{op}_2$ in $\alpha$, then $\mathit{op}_1$ also precedes $\mathit{op}_2$ in $\pi$; namely, the permutation $\pi$ respects the real-time order of non-overlapping operations in $\alpha$. We call this permutation a \emph{linearization} of $\alpha$.
An implementation is linearizable 
if all of its executions are linearizable.

An implementation is \emph{obstruction-free} if an operation by process $p$ returns 
in a finite number of steps by $p$, if $p$ runs solo.
An implementation is \emph{lock-free} if whenever there is a pending operation, 
 some operation returns in a finite number of steps of all processes.
Finally, an implementation is \emph{wait-free} if whenever there is a pending operation by process $p$, 
this operation returns in a finite number of steps by $p$.



\paragraph{History independence.}
In this paper we consider implementations where the randomness is fixed up-front;
after initialization,
the implementation is deterministic.
An example of such an initialization is choosing a hash function.
It is known that in this setting, strong {history independence} is equivalent to requiring,
for an object with state space $Q$,
that
every logical state $q\in Q$ corresponds to a unique \emph{canonical memory representation}, 
fixed at initialization~\cite{HartlineHoMo02,HartlineHoMo05}.

Several notions of history independence for concurrent implementations were explored in~\cite{AtBeFaCoOsSc24}.
In this paper we adopt the notion of \emph{state-quiescent history independence} (SQHI), which requires that the 
memory representation be in its canonical representation at any point in the execution where \emph{no state-changing operation} (in our case, $\insr$ or $\del$) is pending.
The logical state of the object is determined using a linearization of the execution up to the point in which the observer inspects the memory representation.
(It is shown in~\cite{AtBeFaCoOsSc24} that stronger notions of history independence are impossible or prohibitively expensive;
in particular, if we require history independence 
at \emph{all} points in the execution,
an open-addressing hash table over a universe $\calU$ must have
size at least $|\CalU|$.)

\section{Overview of the Hash Table Construction}
\label{sec:hash_overview}

In this section we describe our main result, the construction of a concurrent history-independent lock-free hash table where each memory cell stores two elements plus two bits.
We give a high-level overview, glossing over many subtle details in the implementation.

Our hash table is based on \emph{Robin Hood hashing}~\cite{CelisLaMu85}, a type of linear-probing hash table.
In Robin Hood hashing, each element $x$ is inserted as close to its hash location $h(x)$ as possible,
subject to the following priority mechanism: 
for any cell $i$
and elements $x \neq y$,
element $x$ has \emph{higher priority in cell $i$} than $y$,
denoted $x \prio{>}{i} y$,
if cell $i$ is farther from $x$'s hash location than it is from $y$'s:
$i - h(x) > i - h(y)$
(or, to break ties,
if $i - h(x) = i - h(y)$ and $x > y$).\footnote{Here and throughout, we assume modular arithmetic that wraps around the $\ncell$-th cell.}

Robin Hood hash tables maintain the following invariant:%
\footnote{
    We assume that in all cells, $\bot$, indicating an empty cell, has lower priority than all other elements.}
\begin{invariant}[Ordering Invariant~\cite{ShunBlellochSPAA14}]
\label{inv:order-invar}
    If an element $v$ is stored in cell $i$, then for any cell $j$ between $h(v)$ and $i$,
    the element $v'$ stored in cell $j$ has priority higher than or equal to $v$ (that is, $v' \prio{\geq}{j} v$).
\end{invariant}
Note that we use `higher than or equal to' instead of `higher than', even though all elements are distinct.
This is because we later use this invariant in a table that includes duplicates of the same element (see Section~\ref{sec:impl}).
The invariant is achieved as follows.

Let $A[0,\ldots,\ncell-1]$ be the hash table.
To insert an element $v$,
we first try to insert it at its hash location, $h(v)$.
If $A[h(v)]$ is occupied by an element $v'$,
then whichever element has lower priority in cell $h(v)$
is \emph{displaced}, that is, pushed into the next cell.
If the next cell is not empty, 
then the displaced element either displaces the element stored there or is displaced again,
creating a chain of displacements
that ends when we reach the \emph{end of the run}:
the first empty cell encountered in the probe sequence.
(A \emph{run} is a maximal consecutive sequence of occupied cells.)
We place the last displaced element in the empty cell, and return.
We refer to the process of shifting elements forward as \emph{propagating the insertion}.

Deletions are also handled in a way that preserves the invariant:
after deleting an element $v$ from cell $i$ of the hash table,
we check whether the element $v'$ in cell $i+1$ ``prefers''
to shift backwards, that is, whether cell $i$ is closer to $h(v')$ than cell $i+1$.
If so, we shift $v'$ backwards into cell $i$ and examine the next element. We continue shifting elements backwards until we reach either the end of the run (an empty cell), or an element that is already at its hash location.
This is referred to as \emph{propagating the deletion}.

Finally, to perform a $\lookup(v)$ operation,
we scan the hash table starting at location $h(v)$,
until we either find $v$ and return \textit{true}, reach the end of the run and return \textit{false},
or find an element with lower priority than $v$ (that is,
reach a cell $j$ such that $A[j] \prio{<}{j} v$).
In the latter case, the invariant allows us to conclude
that $v$ is not in the hash table, and we return \textit{false}.


The crucial property of Robin Hood hashing that makes it suitable for our concurrent implementation is that following an insertion or deletion, \emph{no element in the hash table moves by more than one cell}.
In Section~\ref{sec:robin_hood_lookahead}, we discuss why, among all priority mechanisms, only Robin Hood hashing is suitable for our implementation.

As we mentioned in Section~\ref{sec:intro}, several traditional mechanisms for ensuring progress in the face of contention are unavailable to us,  because they compromise history independence or increase the cell size too much (or both). We begin by outlining the challenges involved in constructing a concurrent history-independent Robin Hood hash table, and then describe our implementation and how it overcomes them.

\paragraph{Handling displaced elements.}
In Robin Hood hashing, insertions may result in a chain of displacements, with multiple higher-priority elements ``bumping'' (displacing) lower-priority elements to make room for a newly-inserted element.
    In a sequential implementation, 
    if we wish to ``bump'' a lower-priority element $y$
    in favor of a higher-priority element $x$,
    we simply store $y$ in local memory,
    overwrite the cell containing $y$ in the hash table to place $x$ there instead,
    and then proceed to find a place for $y$ in the hash table.
    While a new place for $y$ is sought, it exists only in the local memory of the operation performing the insertion.
    In a \emph{concurrent} implementation this approach is dangerous: if a displaced element $y$ does not physically exist in the hash table, then a concurrent $\lookup(y)$ operation may mistakenly report that $y$ is not in the hash table. This is not allowed in a linearizable implementation.%
    \footnote{Unless $y$ itself is concurrently being inserted or deleted; however, Robin Hood hashing displaces elements upon insertion or deletion of \emph{other} elements, so we are not guaranteed this.}

\paragraph{Avoiding moves that happen behind a lookup's back.}
In sequential Robin Hood hashing,
if a $\lookup(x)$ operation is invoked when $x$ is not in the set,
the operation traverses the table starting from location $h(x)$, until it reaches an empty cell or a cell $i$ with a value $y \prio{<}{i} x$ and returns \textit{false}.
However, in a concurrent implementation this is risky:
a $\lookup(x)$ operation might scan from location $h(x)$
until it reaches a cell $i$ with a value $y \prio{<}{i} x$,
      never seeing element $x$,
\emph{even though $x$ is in the set the entire time}.
This can happen if $x$ is initially ahead of the location currently being examined by the $\lookup(x)$ operation,
but then,
between steps of the $\lookup(x)$ operation,
multiple elements stored between $x$ and location $h(x)$ are deleted,
pulling $x$ backwards to a location 
that the $\lookup(x)$ operation has already passed.

\paragraph{Synchronization without timestamps, process IDs, and operation announcements.}
In our implementation we wish to store as little information as possible in each cell, both for the sake of memory-efficiency and also to facilitate history independence (extraneous information must be eventually wiped clear, and this can be challenging).
To ensure progress, operations must be able to ``clear the way'' for themselves if another operation is blocking the way, without \emph{explicitly knowing} what operation is blocking them, or which process invoked that operation, as we do not wish to store that information in the hash table.



\subsection{Robin Hood Hashing with 1-Lookahead and Lightweight Helping}
\label{sec:robin_hood_lookahead}

To overcome the challenges described above we introduce a \emph{lookahead} mechanism.
Each cell $i$ has the form $A[i] = \tup{ \mathit{value}, \mathit{lookahead}, \mathit{mark} }$, 
consisting of three slots:
the \emph{value} slot stores the element currently occupying cell $i$, or $\bot$ for an empty cell;
the \emph{lookahead} slot is intended to store the element occupying cell $i+1$, when no operation is in progress;
and the \emph{mark} slot indicates
the status of the cell, and can take on three values, $\mathit{mark} \in \set{\Stable, \I, \D}$,
with $\Stable$ standing for ``stable'', indicating that no operation is working on this cell,
and $\I, \D$ indicating that an insertion or deletion (resp.) are working on this cell.
    If the cell is stable ($\Stable$),
    then its \emph{lookahead} is consistent with the next cell:
    if $A[i] = \tup{a, b, \Stable}$ and
    $A[i+1] = \tup{c, d, M}$,
    then $b = c$.
    If a cell is marked with $\I$ or $\D$,
    its \emph{lookahead} {might be} inconsistent.

\paragraph{The role of the lookahead.}
Adding a lookahead serves two purposes.
First, it allows lookups to safely conclude that an element is not in the hash table by reading a single cell:
if we are looking for element $x$,
and we reach a cell $A[i] = \tup{v, v', \ast}$
such that either $i = h(x)$ and $x \prio{>}{i} v$, or $v \prio{>}{i} x \prio{>}{i+1} v'$, 
then $x$ can safely be declared to not be in the hash table.
This resolves the concern that an element may be moved ``back and forth'' behind a lookup's back, so that the lookup can never safely decide that the element is not in the hash table.
Second,
the lookahead serves as temporary storage for elements undergoing displacement due to an insertion:
instead of completely erasing a lower-priority element $y$ and writing a higher-priority element $x$ in its place,
we temporarily shift $y$ from the \emph{value} slot to the \emph{lookahead} slot of the cell,
and
store $x$ in the \emph{value} slot.
Later on, we move $y$ into the \emph{value} slot of the next cell, possibly displacing a different element by shifting it into the \emph{lookahead} slot, and so on.
At any time, all elements that are in the table are physically stored in shared memory, either in the \emph{value} slot or the \emph{lookahead} slot of some cell --- ideally, in the \emph{value} slot of some cell and also in the \emph{lookahead} slot of the preceding cell.

\paragraph{The choice of Robin Hood hashing.}
Any priority-based linear-probing hash table satisfies the ordering invariant (Invariant~\ref{inv:order-invar}).
However, Robin Hood hashing also satisfies the following additional invariant:
If an element $v'$ stored in cell $i$ has higher priority than $v$, then for any cell $j$ between $h(v)$ and $i$, the element $v''$ stored in cell $j$ also has higher priority than $v$.
This invariant allows us to determine the following by reading only two consecutive cells,
cell $i$ which stores element $v'$ and cell $i+1$ which stores element $v''$:
\begin{enumerate}
    \item $v$ is in the table ($v = v'$ or $v = v''$),
    \item $v$ is not in the table and should be inserted to cell $i+1$ ($i+1 = h(v)$ and $v \prio{>}{i+1} v''$, or $v' \prio{>}{i} v \prio{>}{i+1} v''$),
    \item if $v$ is in the table, it can only be in a cell before cell $i$ ($v' \prio{<}{i} v$), or
    \item if $v$ is in the table, it can only be in a cell after cell $i+1$ ($v'' \prio{>}{i+1} v$).
\end{enumerate}
This allows the algorithm to make decisions based on a single cell using the \emph{lookahead} slot, without depending on the entire sequence of values read starting from the initial hash location, which may have changed since they were last read.
We show that it is not possible to determine that element $v$ is not in the hash table using any other priority mechanism by reading only two consecutive cells.
We do so by showing that every two consecutive cells in the canonical memory representation of a set that does not include $v$ is equal to the same cells in a different canonical memory representation of a set that does include $v$, where the priority mechanism determines the canonical memory representation
(see Section~\ref{app:linearprob}).

\paragraph{The role of the mark.}
    The \emph{mark} slot can be viewed as a \emph{lock},
    except that cells are locked by \emph{operations},
    not by processes;
    to release the lock, a process must
    \emph{help} the operation that placed the mark,
    completing whatever steps are necessary to propagate
    the operation one step forward, from the ``locked'' cell into the next cell (which is then ``locked'' by the operation).
    Operations move forward in a manner that resembles hand-over-hand locking~\cite{BayerS1977}; we explain this in detail below.


\paragraph{Life cycle of an operation.}
Each operation --- $\lookup(x), \insr(x)$ or $\del(x)$ ---
goes through some or all of the following stages:
\begin{enumerate}[nolistsep,noitemsep]
    \item Finding the ``correct position'' for element $x$:
    starting from position $h(x)-1$,%
    \footnote{To detect concurrent operations on element $x$, we need to start the scan in location $h(x)-1$, not $h(x)$; this will become clear below.} 
    we scan each cell, until we either find element $x$ in the hash table,
    or we find the cell where element $x$ \emph{should have been} if it were in the hash table,
    and conclude that $x$ is not in the hash table.
    As discussed above, the choice of Robin Hood hashing guarantees that if element $x$ is not in the table (and there are no ongoing operations), we can find such a cell.
    At this point, $\lookup(x)$ returns the appropriate answer,
    and $\insr(x), \del(x)$ may also return,
    if $x$ is already in the hash table (for $\insr(x)$)
    or is not in the hash table (for $\del(x)$).
    Note that if $x$ is found in a \emph{lookahead} slot of a cell with a delete operation in progress, we may not determine that $v$ is in the table. This is because an earlier operation may have already detected that the element is deleted, causing the delete operation to ``take effect'', and no later operation can be linearized before this delete operation.

    Along the way, operations \emph{help} advance any other operation that they encounter: if an operation encounters a non-stable cell, it performs whatever steps are necessary to make the cell stable,
    and only then is allowed to proceed.
    \item 
    Let $i-1$ be the cell reached,
    where cell $i$ is the ``correct position'' for element $x$.
    The \emph{lookahead} of cell $i-1$
    should reflect the absence (for $\insr$) or presence (for $\del$) of element $x$.%
    \footnote{This is the reason that scans for element $x$ begin in location $h(x)-1$: it is possible that $x$ is currently being inserted or deleted, but the change is reflected only in location $h(x)-1$ at this point.}
    Then the operation makes its initial write into the table:
        \begin{itemize}[nolistsep,noitemsep]
            \item $\insr(x)$  writes $x$ into the \emph{lookahead} slot of cell $i-1$:
            if $A[i-1] = \tup{ v, v', \Stable}$,
            then the $\insr$ writes $A[i-1] = \tup{ v, x, \I}$,
            ``locking'' the cell.
            At this point, the newly-inserted element $x$ is considered to be displaced,
            as it is stored only in the \emph{lookahead} slot of cell $i-1$,
            immediately preceding its correct location (cell $i$).
            \item $\del(x)$ logically erases $x$ from the \emph{lookahead} slot of cell $i-1$:
            if $A[i-1] = \tup{ v, x, \Stable}$,
            then the $\del$ writes
            $A[i-1] = \tup{ v, x, \D}$,
            ``locking'' the cell.
            The cell now contains a logical ``hole'' (the \emph{lookahead} slot containing the value $v$),
            which may need to be filled by moving backwards
            elements from cells $A[i+1],A[i+2],\ldots$
            until we reach either the end of the run or an element that is already in its hash position.
        \end{itemize}
        \item Following the initial write, operations continue
        moving forward until the end of the run, \emph{propagating}
        any operations that they encounter, to resolve the chain of displacements that may occur.
        In some sense, processes ``lose their identity'' in this stage:
        because we do not use process IDs, timestamps or sequence numbers,
        processes can no longer keep track of their own operation
        and distinguish it from other operations,
        and must instead help propagate all operations equally.
        By proceeding all the way to the end of the run and helping all operations along the way,
        a process ensures that by the time it returns, its own operation has been completed, either by itself
        or some other process, or a different process becomes responsible for ensuring the operation completes (see more details below).
        \end{enumerate}

\subsection{Propagating an Operation}
\label{sec:propagate_operation}

Consider consecutive cells $A[i] = \tup{ a, b, M}, A[i+1] = \tup{c, d, M'}$,
with the mark $M \in \set{\I,\D}$ in cell $A[i]$ indicating
that an operation in progress.
Operations are not allowed to overtake one another,
so if cell $A[i+1]$ is not stable ($M' \neq \Stable$),
we \emph{help} whatever operation is in progress there by propagating it forward one step, clearing the way for the current operation.
Note that helping an operation in cell $A[i+1]$ may require us to first clear the way by helping an operation in cell $A[i+2]$, which may in turn require helping an operation in cell $A[i+3]$, and so on.

Now suppose that cell $A[i+1]$ is stable.
Propagating the operation that is currently in cell $A[i]$ one step forward into cell $A[i+1]$ involves modifying both $A[i]$ and $A[i+1]$;
however, we cannot modify both cells in one atomic step,
as we are using single-word memory primitives.
Thus, the propagation is done by a careful interleaving of $\LL, \VL$ and $\SC$ steps,
in a manner that is similar to hand-over-hand locking~\cite{BayerS1977},
except that the ``locks'',
represented by the \emph{mark}
slot,
are owned by \emph{operations} rather than \emph{processes}:
this allows any process to take over the propagation of an operation that stands in its way, even if it did not invoke that operation.
To modify cells $A[i]$ and $A[i+1]$
when cell $A[i]$ is already ``locked'',
an operation first ``locks'' cell $A[i+1]$,
setting its \emph{mark} appropriately (and also changing its contents, that is, the \emph{value} and \emph{lookahead} slots);
then it releases the ``lock'' on cell $A[i]$,
setting the \emph{mark} to $\Stable$ (and possibly also changing its contents).

In the sequel, we make distinguish between \emph{operations} and \emph{processes}.
An \emph{operation} that is in mid-propagation 
(i.e., at any point following the initial write and before the operation is complete)
and is currently located in cell $A[i]$
can be propagated by any \emph{process} that reaches $A[i]$.
As noted above, following their initial write (and also before it, on their way to their target location),
processes are in some sense nameless workers that simply propagate all the operations they encounter, 
until they reach the end of the run.
However, prior to the initial write of an operation, only the process
that invoked the operation knows of it,
so at this point the operation is synonymous with the process that invoked it.

\paragraph{Propagating an insertion (see Fig.~\ref{fig:llsc_insert}).}
Suppose that $A[i] = \tup{a, b, \I}$ and $A[i+1] = \tup{c, d, \Stable}$,
with cell $A[i]$ ``locked'' by a propagating insertion.
Element $b$ has been displaced by the insertion:
\begin{wrapfigure}[44]{r}{0.45\textwidth}
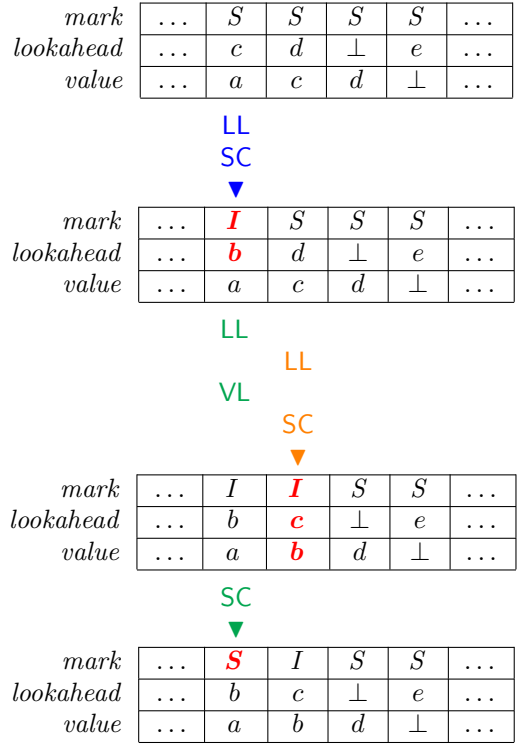
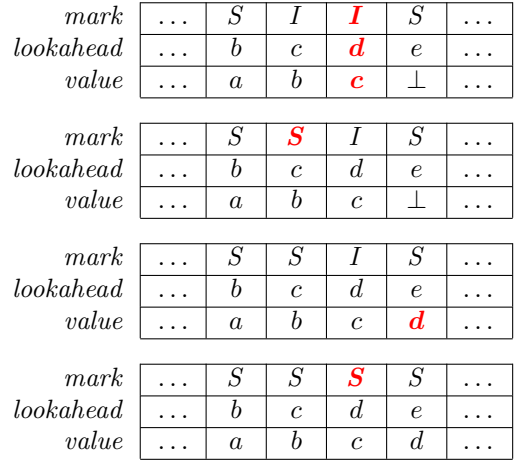

\centering
\begin{subfigure}[t]{0.45\textwidth}
\centering \small
\begin{tabular}{ r|p{3ex}|p{2.5ex}|p{2.5ex}|p{2.5ex}|p{2.5ex}|p{3ex}| } 
 \cline{2-7}
  \emph{mark} & \centering $\ldots$ & \centering $\Stable$ & \centering $\Stable$ & \centering $\Stable$ & \centering $\Stable$ & {\centering $\ldots$} \\
  \cline{2-7}
  \emph{lookahead} & \centering $\ldots$ & \centering $c$ & \centering $d$ & \centering $\bot$ & \centering $e$ & {\centering $\ldots$} \\
  \cline{2-7}
  \emph{value} & \centering $\ldots$ & \centering $a$ & \centering $c$ & \centering $d$ & \centering $\bot$ & {\centering $\ldots$} \\
 \cline{2-7}
\end{tabular}
\\
\vspace{1ex}
\begin{tabular}{ r|p{3ex}|p{2.5ex}|p{2.5ex}|p{2.5ex}|p{2.5ex}|p{3ex}| } 
 \multicolumn{1}{c}{} & \multicolumn{1}{c}{} & \multicolumn{1}{c}{\color{blue}$\LL$} & \multicolumn{1}{c}{} & \multicolumn{1}{c}{} & \multicolumn{1}{c}{} & \multicolumn{1}{c}{} \\
 \multicolumn{1}{c}{} & \multicolumn{1}{c}{} & \multicolumn{1}{c}{\color{blue}$\SC$} & \multicolumn{1}{c}{} & \multicolumn{1}{c}{} & \multicolumn{1}{c}{} & \multicolumn{1}{c}{} \\
 \multicolumn{1}{c}{} & \multicolumn{1}{c}{} & \multicolumn{1}{c}{\color{blue}$\blacktriangledown$} & \multicolumn{1}{c}{} & \multicolumn{1}{c}{} & \multicolumn{1}{c}{} & \multicolumn{1}{c}{} \\
 \cline{2-7}
 \emph{mark} & \centering $\ldots$ & \centering \textcolor{red}{\boldmath{$\I$}} & \centering $\Stable$ & \centering $\Stable$ & \centering $\Stable$ & {\centering $\ldots$} \\
 \cline{2-7}
 \emph{lookahead} & \centering $\ldots$ & \centering \textcolor{red}{\boldmath{$b$}} & \centering $d$ & \centering $\bot$ & \centering $e$ & {\centering $\ldots$} \\
 \cline{2-7}
 \emph{value} & \centering $\ldots$ & \centering $a$ & \centering $c$ & \centering $d$ & \centering $\bot$ & {\centering $\ldots$} \\
 \cline{2-7}
\end{tabular}
\\
\vspace{1ex}
\begin{tabular}{ r|p{3ex}|p{2.5ex}|p{2.5ex}|p{2.5ex}|p{2.5ex}|p{3ex}| } 
 \multicolumn{1}{c}{} & \multicolumn{1}{c}{} & \multicolumn{1}{c}{\color{Green}$\LL$} & \multicolumn{1}{c}{} & \multicolumn{1}{c}{} & \multicolumn{1}{c}{} & \multicolumn{1}{c}{} \\
 \multicolumn{1}{c}{} & \multicolumn{1}{c}{} & \multicolumn{1}{c}{} & \multicolumn{1}{c}{\color{orange}$\LL$} & \multicolumn{1}{c}{} & \multicolumn{1}{c}{} & \multicolumn{1}{c}{} \\
 \multicolumn{1}{c}{} & \multicolumn{1}{c}{} & \multicolumn{1}{c}{\color{Green}$\VL$} & \multicolumn{1}{c}{} & \multicolumn{1}{c}{} & \multicolumn{1}{c}{} & \multicolumn{1}{c}{} \\
 \multicolumn{1}{c}{} & \multicolumn{1}{c}{} & \multicolumn{1}{c}{} & \multicolumn{1}{c}{\color{orange}$\SC$} & \multicolumn{1}{c}{} & \multicolumn{1}{c}{} & \multicolumn{1}{c}{} \\
 \multicolumn{1}{c}{} & \multicolumn{1}{c}{} & \multicolumn{1}{c}{} & \multicolumn{1}{c}{\color{orange}$\blacktriangledown$} & \multicolumn{1}{c}{} & \multicolumn{1}{c}{} & \multicolumn{1}{c}{} \\
 \cline{2-7}
 \emph{mark} & \centering $\ldots$ & \centering $\I$ & \centering \textcolor{red}{\boldmath{$\I$}} & \centering $\Stable$ & \centering $\Stable$ & {\centering $\ldots$} \\
 \cline{2-7}
 \emph{lookahead} & \centering $\ldots$ & \centering $b$ & \centering \textcolor{red}{\boldmath{$c$}} & \centering $\bot$ & \centering $e$ & {\centering $\ldots$} \\
 \cline{2-7}
 \emph{value} & \centering $\ldots$ & \centering $a$ & \centering \textcolor{red}{\boldmath{$b$}} & \centering $d$ & \centering $\bot$ & {\centering $\ldots$} \\
 \cline{2-7}
\end{tabular}
\\
\vspace{1ex}
\begin{tabular}{ r|p{3ex}|p{2.5ex}|p{2.5ex}|p{2.5ex}|p{2.5ex}|p{3ex}| }
 \multicolumn{1}{c}{} & \multicolumn{1}{c}{} & \multicolumn{1}{c}{\color{Green}$\SC$} & \multicolumn{1}{c}{} & \multicolumn{1}{c}{} & \multicolumn{1}{c}{} & \multicolumn{1}{c}{} \\
 \multicolumn{1}{c}{} & \multicolumn{1}{c}{} & \multicolumn{1}{c}{\color{Green}$\blacktriangledown$} & \multicolumn{1}{c}{} & \multicolumn{1}{c}{} & \multicolumn{1}{c}{} & \multicolumn{1}{c}{} \\
 \cline{2-7}
 \emph{mark} & \centering $\ldots$ & \centering \textcolor{red}{\boldmath{$\Stable$}} & \centering $\I$ & \centering $\Stable$ & \centering $\Stable$ & {\centering $\ldots$} \\
 \cline{2-7}
 \emph{lookahead} & \centering $\ldots$ & \centering $b$ & \centering $c$ & \centering $\bot$ & \centering $e$ & {\centering $\ldots$} \\
 \cline{2-7}
 \emph{value} & \centering $\ldots$ & \centering $a$ & \centering $b$ & \centering $d$ & \centering $\bot$ & {\centering $\ldots$} \\
 \cline{2-7}
\end{tabular}

\caption{\small The initial write of $\insr(b)$ 
into a cell $A[i]$,
followed by its first propagation step. The sequence of $\LL,\VL$ and $\SC$ operations is depicted from top to bottom. 
}
\end{subfigure}
\begin{subfigure}[t]{0.45\textwidth}
\centering \small
\begin{tabular}{ r|p{3ex}|p{2.5ex}|p{2.5ex}|p{2.5ex}|p{2.5ex}|p{3ex}|  }
 \cline{2-7}
 \emph{mark} & \centering $\ldots$ & \centering $\Stable$ & \centering $\I$ & \centering \textcolor{red}{\boldmath{$\I$}} & \centering $\Stable$ & {\centering $\ldots$} \\
 \cline{2-7}
 \emph{lookahead} & \centering $\ldots$ & \centering $b$ & \centering $c$ & \centering \textcolor{red}{\boldmath{$d$}} & \centering $e$ & {\centering $\ldots$} \\
 \cline{2-7}
 \emph{value} & \centering $\ldots$ & \centering $a$ & \centering $b$ & \centering \textcolor{red}{\boldmath{$c$}} & \centering $\bot$ & {\centering $\ldots$} \\
 \cline{2-7}
\end{tabular}
\\
\vspace{2ex}
\begin{tabular}{ r|p{3ex}|p{2.5ex}|p{2.5ex}|p{2.5ex}|p{2.5ex}|p{3ex}|  }
 \cline{2-7}
 \emph{mark} & \centering $\ldots$ & \centering $\Stable$ & \centering \textcolor{red}{\boldmath{$\Stable$}} & \centering $\I$ & \centering $\Stable$ & {\centering $\ldots$} \\
 \cline{2-7}
 \emph{lookahead} & \centering $\ldots$ & \centering $b$ & \centering $c$ & \centering $d$ & \centering $e$ & {\centering $\ldots$} \\
 \cline{2-7}
 \emph{value} & \centering $\ldots$ & \centering $a$ & \centering $b$ & \centering $c$ & \centering $\bot$ & {\centering $\ldots$} \\
 \cline{2-7}
\end{tabular}
\\
\vspace{2ex}
\begin{tabular}{ r|p{3ex}|p{2.5ex}|p{2.5ex}|p{2.5ex}|p{2.5ex}|p{3ex}|  }
 \cline{2-7}
 \emph{mark} & \centering $\ldots$ & \centering $\Stable$ & \centering $\Stable$ & \centering $\I$ & \centering $\Stable$ & {\centering $\ldots$} \\
 \cline{2-7}
 \emph{lookahead} & \centering $\ldots$ & \centering $b$ & \centering $c$ & \centering $d$ & \centering $e$ & {\centering $\ldots$} \\
 \cline{2-7}
 \emph{value} & \centering $\ldots$ & \centering $a$ & \centering $b$ & \centering $c$ & \centering \textcolor{red}{\boldmath{$d$}} & {\centering $\ldots$} \\
 \cline{2-7}
\end{tabular}
\\
\vspace{2ex}
\begin{tabular}{ r|p{3ex}|p{2.5ex}|p{2.5ex}|p{2.5ex}|p{2.5ex}|p{3ex}|  }
 \cline{2-7}
  \emph{mark} & \centering $\ldots$ & \centering $\Stable$ & \centering $\Stable$ & \centering \textcolor{red}{\boldmath{$\Stable$}} & \centering $\Stable$ & {\centering $\ldots$} \\
 \cline{2-7}
 \emph{lookahead} & \centering $\ldots$ & \centering $b$ & \centering $c$ & \centering $d$ & \centering $e$ & {\centering $\ldots$} \\
 \cline{2-7}
 \emph{value} & \centering $\ldots$ & \centering $a$ & \centering $b$ & \centering $c$ & \centering $d$ & {\centering $\ldots$} \\
 \cline{2-7}
\end{tabular}

\caption{\small The rest of the propagation steps, omitting the $\LL,\VL$ and $\SC$ operations. 
}
\end{subfigure}

\caption{\small Depiction of the insertion process of element $b$, initiated by operation $\insr(b)$, where $b$ is inserted into a cell $A[i] = \tup{a, c, \Stable}$ such that $a \prio{>}{i} b \prio{>}{i+1} c$.
}
\label{fig:llsc_insert}
\end{wrapfigure}
it is temporarily stored in the \emph{lookahead} slot of $A[i]$, and we need to move it forward into the \emph{value} slot of $A[i+1]$, displacing element $c$ instead (unless $c = \bot$, in which case $A[i+1]$ is an empty cell, and no further elements need to be displaced).
The target state following the propagation step
is $A[i] = \tup{a, b, \Stable}, A[i+1] = \tup{b, c, \I}$,
if $c \neq \bot$ (element $c$ is now displaced), and 
$A[i] = \tup{a, b, \Stable}, A[i+1] = \tup{b, d, \Stable}$ if $c = \bot$ (no element is displaced, and the insertion is done propagating).
We describe here the case where $c \neq \bot$ (the other case is very similar).

We move from the current state, 
$A[i] = \tup{a, b, \I}, A[i+1] = \tup{c, d, \Stable}$,
to
the target state,
$A[i] = \tup{a, b, \Stable}, A[i+1] = \tup{b, c, \I}$,
in several steps:
we begin by performing an $\LL$ on cell $A[i]$,
an $\LL$ on cell $A[i+1]$,
and then a $\VL$ on cell $A[i]$,
to ensure that it has not been altered while we were reading $A[i+1]$.
Then we ``lock'' $A[i+1]$ and at the same time modify its contents, by 
performing an $\SC$ to set
$A[i+1] \leftarrow \tup{ b, c, \I}$.
This $\SC$ may fail, but since $A[i]$ is already ``locked'' (it is marked with $\I$),
no other operation can overtake the current insertion without helping it move forward.
Thus,
there are only two possible reasons for a failed $\SC$ on $A[i+1]$: either
some other process already performed the same $\SC$,
in which case we can proceed to the next step;
or a new operation performed its initial write into $A[i+1]$,
in the process ``locking'' it for itself.
In this case we must first help propagate the new operation to clear the way, and then we can resume trying to propagate the current insertion.

Suppose the $\SC$ on $A[i+1]$ succeeds (either the current process succeeded, or some other process did).
At this point we have $A[i] = \tup{a, b, \I}, A[i+1] = \tup{b, c, \I}$,
with both cells ``locked'' by the insertion. We now ``release the lock''
on $A[i]$
by performing an $\SC$ to set
$A[i] \leftarrow \tup{a, b, \Stable}$.
This $\SC$ may also fail, but in this case, since cell $A[i]$ itself is ``locked'' prior to the $\SC$, the only possible reason is that some other process performed the same $\SC$ successfully.
Thus,
there is no need to even check if the $\SC$ succeeded;
we simply move on to the next cell.
\paragraph{Propagating a deletion (see Fig.~\ref{fig:llsc_delete}).}
Recall that the initial write of a $\del(x)$ operation targets the cell {preceding} the location of $x$, 
``locking'' it by marking it with $\D$.
Deletions maintain the following invariant as they propagate:
if cell $A[i] = \tup{a, b, \D}$ is ``locked'' by a deletion (i.e., marked with $\D$),
\newlength{\strutheight}
\settoheight{\strutheight}{\strut}
\begin{adjustbox}{valign=T,raise=\strutheight,minipage={\textwidth}}
\begin{wrapfigure}[47]{r}{0.45\textwidth}
\vspace{-12pt}
\centering
\begin{subfigure}[t]{0.45\textwidth}
\centering \small
\begin{tabular}{ r|p{3ex}|p{2.5ex}|p{2.5ex}|p{2.5ex}|p{2.5ex}|p{3ex}| } 
 \cline{2-7}
  \emph{mark} & \centering $\ldots$ & \centering $\Stable$ & \centering $\Stable$ & \centering $\Stable$ & \centering $\Stable$ & {\centering $\ldots$} \\
  \cline{2-7}
  \emph{lookahead} & \centering $\ldots$ & \centering $b$ & \centering $c$ & \centering $d$ & \centering $e$ & {\centering $\ldots$} \\
  \cline{2-7}
  \emph{value} & \centering $\ldots$ & \centering $a$ & \centering $b$ & \centering $c$ & \centering $d$ & {\centering $\ldots$} \\
 \cline{2-7}
\end{tabular}
\\
\vspace{1ex}
\begin{tabular}{ r|p{3ex}|p{2.5ex}|p{2.5ex}|p{2.5ex}|p{2.5ex}|p{3ex}| } 
 \multicolumn{1}{c}{} & \multicolumn{1}{c}{} & \multicolumn{1}{c}{\color{blue}$\LL$} & \multicolumn{1}{c}{} & \multicolumn{1}{c}{} & \multicolumn{1}{c}{} & \multicolumn{1}{c}{} \\
 \multicolumn{1}{c}{} & \multicolumn{1}{c}{} & \multicolumn{1}{c}{\color{blue}$\SC$} & \multicolumn{1}{c}{} & \multicolumn{1}{c}{} & \multicolumn{1}{c}{} & \multicolumn{1}{c}{} \\
 \multicolumn{1}{c}{} & \multicolumn{1}{c}{} & \multicolumn{1}{c}{\color{blue}$\blacktriangledown$} & \multicolumn{1}{c}{} & \multicolumn{1}{c}{} & \multicolumn{1}{c}{} & \multicolumn{1}{c}{} \\
 \cline{2-7}
 \emph{mark} & \centering $\ldots$ & \centering \textcolor{red}{\boldmath{$\D$}} & \centering $\Stable$ & \centering $\Stable$ & \centering $\Stable$ & {\centering $\ldots$} \\
 \cline{2-7}
 \emph{lookahead} & \centering $\ldots$ & \centering $b$ & \centering $c$ & \centering $d$ & \centering $e$ & {\centering $\ldots$} \\
 \cline{2-7}
 \emph{value} & \centering $\ldots$ & \centering $a$ & \centering $b$ & \centering $c$ & \centering $d$ & {\centering $\ldots$} \\
 \cline{2-7}
\end{tabular}
\\
\vspace{1ex}
\begin{tabular}{ r|p{3ex}|p{2.5ex}|p{2.5ex}|p{2.5ex}|p{2.5ex}|p{3ex}| } 
 \multicolumn{1}{c}{} & \multicolumn{1}{c}{} & \multicolumn{1}{c}{\color{Green}$\LL$} & \multicolumn{1}{c}{} & \multicolumn{1}{c}{} & \multicolumn{1}{c}{} & \multicolumn{1}{c}{} \\
 \multicolumn{1}{c}{} & \multicolumn{1}{c}{} & \multicolumn{1}{c}{} & \multicolumn{1}{c}{\color{orange}$\LL$} & \multicolumn{1}{c}{} & \multicolumn{1}{c}{} & \multicolumn{1}{c}{} \\
 \multicolumn{1}{c}{} & \multicolumn{1}{c}{} & \multicolumn{1}{c}{\color{Green}$\VL$} & \multicolumn{1}{c}{} & \multicolumn{1}{c}{} & \multicolumn{1}{c}{} & \multicolumn{1}{c}{} \\
 \multicolumn{1}{c}{} & \multicolumn{1}{c}{} & \multicolumn{1}{c}{} & \multicolumn{1}{c}{\color{orange}$\SC$} & \multicolumn{1}{c}{} & \multicolumn{1}{c}{} & \multicolumn{1}{c}{} \\
 \multicolumn{1}{c}{} & \multicolumn{1}{c}{} & \multicolumn{1}{c}{} & \multicolumn{1}{c}{\color{orange}$\blacktriangledown$} & \multicolumn{1}{c}{} & \multicolumn{1}{c}{} & \multicolumn{1}{c}{} \\
 \cline{2-7}
 \emph{mark} & \centering $\ldots$ & \centering $\D$ & \centering \textcolor{red}{\boldmath{$\D$}} & \centering $\Stable$ & \centering $\Stable$ & {\centering $\ldots$} \\
 \cline{2-7}
 \emph{lookahead} & \centering $\ldots$ & \centering $b$ & \centering $c$ & \centering $d$ & \centering $e$ & {\centering $\ldots$} \\
 \cline{2-7}
 \emph{value} & \centering $\ldots$ & \centering $a$ & \centering \textcolor{red}{\boldmath{$c$}} & \centering $c$ & \centering $d$ & {\centering $\ldots$} \\
 \cline{2-7}
\end{tabular}
\\
\vspace{1ex}
\begin{tabular}{ r|p{3ex}|p{2.5ex}|p{2.5ex}|p{2.5ex}|p{2.5ex}|p{3ex}| }
 \multicolumn{1}{c}{} & \multicolumn{1}{c}{} & \multicolumn{1}{c}{\color{Green}$\SC$} & \multicolumn{1}{c}{} & \multicolumn{1}{c}{} & \multicolumn{1}{c}{} & \multicolumn{1}{c}{} \\
 \multicolumn{1}{c}{} & \multicolumn{1}{c}{} & \multicolumn{1}{c}{\color{Green}$\blacktriangledown$} & \multicolumn{1}{c}{} & \multicolumn{1}{c}{} & \multicolumn{1}{c}{} & \multicolumn{1}{c}{} \\
 \cline{2-7}
 \emph{mark} & \centering $\ldots$ & \centering \textcolor{red}{\boldmath{$\Stable$}} & \centering $\D$ & \centering $\Stable$ & \centering $\Stable$ & {\centering $\ldots$} \\
 \cline{2-7}
 \emph{lookahead} & \centering $\ldots$ & \centering \textcolor{red}{\boldmath{$c$}} & \centering $c$ & \centering $d$ & \centering $e$ & {\centering $\ldots$} \\
 \cline{2-7}
 \emph{value} & \centering $\ldots$ & \centering $a$ & \centering $c$ & \centering $c$ & \centering $d$ & {\centering $\ldots$} \\
 \cline{2-7}
\end{tabular}

\caption{\small The initial write of $\del(b)$ 
into a cell $A[i] = \tup{a, b, \Stable}$,
followed by one propagation step. The sequence of $\LL,\VL$ and $\SC$ operations is depicted from top to bottom.}
\end{subfigure}
\begin{subfigure}[t]{0.45\textwidth}
\centering \small
\begin{tabular}{ r|p{3ex}|p{2.5ex}|p{2.5ex}|p{2.5ex}|p{2.5ex}|p{3ex}|  }
 \cline{2-7}
 \emph{mark} & \centering $\ldots$ & \centering $\Stable$ & \centering $\D$ & \centering \textcolor{red}{\boldmath{$\D$}} & \centering $\Stable$ & {\centering $\ldots$} \\
 \cline{2-7}
 \emph{lookahead} & \centering $\ldots$ & \centering $c$ & \centering $c$ & \centering $d$ & \centering $e$ & {\centering $\ldots$} \\
 \cline{2-7}
 \emph{value} & \centering $\ldots$ & \centering $a$ & \centering $c$ & \centering \textcolor{red}{\boldmath{$d$}} & \centering $d$ & {\centering $\ldots$} \\
 \cline{2-7}
\end{tabular}
\\
\vspace{2ex}
\begin{tabular}{ r|p{3ex}|p{2.5ex}|p{2.5ex}|p{2.5ex}|p{2.5ex}|p{3ex}|  }
 \cline{2-7}
 \emph{mark} & \centering $\ldots$ & \centering $\Stable$ & \centering \textcolor{red}{\boldmath{$\Stable$}} & \centering $\D$ & \centering $\Stable$ & {\centering $\ldots$} \\
 \cline{2-7}
 \emph{lookahead} & \centering $\ldots$ & \centering $c$ & \centering \textcolor{red}{\boldmath{$d$}} & \centering $d$ & \centering $e$ & {\centering $\ldots$} \\
 \cline{2-7}
 \emph{value} & \centering $\ldots$ & \centering $a$ & \centering $c$ & \centering $d$ & \centering $d$ & {\centering $\ldots$} \\
 \cline{2-7}
\end{tabular}
\\
\vspace{2ex}
\begin{tabular}{ r|p{3ex}|p{2.5ex}|p{2.5ex}|p{2.5ex}|p{2.5ex}|p{3ex}|  }
 \cline{2-7}
 \emph{mark} & \centering $\ldots$ & \centering $\Stable$ & \centering $\Stable$ & \centering $\D$ & \centering $\Stable$ & {\centering $\ldots$} \\
 \cline{2-7}
 \emph{lookahead} & \centering $\ldots$ & \centering $c$ & \centering $d$ & \centering $d$ & \centering $e$ & {\centering $\ldots$} \\
 \cline{2-7}
 \emph{value} & \centering $\ldots$ & \centering $a$ & \centering $c$ & \centering $d$ & \centering \textcolor{red}{\boldmath{$\bot$}} & {\centering $\ldots$} \\
 \cline{2-7}
\end{tabular}
\\
\vspace{2ex}
\begin{tabular}{ r|p{3ex}|p{2.5ex}|p{2.5ex}|p{2.5ex}|p{2.5ex}|p{3ex}|  }
 \cline{2-7}
  \emph{mark} & \centering $\ldots$ & \centering $\Stable$ & \centering $\Stable$ & \centering \textcolor{red}{\boldmath{$\Stable$}} & \centering $\Stable$ & {\centering $\ldots$} \\
 \cline{2-7}
 \emph{lookahead} & \centering $\ldots$ & \centering $c$ & \centering $d$ & \centering \textcolor{red}{\boldmath{$\bot$}} & \centering $e$ & {\centering $\ldots$} \\
 \cline{2-7}
 \emph{value} & \centering $\ldots$ & \centering $a$ & \centering $c$ & \centering $d$ & \centering $\bot$ & {\centering $\ldots$} \\
 \cline{2-7}
\end{tabular}

\caption{\small The rest of the propagation steps, omitting the $\LL,\VL$ and $\SC$ operations. 
The propagation of the delete operation completes upon encountering element $e$ where $h(e) = i+4$, that is, element $e$ is already in its hash location.
}
\end{subfigure}

\caption{\small Depiction of the deletion process of element $b$, initiated by operation $\del(b)$.}
\label{fig:llsc_delete}
\end{wrapfigure}
\strut{}
and the subsequent cell is stable, $A[i+1] = \tup{c, d, \Stable}$,
then the \emph{value} slot of $A[i+1]$
(i.e., the value $c$) is either $\bot$ or redundant:
in the latter case, it is either the target of the deletion (immediately following the initial write),
or we have already copied
it into cell $A[i]$, so that $a = c$ (following subsequent propagation steps).
Since cell $A[i+1]$ is stable,
its \emph{lookahead} matches the \emph{value} slot of the next cell,
$A[i+2] = \tup{d, e, \ast}$.
Now there are three cases:
\begin{itemize}[nolistsep,noitemsep]
    \item
    If $d = \bot$, then we have reached the end of the run,
    and we need to set $A[i] = \tup{ a, \bot, \Stable}, A[i+1] = \tup{\bot, \bot, \Stable}$.
    The delete operation is then done propagating.
    \item
    If $d \neq \bot$ and $h(d) = i+2$,
    then element $d$ is already in its hash location, $A[i+2]$,
    and does not need to be shifted backwards.
    In this case we need to set $A[i] = \tup{a, \bot, \Stable}, A[i+1] = \tup{\bot, d, \Stable}$,
    ``puncturing'' the run
    and completing the propagation of the delete operation.
    However, the process that punctured the run is not allowed to return immediately, for reasons that we explain below.
    \item
    Finally, if $d \neq \bot$ and $h(d) \neq i+2$,
    then element $d$ should be shifted one step back (closer to $h(d)$), 
    from $A[i+2]$ to $A[i+1]$.
    In this case we need to set
    $A[i] = \tup{a, d, \Stable}, A[i+1] = \tup{d, d, \D}$, duplicating element $d$.
    Notice that the invariant is maintained:
    following the update,
    the \emph{value} slot of cell $A[i+2] = \tup{d, e, \ast}$,
    which remains untouched,%
    \footnote{By the current operation. Other operations may make their initial write into cell $i+2$,
    but if they do, they update the \emph{lookahead} slot, not the \emph{value} slot.}
    is redundant,
    because we  copied $d$ backwards into cell $A[i+1]$.
    We duplicate element $d$ instead of deleting it from the \emph{lookahead} slot to preserve the next invariant: The element in the \emph{lookahead} slot is either consistent with the next cell, or it directly precedes the element in the \emph{value} slot of the next cell in a table constructed according to Robin Hood hashing, containing all the elements in $A$.
    
    In this final case the deletion is not done propagating:
    we still need to delete $d$ from cell $A[i+2]$,
    and we may also need to shift subsequent elements one step back, closer to their hash locations.
    \end{itemize}
The changes to cells $A[i]$ and $A[i+1]$ are done in a manner similar to the way insertions are propagated,
\end{adjustbox}
by interleaving $\LL$, $\VL$ and $\SC$ operations
so that we
first ``lock'' cell $A[i+1]$ and update its contents at
the same time,
then ``release'' cell $A[i]$ and update its contents at the same time.


\paragraph{Puncturing a run.}
 One delicate point is that if a process $p$ punctures
 a run $R$ by splitting it into two runs $R_1, R_2$
 with an empty cell between them,
 and then $p$ returns (upon ``reaching the end of the run'' $R_1$), then
some operations may become stranded in the second part, $R_2$,
with no process working to propagate them.
This can happen, for example, if prior to the puncture
a $\lookup$ operation helped another operation $o$
and pushed it into $R_2$, but then found its target and returned prior to completing operation $o$.
The process $q$ that originally invoked $o$ may lag behind, so that it reaches the end of $R_1$ only after the run is punctured. In this case, if $q$ returns, operation $o$ will be stranded with no process working in $R_2$.

 To avoid this scenario, 
 it suffices to have any process $p$ that punctures a run continue into the second part and propagate all operations it finds there.
 However, this is also problematic, because we cannot allow a situation where some operations are incomplete, and a process $p$ that invoked a $\lookup$ operation is the \emph{only} process propagating them: this violates SQHI.
 Therefore we distinguish between two cases:
 if a process $q$ whose original operation is an $\insr$ or $\del$
 punctures a run,
 it continues into the second part.
 On the other hand, if a process $q$ whose original operation is a $\lookup$ encounters a location where it needs to puncture a run in order to help propagate a delete operation,
 it does not do so.
 This can cause the indication that an element is not in the table to be ``split'' across two cell, rather than just one, by an ongoing $\insr$ operation that cannot overtake an ongoing $\del$ operation that needs to puncture a run to complete. 
 To address this, We design a
 mechanism that allows $\lookup$ operations to identify that an element is not in the table based on the values of two cells instead of just one
 (Lines~\ref{lin:lookup-split}--\ref{lin:lookup-elementnotpresent3}).


\paragraph{Restarting an operation.}
There are two scenarios that cause an operation $o(x)$ to restart, and both occur prior to a successful initial write into the table.
The first is if  $o$ is an insertion or deletion that has found the ``correct'' location for element $x$,
but the initial write into the table, which is done using $\SC$, fails.
This indicates that the operation may need to re-position itself, as the ``correct'' location for element $x$ may have changed, and we do this by restarting the operation.
The second scenario occurs during the lookup stage,
when the process searches for element $x$:
if the process reads a cell $A[i] = \tup{a, b, \ast}$
such that $b \prio{>}{i} x$, but the next cell is $A[i+1] = \tup{c, d, \ast}$ such that $c \prio{<}{{i+1}} x$, then there is a chance that element $x$ was ahead of location $i$ when $A[i]$ was read,
but due to concurrent deletions, it was moved behind cell $i$ by the time $A[i+1]$ was read.
To avoid false negatives, the process restarts its lookup from position $h(x) - 1$.

We note that restarting from scratch is  only for simplicity;
instead, we can move backwards cell-by-cell until reaching a cell $A[j] = \tup{a, b, \ast}$ such that either $j = h(x) - 1$ or $a \prio{>}{j} x$, and then resume forward movement. Each step back can be blamed on a concurrent deletion that moved element $x$ one step backwards, so this is more efficient in workloads with low contention of deletes.



\section{Code and Proof for the History-Independent Hash Table}
\label{sec:impl}

The pseudocode for the $\insr$, $\del$ and $\lookup$ operations is presented in Algorithm~\ref{alg:insert}, Algorithm~\ref{alg:delete}, and Algorithm~\ref{alg:lookup}, respectively.
Procedure $\helpop$, presented in Algorithm~\ref{alg:help}, describes the code for helping propagate an operation to the next cell by some process. 
Procedure $\prop$, presented in Algorithm~\ref{alg:prop}, describes the code for ensuring that an operation completes its propagation.

\begin{algorithm}[!tb]
    \begin{algorithmic}[1]
         \Statex $\insr(v)$:
         \State $i \gets h(v) - 1$
         \label{lin:start-insert}
         \State $\tup{a, b, M} \gets \LL(A[i])$
         \label{lin:insr-readcurr1}
         \State $\textit{first} \gets \textit{true}$
         \Repeat
         \label{lin:insert-startloop}
            \If{$a = v$ or ($b = v$ and ($M \neq \D$ or $h(b) \neq i + 1$))}
            \label{lin:insr-element-present}
                \State \algorithmicreturn{} \textit{false}
            \Comment{$v$ already in the table}
            \EndIf
            \If{$M \neq \Stable$}
            \Comment{$A[i]$ is unstable}
            \label{lin:insr-helpop}
                \State \helpop$(i)$
                \State \textbf{goto} Line~\ref{lin:insert-readnextcell}
            \EndIf
            \If{$b = \bot$ or $v \prio{>}{i+1} b$}
            \Comment{Insert position found}
            \label{lin:insert-loc}
                \If{$\SC\parens*{A[i], \tup{a, v, \I}}$}
                \label{lin:sc-insert}
                 \State \prop$\parens*{h(v),i, \set{\I}}$
                    \label{lin:propinsr}
                    \State \algorithmicreturn{} \textit{true}
                \EndIf
                \State \textbf{goto} Line~\ref{lin:start-insert}
                \label{lin:insr-restartloop1}
            \EndIf
         \State $i \gets i + 1$
         \label{lin:next-cell}
         \If{$\neg \textit{first}$ and $i = h(v)$} \abort()
         \Comment{No empty cell to insert $v$}
         \label{lin:nospace}
         \EndIf
         \State $\textit{first} \gets \textit{false}$
         \State $\tup{a, b, M} \gets \LL(A[i])$
         \label{lin:insert-readnextcell}
        \Until{$v \prio{>}{i} a$}
        \label{lin:insert-until}
        \State \textbf{goto} Line~\ref{lin:start-insert}
        \label{lin:insr-restartloop2}
        \algstore{insert}
    \end{algorithmic}
    \caption{Pseudocode for $\insr$}
    \label{alg:insert}
\end{algorithm}

\begin{algorithm}[!tb]
    \begin{algorithmic}[1]
        \algrestore{insert}
         \Statex $\del(v)$:
         \State $i \gets h(v)-1$
         \label{lin:start-delete}
         \State $\tup{a, b, M} \gets \LL(A[i])$
         \State $\textit{first} \gets \textit{true}$
         \label{lin:del-readcurr1}
         \Repeat
         \label{lin:del-startloop}
            \If{($i = h(v)$ and $v \prio{>}{i} a$) or ({$a \prio{>}{i} v \prio{>}{i+1} b$ and ($M = \Stable$ or $h(b)\neq i + 1$}))}
            \State \algorithmicreturn{} \textit{false}
            \Comment{$v$ is not in the table}
            \label{lin:del-elementnotpresent1}
         \EndIf
         \If{$M \neq \Stable$}
            \label{lin:del-helpop}
            \Comment{$A[i]$ is unstable}
                \State \helpop$(i)$
                    \State \textbf{goto} Line~\ref{lin:del-readnextcell}
            \EndIf
            \If{$a = v$}
            \label{lin:del-vinval}
            \Comment{Step back one cell to locate $v$ in the \emph{lookahead} slot}
                \State $i \gets i - 1$
                \State \textbf{goto} Line~\ref{lin:del-readnextcell}
            \EndIf
            \If{$b = v$} 
            \label{lin:del-loc}
            \Comment{Delete position found}
                \If{$\SC(A[i], \tup{a, v, \D})$}
                \label{lin:sc-del}
                    \State \prop$\parens*{h(v), i, \set{\D}}$
                    \label{lin:propdel}
                    \State \algorithmicreturn{} \textit{true}
            \EndIf
            \State \textbf{goto} Line~\ref{lin:start-delete}
            \label{lin:del-restartloop1}
         \EndIf
         \State $i \gets i + 1$
         \label{lin:del-next-cell}
         \If{$\neg\textit{first}$ and $i = h(v)$} \algorithmicreturn{} \textit{false}
         \Comment{The loop iterated through all cells in the table}
         \label{lin:del-elementnotpresent2}
         \EndIf
         \State $\textit{first} \gets \textit{false}$
         \State $\tup{a, b, M} \gets \LL(A[i])$
         \label{lin:del-readnextcell}
        \Until{$i \neq h(v)$ and $v \prio{>}{i} a$}
        \label{lin:until-del}
        \State \textbf{goto} Line~\ref{lin:start-delete}
        \label{lin:del-restartloop2}
        \algstore{delete}
    \end{algorithmic}
    \caption{Pseudocode for $\del$}
    \label{alg:delete}
\end{algorithm}

\begin{algorithm}[!tb]
    \begin{algorithmic}[1]
        \algrestore{delete}
         \Statex $\lookup(v)$:
         \State $i \gets h(v) - 1$
         \label{lin:start-lookup}
         \State $\tup{a, b, M} \gets \LL(A[i])$
         \label{lin:lookup-readcurr1}
         \State $\textit{first} \gets \textit{true}$
         \Repeat
         \label{lin:lookup-startloop}
            \If{$a = v$ or ($b = v$ and ($M \neq \D$ or $h(b) \neq i + 1$))} 
            \label{lin:lookup-elementfound}
            \State \algorithmicreturn{} \textit{true}
            \Comment{$v$ is in the table}
         \EndIf
         \If{($i = h(v)$ and $v \prio{>}{i} a$) or ({$a \prio{>}{i} v \prio{>}{i+1} b$ and 
         ($M = \Stable$ or $h(b)\neq i + 1$}))}
            \State \algorithmicreturn{} \textit{false}
            \Comment{$v$ is not in the table}
            \label{lin:lookup-elementnotpresent1}
         \EndIf
         \If{$M = \I$}
         \label{lin:lookup-split}
         \Comment{Check if the indication that $v$ is not in the table is split across two cells}
            \State $\tup{c, \ast, \ast} \gets \LL(A[i+1])$
            \label{lin:lookupllAi+1}
            \If{$b \prio{>}{i} v \prio{>}{i+1} c$ and $h(b) \neq i + 1$ and $\VL(A[i])$} 
            \State \algorithmicreturn{} \textit{false}
            \label{lin:lookup-elementnotpresent3}
            \EndIf
         \EndIf
         \If{$M \neq \Stable$}
         \helpop$(i)$
            \label{lin:lookup-helpop}
        \EndIf
         \State $i \gets i + 1$
         \If{$\neg \textit{first}$ and $i = h(v)$} \algorithmicreturn{} \textit{false}
         \label{lin:lookup-elementnotpresent2}
         \Comment{The loop iterated through all cells in the table}
         \EndIf
         \State $\textit{first} \gets \textit{false}$
         \State $\tup{a, b, M} \gets \LL(A[i])$
        \Until{$i \neq h(v)$ and $v \prio{>}{i} a$}
        \label{lin:until-lookup}
        \State \textbf{goto} Line~\ref{lin:start-lookup}
        \algstore{lookup}
    \end{algorithmic}
    \caption{Pseudocode for $\lookup$}
    \label{alg:lookup}
\end{algorithm}

\begin{algorithm}[!tb]
    \begin{algorithmic}[1]
        \algrestore{lookup}
        \Statex \helpop$(i)$:
        \State $\tup{a, b, M_1} \gets \LL(A[i])$
        \label{lin:llAi}
        \State $\tup{c, d, M_2} \gets \LL(A[i+1])$
        \label{lin:llAi+1}
        \If{$M_1 \neq \Stable$}
            \While{$M_2 \neq \Stable$ 
            and ($M_1 \neq \D$ or $M_2 \neq \D$ or $b = c$)  
            and \\[]
            \quad\quad\quad\quad ($M_1 \neq \I$ or $b \neq c$)
            and ($M_1 \neq \D$ or $c \neq \bot$)}
            \label{lin:helpop-while}
            \State $\tup{a, b, M_1} \gets \tup{c, d, M_2}$
            \State $i \gets i + 1$
            \State $\tup{c, d, M_2} \gets \LL(A[i+1])$
            \label{lin:llAi+12}
        \EndWhile
        \Comment{($M_1 \neq \Stable$ and $M_2 = \Stable$) or ($M_1 = M_2 = \D$ and $b \neq c$) 
        \Statex \hfill 
        or ($M_1 = \I$ and $b = c$) or ($M_1 =\D$ and $c = \bot$)}
        \If{$M_1 = \I$ and $\VL(A[i])$}
        \label{lin:helpop-I-cond}
            \State $\tup{x,y,M_3} \gets \LL(A[i-1])$
            \label{lin:llAi-1I}
            \If{$M_3 = \I$ and $y = a$ and $\VL(A[i])$}
            \label{lin:sc-I1-cond}
            \Comment{Ensure $A[i-1]$ is unlocked}
                \State $\SC(A[i-1], \tup{x,y,\Stable})$
                \label{lin:sc-I1}
            \EndIf
            \If{$c \prio{>}{i+1} b$ and not a $\lookup$ operation} \abort()
            \label{lin:stuckinsr}
            \Comment{Table is full}
            \ElsIf{$b = c$}
                $\SC(A[i], \tup{a,b,\Stable})$
                \Comment{$A[i+1]$ is already locked or propagation ended}
                \label{lin:sc-I2}
            \ElsIf{$c = \bot$}
                $\DSC(i+1, \tup{b,d,\Stable}, i, \tup{a,b,\Stable})$
                \label{lin:dsc-I1}
                \Comment{End of the propagation}
            \Else{}
                $\DSC(i+1, \tup{b,c,\I}, i, \tup{a,b,\Stable})$
                \label{lin:dsc-I2}
                \Comment{Propagate insert to $A[i+1]$ and unlock $A[i]$}
            \EndIf
        \ElsIf{$s_1 = \D$ and $\VL(A[i])$}
            \State $\tup{x,y,M_3} \gets \LL(A[i-1])$
            \label{lin:llAi-1D} 
            \If{$M_3 = \D$ and $y \neq a$ and $\VL(A[i])$}
            \label{lin:sc-D1-cond}
            \Comment{Ensure $A[i-1]$ is unlocked}
                \State $\SC(A[i-1], \tup{x,a,\Stable})$
                \label{lin:sc-D1}  
            \EndIf
            \If{$c = \bot$ or $M_2 = \D$}
            \label{lin:sc-D2-cond}
                \State $\SC(A[i], \tup{a,c,\Stable})$
                \label{lin:sc-D2}
                \label{lin:sc-D3}
                \Comment{$A[i+1]$ is already locked or propagation ended}
            \ElsIf{$M_2 = \Stable$} 
            \Comment{Lock $A[i+1]$ and update $A[i]$}
            \label{lin:helpop-stableD}
                \If{$d\neq \bot$ and $h(d)\neq i+2$}
                    \State $\DSC(i+1, \tup{d,d,\D}, i, \tup{a,d,\Stable})$
                    \label{lin:dsc-D1}
                \ElsIf{not a $\lookup$ operation}
                \label{lin:no-lookup}
                \If{$\DSC(i+1, \tup{\bot,d,\Stable}, i, \tup{a,\bot,\Stable})$}
                \Comment{End of the propagation}
                \label{lin:dsc-D2}
                        \If{$d \neq \bot$}
                        \prop$(i+1,i+2,\set{\I,\D})$
                        \Comment{Punctured a run}
                    \EndIf
                \EndIf
            \EndIf
            \EndIf
        \EndIf
        \EndIf
        \Statex
        \Statex \DSC$(i, \mathit{tup}_1, j, \mathit{tup}_2)$:
        \Comment{Write $\mathit{tup}_1$ to $A[i]$ and if successful, write $\mathit{tup}_2$ to $A[j]$}
            \If{$\SC(A[i],\mathit{tup}_1)$}
            \label{lin:dsc-sc}
                \State $\SC(A[j], \mathit{tup}_2)$
                \State \algorithmicreturn{} \textit{true}
            \Else{}
                \State $\mathit{tup} \gets \LL(A[i])$
                \If{$\mathit{tup}.\mem = \mathit{tup}_1.\mem$}
                \label{lin:dsc-checksucc}
                    $\SC(A[j], \mathit{tup}_2)$
                \EndIf
            \EndIf
        \algstore{helper}
    \end{algorithmic}
    \caption{Pseudocode for $\helpop$}
    \label{alg:help}
\end{algorithm}

\begin{algorithm}[!tb]
    \begin{algorithmic}[1]
        \algrestore{helper}
        \Statex \prop$(i, j, A)$:
        \Comment{$A\subseteq \set{\I,\D}$}
        \Repeat
        \State $\tup{a,b,M}, prev \gets A[j]$
        \While{$\tup{a,b,M} = prev$ and $M \in A$}
        \label{lin:prop-helpcond}
        \Statex \Comment{Continue while $A[j]$ does not change and contains a 
        delete or insert operation}
            \State \helpop$(j)$
            \State $prev \gets \tup{a,b,M}$
            \State $\tup{a,b,M} \gets A[j]$
        \EndWhile
        \State $j \gets j + 1$
        \Until{($a = \bot$ or ($M = \Stable$ and $b = \bot$)) or $j = i$}
        \label{lin:prop-stopcond}
        \Statex \Comment{Stop upon reaching an empty cell or
        back to the start}
    \end{algorithmic}
    \caption{Pseudocode for $\prop$}
    \label{alg:prop}
\end{algorithm}

The hash table $A$ is of size $\ncell$, and is initialized to $A[i] = \tup{\bot,\bot,\Stable}$, for every $0\leq i < \ncell$.
For a cell $A[i]$, $0\leq i < \ncell$, that contains the value $\tup{a,b,M}$, denote $A[i].\mem = a$, $A[i].\lookahead = b$ and $A[i].\flag = M$.



It is easy to see that Robin Hood hashing maintains the ordering invariant.
Furthermore, it has been shown that
the ordering invariant guarantees the unique memory representation for any given set of elements~\cite{ShunBlellochSPAA14}.
For a set $P$ of elements from $\CalU$ of size at most $\ncell$, let $\can(P)$ be the canonical memory representation induced by Robin Hood hashing in a table of size $\ncell$.
This also holds for \emph{multisets}, where elements can appear multiple times.
Specifically, in Robin Hood hashing, duplicates of the same element are stored in consecutive cells.
For a multiset $P$ of elements from $\CalU$ of size at most $\ncell$, including multiplicities, let $\canmult(P)$ be the canonical memory representation induced by Robin Hood hashing in a table of size $\ncell$. Note that if set $P$ does not include multiplicities of the same element, $\can(P) = \canmult(P)$.
Let $A|_\mem$ denote the projected array from $A$, which includes only the 
$\mem$ slot at each cell.

\begin{lemma}[\cite{BlellochGo07}]
\label{lem:order-invariant}
    Let $P$ be the multiset of elements in $A|_\mem$, if $A|_\mem$ satisfies the ordering invariant, then $A|_\mem = \canmult(P)$.
\end{lemma}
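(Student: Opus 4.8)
The plan is to show that the ordering invariant, together with the standing assumption that the table is a forward-probing linear-probing layout (each element $v$ occupies a cell reached by probing forward from $h(v)$ --- this is already implicit in the phrase ``between $h(v)$ and $i$'' and is maintained by the algorithm), \emph{completely} determines the memory representation from the multiset $P$ of non-$\bot$ entries. Since Robin Hood hashing produces such a layout and $\canmult(P)$ is by definition the layout it produces on $P$, this uniqueness statement gives $A|_\mem = \canmult(P)$ at once. Throughout I assume the load is bounded away from $1$, so at least one cell is empty; then the occupied cells split into \emph{runs} (maximal blocks of consecutive occupied cells) and at most one run wraps around, which we dispose of by a cyclic rotation of the cell indices (priorities $x \prio{>}{i} y$ depend only on the probe-distances $i-h(\cdot)$ and are invariant under such a rotation), so we may assume no run wraps.

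\textbf{Step 1: the occupancy pattern is a function of $P$ alone.} From the ordering invariant we get the ``no gaps'' property for free: if $v$ sits in cell $i$, every cell on the probe arc from $h(v)$ to $i$ is occupied, since $\bot$ has lowest priority in every cell and so a $\bot$ on that arc would violate $v' \prio{\geq}{j} v$. Hence the runs are well defined; each run $R=[\ell,r]$ begins at a cell $\ell$ with $h(A[\ell])=\ell$ (else cell $\ell-1$, lying on $A[\ell]$'s probe arc, would be occupied); and the multiset of elements stored in $R$ is exactly $\{v\in P : h(v)\in R\}$ (an element hashing into $R$ cannot be pushed past $r$ because cell $r+1$ is empty, and conversely nothing hashing outside $R$ can reach a cell of $R$ without crossing the empty cell $\ell-1$ or $r+1$). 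It then remains to observe that the partition of cells into runs is itself determined by $\{h(v):v\in P\}$, hence by $P$: this is the classical parking-function fact (cell $i$ is occupied in a forward-probing layout iff some cell $j$ on the backward arc ending at $i$ has more than $i-j$ elements of $P$ hashing into $[j,i]$). I would either cite this as standard or include the short counting/induction argument.

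\textbf{Step 2: inside a run the arrangement is forced.} Within a non-wrapping run $R=[\ell,r]$ the priority order is cell-independent: for $i\in R$ and elements with hash values in $[\ell,i]$ we have $x \prio{>}{i} y \iff i-h(x) > i-h(y) \iff h(x) < h(y)$ (with the tie-break $x>y$), so $R$'s elements are totally ordered by a single relation $\succ$ (equal elements identified). I would then check that reading $R$ from $\ell$ to $r$ is $\succeq$-non-increasing: for positions $j<i$ in $R$ with entries $v',v$, if $j\geq h(v)$ the ordering invariant gives $v' \prio{\geq}{j} v$, i.e.\ $v'\succeq v$; and if $j<h(v)$ then $h(v')\leq j < h(v)$ (using $h(v')\geq\ell$ from Step 1), so $v'\succ v$. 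A sequence that is $\succeq$-non-increasing with a prescribed multiset of entries is unique (the blocks of equal --- hence identical --- elements appear in the unique decreasing order of distinct values). Therefore the contents of every run, and so of the whole array, are pinned down by $P$; applying this to both $A|_\mem$ and $\canmult(P)$, which both satisfy all the hypotheses, yields $A|_\mem = \canmult(P)$.

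\textbf{Main obstacle.} Step 2 is essentially immediate once one notices that priority is cell-independent within a run; the fussy part is Step 1 --- proving that \emph{which} cells are occupied depends only on $P$ --- together with the modular-arithmetic bookkeeping (fixing ``between $h(v)$ and $i$'' to mean the forward probe arc, and rotating away the at-most-one wrap-around run). I expect the final assembly to be routine, since $\canmult(P)$ is manifestly one layout meeting all the hypotheses and Steps 1--2 leave no freedom.
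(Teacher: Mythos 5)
The paper never proves this lemma---it is imported from~\cite{BlellochGo07} as a black box---so there is no ``paper proof'' to match yours against; judging your argument on its own terms, it is correct and is essentially the standard unique-representability argument for priority-based linear probing: derive the no-gaps property from the fact that $\bot$ has lowest priority, show each run's contents are exactly the elements of $P$ hashing into it, show the occupied-cell set is a function of the hash multiset alone (the parking-function fact), and then observe that within a run the priority relation is cell-independent, so the ordering invariant forces the unique non-increasing arrangement; since $\canmult(P)$ satisfies the same hypotheses, equality follows. Two caveats are worth making explicit. First, your standing assumption of at least one empty cell is not cosmetic: for a completely full table the ordering invariant does \emph{not} pin down the layout (with $m=2$, $h(a)=0$, $h(b)=1$, both $(a,b)$ and $(b,a)$ satisfy Invariant~\ref{inv:order-invar}, since each element is at distance $1$ from its hash in the other's cell), so the lemma as used in the paper implicitly lives under Assumption~\ref{asm:neverfull}; stating this hypothesis is the right call, but you should flag that it is needed rather than merely convenient. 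Second, the parking-function step is the one place you wave at a citation; it does follow in two lines from your no-gaps property (if cell $i$ were empty but some window $[j,i]$ received more than $i-j$ hashes, all those elements would have to sit in $[j,i-1]$, a pigeonhole contradiction; conversely the run containing an occupied cell $i$ supplies such a window), and including that argument would make the proof fully self-contained. With those two points tightened, the assembly in Step~2 is sound, including the handling of duplicates via the $\prio{\geq}{i}$ tie structure.
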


The proof relies on the following key invariant that extends the ordering invariant to also account for the \emph{lookahead} slot and element repetition.

\begin{invariant}[Extended Ordering Invariant]
\label{inv:ext-order-invar}
For every $0 \leq i < m$:
\begin{enumerate}[label=(\alph*)]
    \item If $A[i].\mem = v$, then for any $h(v) \leq j < i$, $A[j].\mem \prio{\geq}{j} v$.
    \label{inv:ext-order-invar1}
    \item $A[i].\lookahead \prio{\geq}{i+1} A[i+1].\mem$ or $h(A[i].\lookahead) = i+1$.
    \label{inv:ext-order-invar2}

    \item Either $A[i].\mem \prio{\geq}{i} A[i].\lookahead$ or $h(A[i].\lookahead) = i+1$ and $A[i].\lookahead \prio{\geq}{i+1} A[i+1].\mem$.
    \label{inv:ext-order-invar3}
\end{enumerate}
\end{invariant}

\invenumref{inv:ext-order-invar}{1} is the ordering invariant for the \emph{value} slots in a table that can store multiple copies of the same element (recall that deletions create duplications).
\invenumref{inv:ext-order-invar}{2} and \invenumref{inv:ext-order-invar}{3} ensure that the ``correct position'' of the element in the \emph{lookahead} slot is the next cell. 
\invenumref{inv:ext-order-invar}{3} ensures that an element with higher priority displaced the element in the \emph{lookahead} slot, while \invenumref{inv:ext-order-invar}{2} verifies that the \emph{lookahead} can displace the value of the next cell. The only exception to this is when all the cells in the table are occupied, in which case it may be that $h(A[i].\lookahead) = i+1$.
\invref{inv:ext-order-invar} allows us to use the properties of Robin Hood described earlier.

Cell $i$ is \emph{stable} if $A[i].\flag = \Stable$, and \emph{unstable} otherwise.
An unstable cell is associated with exactly one operation, where a cell $A[i] = \tup{\ast,\ast,\I}$ is associated with an insert operation, and a cell $A[i] = \tup{\ast,\ast,\D}$ is associated with a delete operation.
A cell that becomes unstable after a successful 
$\SC$ in Line~\ref{lin:sc-insert} is associated with the insert operation that performs this successful 
$\SC$.
A cell that becomes unstable after a successful 
$\SC$ in Line~\ref{lin:sc-del} is associated with the delete operation that performs this successful 
$\SC$.
An operation propagates from cell $A[i]$ to cell $A[i+1]$ because of a successful $\SC$ in $\DSC$ called in Line~\ref{lin:dsc-I1}, \ref{lin:dsc-I2}, \ref{lin:dsc-D1} or~\ref{lin:dsc-D2}, and it either becomes unstable with the operation or remains stable.
Stable cell $A[i+1]$ becomes unstable after a successful 
$\SC$ in a $\DSC$ called in Line~\ref{lin:dsc-I2} or Line~\ref{lin:dsc-D1}.
In both cases, 
$A[i+1]$ becomes unstable based on the value of an unstable cell $A[i]$, hence, $A[i+1]$ becomes associated with the same operation as $A[i]$.
For simplicity, we assume that if a cell becomes unstable with the same operation again, it becomes unstable with a new different operation.
Later (Lemma~\ref{lem:finishprop}), we show that this cannot happen and a cell becomes unstable with a particular operation only once.

The next invariant describes the structure of stable and unstable cells.

\begin{invariant}
\label{inv:stable-unstable}
For every $0 \leq i < m$:
\begin{enumerate}[label=(\alph*)]
    \item If $A[i].\flag \neq \D$, $A[i].\mem \neq A[i].\lookahead$ or $A[i].\mem = \bot$.
    \label{inv:stable-unstable1}

    \item If $A[i].\flag = \Stable$, then $A[i].\lookahead = A[i+1].\mem$.
    \label{inv:stable-unstable2}

    \item If $A[i].\flag \in \set{\I,\D}$, then $A[i].\lookahead \neq \bot$, and either $A[i].\mem \neq \bot$ or $h(A[i].\lookahead) = i+1$.
    \label{inv:stable-unstable2.5}
    
    \item If $A[i].\flag = \D$, and if the operation propagated to cell $A[i+1]$, then $A[i].\lookahead \prio{>}{i+1} A[i+1].\mem$, 
    where:
    \begin{itemize}
        \item If $A[i+1].\flag \in \set{\Stable, \I}$, or if $A[i+1].\flag = \D$ but $A[i+1]$ is unstable with a different operation, then $A[i+1].\mem = \bot$.

        \item If $A[i+1].\flag = \D$ and $A[i+1]$ is unstable with the same operation, then $A[i+1].\mem \neq \bot$.
    \end{itemize}
    Otherwise, $A[i].\lookahead = A[i+1].\mem$.
    \label{inv:stable-unstable3}

    \item If $A[i].\flag = \I$, and if the operation propagated to cell $A[i+1]$, then $A[i].\lookahead = A[i+1].\mem$, otherwise, $A[i].\lookahead \neq A[i+1].\mem$.
    \label{inv:stable-unstable4}
\end{enumerate}
\end{invariant}

An operation may abort if the table is full and there is no vacant cell to insert a new element.
This can happen when a process initiating an $\insr$ operation does not find a vacant cell for the new element (Line~\ref{lin:nospace}), or when an insert operation cannot be propagated to the next cell without violating the ordering invariant, indicating there is no vacant cell for the displaced element (Line~\ref{lin:stuckinsr}).
We prove correctness assuming there is always a stable empty cell available to insert a new element. 
This guarantees that no operation aborts, and if there is an unstable cell, some operation can propagate to the next cell.

\begin{assumption}
\label{asm:neverfull}
    There exists an empty cell in $A$, i.e., there exists a cell $i$ such that $A[i] = \tup{\bot,\ast,\Stable}$, where $0\leq i < \ncell$.
\end{assumption}

In Section~\ref{sec:implproof-prop} we state and prove the structure of propagating operations in the table, and this allows us to prove in Section~\ref{sec:implproof-val&order} that $A$ satisfies both Invariant~\ref{inv:ext-order-invar} and Invariant~\ref{inv:stable-unstable} throughout the execution of the algorithm.
This allows us to prove the algorithm's key properties;
In Section~\ref{sec:implproof-linearizability} we prove the hash table implementation is linearizable, in Section~\ref{sec:implproof-hi} that it is history independent, and in Section~\ref{sec:implproof-lockfree} that it is lock-free.

\subsection{Proving the Propagation Structure of Operations}
\label{sec:implproof-prop}

Next, we show that the propagation of operations follows a pattern similar to hand-over-hand locking. Assuming that $A[i]$ is locked (i.e., it contains an ongoing $\insr$ or $\del$ operation), we then lock $A[i+1]$, release $A[i]$, lock $A[i+2]$,  release $A[i+1]$, and so on.
Throughout the next section, we assume that Invariant~\ref{inv:stable-unstable} holds at any point in the execution, where \invenumref{inv:stable-unstable}{3} and \invenumref{inv:stable-unstable}{4} ensure that when a process helps propagate an operation to the next cell, it can distinguish whether the operation propagated to the next cell or not.

The next lemma follows immediately from the definition of the $\LL$ and $\VL$ operations.

\begin{lemma}
\label{lem:llllvl}
    If a process performs $\LL(A[i])$, followed by $\LL(A[j])$, then performs a $\VL(A[i])$ that returns \textit{true}, then during the execution of the $\LL(A[j])$ operation,
    the value of $A[i]$ remained unchanged since the last corresponding $\LL$ operation.
\end{lemma}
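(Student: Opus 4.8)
The plan is to simply unwind the specification of $\LL$, $\SC$, and $\VL$ and track the timeline of the three steps named in the statement. Fix the process $p$ in question and let $t_1 < t_2 < t_3$ be the times at which $p$ executes, respectively, the $\LL(A[i])$ step, the $\LL(A[j])$ step, and the $\VL(A[i])$ step that returns \textit{true}. These are single primitive steps, so each happens at a well-defined point of the execution, and $t_1 < t_2 < t_3$ since $p$ issues them in this program order; moreover, we take the $\LL(A[i])$ at time $t_1$ to be the $\LL$ that the $\VL$ at $t_3$ ``corresponds to'', i.e., $p$'s most recent $\LL(A[i])$ before $t_3$ (this is what is meant by the ordering ``$\LL(A[i])$, then $\LL(A[j])$, then $\VL(A[i])$''). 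Recall also that a memory cell changes its value \emph{only} via a successful $\SC$ by some process.

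The first step is to invoke the semantics of $\VL$: a $\VL(A[i])$ executed by $p$ returns \textit{true} precisely when no successful $\SC(A[i], \cdot)$ by any process occurred strictly between $p$'s most recent preceding $\LL(A[i])$ and this $\VL(A[i])$. Since that most recent $\LL(A[i])$ is the one at time $t_1$, the fact that the $\VL$ at $t_3$ returns \textit{true} tells us that no write to $A[i]$ took place in the open interval $(t_1, t_3)$. The second step is to observe that the instant $t_2$ at which $p$ performs $\LL(A[j])$ lies in $(t_1,t_3)$ (it does not matter whether $j = i$ or $j \neq i$, and $\LL$ does not modify memory in any case); hence in particular no write to $A[i]$ occurred in $(t_1, t_2]$, so at time $t_2$ the content of $A[i]$ is exactly what it was immediately after the $\LL(A[i])$ at time $t_1$ — which is the claim.

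I do not expect any real obstacle here: the statement is a direct consequence of the $\LL$/$\SC$/$\VL$ specification given in Section~\ref{sec:prelim}, and the only thing requiring care is the bookkeeping of the relative order of the three steps and the remark that the $\LL(A[j])$ step falls inside the no-write window certified by the successful $\VL(A[i])$.
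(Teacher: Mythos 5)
Your proposal is correct and follows exactly the reasoning the paper intends: the paper states this lemma without a separate proof, remarking only that it follows immediately from the definitions of $\LL$ and $\VL$, and your unwinding of the $\VL$ semantics (no successful $\SC$ on $A[i]$ in the window between the corresponding $\LL(A[i])$ and the successful $\VL$, a window that contains the $\LL(A[j])$ step) is precisely that immediate argument.
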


\begin{lemma}
\label{lem:propstructure}
    Assume Invariant~\ref{inv:stable-unstable} holds and cell $A[i]$ becomes unstable with operation $o$, then the only possible order of events is:
    \begin{enumerate}
        \item $A[i]$ becomes unstable with operation $o$.
        \item Operation $o$ propagates to cell $A[i+1]$.
        \item $A[i]$ stabilizes.
        \item If operation $o$ also propagates to $A[i+2]$, then the operation propagates to $A[i+2]$.
        \item If $A[i+1]$ becomes unstable with the operation $o$, then 
        $A[i+1]$ stabilizes.
    \end{enumerate}
\end{lemma}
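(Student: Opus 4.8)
The plan is to argue that the five listed events must occur in exactly this order by tracking, step by step, which primitive operations in the code can change the state of cells $A[i]$ and $A[i+1]$ while relying on Invariant~\ref{inv:stable-unstable} to read off the current phase of the propagation from the cell contents. I would begin by recalling the two ways a stable cell $A[i]$ can become unstable with $o$: either via the initial-write $\SC$ in Line~\ref{lin:sc-insert} (if $o$ is an insert) or Line~\ref{lin:sc-del} (if $o$ is a delete), or via a successful $\SC$ inside a $\DSC$ called in Line~\ref{lin:dsc-I2} or Line~\ref{lin:dsc-D1} that locks $A[i]$ "based on" an unstable $A[i-1]$. In all these cases, immediately afterward $A[i]$ is marked $\I$ or $\D$ and, by \invenumref{inv:stable-unstable}{4} (resp.\ \invenumref{inv:stable-unstable}{3}), $A[i].\lookahead \neq A[i+1].\mem$ (resp.\ the delete has not yet propagated), so event~2 has not yet happened. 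This establishes that event~1 precedes event~2.

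The heart of the argument is a careful case analysis of $\helpop(i)$ and of how it modifies $A[i]$ and $A[i+1]$. Here I would observe: once $A[i]$ is unstable with $o$, no operation other than $o$ can write to $A[i+1]$ in a way that advances a different propagation past it, because to do so it would first have to pass through $\helpop$ on cell $A[i]$, which (by the structure of the \textbf{while} loop in Line~\ref{lin:helpop-while}) forces it to help $o$ rather than overtake it. The only "foreign" write to $A[i+1]$ permitted is an initial write of a brand-new operation into $A[i+1]$'s \emph{lookahead} slot, and the guards in Lines~\ref{lin:sc-I1-cond}--\ref{lin:dsc-I2} (via the $b=c$ check and the $\VL$ guards) ensure that such a write does not interfere with, and is correctly detected before, the $\DSC$ that propagates $o$. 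Thus the $\DSC$ that accomplishes event~2 (locking or finalizing $A[i+1]$ based on a still-locked $A[i]$) must occur while $A[i]$ is still unstable with $o$; and by \invenumref{inv:stable-unstable}{3}--\invenumref{inv:stable-unstable}{4}, event~2's effect (either $A[i+1]$ becomes unstable with $o$, or the propagation terminates) is visible before event~3, the stabilization of $A[i]$. I would then show event~3 (the $\SC$ setting $A[i]$ to $\Stable$, inside $\DSC$ at Line~\ref{lin:dsc-I2}/\ref{lin:dsc-D1}, or the standalone $\SC$ at Line~\ref{lin:sc-I2}/\ref{lin:sc-D2}) can only fire \emph{after} the $\DSC$'s first $\SC$ succeeded — this is precisely the contract of $\DSC$ (Line~\ref{lin:dsc-sc} and the recheck at Line~\ref{lin:dsc-checksucc}) — giving event~2 $\prec$ event~3. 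Events~4 and~5 then follow by applying the same reasoning one cell over, now with $A[i+1]$ (if it became unstable with $o$) playing the role of $A[i]$: $o$ cannot propagate from $A[i+1]$ to $A[i+2]$ until $A[i+1]$ is unstable with $o$, which by the above happens at event~2, and $A[i+1]$ cannot stabilize until that further propagation's $\DSC$ has locked $A[i+2]$.

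The main obstacle I anticipate is ruling out "spurious" or out-of-order writes caused by multiple helpers operating concurrently on the same cells: because helping is lock-free and any process may run $\helpop(i)$, one must show that a slow helper cannot resurrect an old value of $A[i]$ or $A[i+1]$ (e.g., re-lock $A[i]$ after it has stabilized, or re-perform event~2 after event~3). This is exactly where the $\VL$ guards (Lines~\ref{lin:helpop-I-cond}, \ref{lin:sc-I1-cond}, and the $\VL$ inside $\DSC$) and the monotonic progression of cell contents enforced by Invariant~\ref{inv:stable-unstable} do the work: a stale $\SC$ fails because the cell was written since the helper's $\LL$, and even in the degenerate case where a stale $\SC$ could "succeed" with the same value, the invariant guarantees that value equals the one a correct helper would have written, so the observable order of events~1--5 is unaffected. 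I would package this as an auxiliary claim — "any successful $\SC$ to $A[i]$ or $A[i+1]$ during the lifetime of $o$'s lock on $A[i]$ writes the value dictated by the current phase" — prove it by induction on the sequence of successful $\SC$s to these two cells, and then the five-event ordering is an immediate corollary. (The subtlety that a cell might "become unstable with the same operation again" is set aside here by the paper's stated convention and is discharged later in Lemma~\ref{lem:finishprop}.)
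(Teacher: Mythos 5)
Your proposal establishes $1 \prec 2 \prec 3$ (via the guards on the stabilizing $\SC$s) and, by shifting the argument one cell, $2 \prec 4 \prec 5$; but it never orders event~3 relative to event~4, which is part of the lemma's claim (the enumeration is a temporal order) and is relied on later, e.g., in Lemma~\ref{lem:proponce} and in the argument that at most two consecutive cells are ever unstable with the same operation. ``Applying the same reasoning one cell over'' only relates the events at cells $i+1$ and $i+2$ to one another; it says nothing about when $A[i]$ stabilizes relative to the propagation into $A[i+2]$, and a priori a helper could lock $A[i+2]$ while $A[i]$ is still marked, leaving three consecutive locked cells. The paper closes exactly this case with a dedicated argument: the $\SC$ that propagates $o$ from $A[i+1]$ to $A[i+2]$ is executed inside $\helpop$ at index $i+1$, and before reaching the $\DSC$ the helper reads $A[i]$ in Line~\ref{lin:llAi-1I} (resp.~Line~\ref{lin:llAi-1D}); using Lemma~\ref{lem:llllvl} together with \invenumref{inv:stable-unstable}{4} (resp.~\invenumref{inv:stable-unstable}{3}), if $A[i]$ were still unstable the guard $y=a$ (resp.\ $y\neq a$) in Line~\ref{lin:sc-I1-cond} (resp.~Line~\ref{lin:sc-D1-cond}) would hold, so an $\SC(A[i],\tup{\ast,\ast,\Stable})$ in Line~\ref{lin:sc-I1} (resp.~Line~\ref{lin:sc-D1}) is attempted before the propagating $\SC$; since any $\SC$ applied to an unstable cell writes a stable value, $A[i]$ is stable by the time $o$ reaches $A[i+2]$ whether or not that particular $\SC$ succeeds. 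This ``ensure the previous cell is unlocked'' step is the mechanism that yields $3 \prec 4$ (and hence $3 \prec 5$); your proof needs it and supplies no substitute.

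A secondary imprecision: you attribute $2 \prec 3$ to ``the contract of $\DSC$,'' but $A[i]$ can also be stabilized by the standalone $\SC$s in Lines~\ref{lin:sc-I1}, \ref{lin:sc-I2}, \ref{lin:sc-D1} and~\ref{lin:sc-D2}, which are not the second leg of a $\DSC$. For those, the argument must go through their guards ($y=a$, $b=c$, $y\neq a$, and $c=\bot$ or $M_2=\D$, respectively) combined with Lemma~\ref{lem:llllvl} and \invenumref{inv:stable-unstable}{3}/\invenumref{inv:stable-unstable}{4} to conclude that the propagation to $A[i+1]$ has already occurred---the case analysis the paper carries out explicitly; your appeal to the invariants gestures at this, but the $\DSC$ contract alone does not cover these cases.
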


\begin{proof}
    By the code, an operation propagates to cell 
    $A[i+1]$ after a process sees that cell $A[i]$ is unstable with the same operation.
    Assume the lemma statement does not hold and consider the first two events that happen in reverse order.
    \begin{itemize}
        \item \textbf{$A[i]$ stabilizes before the operation propagates to $A[i+1]$:}
        Assume the operation is an $\insr$ operation and consider all the possible scenarios in which a successful $\SC$ stabilizes $A[i]$.
        If the successful $\SC(A[i],\tup{\ast,\ast,\Stable})$ happens in Line~\ref{lin:sc-I1} or~\ref{lin:sc-I2}, then by Lemma~\ref{lem:llllvl},
        during the $\LL(A[i])$ or  $\LL(A[i+1])$, respectively, that precedes the successful $\SC$, we have
        $A[i] = \tup{\ast,b,\I}$ and $A[i+1] = \tup{b,\ast,\ast}$. By 
        \invenumref{inv:stable-unstable}{4},
        the operation has already propagated to $A[i+1]$. 
        If the successful $\SC$ happens in
        a $\DSC$ called in Line~\ref{lin:dsc-I1} or~\ref{lin:dsc-I2}, then the value of $A[i]$ does not change between the $\LL(A[i])$ that precedes the successful $\SC$ and the successful $\SC$.
        Thus,
        the conditions in Line~\ref{lin:dsc-sc} and Line~\ref{lin:dsc-checksucc} in the $\DSC$ procedure ensure that
        $A[i+1] = \tup{b,\ast,\ast}$, while $A[i] = \tup{\ast,b,\I}$ and unstable with the $\insr$ operation. By 
        \invenumref{inv:stable-unstable}{4},
        the operation has already propagated to $A[i+1]$.

         Assume the operation is a $\del$ operation and consider all the possible scenarios in which a successful $\SC$ stabilizes $A[i]$.
        If the successful $\SC(A[i],\tup{\ast,\ast,\Stable})$ happens in Line~\ref{lin:sc-D1} or~\ref{lin:sc-D2}, then by Lemma~\ref{lem:llllvl},
        during the $\LL(A[i])$ or  $\LL(A[i+1])$, respectively, that precedes the successful $\SC$, we have
        $A[i] = \tup{\ast,b,\D}$ and $A[i+1] = \tup{c,\ast,M}$, such that $b\neq c$. 
        This is because the conditions in Line~\ref{lin:helpop-while}, \ref{lin:sc-D1-cond} and~\ref{lin:sc-D2} ensure that either $M = \D$ and $b\neq c$, or $c = \bot$. In the latter case, by 
        \invenumref{inv:stable-unstable}{2.5}, $b\neq \bot$.
        By~\invenumref{inv:stable-unstable}{3}, the operation has already propagated to $A[i+1]$. 
        If the successful $\SC$ happens in
        a $\DSC$ called in Line~\ref{lin:dsc-D1} or~\ref{lin:dsc-D2}, then the value of $A[i]$ does not change between the $\LL(A[i])$ that precedes the successful $\SC$ and the successful $\SC$.
        Thus,
        the conditions in Line~\ref{lin:dsc-sc} and Line~\ref{lin:dsc-checksucc} in the $\DSC$ procedure ensure that
        $A[i+1] = \tup{d,\ast,\ast}$, while $A[i] = \tup{\ast,b,\D}$, such that $b\neq d$.
        This is because either $d = \bot$ and by \invenumref{inv:stable-unstable}{2.5}, $b\neq \bot$, or 
        by the condition in Line~\ref{lin:helpop-stableD} and Lemma~\ref{lem:llllvl}, $A[i+1] = \tup{c,d,\Stable}$ while $A[i] = \tup{a,b,\D}$.
        Since the condition in Line~\ref{lin:sc-D2-cond} does not hold, $c\neq \bot$, and by~\invenumref{inv:stable-unstable}{3}, $b = c$. In addition, by~\invenumref{inv:stable-unstable}{1}, $c\neq d$.
        However, by~\invenumref{inv:stable-unstable}{3},
        the operation has already propagated to $A[i+1]$.

        \item \textbf{
         If the operation also propagates to $A[i+2]$, 
         the operation propagates to $A[i+2]$ before $A[i]$ stabilizes:}
        If the operation is an $\insr$ operation, then the operation propagates to $A[i+2]$ after a successful $\SC$ in a $\DSC$ called in Line~\ref{lin:dsc-I1} or~\ref{lin:dsc-I2}.
        In the $\LL(A[i+1])$ that precedes the successful $\SC$, $A[i+1] = \tup{c,\ast,\I}$ and unstable with the $\insr$ operation.
        We already showed that $A[i+1]$ remains unstable with the operation up until the operation propagates to $A[i+2]$, thus the $\VL(A[i+1])$ in Line~\ref{lin:sc-I1-cond} returns \textit{true}.
        As $A[i]$ is still unstable, the $\LL(A[i])$ in Line~\ref{lin:llAi-1I} returns the value $\tup{\ast,b,\I}$, and by Lemma~\ref{lem:llllvl}, as the $\VL(A[i+1])$ is successful, during the $\LL(A[i])$, $A[i] = \tup{\ast,b,\I}$ and $A[i+1] = \tup{c,\ast,\I}$. As both cells are unstable with the same operation, by~\invenumref{inv:stable-unstable}{4}, $b = c$, and the condition in Line~\ref{lin:sc-I1-cond} holds.
        Thus, there is an $\SC(A[i],\tup{\ast,\ast,\Stable})$ before the successful $\SC$ that propagates the operation to $A[i+2]$.
        Since all $\SC$s on unstable cells change their value to be stable, both outcomes of the $\SC(A[i],\tup{\ast,\ast,\Stable})$ operation, either successful or unsuccessful, indicate that cell $A[i]$ stabilizes.
        
        If the operation is a $\del$ operation, then the operation propagates to $A[i+2]$ after a successful $\SC$ in a $\DSC$ called in Line~\ref{lin:dsc-D1} or~\ref{lin:dsc-D2}.
        In the $\LL(A[i+1])$ that precedes the successful $\SC$, $A[i+1] = \tup{c,\ast,\D}$ and unstable with the $\del$ operation.
        We already showed that $A[i+1]$ remains unstable with the operation up until the operation propagates to $A[i+2]$, thus the $\VL(A[i+1])$ in Line~\ref{lin:sc-D1-cond} returns \textit{true}.
        As $A[i]$ is still unstable, the $\LL(A[i])$ in Line~\ref{lin:llAi-1I} returns the value $\tup{\ast,b,\D}$, and by Lemma~\ref{lem:llllvl}, as the $\VL(A[i+1])$ is successful, during the $\LL(A[i])$, $A[i] = \tup{\ast,b,\D}$ and $A[i+1] = \tup{c,\ast,\D}$. As both cells are unstable with the same operation, by~\invenumref{inv:stable-unstable}{3}, $b \neq c$, and the condition in Line~\ref{lin:sc-D1-cond} holds.
        Thus, there is an $\SC(A[i],\tup{\ast,\ast,\Stable})$ before the successful $\SC$ that propagates the operation to $A[i+2]$.
        Since all $\SC$s on unstable cells change their value to be stable, both outcomes of the $\SC(A[i],\tup{\ast,\ast,\Stable})$ operation, either successful or unsuccessful, indicate that cell $A[i]$ stabilizes.

         \item \textbf{If $A[i+1]$ becomes unstable with the operation, $A[i+1]$ stabilizes before $A[i]$ stabilizes:}
         We showed that if $A[i+1]$ becomes unstable, then the operation also propagates to cell $A[i+2]$ before $A[i+1]$ stabilizes.
         Since we also showed that $A[i]$ stabilizes before the operation propagates to $A[i+2]$, this proves this case. \qedhere
    \end{itemize}
\end{proof}

The next lemma verifies that an operation only propagates once to a cell. If the cell becomes unstable after the propagation, this follows directly from Lemma~\ref{lem:propstructure}.
\begin{lemma}
\label{lem:proponce}
    An operation can only propagate once to any cell in the table.
\end{lemma}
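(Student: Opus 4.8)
The plan is to argue by contradiction: suppose some operation $o$ propagates twice to the same cell $A[k]$. By the recursive structure of propagation --- an operation advances from $A[i]$ to $A[i+1]$ only via a successful $\SC$ inside a $\DSC$ called in Line~\ref{lin:dsc-I1}, \ref{lin:dsc-I2}, \ref{lin:dsc-D1}, or~\ref{lin:dsc-D2}, and such a $\DSC$ is executed only after a process reads $A[i]$ unstable with $o$ --- propagating to $A[k]$ requires $o$ to have first propagated to $A[k-1]$ and to have been seen there. So two propagations to $A[k]$ force two propagations to $A[k-1]$, and by an easy induction back to the cell $c_0$ where $o$ makes its initial write (Line~\ref{lin:sc-insert} or Line~\ref{lin:sc-del}); hence it suffices to show $o$ cannot propagate twice to \emph{any} cell, and in particular that $o$ cannot ``return'' to $c_0$.

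First I would dispose of the case where the cell becomes unstable with $o$ after the propagation. If the $i$-th propagation of $o$ makes $A[k]$ unstable with $o$, then Lemma~\ref{lem:propstructure} (applied with $A[i] := A[k]$) pins down the exact order of events: $A[k]$ becomes unstable with $o$, then $o$ propagates to $A[k+1]$, then $A[k]$ stabilizes, and if $o$ reaches $A[k+2]$ this happens strictly before $A[k]$ stabilizes. In particular, once $A[k]$ stabilizes as part of $o$'s forward progress, $o$ has already moved two cells ahead and no further $\DSC$ of $o$ targets $A[k]$ or $A[k-1]$ again --- a second propagation of $o$ to $A[k]$ would require $A[k]$ to be re-read unstable with $o$, contradicting the single ``becomes unstable with $o$'' event that Lemma~\ref{lem:propstructure} permits (together with the convention, stated before Invariant~\ref{inv:stable-unstable}, that re-acquiring the mark counts as a \emph{new} operation).

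The remaining case is a propagation of $o$ to $A[k]$ that leaves $A[k]$ stable (the end-of-run / already-at-hash-location cases: the $\DSC$ in Line~\ref{lin:dsc-I1}, the $c=\bot$ branch, and Line~\ref{lin:dsc-D2}). In each of these, $o$ has finished propagating: after the $\DSC$ succeeds, every cell $o$ ever touched is stable, and by Invariant~\ref{inv:stable-unstable} (specifically \invenumref{inv:stable-unstable}{2}, plus \invenumref{inv:stable-unstable}{3} for deletions) the run is in a legitimate stable configuration with no cell still marked by $o$. Since no cell carries $o$'s mark, no process can ever again execute a $\DSC$ on behalf of $o$, so $o$ cannot propagate anywhere after this point --- in particular not a second time to $A[k]$. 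Combining the two cases, $o$ propagates to $A[k]$ at most once; together with the induction down to $c_0$ this also rules out a ``wrap-around'' where $o$ would cycle through the whole table and revisit $c_0$.

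I expect the main obstacle to be the bookkeeping in the intermediate case where helpers race: showing that between the moment $A[k]$ is last seen unstable-with-$o$ and the moment it stabilizes, no process running on behalf of $o$ issues a $\DSC$ targeting $A[k]$ a second time. This is where Lemma~\ref{lem:propstructure} does the heavy lifting --- it forces $o$ to have already advanced past $A[k+1]$ before $A[k]$ stabilizes, so any later $\DSC(i+1,\cdot,i,\cdot)$ with $i = k$ would be operating from a cell $A[k]$ that is no longer unstable with $o$, and the $\VL(A[i])$ guard in Lines~\ref{lin:helpop-I-cond} and the analogous deletion guard, or the $\mathit{tup}.\mem$ check in Line~\ref{lin:dsc-checksucc}, causes it to do nothing. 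The $\LL$/$\VL$ reasoning is routine given Lemma~\ref{lem:llllvl}, so the proof is short once the reduction to Lemma~\ref{lem:propstructure} is set up.
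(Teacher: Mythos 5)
Your case split (the target cell becomes unstable with $o$ vs.\ stays stable) matches the paper's, and you correctly identify Lemma~\ref{lem:propstructure} as the key tool, but both branches rest on incorrect claims and the mechanism that actually carries the paper's proof is missing. The serious gap is the stable branch (the end-of-propagation writes, Line~\ref{lin:dsc-I1} with $c=\bot$ and Line~\ref{lin:dsc-D2}): you assert that once the $\DSC$ succeeds ``every cell $o$ ever touched is stable'' and ``no cell carries $o$'s mark,'' so no further $\DSC$ can be issued on $o$'s behalf. That is false in exactly the critical window: the propagation event is the \emph{first} $\SC$ of the $\DSC$, on $A[k]$, and at that moment $A[k-1]$ is still marked $\I$ or $\D$ by $o$; it stabilizes only later, via the second $\SC$ of the $\DSC$ or a helper's $\SC$ in Line~\ref{lin:sc-I2} or Line~\ref{lin:sc-D2}. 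During this window another helper can perform a fresh $\LL$ of $A[k-1]$ (still unstable with $o$) and of the already-updated $A[k]$, and nothing in your argument prevents it from issuing a second $\DSC$ into $A[k]$. What rules this out in the paper is that, by Lemma~\ref{lem:llllvl} together with \invenumref{inv:stable-unstable}{3} and \invenumref{inv:stable-unstable}{4}, such a helper's reads necessarily satisfy $b=c$ (insertion) or $c=\bot$ or $M_2=\D$ (deletion), so the tests in Lines~\ref{lin:sc-I1-cond}, \ref{lin:sc-I2}, \ref{lin:sc-D1-cond} and~\ref{lin:sc-D2-cond} route it to merely stabilizing $A[k-1]$ rather than propagating again; the guards you invoke (the $\VL$ in Line~\ref{lin:helpop-I-cond}, the check in Line~\ref{lin:dsc-checksucc}) do not do this work.

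The unstable branch and your opening reduction are also off. A propagation of $o$ \emph{to} $A[k]$ does not require $A[k]$ to be re-read unstable with $o$ --- it requires $A[k-1]$ to be unstable with $o$ and $A[k]$ to be \emph{stable} at the successful $\SC$ --- so the contradiction you draw from ``the single becomes-unstable event'' does not follow as stated. The argument that works (and is the paper's) uses the ordering of Lemma~\ref{lem:propstructure}: $A[k-1]$ stabilizes before $A[k]$ does, and once $A[k-1]$ is stable it is never again unstable with $o$, so after the first propagation there is no later point at which $A[k-1]$ is unstable with $o$ while $A[k]$ is stable. (You also invert the lemma's order --- propagation to $A[k+2]$ happens \emph{after} $A[k]$ stabilizes, not before --- and your ``later $\DSC(i+1,\cdot,i,\cdot)$ with $i=k$'' targets $A[k+1]$, not $A[k]$.) Finally, the induction back to $c_0$ is both unsound and unnecessary: a duplicate propagation into $A[k]$ would not force a duplicate propagation into $A[k-1]$, since absent the code's checks it could occur within the single interval in which $A[k-1]$ is unstable with $o$; and the reduction in any case ends by restating the lemma, so it buys nothing.
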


\begin{proof}
    If cell $A[i+1]$ becomes unstable after the propagation of the operation, then by Lemma~\ref{lem:propstructure}, $A[i]$ stabilizes before $A[i+1]$ stabilizes. Since an unstable cell can only change to be stable, and cell $A[i]$ must be unstable for the operation to propagate to cell $A[i+1]$, the operation cannot propagate again to $A[i+1]$.
    If cell $A[i+1]$ remains stable after the propagation, then by Lemma~\ref{lem:llllvl} \invenumref{inv:stable-unstable}{3} and \invenumref{inv:stable-unstable}{4}, after the operation propagates to $A[i+1]$ and while $A[i]$ is unstable, the conditions in Line~\ref{lin:sc-I1-cond}, \ref{lin:sc-I2}, \ref{lin:sc-D1-cond} and \ref{lin:sc-D2-cond}, 
    prevent the operation to propagate again to cell $A[i+1]$.
\end{proof}

\subsection{Proving Invariant~\ref{inv:ext-order-invar} and Invariant~\ref{inv:stable-unstable}}
\label{sec:implproof-val&order}

The next property of Robin Hood hashing is the key to establishing the correctness of the
lookup stage.
It shows that if an element in location $i$ in the table has higher priority than element $v$, then all elements in locations between $h(v)$ and $i$ in the table also have higher priority than element $v$. According to the ordering invariant, this shows that $v$'s location in the table must be after $i$. 
For element $v\in \CalU$ and cell $0\leq i < \ncell$, let $\rank(v,i) = i - h(v)$ be the distance of cell $i$ from $v$'s initial hash location. 
Since we use linear probing, this is also $i$’s rank in $v$’s probing sequence.

\begin{lemma}
\label{lem:monotone-priority}
    Consider a multiset $P$ and element $v$, if
    $\canmult(P)[i] >_{p_i} v$, $0\leq i < \ncell$, then  for all $h(v)\leq j \leq i$, $\canmult(P)[j] >_{p_j} v$.
\end{lemma}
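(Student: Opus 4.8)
The plan is to reduce the statement to two ingredients: (i) in Robin Hood hashing the priority order between any two \emph{fixed} elements is the same in every cell that lies in both of their probe sequences, and (ii) the ordering invariant (Invariant~\ref{inv:order-invar}), which $\canmult(P)$ satisfies by construction.

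First I would record the priority observation. Fix two elements $x,y$. For any cell $j$ (with nonnegative rank in both probe sequences) we have $\rank(x,j)-\rank(y,j)=(j-h(x))-(j-h(y))=h(y)-h(x)$, which does \emph{not} depend on $j$. Since $a \prio{>}{j} b$ holds exactly when either $\rank(a,j)>\rank(b,j)$, or $\rank(a,j)=\rank(b,j)$ and $a>b$ (the tie-break being cell-independent), it follows that the relation between $x$ and $y$ is the same in every such cell: $x \prio{>}{j_1} y \iff x \prio{>}{j_2} y$. Now write $w=\canmult(P)[i]$. Since $\bot$ has lower priority than every element, $w \prio{>}{i} v$ forces $w \neq \bot$, and moreover $\rank(w,i)\geq \rank(v,i)\geq 0$, so $h(w)$ precedes (or equals) $h(v)$ in the forward probe order and every $j$ with $h(v)\leq j\leq i$ has nonnegative rank in $w$'s probe sequence. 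Applying the observation with $j_1=i$, the hypothesis $w \prio{>}{i} v$ yields $w \prio{>}{j} v$ for every such $j$.

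Next I would invoke the ordering invariant. As $w$ is the element stored in cell $i$ of $\canmult(P)$ and $\canmult(P)$ is the representation produced by Robin Hood hashing, Invariant~\ref{inv:order-invar} gives that for every cell $j$ lying between $h(w)$ and $i$, $\canmult(P)[j]\prio{\geq}{j} w$. Because $h(w)$ precedes or equals $h(v)$, the arc $\{\,j : h(v)\leq j\leq i\,\}$ is contained in the arc between $h(w)$ and $i$, so $\canmult(P)[j]\prio{\geq}{j} w$ for all $h(v)\leq j\leq i$. Combining this with $w \prio{>}{j} v$ from the previous step, and using that $\prio{>}{j}$ is a total order at the fixed cell $j$, we conclude $\canmult(P)[j]\prio{>}{j} v$ for all $h(v)\leq j\leq i$, which is the claim (the case $j=i$ being the hypothesis itself).

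The point I expect to require the most care is the cyclic bookkeeping: I must justify that ``$h(v)\leq j\leq i$'' implies ``$j$ lies between $h(w)$ and $i$'', and that all ranks involved stay in $[0,\ncell)$ so the rank-difference computation is valid without spurious modular wraparound. Both follow from the facts that $w \prio{>}{i} v$ makes the forward arc from $h(w)$ to $i$ at least as long as the forward arc from $h(v)$ to $i$ (strictly longer unless $h(w)=h(v)$), and that every run has length less than $\ncell$; I would phrase this throughout in terms of $\rank$ and the paper's standing convention that arithmetic wraps around modulo $\ncell$, so that ``$j$ between $a$ and $i$'' means the forward distance from $a$ to $j$ is at most that from $a$ to $i$.
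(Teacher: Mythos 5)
Your proposal is correct and follows essentially the same route as the paper's proof: both argue that the hypothesis $\canmult(P)[i]\prio{>}{i}v$ forces the stored element's priority over $v$ to persist at every cell on the arc from $h(v)$ to $i$ (since the rank difference is constant and the tie-break is cell-independent), and then combine this with the ordering invariant applied to that stored element. The only difference is cosmetic: you finish by invoking transitivity of the per-cell total order, whereas the paper spells the same step out via an explicit rank comparison and a tie-breaking case analysis (and your side remark about runs having length less than $\ncell$ is unnecessary, since the arc-containment argument already suffices).
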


\begin{proof}
    Denote $u = \canmult(P)[i]$.
    Since $u >_{p_i} v$, either $\rank(u, i) > \rank(v, i)$ or  $\rank(u, i) = \rank(v, i)$ and $u > v$. Either way, for all $h(v) \leq j \leq i$, $\rank(u, j) \geq \rank(v, j)$.
    By the ordering invariant, for every $h(u)\leq j < i$ we have $\canmult(P)[j] \geq_{p_j} u$, which implies that $\rank(\canmult(P)[j], j) \geq \rank(u, j)$.
    Since $\rank(u, i) \geq \rank(v, i)$, $h(u) \leq h(v) \leq i$, and for every $h(v)\leq j \leq i$, then $\rank(\canmult(P)[j], j) \geq \rank(u, j) \geq \rank(v, j)$.
    If $\rank(\canmult(P)[j], j) > \rank(v, j)$, then $\canmult(P)[j] >_{p_j} v$. Otherwise, if $\rank(\canmult(S)[j], j) = \rank(v, j)$, it holds that $\rank(\canmult(P)[j], j) = \rank(u, j)$ and $\canmult(P)[j] \geq u > v$, implying that $\canmult(P)[j] >_{p_j} v$.
\end{proof}

The next lemma shows that if an element is pushed into the \emph{lookahead} slot during an insert, its position in the table should be in the next cell.

\begin{lemma}
\label{lem:consistentprio}
    If $a \prio{>}{i} b$, $0\leq i < \ncell$, and $h(a) \neq i + 1$, then $a \prio{>}{i+1} b$.
\end{lemma}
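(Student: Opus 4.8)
The plan is to unfold the definition of the priority relation in terms of the rank function and track how ranks change when moving from cell $i$ to cell $i+1$. Recall that, writing $\rank(v,i) = (i - h(v)) \bmod \ncell \in \set{0,\dots,\ncell-1}$ for the position of cell $i$ in $v$'s probe sequence, we have $x \prio{>}{i} y$ exactly when either $\rank(x,i) > \rank(y,i)$, or $\rank(x,i) = \rank(y,i)$ and $x > y$. The key observation is that incrementing the cell index increments the rank by one, \emph{unless} the rank was already maximal: for any element $v$, $\rank(v,i+1) = \rank(v,i) + 1$ when $h(v) \neq i+1$, while $\rank(v,i+1) = 0$ when $h(v) = i+1$ (the probe sequence of $v$ has just wrapped back around to its hash location).

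First I would dispose of the degenerate cases. If $b = \bot$ the claim is immediate, since $\bot$ has the lowest priority in every cell; and $a = \bot$ is impossible under the hypothesis $a \prio{>}{i} b$, for the same reason. So assume $a$ and $b$ are genuine elements. Since $h(a) \neq i+1$ is given, the observation yields $\rank(a,i+1) = \rank(a,i) + 1$, and in particular $\rank(a,i+1) \geq 1$.

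Then I would split on whether $h(b) = i+1$. In the case $h(b) = i+1$, we get $\rank(b,i+1) = 0 < 1 \leq \rank(a,i+1)$, so $a \prio{>}{i+1} b$ outright. In the case $h(b) \neq i+1$, the observation also gives $\rank(b,i+1) = \rank(b,i) + 1$, so passing from cell $i$ to cell $i+1$ adds the same constant $1$ to both $\rank(a,\cdot)$ and $\rank(b,\cdot)$; hence a strict inequality $\rank(a,i) > \rank(b,i)$ is preserved, and likewise an equality $\rank(a,i) = \rank(b,i)$ together with the tie-break $a > b$ is preserved. In either sub-case $a \prio{>}{i+1} b$, which finishes the proof.

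The only thing to be careful about is the modular wraparound: this is precisely why the hypothesis $h(a) \neq i+1$ is needed (otherwise $a$'s rank would reset to $0$ and could fall below $b$'s), and why the case $h(b) = i+1$ must be handled separately instead of by the "add $1$ to both sides" argument. Beyond this bookkeeping the proof is a short, direct case analysis with no real obstacle.
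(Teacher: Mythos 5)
Your proof is correct and follows essentially the same route as the paper's: unfold the priority relation via $\rank$ and observe that moving from cell $i$ to cell $i+1$ increments the rank unless the hash location is $i+1$, with the hypothesis $h(a) \neq i+1$ ruling out a wraparound for $a$. The only cosmetic difference is that you case-split on whether $h(b) = i+1$ (handling $b$'s possible wraparound and the $\bot$ cases explicitly), whereas the paper splits on $\rank(a,i) > \rank(b,i)$ versus equality; your version is if anything slightly more explicit about the same bookkeeping.
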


\begin{proof}
    As $a \prio{>}{i} b$, either $\rank(a,i) > \rank(b,i)$ or $\rank(a,i) = \rank(b,i)$ and $a > b$. If $\rank(a,i) > \rank(b,i)$, since $h(a) \neq i + 1$, $\rank(a,i) \neq m-1$, and
    we also get that $\rank(a,i+1) > \rank(b,i+1)$.
    If $\rank(a, i) = \rank(b, i)$, we have that $h(b) = h(a)$, and it also holds that $\rank(a, i+1) = \rank(b, i+1)$.
\end{proof}

The next lemma shows that if an element is shifted backwards during a delete, its position in the table should be in the previous cell.
\begin{lemma}
\label{lem:consistentpriosmaller}
    If $a \prio{>}{i} b$, $0\leq i < \ncell$, and $h(b) \neq i$, then $a \prio{>}{i-1} b$.
\end{lemma}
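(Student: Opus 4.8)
The plan is to mirror the proof of Lemma~\ref{lem:consistentprio}, reasoning directly about the rank function and being careful with the modular wraparound. Recall that $a \prio{>}{i} b$ holds iff either $\rank(a,i) > \rank(b,i)$, or $\rank(a,i) = \rank(b,i)$ and $a > b$, where every rank lies in $\set{0,1,\ldots,\ncell-1}$ and $\rank(v,i) = i - h(v)$ is computed modulo $\ncell$. I would split into the two cases according to which disjunct of $a \prio{>}{i} b$ holds, and in each case compute $\rank(\cdot, i-1)$ from $\rank(\cdot, i)$.

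In the first case, $\rank(a,i) > \rank(b,i)$. The point at which the hypothesis $h(b)\neq i$ is used is here: $h(b)\neq i$ forces $\rank(b,i) \geq 1$, hence $\rank(a,i) \geq 2$. Therefore passing from cell $i$ to cell $i-1$ decrements both ranks without wraparound, i.e.\ $\rank(a,i-1) = \rank(a,i)-1 \geq 1$ and $\rank(b,i-1) = \rank(b,i)-1 \geq 0$, and the strict inequality is preserved: $\rank(a,i-1) > \rank(b,i-1)$, so $a \prio{>}{i-1} b$. In the second case, $\rank(a,i) = \rank(b,i)$ and $a > b$; equality of ranks at cell $i$ is equivalent to $h(a) = h(b)$, from which $\rank(a,i-1) = \rank(b,i-1)$ follows regardless of wraparound, and together with $a > b$ this gives $a \prio{>}{i-1} b$. (This second case does not even need the hypothesis $h(b)\neq i$.)

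The only subtlety is the wraparound bookkeeping in the first case, and it is mild: without the assumption $h(b)\neq i$ we could have $\rank(b,i) = 0$, so that $\rank(b,i-1) = \ncell-1$ would be maximal and the conclusion could fail — the entire force of the hypothesis is to rule this out. Hence I expect no real obstacle; the argument is a near-verbatim dual of the proof of Lemma~\ref{lem:consistentprio}, with ``$i+1$'' and ``$h(a)\neq i+1$'' replaced by ``$i-1$'' and ``$h(b)\neq i$''.
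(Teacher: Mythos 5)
Your proof is correct and follows essentially the same route as the paper: split on the two disjuncts of $a \prio{>}{i} b$, use $h(b)\neq i$ to get $\rank(b,i)\neq 0$ so the strict rank inequality survives the shift to cell $i-1$, and note that equal ranks mean $h(a)=h(b)$ so equality (plus $a>b$) is preserved. Your extra remark that $\rank(a,i)\geq 2$ rules out wraparound for $a$ as well is just a slightly more explicit version of the same bookkeeping the paper does implicitly.
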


\begin{proof}
    As $a \prio{>}{i} b$, either $\rank(a,i) > \rank(b,i)$ or $\rank(a,i) = \rank(b,i)$ and $a > b$. If $\rank(a,i) > \rank(b,i)$, since $h(b) \neq i$, $\rank(b,i)\neq 0$, and we also get that $\rank(a,i-1) > \rank(b,i-1)$.
    If $\rank(a, i) = \rank(b, i)$, we have that $h(b) = h(a)$, and it also holds that $\rank(a, i-1) = \rank(b, i-1)$.
\end{proof}

The next lemma is proved by case analysis of all the possible ways to change a cell in $A$.

\begin{lemma}
\label{lem:validorder}
    At any algorithm step, $A$ satisfies Invariant~\ref{inv:ext-order-invar} and Invariant~\ref{inv:stable-unstable}.
\end{lemma}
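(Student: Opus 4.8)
The plan is to prove Lemma~\ref{lem:validorder} by induction on the steps of the execution, showing that each primitive operation that modifies the shared array $A$ preserves both Invariant~\ref{inv:ext-order-invar} and Invariant~\ref{inv:stable-unstable}. The base case is the initial configuration, where every cell is $\tup{\bot,\bot,\Stable}$; here $A|_\mem$ is the canonical representation of the empty multiset, and all clauses hold vacuously or trivially (recall $\bot$ has lowest priority everywhere). For the inductive step I would enumerate every line of the pseudocode that performs a successful $\SC$ on some cell $A[i]$ --- these are the only steps that can change the memory representation --- and check that, given the invariants held before the step (and using the structural facts from Section~\ref{sec:implproof-prop}, in particular Lemma~\ref{lem:propstructure} and Lemma~\ref{lem:proponce}, which tell us exactly which neighboring cells are stable/unstable and with which operation), the invariants still hold after. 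The relevant $\SC$ sites are: the initial write of an insert (Line~\ref{lin:sc-insert}), the initial write of a delete (Line~\ref{lin:sc-del}), the ``unlock a stale predecessor'' writes (Lines~\ref{lin:sc-I1} and~\ref{lin:sc-D1}), and the $\DSC$-driven propagation writes (Lines~\ref{lin:dsc-I1}, \ref{lin:dsc-I2}, \ref{lin:dsc-D1}, \ref{lin:dsc-D2}), together with the ``already-propagated'' stabilizing writes (Lines~\ref{lin:sc-I2}, \ref{lin:sc-D2}).

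For each such write I would verify the affected clauses locally. Changing $A[i]$ can only affect the clauses that mention cell $i$ or cell $i-1$: clauses \ref{inv:ext-order-invar1}, \ref{inv:ext-order-invar2}, \ref{inv:ext-order-invar3} for indices $i-1$ and $i$, and the corresponding parts of Invariant~\ref{inv:stable-unstable}. The two purely arithmetic lemmas already in the excerpt do the heavy lifting here: Lemma~\ref{lem:consistentprio} handles the insert case (when element $b$ is pushed into the \emph{lookahead} slot of $A[i-1]$ and then forward into the \emph{value} slot of $A[i]$, its priority relations carry over to the new cell, provided we are not in the table-full corner case $h(b)=i$), and Lemma~\ref{lem:consistentpriosmaller} handles the delete case (when element $d$ is shifted backwards, its priority relations carry over, provided $h(d)\neq i$, which is exactly the condition checked in Line~\ref{lin:dsc-D1} before this write is performed). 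The monotonicity lemma, Lemma~\ref{lem:monotone-priority}, is what lets us conclude that after an insert's chain of displacements the \emph{value} column is still the canonical layout of the new multiset. For the initial writes, the key observation is that the operation only reaches Line~\ref{lin:sc-insert} / Line~\ref{lin:sc-del} after confirming the ``correct position'' condition ($b=\bot$ or $v \prio{>}{i+1} b$ for insert; $b=v$ for delete) on a cell that was stable at the time of the last $\LL$, and the $\SC$ succeeds only if the cell did not change in between --- so the preconditions on the cell contents are exactly as needed, and the write merely flips the mark to $\I$/$\D$ while placing $v$ (for insert) or leaving $b=v$ duplicated (for delete) in the \emph{lookahead} slot, which is precisely clause \ref{inv:stable-unstable2.5} / \ref{inv:stable-unstable3} / \ref{inv:stable-unstable4}.

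**The main obstacle will be the $\DSC$-based propagation writes**, because a $\DSC$ consists of two $\SC$s that are \emph{not} atomic, so between the write to $A[i+1]$ and the write to $A[i]$ (or between their help-repetitions by different processes) the array passes through an intermediate state in which both $A[i]$ and $A[i+1]$ carry the same mark and a duplicated value. I need to show that this intermediate state \emph{also} satisfies the invariants --- which is why Invariant~\ref{inv:stable-unstable} is stated with the careful ``if the operation propagated to cell $A[i+1]$'' phrasing and the sub-cases on whether $A[i+1].\mem = \bot$. Concretely, after the first $\SC$ of a $\DSC$ in Line~\ref{lin:dsc-I2} we have $A[i]=\tup{a,b,\I}$ and $A[i+1]=\tup{b,c,\I}$ simultaneously (element $b$ duplicated across the \emph{value} slots), and I must check clause \ref{inv:stable-unstable4} and clause \ref{inv:ext-order-invar3} hold in this transient state; the $\DCAS$-style re-check in Line~\ref{lin:dsc-checksucc} (``if $A[i].\mem$ still equals $\mathit{tup}_1.\mem$ then do the second $\SC$'') is what guarantees a lagging helper does not perform the second $\SC$ against a stale first $\SC$, so I would lean on Lemma~\ref{lem:llllvl} and Lemma~\ref{lem:propstructure} to pin down exactly which cells a helper may legally touch. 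The delete propagation in Line~\ref{lin:dsc-D1} is the most intricate single case: it writes $A[i]=\tup{a,d,\Stable}$ and $A[i+1]=\tup{d,d,\D}$, deliberately creating a triple copy of $d$ (value of $A[i+1]$, lookahead of $A[i+1]$, and lookahead of $A[i]$), and I must verify all three clauses of Invariant~\ref{inv:ext-order-invar} together with clause \ref{inv:stable-unstable3} here, using $h(d)\neq i+2$ (checked in Line~\ref{lin:dsc-D1}'s guard) via Lemma~\ref{lem:consistentpriosmaller} to get the priority comparison $d \prio{>}{i+1} e$ where $e = A[i+2].\mem$, and using clause \ref{inv:stable-unstable2} on the previously-stable $A[i+1]$ to know $c = d$ before the write. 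Once all the $\SC$ sites are dispatched this way the lemma follows; the bookkeeping is long but each case is a short local argument powered by the three arithmetic lemmas and the propagation-structure lemmas.
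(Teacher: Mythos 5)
Your proposal is correct and follows essentially the same route as the paper's proof: induction over execution steps with a case analysis of every successful $\SC$ site (the initial writes, the stabilizing $\SC$s, and the two non-atomic $\SC$s inside each $\DSC$), verifying the affected clauses locally via Lemmas~\ref{lem:llllvl}, \ref{lem:propstructure}, \ref{lem:proponce}, \ref{lem:monotone-priority}, \ref{lem:consistentprio} and~\ref{lem:consistentpriosmaller}, exactly as the paper does. One local correction for your Line~\ref{lin:dsc-D1} discussion: on the previously-stable $A[i+1]=\tup{c,d,\Stable}$, \invenumref{inv:stable-unstable}{2} gives $A[i+2].\mem = d$ rather than $c=d$ (in fact $c\neq d$ is what holds and is needed), so your target comparison $d \prio{>}{i+1} A[i+2].\mem$ is degenerate; the comparison actually required is $c \prio{>}{i+1} d$, obtained from \invenumref{inv:ext-order-invar}{3} together with $c\neq d$, after which Lemma~\ref{lem:monotone-priority} yields \invenumref{inv:ext-order-invar}{1} for the shifted element --- a slip in detail, not in approach.
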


\begin{proof}
    At the beginning $A$ is empty and both invariants hold trivially.
    Assume table $A$ satisfies the lemma statement and consider the next step that changes the table.
    This happens because of a successful $\SC$ operation:
    \begin{itemize}
        \item \textbf{Successful $\SC$ in Line~\ref{lin:sc-insert}:}
        Let $\tup{a,b,M}$ be the value returned from the $\LL(A[i])$ that precedes the successful $\SC$ operation.
        By the condition in Line~\ref{lin:insr-helpop}, $M = \Stable$.
        The successful $\SC$ changes the value of $A[i]$ to be $\tup{a,v,\I}$.
        \invenumref{inv:ext-order-invar}{1}  is trivially maintained, as no changes are made to any $\mem$ slot.
        By the condition in Line~\ref{lin:insert-loc}, $v \prio{>}{i+1} b$.
        By \invenumref{inv:stable-unstable}{2}, during the successful $\SC$, $A[i+1] = \tup{b,\ast,\ast}$.
        Either $h(v) = i+1$, or since this is not the first iteration in the loop, by the conditions in Line~\ref{lin:insert-until} and Line~\ref{lin:insr-element-present}, $v \prio{<}{i} a$.
        This proves that Invariant~\ref{inv:ext-order-invar} continues to hold.
         Since $A[i]$ is the first cell to become unstable with this insert operation and the $\mem$ slot does not change, this also proves Invariant~\ref{inv:stable-unstable}.

        \item \textbf{Successful $\SC$ in Line~\ref{lin:sc-del}:}
        Let $\tup{a,b,M}$ be the value returned from the $\LL(A[i])$ that precedes the successful $\SC$ operation.
        By the condition in Line~\ref{lin:del-helpop}, $M = \Stable$.
        The successful $\SC$ changes the value of $A[i]$ to be $\tup{a,b,\D}$. 
        Since no $\mem$ or $\lookahead$ slot is changed, the table continues to satisfy Invariant~\ref{inv:ext-order-invar}.
         By \invenumref{inv:stable-unstable}{2}, just after the successful $\SC$, $A[i].\mem = A[i+1].\lookahead$, and since $A[i]$ is the first cell to become unstable with this delete operation and the $\mem$ slot does not change, the table also
         satisfies Invariant~\ref{inv:stable-unstable}.

        \item \textbf{Successful $\SC$ in $\helpop$:}
         Let $\tup{a,b,M_1}$ and $\tup{c,d,M_2}$ be the values returned from the $\LL(A[i])$ and $\LL(A[i+1])$, respectively, in Line~\ref{lin:llAi}, ~\ref{lin:llAi+1}, or~\ref{lin:llAi+12},
         which precede the successful $\SC$ operation.
         If a successful $\SC(A[i],\ast)$ or $\SC(A[i+1],\ast)$ happens, by Lemma~\ref{lem:llllvl}, during the $\LL(A[i])$ that precedes the successful $\SC$, $A[i] = \tup{a,b,M_1}$ and $A[i+1] = \tup{c,d,M_2}$.
         Let $\tup{x,y,M_3}$ be the value returned from the $\LL(A[i-1])$ in Line~\ref{lin:llAi-1I} or~\ref{lin:llAi-1D} before the successful $\SC$.
         If a successful $\SC(A[i-1],\ast)$ happens, by Lemma~\ref{lem:llllvl}, during the $\LL(A[i])$ that precedes the successful $\SC$, $A[i-1] = \tup{x,y,M_3}$ and $A[i] = \tup{a,b,M_1}$. 
         \begin{itemize}

         \item \textbf{Successful $\SC(A[i-1],\ast)$ in Line~\ref{lin:sc-I1}:}
         The value of $A[i-1]$ is replaced from $\tup{x,y,\I}$ to $\tup{x,y,\Stable}$.
         Since $M_1 = \I$ and $y=a$, 
         by \invenumref{inv:stable-unstable}{4}, $A[i]$ is unstable with the same $\insr$ operation as $A[i-1]$.
         By Lemma~\ref{lem:propstructure}, during the successful $\SC$, $A[i] = \tup{y, b, \I}$ and Invariant~\ref{inv:stable-unstable} holds.
         Invariant~\ref{inv:ext-order-invar} trivially holds as no $\mem$ or $\lookahead$ slot is changed.

         \item \textbf{Successful $\SC(A[i],\ast)$ in Line~\ref{lin:sc-I2}:}
         The value of $A[i]$ is replaced from $\tup{a,b,\I}$ to $\tup{a,b,\Stable}$.
        By \invenumref{inv:stable-unstable}{4}, as $b=c$, the operation in $A[i]$ propagated to $A[i+1]$, and just after  the successful $\SC$, $A[i+1] = \tup{b, \ast, \ast}$, which implies Invariant~\ref{inv:stable-unstable}.
         Invariant~\ref{inv:ext-order-invar} trivially holds as no $\mem$ or $\lookahead$ slot is changed.
        
        \item \textbf{Successful $\SC(A[i+1],\ast)$ in $\DSC$ called in Line~\ref{lin:dsc-I1} or~\ref{lin:dsc-I2}:}
        By the code, $M_1 = \I$, $M_2 = \Stable$ and $b \prio{>}{i+1} c$.
        The successful $\SC$ changes the value of $A[i+1]$ to $\tup{b,d,\Stable}$ if $c = \bot$, and to $\tup{b,c,\I}$ otherwise. 
        Since by Lemma~\ref{lem:proponce} this is the only propagation of the operation to cell $A[i+1]$,
        by Lemma~\ref{lem:propstructure}, during the successful $\SC$, $A[i] = \tup{a, b, \I}$ and 
        \invref{inv:stable-unstable} holds for this cell.
        Since $c$ is replaced with a higher priority value, all elements where cell $i+1$ is between their initial hash location and their actual location in $A$ continue to satisfy the ordering invariant.
        By \invenumref{inv:stable-unstable}{1} and \invenumref{inv:stable-unstable}{2.5}, $a\neq b$. By \invenumref{inv:ext-order-invar}{3}, either $h(b) = i+1$ or $a \prio{>}{i} b$, and by Lemma~\ref{lem:monotone-priority}, the table satisfies \invenumref{inv:ext-order-invar}{1}

        If $c = \bot$, $A[i+1].\lookahead$ does not change.
        Additionally, since either $d = \bot$, or $\bot \prio{<}{i+1} d$ and $h(d) = i + 2$, both Invariant~\ref{inv:ext-order-invar} and Invariant~\ref{inv:stable-unstable} hold.
        Otherwise, by Invariant~\ref{inv:stable-unstable}, $A[i+2].\mem = d$ and $c\neq d$.
        By Lemma~\ref{lem:propstructure}, $A[i]$ is still unstable with the operation, and the operation has not yet propagated to the next cell, which implies Invariant~\ref{inv:stable-unstable}.
        Either $h(d) = i+2$, or $c \prio{>}{i+1} d$.
        If $c \prio{>}{i+1} d$ and $h(c) \neq i + 2$, by Lemma~\ref{lem:consistentprio}, $c \prio{>}{i+2} d$.
        Otherwise, if $h(d) = i+2$, either $h(c) = i+2$ or $\rank(c,i+2) > \rank(d, i+2)$ and $c \prio{>}{i+2} d$, and Invariant~\ref{inv:ext-order-invar} holds.

        \item \textbf{Successful $\SC(A[i],\ast)$ in $\DSC$ called in Line~\ref{lin:dsc-I1} or~\ref{lin:dsc-I2}:}
        The successful $\SC$ changes the value of $A[i]$ from $\tup{a,b,\I}$ to $\tup{a,b,\Stable}$.
        The conditions in Line~\ref{lin:dsc-sc} and Line~\ref{lin:dsc-checksucc} ensure that the operation propagated to cell $A[i+1]$, and by \invenumref{inv:stable-unstable}{4}, just after the successful $\SC$, $A[i+1] = \tup{b,\ast,\ast}$, 
        Since the $\mem$ slot does not change, this
        implies Invariant~\ref{inv:stable-unstable}.
        Invariant~\ref{inv:ext-order-invar} trivially holds as no $\mem$ or $\lookahead$ slot is changed.
        \end{itemize}

        For the next scenarios, which deal with propagating a $\del$ operation, we use the claim below to prove that \invref{inv:ext-order-invar} continues to hold when an element is shifted backwards.

        \begin{claim}
        \label{clm:shiftback}
            If $A[i] = \tup{a,b,\D}$ and $A[i+1] = \tup{c,\ast,\ast}$ such that $b\neq c$, then either $a \prio{>}{i} c$ or $h(c) = i+1$ and $a\neq c$.
        \end{claim}

        \begin{proof}
             By \invenumref{inv:stable-unstable}{3}, $b \prio{>}{i+1} c$.
             If $h(c) = i+1$, 
             we show that $h(a) = i+1$ implies that $a\neq c$, and this shows that $a\neq c$. If $h(a) = i+1$, by \invenumref{inv:ext-order-invar}{1}, $c \prio{>}{i+1} b$, in contradiction.
             Next, assume that $h(c) \neq i+1$.
             By Lemma~\ref{lem:consistentpriosmaller}, $b \prio{>}{i} c$.
             By \invenumref{inv:ext-order-invar}{3}, either $a \prio{\geq}{i} b$ or $h(b) = i+1$.
             By \invenumref{inv:ext-order-invar}{3}, either $a \prio{\geq}{i} b$ or $h(b) = i+1$.
             If $a \prio{\geq}{i} b$, since $b \prio{>}{i} c$, by transitivity of $\prio{>}{i}$, we also have $a \prio{>}{i} c$.
             Otherwise, if $h(b) = i+1$, as $b \prio{>}{i+1} c$, it must also hold that $h(c) = i+1$.
        \end{proof}

        \begin{itemize}
         \item \textbf{Successful $\SC(A[i-1],\ast)$ in Line~\ref{lin:sc-D1}:}
         The successful $\SC$ changes the value of $A[i-1]$ to $\tup{x,a,\Stable}$.
         Since $M_1 = \D$ and $y\neq a$, 
         by \invenumref{inv:stable-unstable}{3}, the operation propagated to $A[i]$.
         If $a\neq \bot$, $A[i]$
         is unstable with the same delete operation as $A[i-1]$, and by Lemma~\ref{lem:propstructure}, just after the successful $\SC$,
         $A[i] = \tup{a,b,\D}$.
         Otherwise, if $a = \bot$, by Lemma~\ref{lem:proponce} $A[i]$ cannot become unstable with the operation. By \invenumref{inv:stable-unstable}{3}, just after the successful $\SC$, $A[i] = \tup{\bot,\ast,\ast}$.
         Either way, 
         as the $mem$ slot does not change, Invariant~\ref{inv:stable-unstable} holds.
         Claim~\ref{clm:shiftback} proves that Invariant~\ref{inv:ext-order-invar} holds.

         \item \textbf{Successful $\SC(A[i],\ast)$ in Line~\ref{lin:sc-D2}:}
         The successful $\SC$ changes the value of $A[i]$ from $\tup{a,b,\D}$ to $\tup{a,c,\Stable}$, where either $c = \bot$ or $M_2 = \D$. 
         If $c = \bot$, by \invenumref{inv:stable-unstable}{2.5}, $b\neq \bot$.
         By \invenumref{inv:stable-unstable}{3} and Lemma~\ref{lem:proponce} $A[i+1]$ cannot become unstable with the operation. By \invenumref{inv:stable-unstable}{3}, just after the successful $\SC$, $A[i+1] = \tup{\bot,\ast,\ast}$.
         Otherwise, if $c \neq \bot$, then $M_2 = \D$ and the condition in Line~\ref{lin:helpop-while} ensures that $b \neq c$.
         By \invenumref{inv:stable-unstable}{3}, $A[i+1]$
         is unstable with the same delete operation as $A[i]$, and by Lemma~\ref{lem:propstructure}, just after the successful $\SC$,
         $A[i+1] = \tup{c,d,\D}$.
         Either way, 
         as the $mem$ slot does not change, Invariant~\ref{inv:stable-unstable} holds.
         Claim~\ref{clm:shiftback} proves that Invariant~\ref{inv:ext-order-invar} holds

        \item \textbf{Successful $\SC(A[i+1],\ast)$ in $\DSC$ called in Line~\ref{lin:dsc-D1} or~\ref{lin:dsc-D2}:}
        The successful $\SC$ changes the value of $A[i+1]$ to $\tup{d,d,\D}$ if $d \neq \bot$ and $h(d)\neq i+2$, and to $\tup{\bot,d,\Stable}$ otherwise.
        By Lemma~\ref{lem:propstructure},
        during the successful $\SC$, $A[i] = \tup{a,b,\D}$.
        By the code, $M_2  = \Stable$ and $c\neq \bot$, and by \invenumref{inv:stable-unstable}{3}, the operation did not propagate to cell $A[i+1]$ yet. Thus, by Invariant~\ref{inv:stable-unstable},
        $b=c$ and $c\neq d$.
        In addition, just after the successful $\SC$, $A[i+2] = \tup{d,\ast,\ast}$.
        By \invenumref{inv:ext-order-invar}{3}, $b = c \prio{>}{i+1} d$ or $h(d) = i+2$.
        This implies that Invariant~\ref{inv:stable-unstable},
        \invenumref{inv:ext-order-invar}{2} and \invenumref{inv:ext-order-invar}{3} hold for both cases.
        For $d\neq \bot$ and $h(d) \neq i+2$, $c \prio{>}{i+1} d$, and by Lemma~\ref{lem:monotone-priority} and \invenumref{inv:ext-order-invar}{1}, just before the successful $\SC$, for all $h(d) \leq j\leq i+1$, $A[j].\mem \prio{>}{j} d$.
        Hence, \invenumref{inv:ext-order-invar}{1}
        continues to hold when $d$ is shifted to the $\mem$ slot of $A[i+1]$.
        For $d = \bot$ or $h(d) = i+2$, \invenumref{inv:ext-order-invar}{1} continues to hold trivially.

         \item \textbf{Successful $\SC(A[i],\ast)$ in $\DSC$ called in Line~\ref{lin:dsc-D1} or \ref{lin:dsc-D2}:}
        The successful $\SC$ changes the value of $A[i]$ from $\tup{a,b,\D}$ to $\tup{a,d,\Stable}$ if $d \neq \bot$ and $h(d)\neq i+2$, and to $\tup{a,\bot,\Stable}$ otherwise.
        The conditions in Line~\ref{lin:dsc-sc} and Line~\ref{lin:dsc-checksucc} ensure that the operation propagated to cell $A[i+1]$.
        If $d \neq \bot$ and $h(d)\neq i+2$, by Lemma~\ref{lem:propstructure}, it must be that $A[i+1] = \tup{d,d,\D}$ before $A[i]$ stabilizes. Otherwise, by~\invenumref{inv:stable-unstable}{3} and Lemma~\ref{lem:proponce}, it must be that $A[i+1] = \tup{\bot,\ast,\ast}$ before $A[i]$ stabilizes. 
        Either way, 
        as the $mem$ slot does not change, Invariant~\ref{inv:stable-unstable} holds.
         Claim~\ref{clm:shiftback} proves that Invariant~\ref{inv:ext-order-invar} holds. \qedhere
    \end{itemize}
    \end{itemize}
\end{proof}

For the rest of the section, we assume that $A$ always satisfies Invariant~\ref{inv:ext-order-invar} and Invariant~\ref{inv:stable-unstable}, omitting the reference to Lemma~\ref{lem:validorder} for brevity.

Next, we show that relying on Assumption~\ref{asm:neverfull}, no operation aborts.

\begin{lemma}
    If Assumption~\ref{asm:neverfull} holds, no operation aborts.
\end{lemma}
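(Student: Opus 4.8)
The plan is to note that an operation can abort only at Line~\ref{lin:nospace} of $\insr$ or at Line~\ref{lin:stuckinsr} of $\helpop$, and to rule out each of these under Assumption~\ref{asm:neverfull}. Both arguments use the established invariants together with Lemma~\ref{lem:order-invariant} (so that $A|_\mem = \canmult(P)$, where $P$ is the multiset of value slots) and the Robin Hood structural lemmas; they show that whenever the abort condition would hold, every cell of $A$ is occupied, contradicting the assumption.

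\paragraph{The abort at Line~\ref{lin:nospace}.} I would show that the scan of $\insr(v)$ can never increment $i$ from $h(v)-2$ to $h(v)-1$. Suppose the scan advances from a cell $i$ to $i+1$ (i.e.\ reaches Line~\ref{lin:next-cell} with the value $\tup{a,b,M}$ it just read at $A[i]$). Then the tests at Lines~\ref{lin:insr-element-present}, \ref{lin:insr-helpop}, and \ref{lin:insert-loc} all failed, so $M = \Stable$, $a \ne v$, $b \ne v$, $b \ne \bot$, and $\neg(v \prio{>}{i+1} b)$; since priorities are total and $b \ne v$, this gives $b \prio{>}{i+1} v$, and by \invenumref{inv:stable-unstable}{2} we get $A[i+1].\mem = b \prio{>}{i+1} v$. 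By Lemma~\ref{lem:order-invariant}, $A|_\mem = \canmult(P)$, so Lemma~\ref{lem:monotone-priority} applied at cell $i+1$ gives $A[j].\mem \prio{>}{j} v$ (hence $A[j].\mem \ne \bot$) for every cell $j$ on $v$'s probe sequence from $h(v)$ to $i+1$. If $i+1 \equiv h(v)-1 \pmod{\ncell}$, this would say that \emph{every} cell of $A$ is occupied, contradicting Assumption~\ref{asm:neverfull}; hence $i \not\equiv h(v)-2 \pmod{\ncell}$. Since the only way to reach Line~\ref{lin:nospace} with $\textit{first} = \textit{false}$ is to have incremented $i$ into $h(v)-1$ and then into $h(v)$ (all other assignments to $i$ set $\textit{first} \gets \textit{true}$), the scan would have had to advance past $h(v)-2$, which we just excluded.

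\paragraph{The abort at Line~\ref{lin:stuckinsr}.} Here a process is inside the $M_1 = \I$ branch of $\helpop(i)$ with a successful $\VL(A[i])$, so by Lemma~\ref{lem:llllvl} it has a consistent snapshot $A[i] = \tup{a,b,\I}$, $A[i+1] = \tup{c,d,\Stable}$ with $c \prio{>}{i+1} b$; note $b \ne \bot$ by \invenumref{inv:stable-unstable}{2.5} and $c \ne \bot$ since $c \prio{>}{i+1} b$. Let $o$ be the operation with which $A[i]$ is unstable. As $c \ne b$, \invenumref{inv:stable-unstable}{4} implies $o$ has not yet propagated to $A[i+1]$. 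I would then show that at the moment $A[i]$ last became unstable with $o$ --- either by the initial write at Line~\ref{lin:sc-insert} or by a propagating $\DSC$ at Line~\ref{lin:dsc-I2} --- the value slot of $A[i+1]$ already satisfied $\neg(A[i+1].\mem \prio{>}{i+1} A[i].\lookahead)$: in the initial-write case this is immediate from the Line~\ref{lin:insert-loc} test and \invenumref{inv:stable-unstable}{2}; in the $\DSC$ case it follows from \invenumref{inv:ext-order-invar}{3}, \invenumref{inv:stable-unstable}{1}, Lemma~\ref{lem:consistentprio}, and the fact that under Assumption~\ref{asm:neverfull} no occupied cell sits one position before its hash location (otherwise a run would wrap around the whole table). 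Finally, while $A[i]$ stays unstable with $o$ and $o$ has not propagated to $A[i+1]$, the slot $A[i+1].\mem$ cannot change: initial writes never modify value slots, and by the hand-over-hand discipline (Lemma~\ref{lem:propstructure}, Lemma~\ref{lem:proponce}) the only write that could overwrite $A[i+1].\mem$ while $A[i]$ is locked is the (at most one) propagation of $o$ into $A[i+1]$. Hence at the snapshot $A[i+1].\mem = c$ still satisfies $\neg(c \prio{>}{i+1} b)$, contradicting the abort condition.

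\paragraph{Main obstacle.} The first case is a short consequence of Lemma~\ref{lem:monotone-priority} and Assumption~\ref{asm:neverfull}; the delicate part is the second case, where one must argue, via the hand-over-hand locking structure, that the value slot of the successor cell is frozen from the instant the predecessor cell is locked by an insert until that insert propagates, and that it was already ``below'' the displaced element at that instant. Both claims about ``no occupied cell one step before its hash location'' and ``the successor value slot is frozen'' are the technical crux and need Lemma~\ref{lem:propstructure}, Lemma~\ref{lem:proponce}, and Invariant~\ref{inv:stable-unstable} used carefully.
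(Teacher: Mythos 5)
Your proof is correct, but its two halves relate to the paper's proof differently. For the abort at Line~\ref{lin:nospace} you give essentially the paper's argument, just anchored one step earlier: the paper looks at the final (second) read of $A[h(v)-1]$ and uses the failed until-test ($a \prio{>}{h(v)-1} v$) together with Lemma~\ref{lem:monotone-priority} and \invenumref{inv:ext-order-invar}{1} to conclude that every value slot dominates $v$, while you forbid the advance from $h(v)-2$ to $h(v)-1$ by pushing the lookahead into cell $h(v)-1$ via \invenumref{inv:stable-unstable}{2}; both collapse to ``the table is full,'' contradicting Assumption~\ref{asm:neverfull}.

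For Line~\ref{lin:stuckinsr} you take a genuinely different, dynamic route. The paper stays entirely at the consistent snapshot $A[i]=\tup{a,b,\I}$, $A[i+1]=\tup{c,\ast,\ast}$ with $c\prio{>}{i+1}b$: \invenumref{inv:ext-order-invar}{2} forces $h(b)=i+1$ (the alternative $b\prio{\geq}{i+1}c$ is excluded), \invenumref{inv:ext-order-invar}{3} together with \invenumref{inv:stable-unstable}{1} and \invenumref{inv:stable-unstable}{2.5} gives $a\prio{>}{i}b$, and Lemma~\ref{lem:monotone-priority} then wraps around the whole table, so every value slot is occupied---a short, static invariant argument. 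You instead go back to the moment $A[i]$ was locked by $o$, show the displaced element already dominated the successor's value slot at that moment, and argue both slots are frozen until the snapshot. This is completable, but note two things: (i) your DSC sub-case still needs the same wrap-around ingredient (your ``no occupied cell one position before its hash location'' fact is exactly \invenumref{inv:ext-order-invar}{1} plus Assumption~\ref{asm:neverfull}), so you do not avoid the paper's key step, you only relocate it; and (ii) the freezing claim is not literally the statement of Lemma~\ref{lem:propstructure} or Lemma~\ref{lem:proponce}---you must additionally rule out a stale helper propagating a \emph{different} operation into $A[i+1]$ during the window, which requires re-running the $\LL/\SC$ interference argument from the proof of Lemma~\ref{lem:propstructure} (a successful propagation $\SC$ on $A[i+1]$ presupposes $A[i+1]$ unchanged since that helper's $\LL$, and $A[i]$ cannot change its associated operation without first writing $A[i+1]$). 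So your plan works, but the paper's snapshot argument obtains the second case with far less machinery.
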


\begin{proof}
    An operation aborts in Line~\ref{lin:nospace} or Line~\ref{lin:stuckinsr}.
    If an $\insr(v)$ operation aborts in Line~\ref{lin:nospace}, by the code, $A[h(v)-1] = \tup{a,b,\ast}$, such that $v \prio{<}{h(v)-1} a$ (by Line~\ref{lin:insert-until}) and $v \prio{<}{h(v)} b$ (by Line~\ref{lin:insert-loc}). By Lemma~\ref{lem:monotone-priority} and \invenumref{inv:ext-order-invar}{1}, for every $0\leq i < \ncell$, $A[i].\mem \prio{>}{i} v$.
    Thus, all $\mem$ slots must be nonempty, that is, not equal to $\bot$.

    If an operation aborts in Line~\ref{lin:stuckinsr}, then
    by Lemma~\ref{lem:llllvl}, there is a point before the abort where for some $0 \leq i < m$,  $A[i] = \tup{a,b,I}$ and $A[i+1] = \tup{c,\ast,\ast}$ such that $c \prio{>}{i+1} b$.
    By Invariant~\ref{inv:ext-order-invar} 
    and \invref{inv:stable-unstable},
    this implies that $h(b) = i+1$ and $a \prio{>}{i} b$.
    Thus, by Lemma~\ref{lem:monotone-priority} and \invenumref{inv:ext-order-invar}{1}, for every $0\leq i < \ncell$, $A[i].\mem \prio{>}{i} b$.
    Similarly to the previous case, this implies that Assumption~\ref{asm:neverfull} does not hold.
\end{proof}

\subsection{Linearizability}
\label{sec:implproof-linearizability}

We order the $\insr$ and $\del$ operations that actually insert or delete elements from the table and change the state of the dictionary according to the order of their initial writes.
In most cases, we can show that a $\lookup(v)$ operation returns \textit{true} if and only if $v$ is present in the table $A$, either in a \emph{lookahead} or \emph{value} slot, during its last read of $A$, and $v$ is present in the table $A$
if and only if $v$ is in the set by the order of operations defined above. (This also applies for $\insr$ and $\del$ operations that return \textit{false}.)
The only exception is when the initial write by an $\insr(v)$ or $\del(v)$ operation has occurred, but the first propagation of the operation has not yet completed.
An operation on element $v$ which starts before the initial write may be unaware of such insertion or deletion.
As a result, it can conclude that $v$ is not in the set (resp., in the set) when it is present (resp., not present) in the table, before the first propagation completes.
However, since this operation must be concurrent with the uncompleted $\insr$ or $\del$ operation, we can simply place this operation before the uncompleted operation in the order, ensuring that the operation returns the correct response according to the order of operations.

The next two lemmas show that the \emph{lookahead} slots also satisfy the ordering invariant of Robin Hood hashing, similar to the \emph{value} slots.

\begin{lemma}
\label{lem:biggerprio}
    Let $v\in \CalU$, if for some $0 \leq i < m$, $A[i].\mem \prio{>}{i} v$, then for all $h(v) \leq j < i$, $A[j].\mem \prio{>}{j} v$, and $A[j].\lookahead \prio{>}{j+1} v$ or $h(A[j].\lookahead) = j + 1$.
\end{lemma}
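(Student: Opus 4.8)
The plan is to reduce the claim to facts already proved about the \emph{value} slots and then push it one step over to the \emph{lookahead} slots, cell by cell. First I would dispose of the value-slot part. Since \invenumref{inv:ext-order-invar}{1} is exactly the (multiset) ordering invariant for $A|_\mem$, and this holds at every step by Lemma~\ref{lem:validorder}, Lemma~\ref{lem:order-invariant} gives $A|_\mem = \canmult(P)$, where $P$ is the multiset of values currently occupying the \emph{value} slots. The hypothesis says $A[i].\mem = \canmult(P)[i] \prio{>}{i} v$, so Lemma~\ref{lem:monotone-priority} applies and yields $\canmult(P)[j] \prio{>}{j} v$ for every $h(v)\le j\le i$; in particular $A[j].\mem \prio{>}{j} v$ for all $h(v)\le j < i$, which is the first half of the conclusion.

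For the \emph{lookahead} half, fix $j$ with $h(v)\le j < i$. I would invoke \invenumref{inv:ext-order-invar}{2}, which splits into two cases. If $h(A[j].\lookahead) = j+1$, we are already in the second disjunct of the conclusion and there is nothing more to do. Otherwise \invenumref{inv:ext-order-invar}{2} gives $A[j].\lookahead \prio{\geq}{j+1} A[j+1].\mem$. Since $h(v)\le j<i$ implies $h(v)\le j+1\le i$, the value-slot statement just established (with the case $j+1=i$ being the hypothesis itself) gives $A[j+1].\mem \prio{>}{j+1} v$. Combining this with $A[j].\lookahead \prio{\geq}{j+1} A[j+1].\mem$ by transitivity of the priority order at cell $j+1$ — i.e.\ $x \prio{\geq}{k} y$ and $y \prio{>}{k} z$ imply $x \prio{>}{k} z$, using that $\prio{\geq}{k}$ is a total preorder with ties broken by element identity — we conclude $A[j].\lookahead \prio{>}{j+1} v$, the first disjunct.

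I do not expect a genuine obstacle here; the content is essentially ``monotonicity of priority along a run, plus transitivity.'' The places that need care are: (i) confirming $A|_\mem$ satisfies the ordering invariant at the step under consideration so that Lemma~\ref{lem:order-invariant} and hence Lemma~\ref{lem:monotone-priority} may legitimately be invoked — this is precisely \invenumref{inv:ext-order-invar}{1} via Lemma~\ref{lem:validorder}; (ii) the cyclic reading of the index range $h(v)\le j<i$ (all arithmetic is modulo $m$), and checking that $j+1$ remains in $[h(v),i]$ so the value-slot bound is available at $j+1$; and (iii) getting the mixed $\prio{\geq}{j+1}$/$\prio{>}{j+1}$ transitivity step exactly right. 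If one prefers not to route through $\canmult$ at all, the value-slot part can be reproved directly by the same rank-comparison argument used in the proof of Lemma~\ref{lem:monotone-priority}, appealing only to \invenumref{inv:ext-order-invar}{1}; but invoking Lemma~\ref{lem:order-invariant} and Lemma~\ref{lem:monotone-priority} is cleaner and avoids repeating that computation.
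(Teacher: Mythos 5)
Your proposal is correct and follows essentially the same route as the paper's proof: apply Lemma~\ref{lem:monotone-priority} via \invenumref{inv:ext-order-invar}{1} for the \emph{value} slots, then use \invenumref{inv:ext-order-invar}{2} and transitivity of the priority order at cell $j+1$ for the \emph{lookahead} slots. Your explicit detour through Lemma~\ref{lem:order-invariant} to justify invoking Lemma~\ref{lem:monotone-priority} on $A|_\mem$ is just a more careful phrasing of the same step the paper takes implicitly.
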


\begin{proof}
    As \invenumref{inv:ext-order-invar}{1} holds, by Lemma~\ref{lem:monotone-priority}, for all $h(v) \leq j \leq i$, $A[j].\mem \prio{>}{j} v$.
    Let $h(v) \leq j < i$, then by \invenumref{inv:ext-order-invar}{2},
    either $A[j].\lookahead 
    \prio{\geq}{j+1} A[j+1].\mem$ or $h(A[j].\lookahead) = j+1$.
    If $A[j].\lookahead 
    \prio{\geq}{j+1} A[j+1].\mem$, since $A[j+1].\mem 
    \prio{>}{j+1}v$ and as $\prio{>}{i+1}$ is transitive, we get that
    $A[j].\lookahead \prio{>}{j+1} v$.
\end{proof}

\begin{lemma}
\label{lem:smallerprio}
    Let $v\in \CalU$, if for some $0 \leq i < m$, $A[i].\mem \prio{<}{i} v$, then for all $i+1 \leq j \leq h(v) - 1$, $A[j].\mem \prio{<}{j} v$ and all $i \leq j \leq h(v)-1$, $A[j].\lookahead \prio{<}{j} v$ or $h(A[j].\lookahead) = j + 1$.
\end{lemma}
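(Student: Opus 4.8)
The plan is to derive both assertions directly from the extended ordering invariant, which holds after every step by Lemma~\ref{lem:validorder}, together with the monotonicity statement of Lemma~\ref{lem:biggerprio}; no new combinatorial ingredient is needed. For the \emph{value} slots, fix $j$ with $i+1 \leq j \leq h(v)-1$ (all indices read cyclically) and suppose, toward a contradiction, that $A[j].\mem \prio{<}{j} v$ fails, i.e.\ $A[j].\mem = v$ or $A[j].\mem \prio{>}{j} v$. The geometric point is that, since $j$ lies on the forward arc strictly between $i$ and $h(v)$, the index $i$ lies on the forward arc from $h(v)$ up to but not including $j$ --- i.e.\ $h(v) \leq i < j$ in the sense used by \invenumref{inv:ext-order-invar}{1} and by Lemma~\ref{lem:biggerprio}. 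If $A[j].\mem = v$, then \invenumref{inv:ext-order-invar}{1} at cell $j$ gives $A[i].\mem \prio{\geq}{i} v$; if $A[j].\mem \prio{>}{j} v$, then Lemma~\ref{lem:biggerprio} at cell $j$ gives $A[i].\mem \prio{>}{i} v$. Either conclusion contradicts the hypothesis $A[i].\mem \prio{<}{i} v$ (in particular it rules out $A[i].\mem = v$), so $A[j].\mem \prio{<}{j} v$ for every such $j$.

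For the \emph{lookahead} slots, note that $A[j].\mem \prio{<}{j} v$ holds for every $j$ with $i \leq j \leq h(v)-1$: for $j = i$ this is the hypothesis, and for $i+1 \leq j \leq h(v)-1$ it was just established. Now apply \invenumref{inv:ext-order-invar}{3} at cell $j$: either $h(A[j].\lookahead) = j+1$, which is one of the two permitted conclusions, or $A[j].\mem \prio{\geq}{j} A[j].\lookahead$; in the latter case $A[j].\lookahead \prio{\leq}{j} A[j].\mem \prio{<}{j} v$, hence $A[j].\lookahead \prio{<}{j} v$ by transitivity of priorities at cell $j$, the other permitted conclusion. This completes the argument.

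The only delicate part I anticipate is the cyclic index bookkeeping in the first step --- verifying that ``$i+1 \leq j \leq h(v)-1$'' forces ``$h(v) \leq i < j$'', which is exactly the range quantified over by \invenumref{inv:ext-order-invar}{1} and Lemma~\ref{lem:biggerprio} --- together with the minor care needed to split off the degenerate case $A[j].\mem = v$, where the comparison $\prio{<}{j}$ is undefined because the two elements coincide, so one must invoke the ``$\geq$'' form of the ordering invariant rather than Lemma~\ref{lem:biggerprio}. Everything else is a one-line unwinding of Invariant~\ref{inv:ext-order-invar}.
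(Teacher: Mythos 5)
Your proof is correct and follows essentially the same route as the paper: the value-slot claim is obtained by ruling out $A[j].\mem = v$ via \invenumref{inv:ext-order-invar}{1} and ruling out $A[j].\mem \prio{>}{j} v$ via the monotonicity of priorities (you cite Lemma~\ref{lem:biggerprio}, the paper cites Lemma~\ref{lem:monotone-priority} directly, which is the same content), and the lookahead claim is exactly the paper's one-line application of \invenumref{inv:ext-order-invar}{3} plus transitivity.
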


\begin{proof}
    By \invenumref{inv:ext-order-invar}{1}, 
    for all $i+1 \leq j \leq h(v) - 1$, $A[j].\mem \neq v$. Otherwise, there is a lower priority element between $v$'s location and $h(v)$. This together with Lemma~\ref{lem:monotone-priority} implies that for all $i+1 \leq j \leq h(v) - 1$, $A[j].\mem \prio{<}{j} v$.
    Let $i \leq j \leq h(v)-1$.
    By~\invenumref{inv:ext-order-invar}{3},
    either $A[j].\lookahead \prio{\leq}{j} A[j].\mem \prio{<}{j} v$ or $h(A[j].\lookahead) = j+1$.
\end{proof}

We say the \emph{$v$ is physically in $A$} if there is a cell $0\leq i < \ncell$ such that $A[i].\mem = v$ or $A[i].\lookahead = v$.
Even if $v$ is physically present in $A$, an operation might conclude that $v$ is not in the set. For this reason, we also have the following definition;
We say that \emph{$v$ is logically in $A$}, if there is a cell $0\leq i < \ncell$ such that $A[i].\mem = v$ or $A[i].\lookahead = v$ and $A[i].\mem \prio{\geq}{i} v$.
By Invariant~\ref{inv:ext-order-invar}, if $A[i].\mem \prio{<}{i} A[i].\lookahead$, then $h(A[i].\lookahead) = i + 1$.
Hence, if $v$ is physically in $A$ but not logically in $A$, $A[h(v)-1].\lookahead = v$, and no other cells contain the value $v$.

Next, we prove that each of the conditions for determining that $v$ is not in the table ensures this.

\begin{lemma}
    \label{lem:elemnotpresentcond}
    Let $v\in \CalU$. If one of the following statements holds, then $v$ is not logically in $A$:
    \begin{enumerate}
        \item $A[h(v)].\mem \prio{<}{h(v)} v$.
        \item $A[h(v)-1].\mem \prio{>}{h(v)-1} v$.
    \end{enumerate}
    Furthermore, if one of the following statements holds, then $v$ is not physically in $A$: 
    \begin{enumerate}
        \item $A[h(v)].\mem \prio{<}{h(v)} v$ and $A[h(v)-1].\lookahead \neq v$.

        \item $A[h(v)-1].\mem \prio{>}{h(v)-1} v$ and $A[h(v)-1].\lookahead \neq v$.

        \item There is cell $i$, $0\leq i < \ncell$, such that $A[i].\mem \prio{>}{i} v \prio{>}{i+1} A[i].\lookahead$ and either $A[i].\flag = \Stable$ or $h(A[i].\lookahead) \neq i+1$.

        \item There is cell $i$, $0\leq i < \ncell$ and $i\neq h(v)-1$, such that $A[i].\lookahead \prio{>}{i} v \prio{>}{i+1} A[i+1].\mem$ and $h(A[i].\lookahead) \neq i+1$.
    \end{enumerate}
\end{lemma}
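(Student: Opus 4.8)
### Proof plan for Lemma~\ref{lem:elemnotpresentcond}

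The plan is to treat the two "not logically in $A$" claims first, then derive the four "not physically in $A$" claims, reusing the two ordering lemmas (Lemma~\ref{lem:biggerprio} and Lemma~\ref{lem:smallerprio}) together with the Extended Ordering Invariant as the main engine. Throughout I will use the characterization stated just before the lemma: if $v$ is physically in $A$ but not logically in $A$, then the \emph{only} cell witnessing this is $A[h(v)-1]$ with $A[h(v)-1].\lookahead = v$. So "physically in $A$" splits into the disjunction "logically in $A$" $\vee$ "$A[h(v)-1].\lookahead = v$", and each physical claim will be proved by ruling out both disjuncts.

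\medskip
\noindent\textbf{The two logical claims.}
For statement~1, suppose $A[h(v)].\mem \prio{<}{h(v)} v$ and, for contradiction, that $v$ is logically in $A$, say witnessed at cell $j$ (so $A[j].\mem = v$, or $A[j].\lookahead = v$ with $A[j].\mem \prio{\geq}{j} v$). The key point is that a witnessing cell $j$ must lie "after" $h(v)$ in the probe order in a sense made precise by \invenumref{inv:ext-order-invar}{1}: if $A[j].\mem = v$ then by \invenumref{inv:ext-order-invar}{1} every cell strictly between $h(v)$ and $j$ (inclusive of $h(v)$) has $\mem$-priority $\geq v$ at that cell, contradicting $A[h(v)].\mem \prio{<}{h(v)} v$ unless $j$ is "before" $h(v)$ — but then Lemma~\ref{lem:smallerprio}, applied with $i = h(v)$, forces $A[j].\mem \prio{<}{j} v$ for all $j$ from $h(v)+1$ up to $h(v)-1$ (wrapping), i.e. everywhere except possibly $h(v)$ itself, so $A[j].\mem \neq v$ there; and if the witness is a \emph{lookahead} slot, Lemma~\ref{lem:smallerprio} also gives $A[j].\lookahead \prio{<}{j} v$ or $h(A[j].\lookahead) = j+1$ for the same range, and in the latter case $A[j].\mem \prio{\geq}{j} v$ would force (via \invenumref{inv:ext-order-invar}{1} at cell $j+1$) a contradiction. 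For statement~2, suppose $A[h(v)-1].\mem \prio{>}{h(v)-1} v$; by Lemma~\ref{lem:biggerprio} applied with $i = h(v)-1$, every cell $j$ with $h(v) \leq j \leq h(v)-1$ has $A[j].\mem \prio{>}{j} v$ (so no $\mem$ slot equals $v$) and $A[j].\lookahead \prio{>}{j+1} v$ or $h(A[j].\lookahead) = j+1$; the first disjunct rules out a lookahead witness, and in the second disjunct $h(A[j].\lookahead) = j+1$ combined with $A[j].\lookahead = v$ and $A[j].\mem \prio{\geq}{j} v$ contradicts \invenumref{inv:ext-order-invar}{1} at cell $j+1$ (an element at its hash location cannot be preceded by a lower-or-equal priority element at that location). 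Hence no witnessing cell exists.

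\medskip
\noindent\textbf{The four physical claims.}
Claims~1 and~2 follow immediately: the logical versions (statements~1 and~2) already give "not logically in $A$", and the extra hypothesis $A[h(v)-1].\lookahead \neq v$ kills the only remaining way to be physically-but-not-logically in $A$. For claim~3, suppose $A[i].\mem \prio{>}{i} v \prio{>}{i+1} A[i].\lookahead$ with $A[i].\flag = \Stable$ or $h(A[i].\lookahead) \neq i+1$. From $A[i].\mem \prio{>}{i} v$ and Lemma~\ref{lem:biggerprio} (with this $i$), $v$ cannot appear in any $\mem$ or $\lookahead$ slot strictly before cell $i$ (going back to $h(v)$), handling the "before" side; the stable-or-non-hash-location condition together with \invenumref{inv:ext-order-invar}{2} forces $A[i].\lookahead \prio{\geq}{i+1} A[i+1].\mem$, hence $v \prio{>}{i+1} A[i+1].\mem$, and then Lemma~\ref{lem:smallerprio} (with $i+1$) pushes $A[j].\mem \prio{<}{j} v$ and $A[j].\lookahead \prio{<}{j} v$ (or the hash-location exception, handled as above) for all cells from $i+1$ forward to $h(v)-1$, handling the "after" side; and cells $i$ and $i+1$ themselves are directly excluded by the strict inequalities in the hypothesis. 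Since $h(v)-1$ is covered by the "after" range and the exception analysis again eliminates $A[h(v)-1].\lookahead = v$, we get "not physically in $A$". Claim~4 is the symmetric statement phrased in terms of the lookahead slot: $A[i].\lookahead \prio{>}{i} v \prio{>}{i+1} A[i+1].\mem$ with $i \neq h(v)-1$ and $h(A[i].\lookahead) \neq i+1$; here I would first invoke \invenumref{inv:ext-order-invar}{3} — since $h(A[i].\lookahead) \neq i+1$, it gives $A[i].\mem \prio{\geq}{i} A[i].\lookahead \prio{>}{i} v$, so we are back in the situation of claim~3's "before" analysis for cells $\leq i$ — and then $v \prio{>}{i+1} A[i+1].\mem$ with Lemma~\ref{lem:smallerprio} handles cells $\geq i+1$; the condition $i \neq h(v)-1$ is exactly what is needed so that the cell $A[h(v)-1]$ (the one exceptional spot for a physical-not-logical occurrence) falls inside one of these covered ranges and its lookahead is shown $\neq v$.

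\medskip
\noindent\textbf{Main obstacle.}
The routine parts are the priority-arithmetic invocations of the two ordering lemmas; the delicate part is the repeated handling of the hash-location exception ($h(A[j].\lookahead) = j+1$) that appears in both Lemma~\ref{lem:biggerprio} and Lemma~\ref{lem:smallerprio} and in \invenumref{inv:ext-order-invar}{2}--\invenumref{inv:ext-order-invar}{3}. In that exceptional case a \emph{lookahead} slot can legitimately hold an element of priority \emph{below} $v$ at cell $j$ yet still equal some element, so I must argue each time that if additionally $A[j].\lookahead = v$ then $h(v) = j+1$, and that an element sitting at its own hash location cannot be a "logical" occurrence passed over by a scan that has already gone beyond $h(v)$ — this is where \invenumref{inv:ext-order-invar}{1} at cell $j+1$ does the work. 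Getting the wrap-around indexing consistent (the ranges "$h(v) \leq j < i$" versus "$i+1 \leq j \leq h(v)-1$" are complementary arcs of the cyclic table) and making sure cell $h(v)-1$ is always covered is the other place where care is needed; I would state a small auxiliary observation once ("$A[h(v)-1].\lookahead = v$ and $A[h(v)-1].\mem \prio{\geq}{h(v)-1} v$ together contradict \invenumref{inv:ext-order-invar}{1}/Lemma~\ref{lem:biggerprio}") and reuse it across all cases.
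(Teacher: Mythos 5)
Your overall route is the same as the paper's: both arguments run on Lemma~\ref{lem:biggerprio} and Lemma~\ref{lem:smallerprio}, both reduce physical claim~3 to the sandwich $A[i].\mem \prio{>}{i} v \prio{>}{i+1} A[i+1].\mem$ (via \invenumref{inv:stable-unstable}{2} in the stable branch and \invenumref{inv:ext-order-invar}{3} in the other branch), and both reduce claim~4 to that situation via \invenumref{inv:ext-order-invar}{3} at cell $i$; the paper merely packages the sandwich step as an explicit intermediate statement (Claim~\ref{clm:biggersmallerprio}), where you inline it behind the observation that a physical-but-not-logical occurrence can only sit in $A[h(v)-1].\lookahead$. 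One small slip: in claim~3's stable branch you invoke \invenumref{inv:ext-order-invar}{2}, whose hash-location escape clause is not excluded by stability; you need \invenumref{inv:stable-unstable}{2} there, as the paper uses.

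There is, however, a genuine gap at exactly the point you flag as the main obstacle: eliminating $A[h(v)-1].\lookahead = v$ in the physical claims 3 and 4. Your stated mechanism---if $A[j].\lookahead = v$ with $h(v) = j+1$ and $A[j].\mem \prio{\geq}{j} v$, derive a contradiction from \invenumref{inv:ext-order-invar}{1} at cell $j+1$---only rules out a \emph{logical} occurrence. For the physical claims you must also exclude the configuration $A[h(v)-1].\lookahead = v$ with $A[h(v)-1].\mem \prio{<}{h(v)-1} v$, i.e.\ a copy of $v$ that is physically present but not logical, and nothing you have derived at that stage contradicts it. The paper closes precisely this case in the ``Otherwise'' branch of Claim~\ref{clm:biggersmallerprio}: applying \invenumref{inv:ext-order-invar}{3} at cell $h(v)-1$, a lookahead equal to $v$ forces either $v \prio{\leq}{h(v)-1} A[h(v)-1].\mem$, contradicting the Lemma~\ref{lem:smallerprio} side, or $v \prio{\geq}{h(v)} A[h(v)].\mem$, contradicting the Lemma~\ref{lem:biggerprio} bound at cell $h(v)$. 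You already use \invenumref{inv:ext-order-invar}{3} in claim~4, so the repair is within your toolkit, but as written this step would fail. The same over-reliance on \invenumref{inv:ext-order-invar}{1} appears in your logical claim~2: with only $A[h(v)-1].\mem \prio{>}{h(v)-1} v$ in hand, \invenumref{inv:ext-order-invar}{1} at cell $h(v)$ yields no contradiction (unlike claim~1, there is no hypothesis bounding $A[h(v)].\mem$), so that sub-case needs a different treatment---admittedly a spot where the paper's own proof is also terse.
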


\begin{proof}
    We consider the different cases:
    \begin{enumerate}
        \item If $A[h(v)].\mem \prio{<}{h(v)} v$, by Lemma~\ref{lem:smallerprio}, for all $h(v) \leq j \leq h(v) - 1$, $A[j].\mem \prio{<}{j} v$ and all $h(v) \leq j \leq h(v)-1$, $A[j].\lookahead \prio{<}{j} v$ or $h(A[j].\lookahead) = j + 1$. 
        If $A[h(v)-1].\lookahead = v$, $v$ is not logically in $A$, and if $A[h(v)-1].\lookahead \neq v$, it is not even physically in $A$.

        \item Assume $A[h(v)-1].\mem \prio{>}{h(v)-1} v$ and $A[h(v)-1].\lookahead \neq v$.
        By Lemma~\ref{lem:biggerprio}, for all $h(v) \leq j < h(v)-1$, $A[j].\mem \prio{>}{j} v$, and $A[j].\lookahead \prio{>}{j+1} v$ or  $h(A[j].\lookahead) = j+1 \neq h(v)$.
        If $A[h(v)-1].\lookahead = v$, $v$ is not logically in $A$, and if $A[h(v)-1].\lookahead \neq v$, it is not even physically in $A$.
    \end{enumerate}

    For the next two cases, we use the following claim:
    \begin{claim}
    \label{clm:biggersmallerprio}
        If there is $0\leq i < \ncell$ such that $A[i].\mem \prio{>}{i} v \prio{>}{i+1} A[i+1].\mem$ and $A[i].\lookahead \neq v$, then $v$ is not physically in $A$.
    \end{claim}

    \begin{proof}
    By Lemma~\ref{lem:biggerprio}, for all $h(v) \leq j < i$, $A[j].\mem \prio{>}{j} v$, and $A[j].\lookahead \prio{>}{j+1} v$ or $h(A[j].\lookahead) = j + 1 \neq h(v)$.
    By Lemma~\ref{lem:smallerprio},
    for all $i+2 \leq j \leq h(v) - 1$, $A[j].\mem \prio{<}{j} v$ and all $i+1 \leq j \leq h(v)-1$, $A[j].\lookahead \prio{<}{j} v$ or $h(A[j].\lookahead) = j + 1$.
    Let $i+1 \leq j \leq h(v)-1$ and assume that $h(A[j].\lookahead) = j + 1$. If $j\neq h(v)-1$, then $j+1 \neq h(v)$ and $A[j].\lookahead \neq v$.
    Otherwise, by \invenumref{inv:ext-order-invar}{3}, either $A[h(v)-1].\lookahead \prio{\leq}{h(v)-1} A[h(v)-1].\mem \prio{<}{h(v)-1} v$ or $A[h(v)-1].\lookahead \prio{\geq}{h(v)} A[h(v)].\mem \prio{>}{h(v)} v$.
    \end{proof}

    \begin{enumerate}
    \setcounter{enumi}{2}
        \item Assume that $A[i].\mem \prio{>}{i} v \prio{>}{i+1} A[i].\lookahead$ and $A[i].\flag = \Stable$ or $h(A[i].\lookahead) \neq i+1$.
        If $i = h(v) - 1$, by the second case of the lemma, $v$ is not physically in $A$.
        By \invenumref{inv:stable-unstable}{2}, if $A[i]$ is stable, $A[i+1].\mem = A[i].\lookahead \prio{<}{i+1} v$.
        By \invenumref{inv:ext-order-invar}{3},
        if $h(A[i].\lookahead) \neq i+1$, then $A[i+1].\mem \prio{\leq}{i+1} A[i].\lookahead \prio{<}{i+1} v$.
        Either way, as $\prio{<}{i+1}$ is transitive, 
        $A[i+1].\mem \prio{<}{i+1} v$.
        By Claim~\ref{clm:biggersmallerprio}, $v$ is not physically in the table.

    \item Assume that $A[i].\lookahead \prio{>}{i} v \prio{>}{i+1} A[i+1].\mem$ and $h(A[i].\lookahead) \neq i + 1$.
    Since $h(A[i].\lookahead) \neq i + 1$, by \invenumref{inv:ext-order-invar}{3}, we have $A[i].\mem \prio{\geq}{i} A[i].\lookahead$. By the transitivity of $\prio{>}{i}$, $A[i].\mem \prio{>}{i} v$.
    By Claim~\ref{clm:biggersmallerprio}, $v$ is not physically in the table.
    \qedhere
    \end{enumerate}
\end{proof}

In the next lemma, we verify that if a $\del(v)$ or $\lookup(v)$ operation concludes that $v$ is not in the table, then $v$ is not logically in the table.

\begin{lemma}
\label{lem:vnotinA}
    Consider a $\del(v)$ or $\lookup(v)$ operation that returns \textit{false}, then during the last read of $A$ in the operation, $v$ is not logically in $A$.
\end{lemma}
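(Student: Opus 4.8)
The plan is to trace through every line in the $\del(v)$ and $\lookup(v)$ pseudocode at which the operation can return \textit{false}, and for each one extract a predicate about the cells read just before the return, then feed that predicate into Lemma~\ref{lem:elemnotpresentcond} (specifically, its first two ``logically not in $A$'' cases, plus the four ``physically not in $A$'' cases, since physically-not-in implies logically-not-in). The return-false sites are: Line~\ref{lin:del-elementnotpresent1} and Line~\ref{lin:del-elementnotpresent2} for $\del$, and Line~\ref{lin:lookup-elementnotpresent1}, Line~\ref{lin:lookup-elementnotpresent3}, and Line~\ref{lin:lookup-elementnotpresent2} for $\lookup$. For the loop-exhaustion cases (Lines~\ref{lin:del-elementnotpresent2} and~\ref{lin:lookup-elementnotpresent2}), the loop has iterated over all $\ncell$ cells; combining the loop-exit condition on Line~\ref{lin:until-del} / Line~\ref{lin:until-lookup} (namely $v \prio{>}{i} a$, i.e.\ $A[i].\mem \prio{<}{i} v$ with $i \neq h(v)$) with Lemma~\ref{lem:smallerprio} applied around the whole cycle forces $A[h(v)].\mem \prio{<}{h(v)} v$, which is case~1 of Lemma~\ref{lem:elemnotpresentcond}.

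For the ``position found'' return at Line~\ref{lin:del-elementnotpresent1} (and the identical disjunction in Line~\ref{lin:lookup-elementnotpresent1}), the guard is $(i = h(v) \text{ and } v \prio{>}{i} a)$ or $(a \prio{>}{i} v \prio{>}{i+1} b \text{ and } (M = \Stable \text{ or } h(b) \neq i+1))$. The first disjunct is exactly case~2 of Lemma~\ref{lem:elemnotpresentcond}. For the second disjunct one has $A[i].\mem \prio{>}{i} v \prio{>}{i+1} A[i].\lookahead$, which is precisely the hypothesis of case~3 of the ``physically not in'' part; but to invoke it I must also rule out $A[i].\lookahead = v$, which follows because the preceding guard (Line~\ref{lin:del-vinval} / Line~\ref{lin:del-loc}, resp.\ Line~\ref{lin:lookup-elementfound}) has already checked $b = v$ and returned if so, except in the delicate case $M = \D$ and $h(b) = i+1$ — but that case is exactly what the extra conjunct $(M = \Stable \text{ or } h(b) \neq i+1)$ in the return-false guard excludes. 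So $A[i].\lookahead \neq v$ here, and case~3 applies. For the split-indication return at Line~\ref{lin:lookup-elementnotpresent3}, the guard is $M = \I$, $b \prio{>}{i} v \prio{>}{i+1} c$, $h(b) \neq i+1$, and $\VL(A[i])$ succeeds; here $c = A[i+1].\mem$ was read by the $\LL$ on Line~\ref{lin:lookupllAi+1}, and the successful $\VL$ together with Lemma~\ref{lem:llllvl} certifies that at the moment of the $\LL(A[i+1])$ we simultaneously had $A[i] = \tup{a,b,\I}$ with $A[i].\lookahead = b \prio{>}{i} v \prio{>}{i+1} c = A[i+1].\mem$ and $h(b) = h(A[i].\lookahead) \neq i+1$ — which is case~4 of the ``physically not in'' part, provided $i \neq h(v)-1$. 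The case $i = h(v)-1$ needs separate attention: then $b = A[h(v)-1].\lookahead$ could equal $v$? No — if $b = v$ then $b \prio{>}{i} v$ fails since priority is a strict order and $v \not\prio{>}{i} v$; so $b \neq v$, hence $A[h(v)-1].\lookahead \neq v$, and since also $A[h(v)-1].\mem = a \prio{\geq}{i} b \prio{>}{i} v$ (using \invenumref{inv:ext-order-invar}{3} and $h(b) \neq i+1$) we land in case~2 of the ``physically not in'' part.

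The main obstacle I anticipate is the bookkeeping around \emph{which configuration} the extracted predicate refers to: the guards are evaluated on \emph{locally cached} values $\tup{a,b,M}$ read by an earlier $\LL$, and the lemma statement asks about ``the last read of $A$ in the operation.'' I would argue that in every return-false branch, the values named in the guard are exactly those read by the most recent $\LL(A[i])$ (or, for Line~\ref{lin:lookup-elementnotpresent3}, by the $\LL(A[i])$ of that iteration together with the $\LL(A[i+1])$ on Line~\ref{lin:lookupllAi+1}), and that for Line~\ref{lin:lookup-elementnotpresent3} the $\VL$ guarantees these two reads see a common configuration; for the other branches no $\VL$ is needed because the predicate I extract is single-cell (cases~1--3 of Lemma~\ref{lem:elemnotpresentcond} each involve only $A[h(v)]$ or $A[h(v)-1]$, a single cell, so its value during the $\LL$ that read it is the relevant one). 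Once that alignment is pinned down, each branch is a direct citation of the appropriate case of Lemma~\ref{lem:elemnotpresentcond}, using Invariant~\ref{inv:ext-order-invar} and Invariant~\ref{inv:stable-unstable} (which hold always, by Lemma~\ref{lem:validorder}) to discharge the side conditions. A secondary subtlety is making sure the loop-exhaustion argument is legitimate: the loop visits cells $h(v)-1, h(v), \ldots$ wrapping around, and I need that at the moment each such cell is read it satisfies $A[\cdot].\mem \prio{<}{\cdot} v$; but the loop only \emph{exits} when this holds, so I actually only get it for the \emph{last} cell read, $i$ with $i \neq h(v)$, and then Lemma~\ref{lem:smallerprio} (applied from that $i$ forward to $h(v)-1$, then wrapping) propagates it — here one must be slightly careful that $i$ could be any cell and the wrap-around interval $[i+1, h(v)-1]$ is traversed correctly by the lemma, which is stated modularly, so this goes through.
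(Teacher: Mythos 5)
Your overall strategy---a case analysis over the return-\textit{false} sites, feeding each guard into Lemma~\ref{lem:elemnotpresentcond}---is exactly the paper's, and your treatment of Lines~\ref{lin:del-elementnotpresent1}, \ref{lin:lookup-elementnotpresent1} and~\ref{lin:lookup-elementnotpresent3} is sound (you even handle the $i = h(v)-1$ subcase of Line~\ref{lin:lookup-elementnotpresent3} more explicitly than the paper does). However, your loop-exhaustion branch (Lines~\ref{lin:del-elementnotpresent2} and~\ref{lin:lookup-elementnotpresent2}) is wrong as argued. You take the \textbf{until}-condition $v \prio{>}{i} a$ of Line~\ref{lin:until-del}/\ref{lin:until-lookup} as holding at the last read; but when that condition holds the loop exits to the restart \textbf{goto} (Line~\ref{lin:del-restartloop2}, resp.\ the goto following Line~\ref{lin:until-lookup}), and control never reaches the return at Line~\ref{lin:del-elementnotpresent2}/\ref{lin:lookup-elementnotpresent2}. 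That return fires inside the loop body, just after $i$ is incremented to $h(v)$, so the last read was of cell $h(v)-1$ and the \textbf{until}-check following that read evaluated to \emph{false}, i.e.\ $a \prio{\geq}{h(v)-1} v$---the negation of your premise. Moreover, even granting your premise, Lemma~\ref{lem:smallerprio} only propagates low priority over cells $i+1,\ldots,h(v)-1$ and never reaches cell $h(v)$, so the claimed conclusion $A[h(v)].\mem \prio{<}{h(v)} v$ (case~1 of Lemma~\ref{lem:elemnotpresentcond}) does not follow; and knowing only that some cell $i \neq h(v)$ satisfies $A[i].\mem \prio{<}{i} v$ does not exclude $v$ from being stored between $h(v)$ and $i$.

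The repair is simpler than your route and is what the paper does: since the loop wrapped around, the last read is of cell $h(v)-1$; the failed \textbf{until}-check gives $a \prio{\geq}{h(v)-1} v$, and $a \neq v$ because otherwise the guard at Line~\ref{lin:del-vinval} (resp.\ Line~\ref{lin:lookup-elementfound}) would have fired before the increment; hence $A[h(v)-1].\mem \prio{>}{h(v)-1} v$ at the time of that read, which is the second ``not logically in $A$'' condition of Lemma~\ref{lem:elemnotpresentcond}. (A minor slip elsewhere: in the position-found branch, the disjunct $i = h(v)$ with $v \prio{>}{i} a$ is case~1 of that lemma, not case~2 as you label it.)
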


\begin{proof}
     We show that during the last read read of $A$ in the operation one of the conditions in Lemma~\ref{lem:elemnotpresentcond} holds, which proves the lemma statement.
     Let $\tup{a,b,M}$ be the last value read from $A$ in the $\del(v)$ or $\lookup(v)$ operation. Let $i$ be the cell from which the value is read.

    Consider a $\del(v)$ or $\lookup(v)$ operation that returns in Line~\ref{lin:del-elementnotpresent2} or~\ref{lin:lookup-elementnotpresent2} respectively.
     This happens if the loop goes through all the cells in the table and returns to the starting point, which is back to $h(v)$, so, $i = h(v) - 1$.
     By the condition in Line~\ref{lin:until-del} or Line~\ref{lin:until-lookup}, $a \prio{\geq}{i} v$, otherwise the loop restarts.
     In addition, by Line~\ref{lin:del-vinval} or Line~\ref{lin:lookup-elementfound}, $a\neq v$.
     So, we must have $a \prio{>}{i} v$, and by Lemma~\ref{lem:elemnotpresentcond}, $v$ is not logically in the table.

     Consider a $\del(v)$ or $\lookup(v)$ operation that returns in Line~\ref{lin:del-elementnotpresent1} or~\ref{lin:lookup-elementnotpresent1}, respectively.
     Either $i = h(v)$ and $v \prio{>}{i} a$, or $a \prio{>}{i} v \prio{>}{i+1} b$ and $M = \Stable$ or $h(b) \neq i+1$.
     By Lemma~\ref{lem:elemnotpresentcond}, in the first case $v$ is not logically in the table, and in the second case $v$ is not physically in the table.

     Consider a $\lookup(v)$ operation that returns in Line~\ref{lin:lookup-elementnotpresent3}.
     Then, as the $\VL(A[i])$ in Line~\ref{lin:lookup-elementnotpresent3} returns \textit{true}, during the $\LL(A[i+1])$ in Line~\ref{lin:lookupllAi+1}, $A[i] = \tup{a,b,\I}$ and $A[i+1] = \tup{c,\ast,\ast}$ such that $b \prio{>}{i+1} v \prio{>}{i+1} c$ and $h(b) = i+1$. By Lemma~\ref{lem:elemnotpresentcond}, $v$ is not physically in the table.
\end{proof}

A simple inspection of the code reveals that the only place where a new element is inserted into the table is by an $\SC$ in Line~\ref{lin:sc-insert}.
The next lemma verifies that the element is not physically present in the table just prior to such $\SC$.

\begin{lemma}
\label{lem:elemnotpresent}
    Consider an $\insr(v)$ operation that
    performs a successful $\SC$ in Line~\ref{lin:sc-insert}, then
    right before the successful $\SC$, $v$ is not physically in $A$.
\end{lemma}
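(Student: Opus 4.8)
The plan is to read off from Algorithm~\ref{alg:insert} exactly what is known about cell $A[i]$ at the moment of the successful $\SC$ in Line~\ref{lin:sc-insert}, and then apply Lemma~\ref{lem:elemnotpresentcond}. Let $\tup{a,b,M}$ be the triple returned by the last $\LL(A[i])$ the operation performed before this $\SC$. Since the $\SC$ succeeds, $A[i]$ was not modified since that $\LL$, so right before the $\SC$ we have $A[i] = \tup{a,b,M}$. Passing the test in Line~\ref{lin:insr-helpop} forces $M = \Stable$; passing the test in Line~\ref{lin:insr-element-present} forces $a \neq v$ and, since $M = \Stable \neq \D$, also $b \neq v$; and reaching Line~\ref{lin:sc-insert} means the test in Line~\ref{lin:insert-loc} held, so $b = \bot$ or $v \prio{>}{i+1} b$. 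As $\bot$ has the lowest priority in every cell, in either case $v \prio{>}{i+1} b$. Finally, since $M = \Stable$, \invenumref{inv:stable-unstable}{2} gives $A[i+1].\mem = b$.

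Next I would split on whether $i = h(v) - 1$. If $i = h(v) - 1$, then $A[h(v)].\mem = b \prio{<}{h(v)} v$ and $A[h(v)-1].\lookahead = b \neq v$, so condition~1 of the ``not physically in $A$'' list of Lemma~\ref{lem:elemnotpresentcond} applies, and $v$ is not physically in $A$. If $i \neq h(v) - 1$, I claim $a \prio{>}{i} v$: the last $\LL(A[i])$ before the $\SC$ was performed in Line~\ref{lin:insert-readnextcell}, and in Algorithm~\ref{alg:insert} that read is immediately followed by the loop test of Line~\ref{lin:insert-until}; since the operation did not restart, the test failed, i.e.\ $v \prio{\leq}{i} a$, hence $a \prio{\geq}{i} v$, and with $a \neq v$ we get $a \prio{>}{i} v$. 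Combined with the facts above, $A[i].\mem = a \prio{>}{i} v \prio{>}{i+1} b = A[i].\lookahead$ and $A[i].\flag = \Stable$, so condition~3 of the same list of Lemma~\ref{lem:elemnotpresentcond} applies, and again $v$ is not physically in $A$.

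The main obstacle is the control-flow bookkeeping behind the claim $a \prio{>}{i} v$ in the second case: one must verify that every path reaching Line~\ref{lin:sc-insert} with $i \neq h(v)-1$ has most recently read $A[i]$ at a point guarded by the loop test of Line~\ref{lin:insert-until}. This includes the path that loops back from Line~\ref{lin:insert-until} after $i \gets i+1$, and the path that reaches Line~\ref{lin:insert-readnextcell} through the \textbf{goto} in Line~\ref{lin:insr-helpop} (which re-reads $A[i]$ without advancing $i$ and then also hits the Line~\ref{lin:insert-until} test). The only $\LL$ of a cell that is \emph{not} followed by the loop test is the initial $\LL(A[i])$ of Line~\ref{lin:insr-readcurr1}, which reads $A[h(v)-1]$ --- precisely the $i = h(v)-1$ case already handled separately. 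This is a finite inspection of the goto targets in Algorithm~\ref{alg:insert}; once the correct condition of Lemma~\ref{lem:elemnotpresentcond} is matched to each of the two cases, the rest is routine.
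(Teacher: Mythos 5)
Your proposal is correct and takes essentially the same route as the paper's proof: both show that the values $\tup{a,b,M}$ read at the last $\LL(A[i])$ still stand right before the successful $\SC$, and then match the two cases $i = h(v)-1$ and $i \neq h(v)-1$ (the latter using the failed test of Line~\ref{lin:insert-until} plus Line~\ref{lin:insr-element-present} to get $a \prio{>}{i} v$) to the ``not physically in $A$'' conditions of Lemma~\ref{lem:elemnotpresentcond}. Your extra control-flow bookkeeping and the explicit appeal to \invenumref{inv:stable-unstable}{2} just spell out steps the paper leaves implicit.
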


\begin{proof}
     We show that right before the successful $\SC$ in Line~\ref{lin:sc-insert}, one of the conditions in Lemma~\ref{lem:elemnotpresentcond} that imply $v$ is not physically in the table holds, which proves the lemma statement.
     Let $\tup{a,b,M}$ be the last value read from $A$ in Line~\ref{lin:insr-readcurr1} or \ref{lin:insert-readnextcell} in the operation.
     Let $i$ be the cell from which the value is read.
     
     By the code, the successful $\SC$ is performed on cell $A[i]$.
     By Line~\ref{lin:insert-loc}, either $b = \bot$ or $v \prio{>}{i+1} b$.
     If $i = h(v)-i$, since by the code $M = \Stable$, $A[h(v)] = b$.
     If $i \neq h(v)-1$, this is not the first iteration in the loop, and by the conditions in Line~\ref{lin:insert-until} and \ref{lin:insr-element-present}, $a \prio{>}{i} v$.
\end{proof}

The only place where an element is deleted from the table is by an $\SC$ in a $\DSC$ in Line~\ref{lin:dsc-D1} or \ref{lin:dsc-D2}, during the first propagation of a $\del$ operation.
The next lemma verifies that right after the first propagation ends the element is deleted entirely from the table.

\begin{lemma}
\label{lem:nondangdel}
    Consider a $\del(v)$ operation that performs a successful $\SC(A[i],\ast)$ operation in Line~\ref{lin:sc-del}.
    Then right after a successful $\SC$ in Line~\ref{lin:dsc-D1} or \ref{lin:dsc-D2} that stabilizes $A[i]$ with the delete operation, $v$ is not physically in $A$.
\end{lemma}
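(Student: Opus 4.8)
The plan is to prove the claim by exhibiting, in the configuration immediately following the $\SC$ that stabilizes $A[i]$, one of the conditions of Lemma~\ref{lem:elemnotpresentcond} that guarantees $v$ is not physically in $A$. First I would reconstruct the state just before the first propagation step. By the code, $\del(v)$ made its initial write at some cell $A[I]$ via the $\SC$ in Line~\ref{lin:sc-del}; at that moment $A[I]=\tup{a,v,\Stable}$ for some $a$ (a stable cell, by the guard of Line~\ref{lin:del-helpop}), and it became $\tup{a,v,\D}$. By \invenumref{inv:stable-unstable}{1} applied to $A[I]$ before this write, $a\neq v$, and by \invenumref{inv:ext-order-invar}{3}, either (i)~$a\prio{>}{I}v$ or (ii)~$h(v)=I+1$. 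By \invenumref{inv:stable-unstable}{2}, $A[I+1].\mem=v$ at that point, and this persists while $A[I]$ stays unstable with the operation, since (by \invenumref{inv:stable-unstable}{3} and Lemma~\ref{lem:propstructure}) the operation has not yet propagated out of $A[I]$. Because the stabilizing $\SC$ is inside the $\DSC$ called from Line~\ref{lin:dsc-D1} or~\ref{lin:dsc-D2}, that branch is only reached when $A[I+1]$ is stable (otherwise Line~\ref{lin:sc-D2} fires), so $A[I+1]=\tup{v,d,\Stable}$ for some $d$; by \invenumref{inv:stable-unstable}{1}, $d\neq v$, and by \invenumref{inv:stable-unstable}{2}, $A[I+2].\mem=d$.

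Next I would pin down the state right after $A[I]$ stabilizes. The $\DSC$ first sets $A[I+1]$ to $\tup{d,d,\D}$ (if $d\neq\bot$ and $h(d)\neq I+2$, Line~\ref{lin:dsc-D1}) or to $\tup{\bot,d,\Stable}$ (otherwise, Line~\ref{lin:dsc-D2}), and then sets $A[I]$ to $\tup{a,d,\Stable}$ or $\tup{a,\bot,\Stable}$ respectively. By Lemma~\ref{lem:propstructure} together with Lemma~\ref{lem:proponce}, the operation does not propagate out of $A[I+1]$ before $A[I]$ stabilizes, and no new operation can overwrite these slots in between; hence in the configuration immediately after the stabilizing $\SC$ we have $A[I]=\tup{a,d',\Stable}$ and $A[I+1].\mem=d''$ with $d',d''\in\{d,\bot\}$ as above, so $A[I].\lookahead\neq v$ and $A[I+1].\mem\neq v$. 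The one priority fact I still need is $v\prio{>}{I+1}d$ in the Line~\ref{lin:dsc-D1} sub-case: there $d=A[I+2].\mem$ lies in the same run as $A[I+1]$ and $h(d)\neq I+2$, so $h(d)$ precedes $I+1$ in $d$'s probe sequence, and \invenumref{inv:ext-order-invar}{1} applied to $A[I+2].\mem=d$ yields $A[I+1].\mem=v\prio{\geq}{I+1}d$, whence (as $v\neq d$) $v\prio{>}{I+1}d$; and trivially $v\prio{>}{I+1}\bot$.

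Finally I would close by case analysis. In case (i), $a\prio{>}{I}v$ still holds (the $\mem$ slot of $A[I]$ never changes), so right after stabilization $A[I].\mem\prio{>}{I}v\prio{>}{I+1}A[I].\lookahead$ with $A[I].\flag=\Stable$, which is condition~3 of the ``not physically in $A$'' part of Lemma~\ref{lem:elemnotpresentcond} (taking the witnessing cell to be $I$). In case (ii), $h(v)=I+1$, so $A[h(v)]=A[I+1]$ and $A[h(v)-1]=A[I]$; from the previous paragraph $A[h(v)].\mem=d''\prio{<}{h(v)}v$ (either $d''=\bot$, or $d''=d$ with $v\prio{>}{I+1}d$) and $A[h(v)-1].\lookahead=d'\neq v$, which is condition~1 of that part of Lemma~\ref{lem:elemnotpresentcond}. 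In both cases $v$ is not physically in $A$.

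The step I expect to be the main obstacle is establishing the exact contents of $A[I]$ and $A[I+1]$ in the configuration immediately after the stabilizing $\SC$: this requires using Lemma~\ref{lem:propstructure} to rule out that the operation has already propagated out of $A[I+1]$, and arguing that no concurrent operation can reintroduce $v$ into $A[I+1]$ during the window between the two $\SC$s of the $\DSC$ (a concurrent $\insr(v)$ that has not completed is blocked at $A[I]$, or has returned \textit{false}, so it cannot reach $A[I+1]$). The remaining work — the two initial-write cases, the two $\DSC$ variants, and the modular rank arithmetic giving $v\prio{>}{I+1}d$ at run boundaries (the almost-full case being excluded by Assumption~\ref{asm:neverfull}) — is routine bookkeeping.
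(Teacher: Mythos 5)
Your proposal is correct, and it reaches the conclusion by a somewhat different route than the paper. You pin down the contents of $A[i]$ and $A[i+1]$ immediately after the stabilizing $\SC$ (via Lemma~\ref{lem:propstructure}, Lemma~\ref{lem:proponce} and Invariant~\ref{inv:stable-unstable}) and then delegate the entire ``$v$ appears nowhere in the table'' argument to Lemma~\ref{lem:elemnotpresentcond}, invoking its condition~3 (when $a \prio{>}{i} v$) or condition~1 (when $h(v)=i+1$) at that single post-stabilization configuration; the needed priority fact $v \prio{>}{i+1} d$ comes from the extended ordering invariant at $A[i+1]$/$A[i+2]$ together with $h(d)\neq i+2$. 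The paper instead proves a slightly stronger intermediate statement---that $v$ is present only in $A[i].\lookahead$ already while $A[i]$ is still unstable, after the propagation---using Lemma~\ref{lem:elemnotpresentcond} only for the $h(v)=i+1$ case and an explicit sweep via Lemmas~\ref{lem:biggerprio} and~\ref{lem:smallerprio} (including a small contradiction argument to rule out $v \prio{<}{i+1} b$) for the other case, and then concludes at stabilization. Your version is arguably cleaner because conditions~1 and~3 of Lemma~\ref{lem:elemnotpresentcond} already encapsulate those sweeps, while the paper's version additionally records where $v$ sits during the unstable window, which is reused intuition elsewhere. The step you flag as the main obstacle---that the stabilizing $\SC$ writes exactly $\tup{a,d,\Stable}$ or $\tup{a,\bot,\Stable}$ and that $A[i+1].\mem$ is as claimed at that moment---is handled by exactly the machinery you cite (the unique-propagation and hand-over-hand structure of Lemmas~\ref{lem:propstructure} and~\ref{lem:proponce}, plus the $\DSC$ consistency check), and the paper asserts it at the same level of detail, so this is not a gap; in fact, since conditions~1 and~3 only mention the two slots you have pinned down (plus the invariants, which hold at every configuration by Lemma~\ref{lem:validorder}), your worry about a concurrent $\insr(v)$ ``reintroducing'' $v$ into $A[i+1]$ is not even needed for your final argument.
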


\begin{proof}
    By the code, when $A[i]$ is unstable with the delete operation, $A[i] = \tup{a,v,\D}$, $a\neq v$.
    By Lemma~\ref{lem:propstructure}, $A[i]$ stabilizes after the operation propagates to $A[i+1]$.
    Just before the successful $\SC$ that
    propagates the operation to $A[i+1]$, cell $A[i+1]$ is stable. Thus, just before this successful $\SC$, $A[i+1] = \tup{v,b,\Stable}$, such that $v\neq b$.
    By the code, \invenumref{inv:stable-unstable}{3} and Lemma~\ref{lem:proponce},
    after the propagation, $A[i+1] = \tup{b,b,\D}$ or $A[i+1] = \tup{\bot, b, \D}$.

    We show that $v$ is not present in cell $j$, $j\neq i$, while $A[i]$ is unstable and after the operation propagated to cell $A[i+1]$.
    By Invariant~\ref{inv:ext-order-invar}, either $a \prio{>}{i} v$, or $h(v) = i+1$ and $v \prio{>}{i+1} b$.
    If $h(v) = i+1$, we get that $A[h(v)-1].\mem \prio{>}{h(v)-1} v$ or $A[h(v)].\mem \prio{<}{h(v)} v$, and by Lemma~\ref{lem:elemnotpresentcond}, $v$ is only present in the $A[i].\lookahead$ slot.
    Otherwise, if $h(v) \neq i+1$, $a \prio{>}{i} v$.
    By Lemma~\ref{lem:biggerprio}, for all $h(v) \leq j \leq i$, $A[j].\mem \prio{>}{j} v$ and $A[j].\lookahead \prio{>}{j+1} v$.
    Since $b\neq v$, either $v \prio{>}{i+1} b$ or $v \prio{<}{i+1} b$.
    However, it cannot be the case that $v \prio{<}{i+1} b$, since by Lemma~\ref{lem:biggerprio}, this implies that $A[i].\lookahead \neq v$.
    Hence, $v \prio{>}{i+1} b$, and by Lemma~\ref{lem:smallerprio}, for all $i+2 \leq j \leq h(v)-1$, $A[j].\mem \prio{<}{j} v$ and for all $i+1 \leq j \leq h(v)-1$, $A[j].\lookahead \prio{<}{j} v$ or $h(A[j].\lookahead) = j+1$.
    For $j = h(v)-1$, if $h(A[j].\lookahead) = h(v)$, then $A[j].\lookahead \prio{\geq}{j+1} A[j+1].\mem \prio{>}{j+1} v$.

    When $A[i]$ stabilizes, $A[i] = \tup{a,b,\Stable}$. Since $a,b \neq v$ and $v$ is not present in any other cell in the table, $v$ is completely deleted from $A$.
\end{proof}

We say that $\insr$ and $\del$ operations that perform a successful $\SC$ in Line~\ref{lin:sc-insert} and Line~\ref{lin:sc-del}, respectively, are 
\emph{set-changing operations}.
We say that a set-changing operation is \emph{dangling} if it performs a successful $\SC(A[i],\ast)$ in Line~\ref{lin:sc-insert} or Line~\ref{lin:sc-del}, and $A[i]$ has not stabilized yet.
Note that by the code, these operations eventually return \textit{true}.

\begin{lemma}
\label{lem:onedang}
    There is only one dangling $\insr(v)$ or $\del(v)$ at any point in the execution.
\end{lemma}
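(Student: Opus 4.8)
The plan is a proof by contradiction. Suppose that at some configuration there are two distinct dangling operations on the same element $v$; let $o_1$ be the one whose initial write (the successful $\SC$ in Line~\ref{lin:sc-insert} or Line~\ref{lin:sc-del}) occurs first, and let $A[i_1]$ be the cell of that initial write. Since $o_2$ is also dangling at the chosen configuration, $o_2$'s initial write occurs later than $o_1$'s; and since $o_1$ is still dangling there, $A[i_1]$ has not yet stabilized, so $o_2$'s initial write happens strictly inside $o_1$'s dangling period. The first step is to describe $A[i_1]$ during that period: by Lemma~\ref{lem:propstructure}, $A[i_1]$ stays unstable with $o_1$ until it stabilizes, and inspecting the code one sees that the only $\SC$'s that modify an unstable cell are the stabilizing ones, which end the dangling period; hence throughout $o_1$'s dangling period $A[i_1] = \tup{a_1,v,M_1}$ with $M_1\in\set{\I,\D}$ and $a_1\neq v$ (the initial writes set the $\lookahead$ slot to $v$ and leave $a_1\neq v$ by the guards of Lines~\ref{lin:insr-element-present} and~\ref{lin:del-vinval}).

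Next I would reduce to a single claim: \emph{throughout $o_1$'s dangling period, no stable cell of $A$ holds $v$ in its $\lookahead$ slot.} This suffices. If $o_2$ is an $\insr(v)$, then just before its successful $\SC$ in Line~\ref{lin:sc-insert}---a step inside $o_1$'s dangling period---$v$ is not even physically in $A$ by Lemma~\ref{lem:elemnotpresent}, contradicting that $A[i_1].\lookahead=v$. If $o_2$ is a $\del(v)$, then since its $\SC$ in Line~\ref{lin:sc-del} succeeds and the guards of Lines~\ref{lin:del-helpop},~\ref{lin:del-vinval},~\ref{lin:del-loc} held, the cell it writes to is, just before that $\SC$, a stable cell with $v$ in its $\lookahead$ slot---contradicting the claim. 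So it remains to prove the claim, which I would do by bounding where $v$ can physically occur while $o_1$ is dangling.

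If $o_1$ is a $\del(v)$, this is essentially the content of (the proof of) Lemma~\ref{lem:nondangdel}: before $o_1$ propagates past $A[i_1]$ the analysis there shows $v$'s only occurrences are $A[i_1].\lookahead$ and $A[i_1+1].\mem$, with $A[i_1+1].\mem=v$ a one-cell block (using \invenumref{inv:ext-order-invar}{1}, \invenumref{inv:ext-order-invar}{3} and Lemma~\ref{lem:order-invariant}); and after $o_1$ propagates past $A[i_1]$ the same proof shows $v$ occurs in no cell other than $A[i_1]$. Either way the only $\lookahead$ slot holding $v$ is that of the unstable cell $A[i_1]$, so the claim holds. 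If $o_1$ is an $\insr(v)$, then right after $o_1$'s initial write $v$ occurs only in $A[i_1].\lookahead$ (by Lemma~\ref{lem:elemnotpresent} and the fact that the $\SC$ only wrote $v$ into that slot), $o_1$'s own propagation can at most additionally place $v$ in $A[i_1+1].\mem$, and no other step during $o_1$'s dangling period can create a new occurrence of $v$: a fresh $\insr(v)$ initial write is forbidden by Lemma~\ref{lem:elemnotpresent} (since $v$ is physically present), and a delete of another element that would write $v$ into a slot---when shifting $v$ backwards, or relocating a $v$ sitting at its hash position---would have to propagate into cell $A[i_1]$, which is impossible because $A[i_1]$ is unstable and operations only propagate into stable cells (Line~\ref{lin:helpop-while}). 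Hence again the only $\lookahead$ slot holding $v$ belongs to the unstable cell $A[i_1]$. (In the special case $i_1=h(v)-1$ with $o_1$ a $\del(v)$ one can argue even more directly: Assumption~\ref{asm:neverfull} forces $A[i_1].\mem\prio{<}{i_1}v$, and then a scanning $\del(v)$ gets stuck at cell $A[i_1]$, only ever calling $\helpop(i_1)$ and looping or restarting via Line~\ref{lin:del-restartloop2}, so it cannot perform an initial write while $A[i_1]$ is unstable.) Since $o_1$ and $o_2$ played symmetric roles, this proves the lemma.

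I expect the main obstacle to be the $\insr(v)$-versus-$\del(v)$ arguments in the previous paragraph, i.e., controlling over the \emph{entire} dangling period of $o_1$ which cells' $\mem$/$\lookahead$ slots can hold $v$. This requires a careful enumeration of every mechanism that writes $v$ into a slot---the two kinds of initial write, insert propagation, delete propagation, and the backward shifts a delete performs---and verifying that the ``lock'' held by $o_1$ on $A[i_1]$, together with Lemmas~\ref{lem:elemnotpresent} and~\ref{lem:nondangdel}, Invariants~\ref{inv:ext-order-invar} and~\ref{inv:stable-unstable}, and Assumption~\ref{asm:neverfull}, excludes each of them from introducing a new, stably-held occurrence of $v$.
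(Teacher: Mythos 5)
Your proposal is correct in its overall structure, and for the case where the second dangling operation $o_2$ is an $\insr(v)$ it coincides with the paper's argument (both invoke Lemma~\ref{lem:elemnotpresent} against the fact that $A[i_1].\lookahead = v$ persists while $A[i_1]$ is unstable). For the case where $o_2$ is a $\del(v)$, however, you take a genuinely different route: the paper works pointwise at the moment of $o_2$'s successful $\SC$ in Line~\ref{lin:sc-del}, deducing from \invenumref{inv:stable-unstable}{2} that $A[j+1].\mem = v$ and then deriving a priority contradiction from \invenumref{inv:ext-order-invar}{3}, Lemma~\ref{lem:monotone-priority} and Lemma~\ref{lem:biggerprio} (split on whether $j$ lies before or after $i$ relative to $h(v)$, with a final sub-claim about $A[i+1]$ via Lemma~\ref{lem:propstructure} and \invenumref{inv:stable-unstable}{4}), whereas you prove the stronger invariant that throughout $o_1$'s dangling period no \emph{stable} cell holds $v$ in its \emph{lookahead} slot, by tracking the physical occurrences of $v$ and arguing that the ``lock'' on $A[i_1]$ blocks every code path that could create such a cell. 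Your route is more operational and arguably gives a cleaner reusable statement; the paper's is shorter because it only needs the invariants at one instant. The one soft spot is that your claim really does require the exhaustive enumeration you defer to the last paragraph: besides fresh $\insr(v)$ initial writes and delete backward shifts, you must also rule out (i) an insert of another element displacing $v$ from $A[i_1+1].\mem$ into a \emph{lookahead} slot that is later released as stable via the $\DSC$ in Line~\ref{lin:dsc-I2}, and (ii) the delete ``release'' writes in Lines~\ref{lin:sc-D1} and~\ref{lin:sc-D2}, which copy a \emph{value} slot into a \emph{lookahead} slot. All of these are excluded for exactly the reason you name --- they would require propagating from or into the cell $A[i_1]$, which is unstable with $o_1$ for the whole period (or they require $A[i_1]$ to carry the other mark) --- so the gap is one of write-up completeness rather than of substance, but it should be spelled out for the proof to stand on its own.
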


\begin{proof}
    Assume there are two dangling operations $o_1$ and $o_2$ at the same point in the execution, such that operation $o_1$ performs a successful $\SC(A[i],\ast)$ in Line~\ref{lin:sc-insert} or Line~\ref{lin:sc-del} before operation $o_2$ performs a successful $\SC(A[j],\ast)$ in Line~\ref{lin:sc-insert} or Line~\ref{lin:sc-del}.
    By the code, after the successful $\SC$ in $o_1$, $A[i] = \tup{a,v,\I}$ or $A[i] = \tup{a,v,\D}$, $a\neq v$.
    Since $o_2$ in dangling at the same time as $o_1$, the value of $A[i]$ stays the same when the successful $\SC$ in $o_2$ happens.
    As only a stable cell can become unstable, this implies that $i\neq j$.
    If $o_2$ is an $\insr(v)$ operation, by Lemma~\ref{lem:elemnotpresent}, just before the successful $\SC$ in $o_2$, $v$ is not physically in the table. However, this contradicts that $v$ is in $A[i]$ at that point.
    
    If $o_2$ is a $\del(v)$ operation, just before the successful $\SC$ in $o_2$, $A[j] = \tup{c,v,\Stable}$, such that $c\neq v$.
    By~\invenumref{inv:stable-unstable}{2}, $A[j+1] = \tup{v,\ast,\ast}$.
    If $h(v)-1\leq j < i$, by~ \invenumref{inv:ext-order-invar}{3}, as $h(v) \neq i+1$, $a \prio{>}{i} v$. However, this contradicts Lemma~\ref{lem:monotone-priority}.
    If $h(v)-1\leq i < j < h(v)-1$,
    then $h(v) \neq j + 1$. By~\invenumref{inv:ext-order-invar}{3}, $c \prio{>}{j} v$. By Lemma~\ref{lem:biggerprio}, this implies that $A[i+1].\mem  \prio{>}{i+1} v$.
    We show that while $A[i]$ is unstable, $A[i+1] = \tup{b,\ast,\ast}$, such that $v \prio{\geq}{i+1} b$, leading to a contradiction.
    If $o_2$ is a $\del$ operation, this directly follows from \invenumref{inv:ext-order-invar}{3}.
    If $o_1$ is an $\insr$ operation, just before the successful $\SC$ that makes $A[i]$ unstable, by the code, $A[i] = \tup{a,b,\Stable}$ such that $v \prio{>}{i+1} b$. By \invenumref{inv:stable-unstable}{2}, at the same point, $A[i+1] = \tup{b,\ast,\ast}$.
    By Lemma~\ref{lem:propstructure}, no other operation other than $o_1$ can propagate to cell $A[i+1]$. As only successful $\SC$s that propagate operations change $\mem$ slots, and by \invenumref{inv:stable-unstable}{4}, the claim follows. 
\end{proof}

\begin{lemma}
\label{lem:nodangnoambig}
    If there is no dangling operation, then every element $v$ is either logically in $A$ or not physically in $A$.
\end{lemma}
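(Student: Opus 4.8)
The plan is to argue by contradiction. Suppose that at some configuration in which there is no dangling operation, some element $v$ is physically in $A$ but not logically in $A$. The first step is to use this assumption to pin down exactly where $v$ can sit. Since $v$ is not logically in $A$, no cell has $A[i].\mem = v$, so $v$ must occur in some \lookahead\ slot, and (as noted in the paragraph preceding the lemma, a consequence of Invariant~\ref{inv:ext-order-invar}) this forces $A[h(v)-1].\lookahead = v$ with $v$ occurring nowhere else in $A$. Because $v$ is not logically in $A$ we must also have $A[h(v)-1].\mem \prio{<}{h(v)-1} v$, so $A[h(v)-1] = \tup{a,v,M}$ with $a \prio{<}{h(v)-1} v$, in particular $a \neq v$.

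Next I would show that $A[h(v)-1]$ is unstable: if it were stable, \invenumref{inv:stable-unstable}{2} would give $A[h(v)].\mem = A[h(v)-1].\lookahead = v$, contradicting the fact that $v$ occurs nowhere in $A$ except $A[h(v)-1].\lookahead$ (this uses $\ncell \geq 2$; the case $\ncell = 1$ is immediate since then the single cell's \mem\ slot equals its \lookahead\ slot whenever it is stable). Hence $M \in \set{\I,\D}$ and $A[h(v)-1]$ is associated with a unique operation $o$. The crux is to show that $o$ made $A[h(v)-1]$ unstable through its \emph{initial write} --- a successful $\SC$ in Line~\ref{lin:sc-insert} or Line~\ref{lin:sc-del} --- and not through a propagation step. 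A cell becomes unstable in exactly one of four ways: Line~\ref{lin:sc-insert}, Line~\ref{lin:sc-del}, a $\DSC$ in Line~\ref{lin:dsc-I2}, or a $\DSC$ in Line~\ref{lin:dsc-D1}. I would invoke the fact, immediate from inspecting the code (every $\SC$ that writes to an already-unstable cell stabilizes it, and every $\SC$ that turns a cell unstable writes to a stable cell), that the \mem\ and \lookahead\ slots of a cell do not change while it remains unstable with the same operation; hence the $\SC$ that made $A[h(v)-1]$ unstable with $o$ wrote precisely $\tup{a,v,M}$. This lets me discard the two propagation cases. A propagating insertion (Line~\ref{lin:dsc-I2}) writes $\tup{b,c,\I}$ where $c$ is the target cell's previous \mem\ value, so we would have had $A[h(v)-1].\mem = v$ at an earlier time; but then, reading indices modulo $\ncell$, \invenumref{inv:ext-order-invar}{1} forces $A[j].\mem \neq \bot$ for every cell $j \neq h(v)-1$ as well, so the whole table would be full, contradicting Assumption~\ref{asm:neverfull}. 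A propagating deletion (Line~\ref{lin:dsc-D1}) writes $\tup{d,d,\D}$, forcing $a = v$ and contradicting $a \neq v$.

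Therefore $o$'s successful $\SC$ on $A[h(v)-1]$ was an initial write, so $o$ is a set-changing operation whose initial-write $\SC$ targeted a cell (namely $A[h(v)-1]$) that has not stabilized --- i.e., $o$ is dangling, contradicting the hypothesis. The main obstacle is the bookkeeping in the second step: correctly enumerating the ways a cell can become unstable, justifying that an unstable cell's contents are frozen until it stabilizes (so that the current contents coincide with those written at the moment of instability), and --- the one genuinely content-bearing point --- recognizing that a \mem\ slot holding $v$ at position $h(v)-1$, i.e.\ one cell before $v$'s hash location modulo $\ncell$, can only occur in a completely full table and is therefore excluded by Assumption~\ref{asm:neverfull}. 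Everything else is routine manipulation of Invariants~\ref{inv:ext-order-invar} and~\ref{inv:stable-unstable}, which hold throughout by Lemma~\ref{lem:validorder}.
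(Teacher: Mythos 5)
Your proof is correct, but it takes a genuinely different route from the paper's. The paper argues forward along the execution, tracking the life cycle of $v$: after an $\insr(v)$'s initial write, Lemma~\ref{lem:propstructure} and \invenumref{inv:stable-unstable}{4} guarantee that by the time the operation stops being dangling, $A[i+1].\mem = v$, so $v$ is logically in $A$; Lemma~\ref{lem:nondangdel} gives that a $\del(v)$ removes $v$ entirely from $A$ by the time it stops being dangling; and the only other event moving $v$ --- an insert propagation pushing $v$ from a \mem\ slot into the \lookahead\ slot of the same cell --- writes a higher-priority element into that \mem\ slot, so $v$ stays logically in $A$. You instead reason backwards from a single bad configuration: you localize $v$ to $A[h(v)-1].\lookahead$ via the remark following the definition of ``logically in $A$'', show the cell must be unstable via \invenumref{inv:stable-unstable}{2}, observe (correctly, by code inspection) that an unstable cell's contents are frozen at what the unstable-making $\SC$ wrote, and then rule out the two propagation writes --- the $\DSC$ of Line~\ref{lin:dsc-D1} because it would force $a = v$, and the $\DSC$ of Line~\ref{lin:dsc-I2} because it would mean $A[h(v)-1].\mem = v$ held earlier, which by \invenumref{inv:ext-order-invar}{1} forces every \mem\ slot to be nonempty and contradicts Assumption~\ref{asm:neverfull} --- leaving only an initial write in Line~\ref{lin:sc-insert} or~\ref{lin:sc-del}, i.e., a dangling operation. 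The trade-off: the paper's argument is shorter because it leans on the already-established propagation machinery, but it leaves the enumeration of status-changing transitions somewhat implicit; yours is a local, configuration-based case analysis that needs only the invariants, Assumption~\ref{asm:neverfull}, and the frozen-contents observation, and it makes every case explicit (at the cost of re-deriving facts, such as the full-table consequence of $\rank(v,h(v)-1)=\ncell-1$, that the paper gets for free elsewhere). A minor remark: in the Line~\ref{lin:dsc-I2} case you could conclude even faster, since that write puts $b \prio{>}{h(v)-1} v$ in the \mem\ slot, making $v$ logically in $A$ and directly contradicting $a \prio{<}{h(v)-1} v$.
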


\begin{proof}
    By the code, after a successful $\SC(A[i],\ast)$ in Line~\ref{lin:sc-insert} in an $\insr(v)$ operation, $A[i] = \tup{\ast,v,\I}$.
    By Lemma~\ref{lem:propstructure} and \invenumref{inv:stable-unstable}{4}, just before $A[i]$ stabilizes, $A[i+1] = \tup{v,\ast,\ast}$.
    This along with Lemma~\ref{lem:nondangdel} implies that just after an operation is no longer dangling, $v$ is either physically in $A$ or not physically in $A$.
    Element $v$ can be ``pushed'' from a $\mem$ slot only during a propagation of some insert operation; if $A[i] = \tup{v,d,\Stable}$ it becomes $A[i] = \tup{b,v,\Stable}$, where by the code, $b \prio{>}{i} v$, and $v$ stays logically in $A$.
\end{proof}

\begin{lemma}
\label{lem:dangpending}
    If an operation is dangling, then it is still pending.
\end{lemma}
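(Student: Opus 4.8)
The plan is to show that, by the time an operation becomes dangling, it has not yet reached its \textbf{return} statement, because after its successful initial write it must still run to completion the call to $\prop$ that immediately follows. Concretely, let $o$ be a dangling operation: by definition $o$ is an $\insr(v)$ or $\del(v)$ that performed a successful $\SC$ in Line~\ref{lin:sc-insert} (resp.\ Line~\ref{lin:sc-del}) on some cell $A[i_0]$, setting it to $\tup{a,v,\I}$ (resp.\ $\tup{a,v,\D}$), and $A[i_0]$ has not stabilized. Inspecting the code, the only continuation of $o$ after this $\SC$ is the call $\prop(h(v),i_0,\set{M_0})$ in Line~\ref{lin:propinsr} (resp.\ Line~\ref{lin:propdel}), where $M_0=\I$ (resp.\ $\D$), followed immediately by \textbf{return true}. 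So it suffices to show that this call to $\prop$ does not return while $A[i_0]$ is unstable.

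First I would observe that while $A[i_0]$ is unstable its value does not change. As already used in the proofs of Lemma~\ref{lem:propstructure} and Lemma~\ref{lem:proponce}, every successful $\SC$ that writes a non-stable mark to a cell requires that cell to be stable immediately before the write: the initial writes in Lines~\ref{lin:sc-insert} and~\ref{lin:sc-del} are guarded by a check that the cell read is $\Stable$, and the propagation steps $\DSC$ in Lines~\ref{lin:dsc-I2} and~\ref{lin:dsc-D1} are reached only when the forward cell $A[i+1]$ was read as $\Stable$, so by the semantics of $\LL$/$\SC$ (Lemma~\ref{lem:llllvl}) it is still $\Stable$ just before a successful $\SC$ there. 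Hence every $\SC$ that writes to an already-unstable cell merely stabilizes it, so from $o$'s initial write until $A[i_0]$ stabilizes the cell holds the constant value $\tup{a,v,M_0}$; in particular it stays unstable with mark $M_0$.

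Finally I would trace $\prop(h(v),i_0,\set{M_0})$: its local index is initialized to $i_0$, the loop body reads $A[i_0]$ into $\mathit{prev}$, and the inner \textbf{while} loop of Line~\ref{lin:prop-helpcond} keeps running as long as the current read of $A[i_0]$ equals $\mathit{prev}$ and its mark lies in $\set{M_0}$. While $A[i_0]$ is unstable with $o$, every re-read returns the same value $\tup{a,v,M_0}$ with $M_0\in\set{M_0}$, so this condition holds indefinitely: the local index is never incremented past $i_0$, the stopping test in Line~\ref{lin:prop-stopcond} is never evaluated, and $\prop$ does not return. (Each call $\helpop(i_0)$ inside the loop returns normally, since by Assumption~\ref{asm:neverfull} no operation aborts.) Therefore $o$ has not executed its \textbf{return true}, and so $o$ is still pending. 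The only delicate step is the code-inspection claim that no $\DSC$-based propagation write can ever target a currently-unstable cell; everything else is a direct walk through the pseudocode.
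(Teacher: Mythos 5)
Your proof is correct and takes essentially the same approach as the paper's (very brief) argument: after the successful initial write in Line~\ref{lin:sc-insert} or~\ref{lin:sc-del} the process calls $\prop$, and by the code it cannot return before the written cell stabilizes. Your extra observations---that an unstable cell's value is frozen until it is stabilized (since every successful $\SC$ writing a non-stable mark targets a cell that was stable just before), so the inner while loop of $\prop$ at that cell cannot be exited while the operation is dangling---simply spell out what the paper leaves implicit as ``by the code.''
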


\begin{proof}
    After a process performs a successful $\SC$ in Line~\ref{lin:sc-insert} or \ref{lin:sc-del} in a delete or insert operation, it calls $\prop$.
    By the code, the process does not return until $A[i]$ stabilizes.
\end{proof}

In the next lemma, we verify that if there is a dangling $\del(v)$ operation, an $\insr(v)$ or $\lookup(v)$ operation concludes that $v$ is in the table only if $v$ is logically in the table.
If such an operation returns that $v$ is in the table when $v$ is only physically in the table, then it ignores the concurrent deletion, meaning the deletion does not take effect yet.
However, a $\del(v)$ operation that precedes the $\insr(v)$ or $\lookup(v)$ operation may already determine that $v$ is not in the table, thereby causing the deletion to take effect.

\begin{lemma}
\label{lem:vinA}
    Consider an $\insr(v)$ operation that returns \textit{false} or a $\lookup(v)$ operation that returns \textit{true}, if during the last read of $A$ in the operation there is a dangling $\del(v)$ operation, then $v$ is logically in $A$.
\end{lemma}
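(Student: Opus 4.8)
The plan is to pin down the single cell $A[i]$ that the operation reads in its last $\LL$ of $A$ before returning, and show that that cell alone already witnesses $v$ being logically in $A$. An $\insr(v)$ that returns \textit{false} does so at Line~\ref{lin:insr-element-present} (it cannot abort, by Assumption~\ref{asm:neverfull}), and a $\lookup(v)$ that returns \textit{true} does so at Line~\ref{lin:lookup-elementfound}; in both cases the returning condition is that the value $\tup{a,b,M}$ just read from $A[i]$ satisfies $a = v$, or $b = v$ together with ($M \neq \D$ or $h(b) \neq i+1$). Since $\LL$ returns the current content of the cell, $A[i] = \tup{a,b,M}$ at the instant of that read. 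If $a = v$ then $A[i].\mem = v$ and $v$ is logically in $A$ by definition, so the whole argument reduces to the case $a \neq v$, $b = v$ (so $A[i].\lookahead = v$), where I must show $A[i].\mem \prio{\geq}{i} v$.

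For that case I would assume towards a contradiction that $v$ is not logically in $A$ at the time of the last read, and extract two facts. First, ``not logically in $A$'' applied to cell $i$ (using $A[i].\lookahead = v$ and that each $\prio{>}{i}$ is a total order) forces $A[i].\mem \prio{<}{i} v$, hence $A[i].\mem \prio{<}{i} A[i].\lookahead$, so \invenumref{inv:ext-order-invar}{3} gives $h(A[i].\lookahead) = i+1$, i.e. $h(v) = i+1$; then $h(b) = i+1$, so the disjunct ``$h(b)\neq i+1$'' fails and the returning condition must have held because $M \neq \D$, i.e. $A[i].\flag \neq \D$. Second, I would bring in the dangling $\del(v)$: it performed a successful $\SC$ at Line~\ref{lin:sc-del} on some cell $A[j]$ that has not stabilized, so $A[j] = \tup{a', v, \D}$ with $A[j].\lookahead = v$ at the time of the last read. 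Running the same invariant argument on cell $j$ (again using that $v$ is not logically in $A$) gives $h(A[j].\lookahead) = j+1$, i.e. $h(v) = j+1$, so $j = h(v)-1 = i$; but then $A[i].\flag = A[j].\flag = \D$, contradicting $A[i].\flag \neq \D$. Hence $v$ is logically in $A$.

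The one step that needs care — though I do not expect it to be genuinely hard — is justifying that the dangling $\del(v)$'s cell $A[j]$ still reads $\tup{a', v, \D}$ at the (later) instant of the operation's last read, not merely at the instant of the $\del$'s initial $\SC$. This holds because an unstable cell can only be modified by a successful $\SC$ that resets its mark to $\Stable$ — exactly the case analysis already carried out in the proof of Lemma~\ref{lem:validorder} — and ``dangling'' means precisely that this stabilizing $\SC$ has not yet occurred, so the \emph{value}, \emph{lookahead} and $\D$ mark of $A[j]$ are frozen throughout the dangling period (this is the same observation used in the proof of Lemma~\ref{lem:onedang}). Everything else is a short deduction from \invenumref{inv:ext-order-invar}{3} and the precise shape of the returning conditions at Lines~\ref{lin:insr-element-present} and~\ref{lin:lookup-elementfound}.
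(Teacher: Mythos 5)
Your proof is correct and follows essentially the same route as the paper's: both rest on the fact that the dangling $\del(v)$'s cell stays frozen as $\tup{\ast,v,\D}$ until it stabilizes, on the exact form of the returning conditions at Lines~\ref{lin:insr-element-present} and~\ref{lin:lookup-elementfound}, and on the consequence of \invenumref{inv:ext-order-invar}{3} that an occurrence of $v$ that is not a logical witness can only be the \emph{lookahead} of cell $h(v)-1$. Your phrasing as a contradiction (forcing both the read cell and the delete's cell to be $h(v)-1$ and clashing on the mark) is only a cosmetic reorganization of the paper's direct case analysis.
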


\begin{proof}
    Since there is a dangling $\del(v)$ operation, by the code, there is a cell $A[i] = \tup{a,v,\D}$ during the last read from $A$ in the $\insr(v)$ or $\lookup(v)$ operation.
    Let $\tup{a,b,M}$ be the last value read from $A$ in the $\insr(v)$ or $\lookup(v)$ operation. Let $j$ be the cell from which the value is read.
    By Line~\ref{lin:insr-element-present} in an $\insr$ operation, and Line~\ref{lin:lookup-elementfound} in a $\lookup$ operation, $a = v$ or $b = v$ and either $M \neq \D$ or $h(v) \neq i +1 $.
    If $a = v$ or $b = v$ and $h(v) \neq i+1$, then $v$ is logically in $A$.
    If $b = v$ and $M \neq \D$, then $i \neq j$, and at least one of them is different than $h(v)-1$, which implies that $v$ is logically in $A$.
\end{proof}

Let element $v\in \CalU$, we construct the permutation $\pi_v$, that consists only of the $\insr(v)$, $\del(v)$ and $\lookup(v)$ operations in the execution, as follows;
First, we order the set-changing operations in the same order that the successful $\SC$s in Line~\ref{lin:sc-insert} and Line~\ref{lin:sc-del} happen in the execution.
Next, we go through all the
non-set changing operations in the order they are invoked in the execution, which are insert or delete operations that return \textit{false}, and lookup operations that return either \textit{true} or \textit{false}.
Consider non-set changing $\insr(v)$, $\del(v)$ or $\lookup(v)$ operation $o$;
If there is a dangling $\del(v)$ operation $o_{\del}$ during the last read of $A$ in $o$, if $o$ is an insert operation or a lookup operation that returns \textit{true},
we place $o$ immediately before operation $o_{\del}$.
If $o$ is a delete operation or a lookup operation that returns \textit{false},
we place $o$ immediately before the next set-changing operation that follows $o_{\del}$ in the ordering.
If there is a dangling $\insr(v)$ operation $o_{\insr}$ during the last read of $A$ in $o$, 
if $o$ is an insert operation or a lookup operation that returns \textit{true}, we place $o$ immediately before the next set-changing operation that follows $o_{\insr}$ in the ordering.
If $o$ is a delete operation or a lookup operation that returns \textit{false}, we place $o$ immediately before operation $o_{\insr}$.
Otherwise, if there is no dangling operation during the last read of $A$ in $o$, let $o'$ be the set-changing operation that performs the next successful $\SC$ in Line~\ref{lin:sc-insert} that follows the last read of $A$ in $o$, then we place $o$ immediately before operation $o'$. If no such operation $o'$ exists, we place $o$ at the end. 

We build the permutation $\pi$ by interleaving all permutations $\pi_v$ for every $v\in \CalU$. We do this in a way that respects the real-time order of operations with different input elements. Since the ordering of the operations in each permutation is independent, such interleaving is possible.

\begin{lemma}
    The hash table implementation is linearizable.
\end{lemma}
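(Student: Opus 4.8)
The plan is to show that the permutation $\pi$ constructed just above is a valid linearization of every execution: it contains every completed operation together with exactly the pending operations that are dangling (each assigned output \textit{true}, which is legitimate since by Lemma~\ref{lem:dangpending} a dangling operation is still pending and by the code eventually returns \textit{true}), it respects the real-time order of non-overlapping operations, and it satisfies the sequential specification of a dictionary. Since $\pi$ is built by interleaving the per-element permutations $\pi_v$ in a way designed to respect real-time order across distinct elements, and since the dictionary specification decomposes over elements (the response of an operation on $v$ depends only on the subsequence of operations on $v$), it suffices to (i) check that the interleaving is well-defined and that the union of the $\pi_v$'s with the real-time order is acyclic, and (ii) show that for each $v\in\CalU$, $\pi_v$ is a legal sequential history of a dictionary restricted to $v$ and respects real-time order among $v$-operations. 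Well-definedness of $\pi_v$ follows from Lemma~\ref{lem:onedang}: at the moment of the last read of $A$ by an operation $o$ there is at most one dangling operation, so the $o_{\del}$ or $o_{\insr}$ referenced in the construction is unique when it exists, and the three cases of the construction are exhaustive and disjoint.

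Next I would establish the backbone of $\pi_v$: the set-changing operations on $v$ (inserts succeeding in Line~\ref{lin:sc-insert}, deletes succeeding in Line~\ref{lin:sc-del}), ordered by their successful $\SC$s, alternate strictly as $\insr,\del,\insr,\del,\dots$, beginning with an $\insr$ because $A$ is initially empty. For the alternation, a successful $\SC$ of $\insr(v)$ requires $v$ not physically in $A$ just before it (Lemma~\ref{lem:elemnotpresent}); once it succeeds $v$ is physically (indeed logically) in $A$, and by Lemmas~\ref{lem:nodangnoambig} and~\ref{lem:nondangdel} it stays physically in $A$ until the first propagation of the next $\del(v)$ succeeding in Line~\ref{lin:sc-del} completes; conversely such a $\del(v)$ performs its $\SC$ only when $b=v$, i.e.\ $v$ is physically in $A$. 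Hence no two consecutive set-changing operations in $\SC$-order are both inserts or both deletes. Consequently, in the state induced by $\pi_v$ right after the $k$-th set-changing operation, $v$ is in the set iff $k$ is odd, so each set-changing insert correctly returns \textit{true} and each set-changing delete correctly returns \textit{true}.

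Then I would handle the non-set-changing operations on $v$ --- $\insr(v)$ returning \textit{false}, $\del(v)$ returning \textit{false}, and all $\lookup(v)$. For such an $o$ let $t$ be the time of its last read of $A$, which lies in $o$'s interval, and consider the three cases of the construction. If no operation is dangling at $t$, Lemma~\ref{lem:nodangnoambig} gives that either $v$ is logically in $A$ at $t$ (so $o$ returns \textit{true} iff it is a lookup) or $v$ is not physically in $A$ at $t$ (so $o$ returns \textit{false}), and one shows that the slot at which the construction places $o$ induces exactly ``$v$ is in the set iff $v$ is logically in $A$ at $t$,'' by arguing that the logical membership of $v$ can change between $t$ and that slot only through an intervening set-changing operation, relative to which the slot is pinned. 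If there is a dangling $\del(v)$ at $t$, then Lemma~\ref{lem:vinA} forces an operation reporting ``$v$ present'' to see $v$ logically in $A$ and Lemma~\ref{lem:vnotinA} forces one reporting ``$v$ absent'' to see $v$ not logically in $A$; the construction places $o$ on the side of the dangling operation's set-change matching its response, and the dangling $\insr(v)$ case is symmetric.

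Finally, real-time order. For set-changing operations it is immediate: the successful $\SC$ lies inside the operation's interval, so the $\SC$-order refines real-time. For a non-set-changing $o$ with last read at $t$, its slot is sandwiched between the $\SC$s closest to $t$; the delicate point is that when $o$ is placed adjacent to a dangling operation $o_{\del}$ or $o_{\insr}$, that operation is still pending at $t$ (Lemma~\ref{lem:dangpending}), hence does not precede $o$, and by Lemma~\ref{lem:onedang} no other set-changing operation has its whole $[\SC,\text{stabilize}]$ window strictly between that dangling operation's $\SC$ and $t$, so no operation that finishes before $o$ begins can land on the wrong side of $o$ in $\pi$. Combined with the per-element real-time consistency just shown, the union of the $\pi_v$'s with the real-time order is acyclic, so the claimed interleaving into $\pi$ exists; therefore $\pi$ is a linearization of the execution and the implementation is linearizable. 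I expect the main obstacle to be the no-dangling case of the third step --- matching, slot by slot, the membership status $o$ observed at $t$ with the status induced at the chosen position, while correctly accounting for set-changing operations whose $\SC$ is near $t$ but whose propagation straddles $t$; the real-time argument around dangling operations (combining Lemmas~\ref{lem:onedang} and~\ref{lem:dangpending}) is the second most delicate point.
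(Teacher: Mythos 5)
Your plan tracks the paper's own proof almost step for step: the same per-element permutations $\pi_v$, the same alternation argument for set-changing operations via Lemmas~\ref{lem:elemnotpresent}, \ref{lem:nondangdel} and~\ref{lem:onedang}, and the same dangling/non-dangling case split for non-set-changing operations using Lemmas~\ref{lem:vnotinA}, \ref{lem:vinA} and~\ref{lem:nodangnoambig}. (One small bookkeeping slip: the uncompleted operations included in $\pi$ are all pending set-changing operations, not only the dangling ones --- a set-changing operation whose initial cell has already stabilized but which is still propagating is pending yet no longer dangling; also, right after an insert's initial write $v$ is guaranteed to be physically, but not necessarily logically, in $A$.)

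The genuine gap is in the real-time argument, precisely at the point you flag but do not resolve. Consider $o_1$ returning before $o_2$ is invoked, both non-set-changing operations on $v$ with opposite verdicts (one reports $v$ present, the other absent), and the \emph{same} operation $o$ dangling during the last reads of both. The construction places the two verdicts on opposite sides of $o$, so you must rule out the inversion in which $o_1$ lands just after $o$ and $o_2$ just before it. The justification you offer --- Lemma~\ref{lem:dangpending} together with ``by Lemma~\ref{lem:onedang} no other set-changing operation has its whole $[\SC,\text{stabilize}]$ window strictly between $o$'s $\SC$ and $t$'' --- concerns other set-changing operations and does not touch this scenario, in which no other set-changing operation is involved at all. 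What is needed (and what the paper's proof supplies) is an argument about what $o_2$ could have read: if $o$ is a dangling $\insr(v)$, then since $o_2$ reports absence, by Lemma~\ref{lem:vnotinA} $v$ is physically but not logically in $A$ at $o_2$'s last read, hence it sits in $A[h(v)-1].\lookahead$; the insert's initial write into $A[h(v)-1]$ precedes $o_2$'s invocation, and $o_2$ reads $A[h(v)-1]$ before $A[h(v)]$, so it would have seen $v$ there, contradicting the condition of Lemma~\ref{lem:elemnotpresentcond} it must have verified. Symmetrically, if $o$ is a dangling $\del(v)$, then once $v$ ceases to be logically in $A$ (which has already happened by $o_1$'s last read) it stays not logically in $A$ until $o$'s initial cell stabilizes, so $o_2$'s last read, taken while $o$ is still dangling, cannot find $v$ logically in $A$, contradicting Lemma~\ref{lem:vinA}. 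Without this reads-based contradiction the real-time requirement is not established, so this step of your plan must be filled in rather than delegated to Lemmas~\ref{lem:onedang} and~\ref{lem:dangpending}.
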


\begin{proof}
Let $v\in \CalU$, we prove the claim for permutation $\pi_v$. This is enough to prove the claim, as the correctness of the sequential specification is independent of operations with a different input element, and $\pi$ respects the real-time order of operations with different input elements.
 
\textbf{Correctness:}
First, we show the ordering respects the sequential specification of a set.
Consider an operation $o$ included in the permutation $\pi_v$, and consider all the different options:
\begin{itemize}
    \item \textbf{$\insr(v)$ operation $o$ that performs a successful $\SC$ in Line~\ref{lin:sc-insert}:}
    By Lemma~\ref{lem:elemnotpresent}, $v$ is not physically in the table just before this $\SC$.
    This implies that if there is an $\insr(v)$ operation $o'$ that also performs a successful $\SC$ in Line~\ref{lin:sc-insert} and precedes $o$ is the order, then there must be a $\del(v)$ operation $o''$ that performs a successful $\SC$ in Line~\ref{lin:sc-del} and is placed between $o$ and $o'$.

    \item \textbf{$\del(v)$ operation $o$ that performs a successful $\SC$ in Line~\ref{lin:sc-del}:}
    Let $o'$ be the latest $\insr(v)$ operation that performs a successful $\SC$ in Line~\ref{lin:sc-insert} and precedes $o$ in the order. 
    There is such an insert operation, as by the code, during the successful $\SC$ in Line~\ref{lin:sc-del}, $v$ is physically in the table.
    Assume there is a $\del(v)$ operation $o''$ that performs a successful $\SC$ in Line~\ref{lin:sc-del} and is placed between $o'$ and $o$. Then, $o''$ must be dangling during the successful $\SC$ in $o$, otherwise, by Lemma~\ref{lem:nondangdel} and since there is no set-changing $\insr(v)$ operation placed between $o''$ and $o$, $v$ is physically not in the table.
    However, this contradicts Lemma~\ref{lem:onedang}, which states there is only one dangling $\del(v)$ operation at a given point in the execution.

    \item 
    \textbf{$\insr(v)$ operation $o$ that returns \textit{false} or a $\lookup(v)$ operation $o$ that returns \textit{true}:}
    If there is a dangling $\insr(v)$ during the last read of $A$, then $o$ is placed after this operation and there is no delete operation that returns \textit{true} between them.
    Otherwise, let $o'$ be the latest $\insr(v)$ operation that performs a successful $\SC$ in Line~\ref{lin:sc-insert} before the last read of $A$ in $o$.
    There is such an insert operation, as by the code, during the last read of $A$ element $v$ is physically in the table.
    If there is a dangling $\del(v)$ operation $o''$ during the last read of $A$ in $o$, then $o$ is placed before this operation.
    By Lemma~\ref{lem:onedang}, 
    $o'$ precedes $o''$ in the order.
    Thus, $o'$ precedes $o$ in $\pi_v$.
    Assume there is a non-dangling $\del(v)$ operation $o''$ that performs a successful $\SC$ in Line~\ref{lin:sc-del} and is placed between $o'$ and $o$.
    By the construction of $\pi_v$, the successful $\SC$ in $o''$ happens before the last read of $A$ in $o$ and after the successful $\SC$ in $o'$.
    However, $o''$ must be dangling during the last read of $A$ is $o$, otherwise, by Lemma~\ref{lem:nondangdel} and since there is no set-changing $\insr(v)$ operation between $o''$ and $o$, $v$ is physically not in the table, in contradiction to assuming $o''$ is not dangling.
    This proves no set-changing $\del(v)$ operation is placed between $o'$ and $o$.

    \item
    \textbf{$\del(v)$ or $\lookup(v)$ operation $o$ that returns \textit{false}:}
    By Lemma~\ref{lem:vnotinA}, during the last read of $A$ in $o$, $v$ is not logically in $A$.
    Assume that $v$ is not physically in the table.
    This implies that if there is an $\insr(v)$ operation $o'$ that performs a successful $\SC$ in Line~\ref{lin:sc-insert} and precedes the last read of $A$ in $o$, then there must be a $\del(v)$ operation $o''$ that performs a successful $\SC$ in Line~\ref{lin:sc-del} 
    after the successful $\SC$ in $o'$ and before the last read of $A$ in $o$.
    By the construction of $\pi_v$, $o$ is placed after $o''$ in the order.
    If $v$ is physically in the table, by Lemma~\ref{lem:nodangnoambig},
    there must be a dangling $\insr(v)$ or $\del(v)$ operation $o'$.
    If the dangling operation is an $\insr(v)$ operation, $o$ is placed before $o'$, and there are no set-changing operations placed between them.
    If there is a set-changing $\insr(v)$ operation $o''$ that precedes $o'$ in the order it also precedes $o$, and 
    as shown in the previous case, there is a set-changing $\del(v)$ operation $o'''$ placed between $o''$ and $o$ in the order.
    If the dangling operation is $\del(v)$, $o$ is placed after $o'$, and there are no set-changing operations placed between them. 
\end{itemize}

\textbf{Real-time order:}
We need to show that if operation $o_1$ returns before operation $o_2$ invokes, then $o_1$ also precedes $o_2$ in $\pi_v$:
\begin{itemize}
    \item \textbf{Both $o_1$ and $o_2$ are set-changing operations:}
    Since set-changing operations are pending during the successful $\SC$ in Line~\ref{lin:sc-insert} or Line~\ref{lin:sc-del}, the claim follows directly.

    \item \textbf{$o_1$ is a set changing operation and $o_2$ is a non-set changing operation:}
    Operation $o_1$ performs a successful $\SC$ in Line~\ref{lin:sc-insert} or \ref{lin:sc-del} before the last read of $A$ in $o_2$.  
    By the construction, $o_2$ is placed before $o_1$ only if $o_1$ is dangling, but by Lemma~\ref{lem:dangpending}, this implies $o_1$ is pending, in contradiction to the initial assumption.

    \item \textbf{$o_1$ is a non-set changing operation and $o_2$ is a set changing operation:}
    Then the last read of $A$ in $o_1$ happens before the successful $\SC$ in Line~\ref{lin:sc-insert} or \ref{lin:sc-del} in $o_2$. 
    By Lemma~\ref{lem:dangpending}, $o_2$ is not dangling during the last read of $A$ in $o_1$,
    and according to the construction of $\pi_v$, $o_1$ is placed before $o_2$ in $\pi_v$.

    \item \textbf{Both $o_1$ and $o_2$ 
    are an $\insr(v)$ operation that returns \textit{false} or a $\lookup(v)$ operation that returns \textit{true}, or a $\del(v)$ or $\lookup(v)$ operation that returns \textit{false}:}
    If there is no dangling operation during the last read of $A$ in both $o_1$ and $o_2$, the operations are
    placed after the last set-changing that performs a successful $\SC$ before the last read of $A$
    and 
    before the next set-changing operation that performs a successful $\SC$ after the last read of $A$.
    If there is a dangling operation $o$ during the last read of $A$ in $o_1$,
    then $o_1$ is placed just before $o$ or just after $o$.
    If $o_1$ is placed just before $o$, $o_2$ is either placed before $o$, or before another set-changing operation that follows $o$ in $\pi_v$.
    If $o_1$ is placed just after $o$, $o_2$ is also placed after $o$, or before another set-changing operation that follows $o$ in the order.
    Assume there is no dangling operation during the last read of $A$ in $o_1$ and 
    a dangling operation $o$ during the last read of $A$ in $o_2$.
    By construction, since $o$ does not perform a successful $\SC$ yet during the last read of $A$ in $o_1$, $o_1$ is placed before $o$ in $\pi$.
    Operation $o_2$ is placed just before $o$, or after $o$.
    To conclude, $o_1$ is placed before an operation that does not follow the operation $o_2$ is placed before.
    If $o_1$ and $o_2$ are placed before the same operation, the construction places $o_1$ before $o_2$.

    \item \textbf{$o_1$ is
    an $\insr(v)$ operation that returns \textit{false} or a $\lookup(v)$ operation that returns \textit{true} and $o_2$ is a $\del(v)$ or $\lookup(v)$ operation that returns \textit{false}, or vice-versa:}
    Like in the previous case, if there is no dangling operation during the last read of $A$ in $o_1$, $o_1$ is placed before $o_2$.
    If there is a dangling operation $o$ during the last read of $A$ in $o_1$,
    then $o_1$ is placed just before $o$ or just after $o$.
    If $o$ is no longer dangling during the last read of $A$ in $o_2$, $o_2$ is placed after $o$.
    Lastly, if $o$ is also dangling during the last read of $A$ in $o_2$,
    if $o_1$ is placed just before $o$ then $o_2$ is placed just after $o$, and if $o_1$ is placed just after $o$ then $o_2$ is placed just before $o$.
    We show that the latter case cannot happen.

    Assume that $o$ is dangling during the last read of $A$ in both $o_1$ and $o_2$, and $o_1$ is placed just after $o$ and $o_2$ is placed just before $o$.
    If
    $o_1$ is
    an $\insr(v)$ operation that returns \textit{false} or a $\lookup(v)$ operation that returns \textit{true} and $o_2$ is a $\del(v)$ or $\lookup(v)$ operation that returns \textit{false}, then $o$ is an $\insr(v)$ operation.
    As $o$ is a dangling $\insr(v)$ operation and by Lemma~\ref{lem:onedang}, there is only one dangling operation, $v$ is physically in $A$ during the last read of $A$ in $o_2$.
    By the code, $o_2$ identifies one of the conditions in Lemma~\ref{lem:elemnotpresentcond} based on the last read of $A$.
    Since $v$ is physically in $A$ during the last read of $A$ in $o_2$, the last read is of $A[h(v)]$ and $A[h(v)].\mem \prio{<}{h(v)} v$. By the code, $A[h(v)-1]$ is read before $A[h(v)]$ in $o_2$. 
    Since $v$ is not logically in $A$, the dangling insert operation performs an $\SC(A[h(v)-1], \tup{\ast,v,\I})$ in Line~\ref{lin:sc-insert}.
    In addition, this $\SC$ happens before $o_2$ invokes, which implies that $o_2$ reads $v$ from $A[h(v)-1]$, in contradiction.

    If
    $o_1$ is a $\del(v)$ or $\lookup(v)$ operation that returns \textit{false} and $o_2$ is
    an $\insr(v)$ operation that returns \textit{false} or a $\lookup(v)$ operation that returns \textit{true}, then $o$ is a $\del(v)$ operation.
    Right after a $\del(v)$ operation performs a successful $\SC(A[i],\ast)$ in Line~\ref{lin:sc-del}, $A[i] = \tup{a,v,\D}$,
    $a\neq v$ and $A[i+1] = \tup{v,\ast,\ast}$.
    Therefore, $v$ is physically in $A$ while $A[i]$ is unstable.
    If $i = h(v)-1$,
    following the code, when $A[i+1]$ becomes unstable, $v$ is no longer logically in $A$, and it stays this way until $A[i]$ stabilizes.
    Otherwise, $v$ is physically in $A$ until $A[i]$ stabilizes.
    By Lemma~\ref{lem:vnotinA}, during the last read of $A$ in $o_1$, $v$ is not logically in $A$, and it stays this way during the last read of $A$ in $o_2$.
    This contradicts Lemma~\ref{lem:vinA}, which states that 
    $v$ is logically in $A$ in this last read. \qedhere
\end{itemize}
\end{proof}

\subsection{History Independence}
\label{sec:implproof-hi}

When all cells are stable, there are no duplicate elements in the table, and the \emph{lookahead} slot in each cell is equal to the \emph{value} slot of the next cell. Together with Invariant~\ref{inv:ext-order-invar}, this implies that if no operation is propagating, the table is in the canonical memory representation of the set of elements stored in the table, imposed by Robin Hood hashing.
As discussed in Section~\ref{sec:propagate_operation}, we show that when an operation returns, it either completes its propagating, or a different ongoing $\insr$ or $\del$ operation takes over responsibility for finishing the propagation.
This ensures that when there are no ongoing $\insr$ or $\del$ operations, the table is in its canonical memory representation.

We say that an insert or delete operation is \emph{active} if there is a cell in the table unstable with this operation.
An \emph{empty cell} is a cell $A[i] = \tup{\bot,\ast,\ast}$ or $A[i] = \tup{\ast,\bot,\Stable}$.
Note that if $A[i] = \tup{\ast,\bot,\Stable}$, cell $A[i]$ itself is not empty in the sense that the value it contains is $\bot$, but it indicates that cell $A[i+1]$ is empty.
The propagation of an operation ends upon reaching an empty cell $A[i] = \tup{\bot,\ast,\Stable}$ for an insert operation (Line~\ref{lin:dsc-I1}), and empty cell $A[i] = \tup{\ast,\bot,\Stable}$ for a delete operation (Line~\ref{lin:dsc-D1}).
For a delete operation, the propagation also ends upon reaching an element in its hash location (Line~\ref{lin:dsc-D2}).
When the propagation of the operation ends, if $A[i]$ is unstable and reaches such cell $A[i+1]$, cell $A[i+1]$ does not become unstable with the operation.
By Lemma~\ref{lem:propstructure}, at most two cells are unstable with the same operation, and these two cells are consecutive.
In addition, once no cells are unstable with the operation, no other cell becomes unstable with the same operation.

It is easy to verify that an unstable cell is associated with exactly one active $\insr$ or $\del$ operation:

\begin{lemma}
\label{lem:activeop}
    Any unstable cell $i$, $0\leq i < \ncell$, contains an active operation:
    \begin{enumerate}
        \item If $A[i] = \tup{\ast,\ast,\I}$, then an $\insr$ operation is unstable in cell $i$.
        \item If $A[i] = \tup{\ast,\ast,\D}$, then a $\del$ operation is unstable in cell $i$.
    \end{enumerate}
\end{lemma}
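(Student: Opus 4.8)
The plan is to establish, as an invariant maintained throughout the execution, that the \emph{flag} of every unstable cell agrees with the type of the operation it is associated with: a cell with flag $\I$ is associated with an $\insr$ operation, and a cell with flag $\D$ with a $\del$ operation. Given this invariant, the ``active'' part of the statement is immediate --- if $A[i].\flag=\I$ (resp.\ $\D$), then by the invariant cell $i$ is associated with an $\insr$ (resp.\ $\del$) operation $o$, and cell $i$ itself witnesses that $o$ is active in the sense of the definition preceding the lemma.

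I would prove the invariant by induction over the prefixes of the execution. The base case is trivial, since initially every cell is $\tup{\bot,\bot,\Stable}$ and there are no unstable cells. For the inductive step, observe that a cell changes its flag --- in particular, can become unstable or change the operation it is associated with --- only through a successful $\SC$, so it suffices to go over every $\SC$ that appears in the pseudocode (the only $\SC$s are in Lines~\ref{lin:sc-insert} and~\ref{lin:sc-del} of $\insr$/$\del$, in Lines~\ref{lin:sc-I1}, \ref{lin:sc-I2}, \ref{lin:sc-D1}, \ref{lin:sc-D2} of $\helpop$, and inside the $\DSC$ calls of Lines~\ref{lin:dsc-I1}, \ref{lin:dsc-I2}, \ref{lin:dsc-D1}, \ref{lin:dsc-D2}). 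The $\SC$s in Lines~\ref{lin:sc-I1}, \ref{lin:sc-I2}, \ref{lin:sc-D1}, \ref{lin:sc-D2}, the second $\SC$ inside every $\DSC$, and the first $\SC$ inside the $\DSC$s of Lines~\ref{lin:dsc-I1} and~\ref{lin:dsc-D2}, all write a value whose flag is $\Stable$; such a step only turns a cell stable, so the invariant stays vacuous for that cell and is untouched elsewhere. The remaining cases are the only four steps that can create an unstable cell: (i) a successful $\SC$ in Line~\ref{lin:sc-insert}, which writes flag $\I$ and, by definition, associates the cell with the $\insr$ operation performing the $\SC$; (ii) a successful $\SC$ in Line~\ref{lin:sc-del}, which writes flag $\D$ and associates the cell with the corresponding $\del$ operation; (iii) the first $\SC$ inside the $\DSC$ of Line~\ref{lin:dsc-I2}, which writes $\tup{b,c,\I}$ into $A[i+1]$; and (iv) the first $\SC$ inside the $\DSC$ of Line~\ref{lin:dsc-D1}, which writes $\tup{d,d,\D}$ into $A[i+1]$. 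In cases (i) and (ii) the flag matches the type of the associated operation by construction, so the invariant is preserved.

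For cases (iii) and (iv) I would argue as follows. Case (iii) is reached only on the branch of $\helpop$ guarded by $M_1=\I$ and a successful $\VL(A[i])$ in Line~\ref{lin:helpop-I-cond}; by Lemma~\ref{lem:llllvl}, at the time of the preceding $\LL(A[i+1])$ we had $A[i]=\tup{\ast,b,\I}$, and since the $\DSC$ writes $A[i+1]$ first and only afterwards $A[i]$, cell $A[i]$ is still unstable with flag $\I$ at the moment $A[i+1]$ becomes unstable; hence, by the induction hypothesis, the operation associated with $A[i]$ is an $\insr$, and by the definition of association $A[i+1]$ inherits that same operation, so its new flag $\I$ matches. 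Case (iv) is identical with $\I$ replaced by $\D$, using the symmetric $M_1=\D$ branch of $\helpop$ and Line~\ref{lin:dsc-D1}, the induction hypothesis then yielding a $\del$ operation. This exhausts the $\SC$s, completes the induction, and proves the lemma. The only mildly delicate point --- and thus the main ``obstacle'' --- is the bookkeeping in cases (iii) and (iv), namely confirming that $A[i]$ has not been concurrently stabilized or re-associated between the instant its flag is read and the instant $A[i+1]$'s flag is written; this is exactly the $\VL$/Lemma~\ref{lem:llllvl} reasoning already used in the proof of Lemma~\ref{lem:propstructure}, and everything else is a mechanical scan of the $\SC$s in the pseudocode.
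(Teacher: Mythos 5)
Your proposal is correct, and it is essentially the argument the paper has in mind: the paper leaves this lemma as an easy verification of its definitions of cell--operation association (initial writes at Lines~\ref{lin:sc-insert} and~\ref{lin:sc-del} create unstable cells whose mark matches the operation's type, and the only other unstable-creating steps, the first $\SC$s of the $\DSC$s at Lines~\ref{lin:dsc-I2} and~\ref{lin:dsc-D1}, make $A[i+1]$ inherit the operation of the still-unstable $A[i]$ with the same mark). Your induction over all successful $\SC$s, with the $\VL$/Lemma~\ref{lem:llllvl} and Lemma~\ref{lem:propstructure} reasoning handling the inheritance cases, is exactly that verification spelled out, and the appeal to Lemma~\ref{lem:propstructure} is non-circular since it does not depend on this lemma.
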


Lemma~\ref{lem:activeop} implies that any cell without an active operation is stable.

The next lemma shows that the propagation of an operation that inserts or deletes element $v$ must end before reaching cell $h(v)$ again.

\begin{lemma}
\label{lem:finishprop}
    An active $\insr(v)$ or $\del(v)$ operation cannot be propagated beyond cell $h(v)-1$ after cell $h(v)-1$ becomes unstable with the operation.
\end{lemma}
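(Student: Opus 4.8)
The plan is to argue by contradiction: suppose an active $\insr(v)$ or $\del(v)$ operation $o$ is propagated from cell $h(v)-1$ into cell $h(v)$ (after $h(v)-1$ has become unstable with $o$). I will show that at the moment of this propagation, $o$ must in fact already have finished --- i.e., the step that would propagate $o$ into $h(v)$ is exactly the step that ends the propagation --- and hence $A[h(v)]$ never becomes unstable with $o$. The key quantitative fact to extract is that when a set-changing operation on $v$ makes its initial write, it locks a cell $A[k]$ with $h(k) = h(v)$ or $h(v)-1$ depending on the direction, and then every subsequent propagation step advances the operation's ``frontier'' by exactly one cell while simultaneously restoring (via \invref{inv:ext-order-invar}) the ordering invariant on the value slots. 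So the claim is really a statement about how far the chain of displacements (for $\insr$) or backward shifts (for $\del$) can reach: it must terminate at the first empty cell in the run, which, since there is always an empty cell (Assumption~\ref{asm:neverfull}), and since the run containing $h(v)$ cannot wrap all the way around, is encountered no later than cell $h(v)-1$.

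Concretely, for the insertion case: suppose $A[h(v)-1] = \tup{a,b,\I}$ is unstable with $o$ and $o$ is about to propagate into $A[h(v)]$. By \invref{inv:stable-unstable} and the propagation code (Line~\ref{lin:dsc-I2}), this requires $A[h(v)] = \tup{c,d,\Stable}$ with $b \prio{>}{h(v)} c$ and $c \neq \bot$. But by \invenumref{inv:ext-order-invar}{3}, since $b$ sits in the \emph{lookahead} slot of cell $h(v)-1$, either $h(b) = h(v)$ or $a \prio{>}{h(v)-1} b$; combining with \invenumref{inv:ext-order-invar}{1} and Lemma~\ref{lem:monotone-priority} one gets $A[j].\mem \prio{>}{j} b$ for all $h(b) \le j < h(v)$. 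Chaining this around the table (the propagation has travelled from $h(v)$ all the way around back to $h(v)-1$) forces every value slot to have strictly higher priority than $b$, hence every value slot is non-$\bot$, contradicting Assumption~\ref{asm:neverfull}; alternatively, the code detects this in Line~\ref{lin:stuckinsr} and would \abort{}, which we have already ruled out. The deletion case is symmetric, using \invenumref{inv:stable-unstable}{3}, Claim~\ref{clm:shiftback}, Lemma~\ref{lem:consistentpriosmaller} and Lemma~\ref{lem:biggerprio}: if a $\del(v)$ operation unstable in $A[h(v)-1]$ tried to propagate backward-shifting into $A[h(v)]$, every value slot would again be forced to outrank the deleted element's successor, contradicting the existence of an empty cell.

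I expect the main obstacle to be handling the interaction with \emph{concurrent} operations during the wrap-around: the propagation frontier of $o$ may be pushed forward by helping processes while other operations make initial writes into cells ahead of it, and I must be careful that Invariant~\ref{inv:stable-unstable} and Lemma~\ref{lem:propstructure} (``at most two consecutive cells unstable with $o$, in the expected order'') really do pin down the frontier's location. In particular, I need to rule out that $o$ ``laps'' another active operation on element $v$ or on a displaced element --- but Lemma~\ref{lem:propstructure} together with the hand-over-hand structure guarantees operations never overtake one another, and Lemma~\ref{lem:onedang} bounds the number of dangling operations on $v$, so the frontier is well-defined. Once these structural facts are in hand, the priority argument above is essentially a direct application of Lemma~\ref{lem:monotone-priority} propagated around the cyclic table, and the contradiction with Assumption~\ref{asm:neverfull} closes the proof.
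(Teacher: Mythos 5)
There is a genuine gap, and it sits exactly where the paper's proof does its real work. Your argument tries to derive the contradiction from the \emph{current-state} invariants at the moment the operation would propagate from cell $h(v)-1$ into cell $h(v)$: you invoke \invenumref{inv:ext-order-invar}{3} to say that ``either $h(b)=h(v)$ or $a \prio{>}{h(v)-1} b$'' and then conclude that every \emph{value} slot outranks $b$, contradicting Assumption~\ref{asm:neverfull}. But the disjunction is not enough: to make Lemma~\ref{lem:biggerprio} sweep the \emph{entire} table you need \emph{both} the anchor $a \prio{>}{h(v)-1} b$ \emph{and} $h(b)=h(v)$, and nothing in the local invariants gives you the conjunction. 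Indeed it cannot: the memory state of cells $h(v)-1$ and $h(v)$ at that moment is also realized in the perfectly legal situation where the operation's \emph{initial write} was made at cell $h(v)-1$ and it now propagates into $h(v)$ (the ordinary flow of inserting or deleting $v$ at its hash location), so any purely state-based argument at these two cells would ``prove'' too much. What distinguishes the forbidden wrap-around from the legal case is the operation's \emph{propagation history}, and this is what the paper's proof uses: it records, at each successful propagation step, the condition that was checked at that time (the initial-write condition $v \prio{>}{i+1} a_{i+2}$ and the no-abort condition $a_j \prio{>}{j} a_{j+1}$ for inserts; $h(a_{j+1}) \neq j+1$ for deletes), and then transports these time-stamped facts forward cell-by-cell with Lemma~\ref{lem:consistentprio} (Claim~\ref{clm:insredge}) to conclude that the elements carried at the wrapped frontier hash to $h(v)$. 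Only then does a contradiction follow --- with \invenumref{inv:ext-order-invar}{1} for inserts, and with the propagation condition $h(\cdot) \neq j+1$ itself for deletes. Your proposal never performs this induction, and your delete case (one sentence, ending again in ``contradicting the existence of an empty cell'') inherits the same hole; note the paper's delete contradiction is not with Assumption~\ref{asm:neverfull} at all.

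A secondary but related problem is temporal: Assumption~\ref{asm:neverfull} and statements like ``the run containing $h(v)$ cannot wrap all the way around'' hold per configuration, whereas the propagation you are reasoning about spans many configurations during which deletions can move the empty cell. So ``chaining around the table'' cannot be done on the current table contents; it must be done on time-independent facts about the fixed elements $a_j$ (their hash locations and pairwise priorities), which is again precisely what Claim~\ref{clm:insredge} supplies. Finally, two smaller slips: propagation into $A[h(v)]$ can also occur via Line~\ref{lin:dsc-I1} with $c=\bot$, so you may not assume $c \neq \bot$; and the appeal to Line~\ref{lin:stuckinsr}/abort is misplaced, since in the scenario under consideration every propagation step satisfies the non-abort condition, so no abort is ever attempted --- that condition is an ingredient of the induction, not an alternative exit from the proof.
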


\begin{proof}
    Assume an active $\insr(v)$ or $\del(v)$ operation propagated beyond cell $h(v)-1$ after cell $h(v)-1$ becomes unstable with the operation.
    That is, the operation propagates to $A[h(v)]$ after cell $A[h(v)-1]$ becomes unstable with the operation.
    Consider an $\insr(v)$ operation initially unstable in cell $i$.
    By Line~\ref{lin:sc-insert},
    when the operation is unstable in cell $i$, it is of the form $\tup{\ast, a_{i+1}, \I}$, where $a_{i+1} = v$.
    By Line~\ref{lin:dsc-I2},
    when the operation is unstable in cell $j$, $i< j\leq h(v)-1$ it is of the form $\tup{a_{j}, a_{j+1}, \I}$.
    By Line~\ref{lin:dsc-I1}, just after the operation propagates to cell $h(v)$ it is equal to $\tup{a_{h(v)},\ast,\ast}$.
    By Line~\ref{lin:insert-loc}, $v \prio{>}{i+1} a_{i+2}$, and by Line~\ref{lin:stuckinsr}, $a_{j} \prio{>}{j} a_{j+1}$ for $i < j\leq h(v)-1$.

    Consider a $\del(v)$ operation initially unstable in cell $i$.
    By Line~\ref{lin:sc-del},
    when the operation is unstable in cell $i$, it is of the form $\tup{\ast, a_{i+1}, \D}$, where $a_{i+1} = v$.
    By Line~\ref{lin:dsc-D1},
    Just before the successful $\SC$ that propagates the operation to cell $j$, $i < j\leq h(v)-1$, it is of the form $\tup{a_{j}, a_{j+1}, \Stable}$, and immediately after the successful $\SC$ it is of the form $\tup{a_{j+1}, a_{j+1}, \D}$.
    By the condition in Line~\ref{lin:dsc-D1}, 
    $h(a_{j+1}) \neq j+1$, for $i < j \leq h(v)-1$,
    and by \invref{inv:stable-unstable}, $a_j \prio{>}{j} a_{j+1}$.

    In both cases we have that for $i< j \leq h(v)-1$, $a_j \prio{>}{j} a_{j+1}$. This implies the next claim:

    \begin{claim}
    \label{clm:insredge}
        $h(a_{h(v)-1}) = h(a_{h(v)}) = h(v)$.
    \end{claim}

    \begin{proof}
        If $i = h(v)-2$, then $a_{h(v)-1} = v \prio{>}{h(v)-1} a_{h(v)}$, and this also implies that $h(a_{h(v)}) = h(v)$.
        Otherwise, we inductively show that $v \prio{>}{j+1} a_{j+1}$, $i < j < h(v)-1$.

        Base case: its given that $v \prio{>}{i+1} a_{i+2}$.
        As $h(v) \neq i+1$, by Lemma~\ref{lem:consistentprio}, $v \prio{>}{i+2} a_{i+2}$.
        
        Induction step:
        Assume the claim holds for $j$ and prove it for $j+1$, $i < j+1 < h(v)-1$.
        By the induction hypothesis, $v \prio{>}{j+1} a_{j+1}$.
        Since $a_{j+1} \prio{>}{j+1} a_{j+2}$, by transitivity of $\prio{>}{j+1}$, it also holds $v \prio{>}{j+1} a_{j+2}$.
        As  $h(v) \neq j+1$, by Lemma~\ref{lem:consistentprio}, $v \prio{>}{j+2} a_{j+2}$.

        By setting $j = h(v)-2$, we showed that $v \prio{>}{h(v)-1} a_{h(v)-1}$, which implies that $h(a_{h(v)-1}) = h(v)$. 
        Since $a_{h(v)-1} \prio{>}{h(v)-1} a_{h(v)}$, this also implies that $h(a_{h(v)}) = h(v)$.
    \end{proof}

    For a $\del(v)$ operation, Claim~\ref{clm:insredge} contradicts that $h(a_{h(v)}) \neq h(v)$.
    Consider an $\insr(v)$ operation;
    by Lemma~\ref{lem:propstructure}, just before 
    the successful $\SC$ that propagates the operation to $A[h(v)]$, $A[h(v)-1] = \tup{a_{h(v)-1}, a_{h(v)}, \I}$ and $A[h(v)] = \tup{a_{h(v)+1}, \ast, \Stable}$.
    As $a_{h(v)-1} \prio{>}{h(v)-1} a_{h(v)}$ and by Claim~\ref{clm:insredge}, it also holds
    $a_{h(v)-1} \prio{>}{h(v)} a_{h(v)}$.
    In addition, $a_{h(v)} \prio{>}{h(v)} a_{h(v)+1}$, and by the transitivity of $\prio{>}{h(v)}$, $a_{h(v)-1} \prio{>}{h(v)} a_{h(v)+1}$.
    However, since $A[h(v)-1].\mem = a_{h(v)-1}$, $A[h(v)].\mem = a_{h(v)+1}$ and $h(a_{h(v)-1}) = h(v)$, this contradicts \invenumref{inv:ext-order-invar}{1}.
\end{proof}

For each active operation, we assign a \emph{responsible} operation to ensure its complete propagation through the table.
During its execution, an active operation may have several operations assigned to it, but at any particular point, only one operation is responsible for it.
In addition, the same operation can be responsible for several active operations.
Initially, an active operation is responsible for itself.
Next, we explain how to assign responsible operations, in case the active operation returns before being propagated.

Fix an active operation $o_a$ and let $\resp(k)$, $k\geq 0$, be the $k$-th operation responsible to $o_a$. 
The responsible operation is either an insert or delete operation, but not a lookup operation.
We also define a point in the execution for each responsible operation in which the operation becomes responsible for $o_a$.
Define $\resp(0) = o_a$ and $\pt(0)$ to be the step in the execution which includes $o_a$ invocation.
The next lemma shows how to inductively find a responsible operation to $o_a$, ensuring that if a responsible operation returns before $o_a$ becomes inactive, then another operation becomes responsible to $o_a$.

\begin{lemma}
\label{lem:respactiveopind}
    Let $o_r = \resp(k)$, $k\geq 0$, be the $k$-th operation responsible to $o_a$.
    If $o_r$ returns while $o_a$ is still active we can find the next responsible operation $\resp(k+1)$ as follows;
    there is an operation overlapping $o_r$ that
    empties cell $j$ in Line~\ref{lin:dsc-D2}, with $d \neq \bot$, while $o_a$ is unstable in cell $i$ and previously unstable in cell $j$, and all cells between $j$ and $i$ are nonempty. 
    The successful $\SC$ in Line~\ref{lin:dsc-D2} is defined to be $\pt(k+1)$, and it holds that $\pt(k)$ precedes $\pt(k+1)$.
\end{lemma}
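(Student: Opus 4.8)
The claim is an inductive construction: given that the $k$-th responsible operation $o_r$ returns while the fixed active operation $o_a$ is still active, we must exhibit a concrete later operation to serve as $\resp(k+1)$, together with a well-defined point $\pt(k+1)$ strictly after $\pt(k)$. The plan is to trace \emph{why} $o_r$ was able to return even though $o_a$ has not finished propagating. By the code, a process only returns from an $\insr$/$\del$ after its call to $\prop$ terminates, and $\prop$ only terminates upon reaching an empty cell or wrapping back to the start (Line~\ref{lin:prop-stopcond}); a $\lookup$ may also return after $\prop$-like scanning, but crucially $\resp(k)$ is always an $\insr$ or $\del$ by construction (and by the preceding definition, the very first responsible operation is $o_a$ itself, an $\insr$ or $\del$). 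So $o_r$'s terminating $\prop$ call stopped at some cell that was empty at that moment. Since $o_a$ is still active, by Lemma~\ref{lem:propstructure} and Lemma~\ref{lem:finishprop} the cells unstable with $o_a$ lie strictly between $h(o_a.v)$ somewhere in the run, and in particular $o_a$ is stuck in the middle of a run that was \emph{not} empty where $o_a$ sits. The only way $o_r$'s scan could have seen an empty cell "ahead of" $o_a$ and stopped there is that the run got \emph{punctured} behind $o_r$'s scan position by a concurrent delete — precisely the "puncturing a run" scenario described in Section~\ref{sec:propagate_operation}. That puncture is caused by a successful $\SC$ in Line~\ref{lin:dsc-D2} with $d \neq \bot$ (the case $h(d) = i+2$, i.e. an element already in its hash location, so the run splits rather than shrinks).

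The key steps, in order, are: (1) Show $o_r$'s returning $\prop$ stopped at a cell $j'$ that was empty at that step, using Line~\ref{lin:prop-stopcond} and the fact that the wrap-around case $j=i$ is excluded because Lemma~\ref{lem:finishprop} guarantees propagations terminate before wrapping, so an active operation cannot cause a full wrap either. (2) Locate $o_a$: by Lemma~\ref{lem:propstructure} there is a cell $i$ currently unstable with $o_a$ and a cell $j$ (namely $i-1$ or $i$ itself earlier) that was previously unstable with $o_a$; since $o_a$ is active and, by Invariant~\ref{inv:ext-order-invar} together with Invariant~\ref{inv:stable-unstable}, an unstable cell of a delete has a nonempty lookahead and is inside a run, all cells from that position forward up to the first empty cell are nonempty. (3) Argue that between the time $o_r$ began its final scan and the time it stopped at the (now empty) cell $j'$, some operation executed the puncturing $\SC$ in Line~\ref{lin:dsc-D2} with $d\neq\bot$ emptying a cell $j$ that lies on the run strictly ahead of $o_a$'s position $i$, with all cells strictly between $j$ and $i$ nonempty — because otherwise $o_r$ would either have stopped earlier (at a genuinely pre-existing empty cell behind $o_a$, contradicting that it passed $o_a$) or would have had to help $o_a$ forward, contradicting that $o_a$ is still active at $i$. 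This is where the "not a $\lookup$" guard in Line~\ref{lin:no-lookup} matters: the operation that performs the Line~\ref{lin:dsc-D2} $\SC$ must itself be an $\insr$ or $\del$, so it can serve as $\resp(k+1)$, and it overlaps $o_r$ since the $\SC$ occurs during $o_r$'s final scan (hence after $\pt(k)$, giving $\pt(k) \prec \pt(k+1)$).

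I would set $\pt(k+1)$ to be exactly that successful $\SC$ step in Line~\ref{lin:dsc-D2}, and $\resp(k+1)$ the operation performing it. It remains to verify the geometric bookkeeping asserted in the statement: that $o_a$ is \emph{currently} unstable in some cell $i$, was \emph{previously} unstable in the just-emptied cell $j$, and all cells between $j$ and $i$ are nonempty. The "previously unstable in $j$" part follows because $o_a$'s hand-over-hand propagation (Lemma~\ref{lem:propstructure}) visited every cell from its initial write up to $i$, and the puncture happens at a cell $o_a$ already passed (a delete propagating ahead of $o_a$ must have been pushed there, e.g., by a helping $\lookup$, exactly as in the "puncturing a run" discussion); "all cells between $j$ and $i$ nonempty" follows from Invariant~\ref{inv:stable-unstable} applied along that stretch of the run, since $o_a$ is mid-propagation there and by Lemma~\ref{lem:finishprop} has not wrapped.

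\textbf{Main obstacle.} The hard part will be step (3): pinning down that it is \emph{specifically} a Line~\ref{lin:dsc-D2} $\SC$ with $d\neq\bot$ (a run-splitting puncture, not an end-of-run contraction) that lets $o_r$'s scan stop ahead of $o_a$, and ruling out every other way $o_r$ could legitimately have terminated while $o_a$ stayed active. This requires a careful case analysis of the $\prop$ loop's stopping condition together with the help-propagation logic in Lines~\ref{lin:prop-helpcond}--\ref{lin:prop-stopcond}: in every branch where $o_r$ meets a non-stable cell it must help it forward (so it cannot "skip over" $o_a$ without advancing $o_a$), and in every branch where it meets a stable cell with a nonempty value it must continue — so the only exit consistent with $o_a$ frozen at $i$ is hitting a cell emptied, after $o_r$ passed $o_a$, by a concurrent delete's puncture, and a delete contracting the end of the run (Line~\ref{lin:dsc-D2} with $d=\bot$, or Line~\ref{lin:dsc-D1}'s analogue) would instead have been encountered by $o_a$ first. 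Making this airtight, using only Invariants~\ref{inv:ext-order-invar} and~\ref{inv:stable-unstable} and Lemmas~\ref{lem:propstructure}, \ref{lem:proponce}, and~\ref{lem:finishprop}, is the crux.
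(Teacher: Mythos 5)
There is a genuine gap, and it is geometric: you have the punctured cell on the wrong side of $o_a$. In the scenario the lemma describes, the cell emptied in Line~\ref{lin:dsc-D2} is a cell that $o_a$ was \emph{previously} unstable in, i.e., a cell strictly \emph{behind} the cell $i$ where $o_a$ is currently unstable (operations only propagate forward, so ``previously unstable in $j$'' forces $j$ to precede $i$). The stranding mechanism is exactly this: other processes helped $o_a$ forward past cell $j$, the run was then punctured at $j$ behind it, and consequently $o_r$'s own $\prop$ scan stops at the empty cell $j$ \emph{without ever reaching} $o_a$'s current cell $i$ --- which is why the puncturer, who continues into the second part of the run, must inherit responsibility. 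Your step (3) instead asserts that the scan ``passed $o_a$'' and stopped at an empty cell ``strictly ahead of $o_a$'s position $i$,'' and your supporting argument (``it would have had to help $o_a$ forward, contradicting that $o_a$ is still active'') is not sound: helping $o_a$ one step forward does not contradict $o_a$ remaining active, and in the correct picture the scan never visits $o_a$'s current cell at all. With the orientation inverted, your construction does not yield an operation satisfying the lemma's conclusion (``previously unstable in cell $j$''), and it also misses why such a puncture strands $o_a$ in the first place.

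Two further pieces of the paper's argument are absent and your substitutes do not work. First, the nonemptiness of all cells between the emptied cell and $o_a$'s current cell does not follow from Invariant~\ref{inv:stable-unstable} ``applied along that stretch'' --- the invariant says nothing about cells behind an unstable operation, and indeed punctures empty exactly such cells. The paper obtains it by a minimality argument: among the cells between the scan's stopping cell and $o_a$'s current cell that were emptied after $o_a$ propagated through them, pick the one \emph{closest} to $o_a$'s current cell; by that choice the cells in between were not emptied, and since the next cell is nonempty the emptying $\DSC$ must have had $d \neq \bot$. Second, the lemma is an induction with two distinct cases. For $k=0$ one must argue that $o_a$ itself propagated at least as far as the cell where its own $\prop$ stopped (so that cell was emptied \emph{after} $o_a$ passed it); for $k\geq 1$ one must rule out the wrap-around exit of $\prop$ via Lemma~\ref{lem:finishprop}, and then use the inductive fact that at $\pt(k)$ all cells between $j$ and $o_a$'s cell were nonempty (together with Lemma~\ref{lem:dangpending}) to show the newly observed empty cell was emptied \emph{after} $\pt(k)$, which is what gives $\pt(k)\prec\pt(k+1)$. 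Your timing claim that the new puncture ``occurs during $o_r$'s final scan'' is asserted rather than derived and does not by itself recover this ordering. So while you correctly identify Line~\ref{lin:no-lookup} as the reason the new responsible operation is an $\insr$ or $\del$, the core construction needs to be redone with the puncture located behind $o_a$, the closest-emptied-cell selection, and the explicit $k=0$ / $k\geq 1$ case split.
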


\begin{proof}
    \underline{$k = 0$}:
    In this case, $o_r = o_a$.
    Since $o_a$ is active, by Lemma~\ref{lem:finishprop}, it returns after calling $\prop$, and seeing an empty cell $j$.
    Assume the operation is unstable in cell $i$ during its last read of cell $A[j]$.
    As operation $o_a$ checks that the cell value changes or there is no unstable $\insr$ or $\del$ operation up to cell $j$, the active operation propagated to cell $j$.
    It must be that $i\neq j$, as by the propagation structure, only the first cell the operation is unstable in can have the $\mem$ slot equal to $\bot$, and by \invenumref{inv:stable-unstable}{2.5}, the $\lookahead$ is also not equal to $\bot$. This also implies that 
    cell $A[j]$ becomes unstable with the operation and then stabilizes.
    Consider the propagation of the operation from cell $j$ down to cell $i$.
    By the code, when the operation is propagated to some cell it must be nonempty,
    otherwise, this contradicts the fact that the operation is active.
    Therefore, cell $j$ is emptied after the the cell becomes unstable with the operation and before it is last read in the operation.
    
    Consider the closest cell to $i$, between cell $j$ and cell $i$, that is emptied after the active operation becomes unstable in this cell and not after cell $j$ is emptied. Let $k$ be this closest cell, $j\leq k < i$.
    While cell $k$ is emptied, the operation is unstable in some cell $k< \ell\leq i$.
    When cell $k$ is emptied, as we assume the operation is still active, the operation is propagated to at least cell $k+1$. From the choice of $k$, this implies that cell $k+1$ up to cell $\ell$ must be non-empty, and the cell must be emptied in Line~\ref{lin:dsc-D2} with $d\neq \bot$.
    The operation that empties cell $k$ is the operation we are looking for.

    \underline{$k \geq 1$}:
    By the assumption, operation $o_r$ empties cell $j$ in Line~\ref{lin:dsc-D2}, while $o_a$ is active in cell $i$ and all cells between $j$ and $i$ are nonempty.
    By the code, before $o_r$ returns and after emptying cell $j$, it calls $\prop$, starting from cell $j+1$.
    If $o_r$ ends the $\prop$ call upon reaching cell $j$, as $o_a$ was already active in cell $j$, by Lemma~\ref{lem:finishprop}, $o_a$ is no longer active.
    Therefore, the $\prop$ call ends after encountering an empty cell $j'$.
    Assume the operation is unstable in cell $i'$ during the last read of cell $A[j']$ in the $\prop$ call.
    Just Before calling $\prop$, there are no empty cells between cell $j+1$ and the cell $o_a$ is unstable in. Additionally, by Lemma~\ref{lem:dangpending},
    $o_a$ can only be unstable in nonempty cells, hence, cell $j'$ is emptied after cell $j$ is emptied.

    Consider the closest cell to $i'$, between cell $j'$ and cell $i'$, emptied after the active operation becomes unstable in this cell and not after cell $j'$ is emptied. Let $k$ be this closest cell, $j'\leq k < i'$.
    While cell $k$ is emptied, the operation is unstable in some cell $k< \ell\leq i'$.
    When cell $k$ is emptied, the operation is propagated to at least cell $k+1$. From the choice of $k$, this implies that cell $k+1$ up to cell $\ell$ must be non-empty, and the cell must be emptied in Line~\ref{lin:dsc-D2} with $d\neq \bot$
    after $\pt(k)$. 
\end{proof}

Lemma~\ref{lem:respactiveopind} implies that if an execution of the algorithm ends in a configuration with no pending
operations, then there is no active operation in this configuration.
The next lemma can be proven using Lemma~\ref{lem:nodangnoambig}.

\begin{lemma}
\label{lem:linquiescentstate}
    Consider a finite execution that ends in a configuration with no pending $\insr$ or $\del$ operation,
    if $v$ is present (resp., not present) in the table, then the state implied from $\pi$ (defined in Section~\ref{sec:implproof-linearizability}) includes (resp., does not include) $v$.
\end{lemma}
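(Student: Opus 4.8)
The plan is to connect the "physical/logical presence of $v$ in $A$" dichotomy established by Lemma~\ref{lem:nodangnoambig} with the construction of the permutation $\pi_v$ from Section~\ref{sec:implproof-linearizability}, and read off the state of the dictionary at the end of the execution. First I would observe that since the final configuration has no pending $\insr$ or $\del$ operation, Lemma~\ref{lem:dangpending} (contrapositive) tells us there is no dangling set-changing operation in this configuration; hence by Lemma~\ref{lem:nodangnoambig}, each element $v$ is either logically in $A$ or not physically in $A$. Moreover, Lemma~\ref{lem:respactiveopind} (via the remark immediately preceding the lemma statement) guarantees there is no active $\insr$ or $\del$ operation either, so every cell is stable by Lemma~\ref{lem:activeop}; this means the table is exactly $\can(P)$ for the set $P$ of elements in $A|_\mem$, with $A[i].\lookahead = A[i+1].\mem$ everywhere.

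Next I would trace through how the state implied by $\pi$ is computed. The state after $\pi_v$ is determined by the last set-changing operation on $v$ in $\pi_v$ (an $\insr(v)$ means $v$ is in the set, a $\del(v)$ means it is not, and if there is no set-changing operation on $v$ then $v$ is not in the set), together with the fact that the ordering of set-changing operations in $\pi_v$ follows the real-time order of their successful $\SC$ operations (in Line~\ref{lin:sc-insert} or Line~\ref{lin:sc-del}). So I need to show: $v$ is present in the table at the end of the execution if and only if the last successful $\SC$ among the set-changing operations on $v$ was an $\insr$-$\SC$ (Line~\ref{lin:sc-insert}). The forward direction: if $v$ is present (physically, hence logically, in the stable table), then an element is only placed into the table by a successful $\SC$ in Line~\ref{lin:sc-insert} (inspection of the code), and since the table is stable and $v \in A|_\mem$, that element has not since been removed; the only removal is by a successful $\SC$ in a $\DSC$ in Line~\ref{lin:dsc-D1} or~\ref{lin:dsc-D2} during the first propagation of some $\del(v)$, and Lemma~\ref{lem:nondangdel} says that once such a deletion is no longer dangling, $v$ is not physically in $A$; so no such $\del(v)$ has completed its initial propagation after the last $\insr(v)$'s $\SC$ — but since the final configuration has no dangling operation, any $\del(v)$ whose $\SC$ in Line~\ref{lin:sc-del} happened after that $\insr(v)$'s $\SC$ would be non-dangling and would have removed $v$, a contradiction. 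Hence the last set-changing $\SC$ on $v$ is the $\insr$ one, and $v \in$ state$(\pi)$. The converse direction is symmetric, using Lemma~\ref{lem:elemnotpresent} (that $v$ is not physically present just before a successful $\insr$-$\SC$) together with Lemma~\ref{lem:nondangdel}: if the last set-changing $\SC$ on $v$ was a $\del$-$\SC$, then since it is non-dangling in the final configuration, $v$ is not physically in $A$ right after the first propagation completes, and $v$ can only re-enter by a later $\insr$-$\SC$, which by assumption does not exist; and if there is no set-changing operation on $v$ at all, then $v$ was never inserted and hence (by the code, since insertion is the only way to add $v$) never physically in $A$.

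The main obstacle I expect is the bookkeeping around dangling versus active operations and making the "last $\SC$ on $v$" argument airtight: one must carefully rule out the scenario where a $\del(v)$ has performed its $\SC$ in Line~\ref{lin:sc-del} but the first propagation has not completed — this is precisely a dangling operation, excluded by Lemma~\ref{lem:dangpending} in the quiescent configuration — and similarly ensure that a $\del(v)$ whose propagation has completed has genuinely removed every physical occurrence of $v$, which is exactly Lemma~\ref{lem:nondangdel}. I would also need to invoke Lemma~\ref{lem:onedang} to know there is at most one relevant dangling $\del(v)$ if I argue along the execution rather than only at the final configuration, though working directly with the final quiescent configuration sidesteps most of this. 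The rest of the proof is a routine chase through which code lines add or remove $v$ and matching them to the construction of $\pi_v$; I would present it as a short case analysis on whether the last set-changing operation on $v$ in $\pi_v$ is an $\insr$, a $\del$, or nonexistent.
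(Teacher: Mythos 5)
Your overall route matches what the paper intends (the paper itself only remarks that the lemma "can be proven using Lemma~\ref{lem:nodangnoambig}"): work in the quiescent final configuration, use Lemma~\ref{lem:dangpending} to exclude dangling operations, and reduce the claim to "$v$ is present in the final table iff the last set-changing $\SC$ on $v$ is the one in Line~\ref{lin:sc-insert}." However, as written your argument only establishes one direction of that equivalence. Your "converse direction" assumes the last set-changing $\SC$ is a $\del$-$\SC$ (or that none exists) and concludes $v$ is not present; that is the contrapositive of your forward direction ("$v$ present $\Rightarrow$ last is an insert"), not its converse. Consequently the half of the lemma stating "if $v$ is \emph{not} present in the table, then the state implied by $\pi$ does not include $v$" — equivalently, "if the last set-changing $\SC$ on $v$ is an $\insr$-$\SC$, then $v$ is present in the final table" — is never argued.

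That missing half is exactly the direction that needs a persistence argument, and it does not follow formally from the lemmas you cite without an extra step: after the last $\insr(v)$'s successful $\SC$, $v$ is physically in $A$; any $\del(v)$'s initial $\SC$ precedes it, and such a $\del(v)$ cannot still be dangling at that moment (a dangling $\del(v)$ keeps $v$ in a \emph{lookahead} slot, contradicting Lemma~\ref{lem:elemnotpresent}), so its removal of $v$ via Lemma~\ref{lem:nondangdel} already happened earlier; and — the point you never invoke — no other step of the algorithm ever removes $v$: propagations of insertions and of deletions of \emph{other} elements only displace or duplicate $v$ (this is the observation made inside the proof of Lemma~\ref{lem:nodangnoambig}). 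With that added, $v$ remains in $A$ through the quiescent end and the equivalence closes. The gap is fixable with the tools you already list, but as stated the proposal proves the same implication twice and leaves the second clause of the lemma unsupported.
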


\begin{lemma}
    The hash table implementation is SQHI.
\end{lemma}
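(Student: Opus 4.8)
The plan is to combine the three things that have already been established: (1) the implementation is linearizable, so every execution has a well-defined linearization $\pi$; (2) by Lemma~\ref{lem:respactiveopind} (via the remark immediately following it), an execution that ends in a configuration with no pending operations has no active $\insr$ or $\del$ operation; and (3) when no cell is unstable, the table $A$ is exactly the canonical representation of its contents. The definition of SQHI requires that at any point in an execution where no state-changing ($\insr$ or $\del$) operation is pending, the memory representation equals the canonical representation of the logical state determined by a linearization of the execution up to that point. So I would fix an arbitrary finite execution $\alpha$ ending in a configuration $C$ with no pending $\insr$ or $\del$ operation, and show that the memory representation of $C$ equals $\can(S)$, where $S$ is the set implied by $\pi$ restricted to $\alpha$.

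First I would argue that in $C$ there are no unstable cells. Since $C$ has no pending $\insr$ or $\del$ operation, Lemma~\ref{lem:dangpending} (a dangling operation is still pending) tells us there is no dangling set-changing operation in $C$; more strongly, by Lemma~\ref{lem:respactiveopind} and the remark following it, no $\insr$ or $\del$ operation is active in $C$, meaning no cell is unstable with any operation. By Lemma~\ref{lem:activeop}, an unstable cell always carries an active operation, so the absence of active operations forces every cell in $C$ to be stable: $A[i].\flag = \Stable$ for all $0 \le i < \ncell$. Then by \invenumref{inv:stable-unstable}{2}, $A[i].\lookahead = A[i+1].\mem$ for every $i$, and by \invenumref{inv:stable-unstable}{1}, no cell has $A[i].\mem = A[i].\lookahead$ unless $A[i].\mem = \bot$; in particular there are no duplicate values in the $\mem$ slots (duplicates are only created transiently by propagating deletions, which require unstable cells). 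Hence $A|_{\mem}$ is a genuine set $S$ of distinct elements, and by Lemma~\ref{lem:validorder} it satisfies Invariant~\ref{inv:ext-order-invar}, whose part~\ref{inv:ext-order-invar1} is exactly the ordering invariant; so by Lemma~\ref{lem:order-invariant}, $A|_{\mem} = \can(S)$. Since every cell is stable and its $\lookahead$ is determined by the next cell's $\mem$ and its $\flag$ is $\Stable$, the \emph{entire} memory representation of $C$ is determined by $A|_{\mem} = \can(S)$, and thus is itself a canonical representation fixed by $S$ alone (independent of how $C$ was reached).

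It remains to identify $S$ with the logical state implied by $\pi$. This is precisely the content of Lemma~\ref{lem:linquiescentstate}: for every $v \in \CalU$, $v$ is present in the table $A$ (in configuration $C$) if and only if $v$ is in the set implied by $\pi$. Since in $C$ the table contains no duplicates and all cells are stable, "$v$ present in the table" coincides with "$v \in S = A|_{\mem}$" (an element stored only in a $\lookahead$ slot would, by stability, also appear in the next cell's $\mem$ slot). Therefore $S$ equals the logical state determined by $\pi$, and the memory representation of $C$ equals $\can(S)$, the canonical representation of that logical state. As $C$ was an arbitrary configuration in an arbitrary execution with no pending state-changing operation, this establishes SQHI.

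The main obstacle is the first step — ruling out unstable cells at a state-quiescent point — since "no pending $\insr$/$\del$" does not immediately give "no active operation": an operation can have returned while a cell it helped advance is still unstable, with a \emph{different} operation responsible for finishing the propagation. This is exactly why Lemma~\ref{lem:respactiveopind} is needed: it guarantees that as long as an operation remains active, there is always some pending responsible $\insr$ or $\del$ operation working on it, so in a configuration with no pending state-changing operations no operation can still be active. I would make sure to cite this chain of responsibility explicitly rather than treating "quiescent implies stable" as obvious. The remaining steps are routine applications of Invariant~\ref{inv:stable-unstable}, Lemma~\ref{lem:order-invariant}, and Lemma~\ref{lem:linquiescentstate}.
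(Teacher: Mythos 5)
Your proposal is correct and follows essentially the same route as the paper's proof: rule out active operations at a state-quiescent configuration via the responsibility chain of Lemma~\ref{lem:respactiveopind}, conclude via Lemma~\ref{lem:activeop} that all cells are stable, deduce consistency of the \emph{lookahead} slots and absence of duplicates from Invariant~\ref{inv:stable-unstable}, and then combine Lemma~\ref{lem:linquiescentstate} with \invenumref{inv:ext-order-invar}{1} and Lemma~\ref{lem:order-invariant} to get $A|_{\mem} = \can(S)$ for the linearized state $S$. The only cosmetic difference is that you cite the remark following Lemma~\ref{lem:respactiveopind} while the paper inlines the contradiction argument with the chain $\resp(0),\ldots,\resp(k)$ and the points $\pt(0),\ldots,\pt(k)$, which is the same reasoning.
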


\begin{proof}
    Consider a configuration with no pending $\insr$ or $\del$ operations.
    Assume there is an active operation $o_a$ in this configuration.    
    By Lemma~\ref{lem:respactiveopind}, there is a sequence of operations, $o_a = \resp(0), \ldots \resp(k)$, $k\geq 0$, such that $\resp(j+1)$ overlaps $\resp(j)$, $0\leq j < k$. In addition, the sequence induce execution points, $\pt(0), \ldots, \pt(k)$, such that $\resp(j)$ is pending at $\pt(j)$, $0\leq j\leq k$ and $\pt(j)$ precedes $\pt(j+1)$, $0\leq j < k$. 
    This implies that from $o_a$ invocation and up until operation $\resp(k)$ returns, there is a pending operation in the execution.
    Since the execution reaches a configuration with no pending insert or delete operations after $o_a$'s invocation, $k$ is finite and $o_a$ must be inactive before $\resp(k)$ returns, in contradiction to the assumption that $o_a$ is active.
    We showed there is no active operation in this configuration, and by Lemma~\ref{lem:activeop}, all cells in $A$ are stable.
    This implies that for all $0\leq i < \ncell$, $A[i].\lookahead = A[i+1].\mem$ and
    all elements appear once in a $\mem$ slot in $A$.
    Let $P$ be the set of all elements in $A$, where every element appears once.
    By Lemma~\ref{lem:linquiescentstate},
    this is exactly the state implied from a linearization of an execution leading to this configuration.
    By \invenumref{inv:ext-order-invar}{1} and Lemma~\ref{lem:order-invariant}, $A|_{\mem} = \can(P)$, and for all $0\leq i < \ncell$, $A[i].\lookahead = \can(P)[i+1]$.
\end{proof}


\subsection{Lock-Freedom}
\label{sec:implproof-lockfree}

A failed $\SC$ operation indicates that there was either a successful initial write or a successful propagation of some operation. 
As we showed an operation propagates a finite number of times before it completes, and there is a finite number of processes, there can only be a finite number of initial writes before an operation finishes propagating.
Finally, we show that eventually, some process must return; otherwise, after some time, there will be no new initial writes, and all operations must complete propagating. In this case, the table becomes constant with only stable cells, and some process must eventually identify this and return.
The only exception is when only $\lookup$ operations take steps. In this case, we show that even if operations cannot finish propagating, since $\lookup$ operations do not puncture runs, some operation can still find the element it is looking for or detect its absence from the table.

The next lemma shows the condition in Line~\ref{lin:stuckinsr} only holds if all cells in the table are non-empty.

\begin{lemma}
\label{lem:consec-cells}
   Consider a multiset $P$ such that $|P|\leq \ncell-1$,  
   if $\canmult(P)[i] \neq \bot$ and $\canmult(P)[i+1] \neq \bot$,
   $0\leq i < \ncell$, then $\canmult(P)[i] \prio{\geq}{i+1} \can(P)[i+1]$.
\end{lemma}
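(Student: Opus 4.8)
The plan is to write $a=\canmult(P)[i]$ and $b=\canmult(P)[i+1]$ for the values stored in cells $i$ and $i+1$ (both $\neq\bot$ by hypothesis), and to derive $a\prio{\geq}{i+1}b$ from the ordering invariant that $\canmult(P)$ satisfies by construction, together with Lemma~\ref{lem:consistentprio}. If $a=b$ there is nothing to prove, since priority comparison is reflexive, so the interesting case is $a\neq b$.

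First I would isolate the single consequence of the hypothesis $|P|\le\ncell-1$ that the argument needs: that $h(a)\neq i+1$. Indeed, if $h(a)=i+1$ then cell $i$ is the last cell of $a$'s probe sequence, i.e., $\rank(a,i)=\ncell-1$, so the cells ``between $h(a)$ and $i$'' referred to in Invariant~\ref{inv:order-invar} are \emph{all} $\ncell$ cells of the table. Each of these stores an element whose priority at that cell is at least that of $a$, hence is not $\bot$ (which has the lowest priority in every cell), so all $\ncell$ cells are occupied and $|P|\ge\ncell$---a contradiction.

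Next I would split on whether $h(b)=i+1$. If $h(b)=i+1$, then $\rank(b,i+1)=0$, whereas $\rank(a,i+1)\neq 0$ precisely because $h(a)\neq i+1$; hence $\rank(a,i+1)\ge 1>\rank(b,i+1)$ and $a\prio{>}{i+1}b$. If instead $h(b)\neq i+1$, then cell $i$ lies on $b$'s probe sequence between $h(b)$ and the cell $i+1$ where $b$ is stored, so Invariant~\ref{inv:order-invar} applied to $b$ gives $a=\canmult(P)[i]\prio{\geq}{i}b$, which strengthens to $a\prio{>}{i}b$ since $a\neq b$; combined with $h(a)\neq i+1$, Lemma~\ref{lem:consistentprio} then yields $a\prio{>}{i+1}b$. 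In every case $a\prio{\geq}{i+1}b$, as required.

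The one point that needs care---and essentially the only place the argument could go wrong---is the cyclic reading of ``between $h(v)$ and $i$'' in the ordering invariant: it must be understood as wrapping around cell $\ncell-1$, which is exactly why $h(a)=i+1$ forces full occupancy. An alternative would be to first unroll the table starting from an empty cell (one exists because $|P|\le\ncell-1$), so that every probe sequence becomes a genuine interval and all ``before/after'' comparisons are literal; but invoking the cyclic invariant directly is shorter, and the remaining rank arithmetic is routine.
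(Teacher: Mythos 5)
Your proof is correct and follows essentially the same route as the paper's: both split on whether $h(b)=i+1$, both use the existence of an empty cell (from $|P|\le \ncell-1$) to rule out $h(a)=i+1$, and both then shift the rank comparison from cell $i$ to cell $i+1$. You merely make explicit two points the paper leaves terse---the wrap-around argument showing why $h(a)=i+1$ would force all $\ncell$ cells to be occupied, and the tie-breaking case handled via Lemma~\ref{lem:consistentprio} rather than inline rank arithmetic---which is a presentational refinement, not a different argument.
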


\begin{proof}
    Denote $\canmult(P)[i] = a$ and $\canmult(P)[i+1] = b$.
    If $\canmult(P)[i] = \canmult(P)[i+1]$ we are done, so assume that $\canmult(P)[i] \neq \canmult(P)[i+1]$.
    If $\rank(b,i) < \rank(b,i+1)$, then it must also hold that $\rank(a,i) > \rank(b,i)$. Since there must be an empty cell in $\canmult(P)$, $h(a)\neq i +1$, and it also holds that $\rank(a,i+1) > \rank(b,i+1)$. 
    Otherwise, $h(b) = i+1$ and $\rank(b,i+1) = 0 \leq \rank(a,i) < \rank(a,i+1)$.
\end{proof}


\begin{lemma}
    The hash table implementation is lock-free.
\end{lemma}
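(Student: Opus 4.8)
The goal is to show that whenever some operation is pending, some operation completes in finitely many steps of all processes. I would argue by contradiction: suppose there is an execution with a pending operation but no operation ever returns. First I would observe that every operation that takes infinitely many steps without returning is stuck in one of its loops (the main scan loop of \insr/\del/\lookup, or inside \prop, or inside the while loop of \helpop). I would rule these out one at a time, using the structural results already proved (Lemma~\ref{lem:propstructure}, Lemma~\ref{lem:proponce}, Lemma~\ref{lem:finishprop}, and the invariants).

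\textbf{Key steps.} (1) \emph{Finitely many initial writes.} A failed \SC\ in Line~\ref{lin:sc-insert}, Line~\ref{lin:sc-del}, or inside a \DSC\ witnesses a successful initial write or a successful propagation step by some operation. By Lemma~\ref{lem:proponce} each operation propagates at most once per cell, and by Lemma~\ref{lem:finishprop} an \insr(v)/\del(v) operation is propagated only finitely far (it cannot pass $h(v)-1$ once that cell is unstable with it), so each set-changing operation contributes only finitely many successful propagation \SC s; with finitely many processes and the no-abort guarantee (Assumption~\ref{asm:neverfull}), after some time $T$ there are no more successful initial writes. (2) \emph{Propagation terminates.} After $T$, no cell becomes unstable with a \emph{new} operation, so the set of active operations only shrinks; using Lemma~\ref{lem:propstructure} and Lemma~\ref{lem:respactiveopind}, I would show that as long as there is an active operation some process executing \helpop/\prop\ makes a successful propagation \SC, strictly decreasing a potential (e.g., sum over active operations of the distance still to be propagated before hitting an empty cell, which is bounded by Lemma~\ref{lem:finishprop} and Lemma~\ref{lem:consec-cells}); hence after some time $T' \ge T$ the table is frozen with all cells stable (Lemma~\ref{lem:activeop}). (3) \emph{A frozen stable table lets some operation finish.} Once the table is constant and fully stable, it equals $\can(P)$ for the stored set $P$; any \insr/\del/\lookup$(v)$ scanning it will, within $m$ steps, either find $v$ or reach a cell where Lemma~\ref{lem:elemnotpresentcond} (conditions 1 or 2) applies and return, or — for \insr/\del — reach its target cell and perform a successful \SC\ (contradicting "no successful initial write after $T$", so in fact it returns) — contradiction with "no operation returns". (4) \emph{The lookup-only corner case.} If from some point on only \lookup\ operations take steps, propagation may genuinely stall (lookups don't puncture runs), so step (2) fails; here I would instead argue directly that a \lookup$(v)$ cannot loop forever: it restarts only on the "backwards-move" condition, which is blamed on a concurrent deletion — but no \del\ is taking steps, so after one full traversal it either finds $v$, or hits an unstable cell it helps (but helping only \insr\ cells, which also can't be refreshed without an initial write), or detects absence via Line~\ref{lin:lookup-elementnotpresent1}/\ref{lin:lookup-elementnotpresent3}/\ref{lin:lookup-elementnotpresent2}.

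\textbf{Main obstacle.} The delicate part is step (4) combined with making step (2)'s potential argument airtight: I must show that helping never cycles — that a process running \helpop\ from cell $i$ cannot be forced to chase unstable cells $i+1, i+2, \dots$ forever around the table. This needs Lemma~\ref{lem:finishprop} (an operation on $v$ is confined to an arc ending before $h(v)$) together with the observation that a chain of cells each blocking the next corresponds to a chain of distinct active operations, of which there are at most one per process; combined with "no new initial writes after $T$", the helping recursion has bounded depth. The \lookup-only case requires separately checking that the restart condition in Line~\ref{lin:until-lookup} and the split-indication check in Lines~\ref{lin:lookup-split}--\ref{lin:lookup-elementnotpresent3} do fire when no deletions run, which is where Lemma~\ref{lem:monotone-priority} and Invariant~\ref{inv:ext-order-invar} are used.
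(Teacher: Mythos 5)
Your overall skeleton (argue by contradiction, observe that there are only finitely many successful initial writes, then show the table stops changing and that a process taking infinitely many steps must either return or perform a successful $\SC$) matches the paper's, but step (2) contains a genuine gap. It is not true that ``as long as there is an active operation some process executing $\helpop$/$\prop$ makes a successful propagation $\SC$'': lock-freedom quantifies over the steps of \emph{all} processes, so the adversary may schedule only processes that are unwilling or unable to advance the stalled operation. Concretely, a $\del$ operation whose next propagation step must puncture a run is blocked at the branch guarded by Line~\ref{lin:no-lookup}, which helping $\lookup$ operations refuse to execute, and the processes that would execute it (the operation's owner, or other $\insr$/$\del$ operations whose paths reach that cell) need never be scheduled there. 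Hence an active operation can remain active forever, the table need never become ``frozen with all cells stable,'' and your potential need not decrease. This also breaks the case split: step (3) analyzes scans over an all-stable canonical table, which is exactly what is not guaranteed, and the carve-out in step (4) (``from some point on only $\lookup$ operations take steps'') is not the complement of step (2) --- the stall can occur while $\insr$/$\del$ processes also take infinitely many steps elsewhere in the table. (Your appeal to Lemma~\ref{lem:respactiveopind} does not help here; in the paper that lemma serves the SQHI argument, not lock-freedom.)

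The paper's proof never claims the table becomes all-stable. It shows the table eventually becomes \emph{constant}, possibly with unstable cells: after the finitely many successful initial writes, every successful $\SC$ either stabilizes a cell or propagates an active operation, and Lemmas~\ref{lem:propstructure}, \ref{lem:proponce} and~\ref{lem:finishprop} bound the number of such events. The heart of the proof is then Claim~\ref{clm:returnorsc}: on an unchanging table, every process taking infinitely many steps either returns or performs an $\SC$ --- and establishing this for $\lookup$ requires precisely the analysis of cells still marked $\D$ or $\I$ that your steps (3)--(4) skip: such a lookup either finds its element in the next cell, returns via Line~\ref{lin:lookup-elementnotpresent1} or Line~\ref{lin:lookup-elementnotpresent3}, or performs the $\SC$ in Line~\ref{lin:sc-D2}, which is not guarded by Line~\ref{lin:no-lookup}. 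Since on a constant table an attempted $\SC$ must succeed and change the table, and returning contradicts the assumption, this closes the argument uniformly for all schedules, with no separate ``lookup-only'' case. To repair your proposal you would need to replace step (2)'s propagation-progress claim by the bounded-number-of-changes argument, and prove the ``return or $\SC$'' claim in the presence of unstable cells rather than only for the all-stable table of step (3).
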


\begin{proof}
    Consider execution prefix $\alpha$ with a pending operation. Assume there is an infinite extension of $\alpha$ where no operation returns.
    Since there is a finite number of processes, eventually, there are no new active operations, that is, no more successful $\SC$s in Line~\ref{lin:sc-insert} or \ref{lin:sc-del}.
    Consider this infinite suffix of $\alpha$'s extension.

    \begin{claim}
    \label{clm:returnorsc}
        Assume the table $A$ does not change in the suffix, if a process takes infinitely many steps, it must either perform an $\SC$ or return.
    \end{claim}

    \begin{proof}
        First, we show that if a process in an $\insr$ or $\del$ operation that takes infinitely many steps reaches $\helpop$, it performs an $\SC$.
        By Assumption~\ref{asm:neverfull} there is a stable cell in the cell, and therefore, the process eventually exists the while loop in Line~\ref{lin:helpop-while}.
        If the condition in Line~\ref{lin:stuckinsr} does not hold, the process performs an $\SC$. 
        By Lemma~\ref{lem:consec-cells} and \invenumref{inv:ext-order-invar}{1}, the condition in Line~\ref{lin:stuckinsr} can only hold when there is no empty cell in the table, and Assumption~\ref{asm:neverfull} ensures this does not happen.
        A $\lookup$ operation may not perform an $\SC$ inside $\helpop$ because of the condition in Line~\ref{lin:no-lookup}. However, Similar to $\insr$ and $\del$ operations, a $\lookup$ operation that takes steps inside $\helpop$ eventually exits it.
        
        Next, we show that a process taking infinitely many steps must either return, or, if it is an $\insr$ or $\del$ operation, call $\helpop$, and if it is a $\lookup$ operation, execute an $\SC$ inside $\helpop$.
        Consider a process that takes infinitely many steps inside an $\insr(v)$, $\del(v)$, or $\lookup(v)$ operation in the loop between Lines~\ref{lin:insert-startloop}--\ref{lin:insert-until}, Lines~\ref{lin:del-startloop}--\ref{lin:until-del} or Lines~\ref{lin:lookup-startloop}--\ref{lin:until-lookup} respectively.
        By the conditions in Lines~\ref{lin:insr-helpop} and \ref{lin:del-helpop}, a process in an $\insr$ or $\del$ operation calls
        $\helpop$, or only reads stable cells.
        
        First, assume $v$ is physically in the table. 
        If $v$ appears in a $\mem$ slot, by \invenumref{inv:ext-order-invar}{1}, for all $j$ between $h(v)$ and the first appearance of $v$, $A[j].\mem \prio{>}{j} v$. Otherwise, if $v$ appears only in a $\lookahead$ slot
        of cell $i$,  by \invenumref{inv:ext-order-invar}{3}, either $A[i].\mem \prio{>}{i} v$ or $i = h(v) - 1$.
        If $A[i].\mem \prio{>}{i} v$,
        by Lemma~\ref{lem:monotone-priority}, for all $j$ between $h(v)$ and $i$, $A[j].\mem \prio{>}{j} v$.
        Thus, since the loop starts at $A[h(v)-1]$ and the condition in the \textbf{until} statement does not hold before reaching this cell (Line~\ref{lin:insert-until}, Line~\ref{lin:until-del} or Line~\ref{lin:until-lookup}), eventually the loop reaches a cell where $v$ is present in its $\lookahead$ or $\lookahead$ slot.
        Since we assume all cells reached in the loop are stable, if an $\insr$ operation reaches a cell with $v$ in its $\lookahead$ slot, it is stable. Similarly, for a $\del$ operation, the loop eventually reaches a cell with $v$ in its $\lookahead$ slot.
        Hence, by the condition in Line~\ref{lin:insr-element-present}, an $\insr$ operation returns, and by the condition in Line~\ref{lin:del-loc}, a $\del$ operation executes an $\SC$.
        For a $\lookup$ operation, the only scenario the condition in Line~\ref{lin:lookup-elementfound} does not hold when the loop reaches cell $i$ that contains $v$ is when $A[i] = \tup{\ast,v,\D}$ and $h(v) = i + 1$.
        By \invenumref{inv:stable-unstable}{3}, either $A[i+1] = \tup{v,\ast,\ast}$, $A[i+1] = \tup{c,\ast,\D}$ where $c\neq v$, or $\tup{\bot,\ast,\ast}$. If $A[i+1] = \tup{v,\ast,\ast}$, then the condition in Line~\ref{lin:lookup-elementfound} holds for the next cell.
        Otherwise, $\helpop(i)$ is called and the condition in Line~\ref{lin:helpop-while} does not hold upon entering the while loop. Then, the condition in Line~\ref{lin:sc-D2-cond} holds and the operation executes an $\SC$.
        This shows that for a $\lookup$ operation, either the condition in Line~\ref{lin:insr-element-present} holds and the operation returns, or it executes an $\SC$.
        
        Assume $v$ is not physically in the table.
        By Lemma~\ref{lem:monotone-priority}, there are three options:
        \begin{enumerate*}
            \item for all cells $i$, $v \prio{>}{i} A[i].\mem$, 
            \item for all cells $i$, $A[i].\mem \prio{>}{i} v$, or
            \item there is an index $i$ such that for all cells $h(v) \leq j \leq i$, $A[j].\mem \prio{>}{j} v$, and for all cells $i < j < h(v)$,  $v \prio{>}{j} A[j].\mem$.
        \end{enumerate*}
        In the first case, 
        the condition in Line~\ref{lin:insert-loc} in an $\insr$ operation holds after reading the first cell in the loop $h(v)-1$, and
        the condition in Line~\ref{lin:del-elementnotpresent1} in a $\del$ operation or the condition on Line~\ref{lin:lookup-elementnotpresent1} in a $\lookup$ operation holds after reading the second cell in the loop $h(v)$.
        In the second case, the condition in the \textbf{until} statement never holds (Lines~\ref{lin:insert-until}, \ref{lin:until-del} or \ref{lin:until-lookup}), and eventually, the condition in Line~\ref{lin:del-elementnotpresent2} in a $\del$ operation or the condition in Line~\ref{lin:lookup-elementnotpresent2} in a $\lookup$ operation holds.
        By Assumption~\ref{asm:neverfull}, there is an empty cell, and a process in an insert operation eventually reaches it, and the condition in Line~\ref{lin:insert-loc} holds.
        In the third case, the loop reaches cell $i$ as the condition in the \textbf{until} statement does not hold up until this cell. 
        Since every cell reached in the loop is assumed to be stable for an $\insr$ or $\del$ operation, we have $A[i].\lookahead = A[i+1].\mem \prio{<}{i+1} v$ and $A[i].\mem \prio{>}{i} v$, and the condition in Line~\ref{lin:insert-loc} in an $\insr$ operation and the condition in Line~\ref{lin:del-elementnotpresent1} in a $\del$ operation holds.
        Consider a $\lookup$ operation. If $A[i]$ is stable, then similar to the previous case, the condition in Line~\ref{lin:lookup-elementnotpresent1} holds.
        If $A[i].\flag = \D$ and $A[i+1].\mem \neq A[i].\lookahead$, then, following the code, when the $\lookup$ operation reaches this cell it calls $\helpop$, the while loop stops at cell $A[i+1]$, and 
        the condition in Line~\ref{lin:sc-D2-cond} holds and the operation executes an $\SC$.
        If $A[j].\flag = \I$, either $A[j].\lookahead \prio{<}{j+1} v$ and the condition in Line~\ref{lin:lookup-elementnotpresent1} holds, or $A[j].\lookahead \prio{>}{j+1} v$, and the condition in Line~\ref{lin:lookup-elementnotpresent3} holds.
        In all cases, we showed the process returns, performs an $\SC$, or is an $\insr$ or $\del$ operation and it calls $\helpop$, which also implies the process performs an $\SC$.

        Consider a process that takes infinitely many steps after calling $\prop$,
        then the process takes infinitely many steps in a specific call of $\prop$.
        Assume the process never calls $\helpop$, then the condition in Line~\ref{lin:prop-helpcond} never holds. However, the condition in Line~\ref{lin:prop-stopcond} does not hold after a finite number of iterations, and the process exists the $\prop$ call.
    \end{proof}
    
    The next claim shows that if there is an active operation, after a finite number of steps, a propagation of an active operation down the table must occur.
    
    \begin{claim}
    \label{clm:prop}
        If there is an active operation at the ends of $\alpha$, after a finite number of steps, an unstable cell becomes stable in the infinite extension of $\alpha$.
    \end{claim}

    \begin{proof}
        Assume no active operation becomes inactive or propagated down the table in the infinite extension of $\alpha$.
        By the assumption, eventually, any $\SC$ in the execution happens in $\helpop$, and there are no new active operations.
        As there are a finite number of active operations, by Lemma~\ref{lem:propstructure}, there are a finite number of propagations of active operations before a cell stabilizes.
        Since an $\SC$ in the extension only stabilizes cells or propagates active operations, eventually, the values in the table do not change. Before any $\SC$, the executing process reads the updated values of the table, hence, eventually, there are no attempts to perform an $\SC$, otherwise, since the table is constant, it must be successful and propagate or deactivate an operation.
        However, by Claim~\ref{clm:returnorsc}, a process either returns or performs an $\SC$, 
        contradicting the initial assumption of the claim.
    \end{proof}

    By Lemma~\ref{lem:finishprop}, an active operation can be propagated a finite number of times. Thus, by repeatedly applying Claim~\ref{clm:prop} and since the number of active operations in the extension is bounded, eventually, all operations become inactive and the table becomes constant.
    However, this contradicts Claim~\ref{clm:returnorsc}, as no process can perform an $\SC$ or return.
\end{proof}

\subsection{Amortized Step Complexity}
\label{sec:amortized_sketch}

This section proves that the amortized step complexity of our hash table is $O(c)$ per operation, where $c$ is the number of potentially-concurrent operations working on the same element (whether they be lookups, inserts or deletes).
Consider a batch of $n$ operations, each invoked by a different process.%
\footnote{This assumption is made to simplify the modeling and analysis, but it is not essential.}
Assume that the ratio between the number of insertions in the batch and the size of the hash table is bounded by some constant $\alpha \in (0,1)$.%
\footnote{Here, too, a more fine-grained analysis is possible, taking into consideration only elements that are in the hash table at the same time,
but for simplicity we take the total number of insertions as an upper bound.}
We prove that in expectation over the hash function,
even under a \emph{worst-case scheduler} that knows the hash function,
the total number of steps required to complete all operations is $O(n \cdot c)$.
First we sketch the proof and then provide the full details.

\subsubsection{Overview of the Analysis}

Fix a process $p$ that invokes an operation $o$ on element $x$ (i.e., $o \in \set{\insr(x), \del(x), \lookup(x)}$).
Let $N$ be a random variable
representing the length of the run that contains $h(x)$,
if we schedule all insertions from the batch prior to scheduling operation $o$.
This is an upper bound on the distance from $h(x)$ to the ``correct location'' for element $x$ (the cell where $x$ is stored, or if $x$ is not in the set,
the cell where $x$ \emph{would} be stored if it were in the set).
Also, let $P \leq N \cdot c$ be the number of processes working on elements in the same run as $x$.
We design a \emph{charging scheme} that charges each step taken by process $p$ to some operation --- either its own operation $o$, or an operation working in the same run as element $x$ --- in such a way that the total charge for any operation is at most $O(N^3 \cdot c)$.
We then prove that if the load on the hash 
table is bounded away from 1,
then $\E[N^3] = O(1)$,
where the expectation is over the choice of the hash function.
By linearity of expectation, the expected number of steps required for all operations to complete is $O(n \cdot c)$.

The charging scheme assigns to each operation all \emph{successful} steps performed ``on its behalf'', regardless of which process performed them. Propagating an operation one position forward requires only a constant number of successful steps, so the total charge here is $O(N)$. In contrast, \emph{failed} steps are charged to the operation whose successful $\SC$ caused the failure. For example, if a process $p$ attempts to propagate an operation $o$ from position $i$ to position $i+1$, but some other process $q$ succeeds in doing so faster than $p$, then the $O(1)$ steps that process $p$ ``wasted''  are charged to operation $o$;
but if $q$'s failure is instead caused by some operation $o' \neq o$ making its initial write into position $i+1$,
then the wasted steps are charged to operation $o'$.
The largest charge is for causing a process to restart its operation prior to its initial write into the table; the process may have wasted up to $N$ steps,
and all are charged to the operation whose successful $\SC$ caused the restart.
Each operation is propagated at most once across each position in the run (at most $N$ positions), and each propagation step involves a constant number of successful $\SC$s, which each fail up to $P$ processes, possibly causing them to restart and costing $N$ wasted steps. Thus,
the total charge to an operation is bounded by $O(N \cdot N \cdot P) = O(N^3 \cdot c)$.

\subsubsection{Amortized Runtime Analysis}
\label{sec:amortized}

We begin by proving that a process that restarts its lookup can ``blame'' some other process for this:
\begin{lemma}
    Let $p$ be a process that restarts its operation
    $o$ on element $x \in \calU$
    during its lookup stage,
    after reading cell $i$.
    Then between the time when 
    $p$ last read cell $i - 1$
    and the time it read cell $i$ and restarted,
    some insert or delete operation was propagated across cell $i-1$.
     \label{lemma:restart_blame}
\end{lemma}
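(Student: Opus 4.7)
The plan is to argue by contradiction: assume no insert or delete operation is propagated across $A[i-1]$ in the interval $[t_1,t_2]$, and show this forces a contradiction with the restart condition. The first observation is that every modification of $A[i-1]$ in the algorithm arises from an $\SC$ performed as part of a propagation step across $A[i-1]$ (either an initial write into $A[i-1]$, the inner $\SC$ of a $\DSC$ propagating an operation from $A[i-2]$ to $A[i-1]$, or the outer $\SC$ of a $\DSC$ stabilizing $A[i-1]$ after propagation from $A[i-1]$ to $A[i]$, and analogously for the stabilization lines \ref{lin:sc-I1}--\ref{lin:sc-D2}). Hence the assumption implies $A[i-1]$ holds the same triple $(a',b',M')$ throughout $[t_1,t_2]$. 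Let $(a,b,M)$ denote what $p$ reads at $t_2$; the Until condition gives $i\neq h(v)$ and $v\prio{>}{i}a$.

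Next I would extract what $p$'s behavior at $t_1$ tells us. Because $p$ did not restart at the previous iteration's Until, either $i-1=h(v)$ or $a'\prio{\geq}{i-1}v$; because return-true did not fire, $a'\neq v$; so in both cases $a'\prio{>}{i-1}v$. Then I split on $M'$.

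For $M'=\Stable$, \invenumref{inv:stable-unstable}{2} gives $b'=A[i].\mem$ at both $t_1$ and $t_2$, so $a=b'$. For the main return-false at cell $i-1$ not to fire, given $a'\prio{>}{i-1}v$ and $M'=\Stable$, we must have $b'\prio{\geq}{i}v$. But then $a=b'\prio{\geq}{i}v$, contradicting $v\prio{>}{i}a$. For $M'=\I$ (and symmetrically $\D$), the body at $t_1$ calls $\helpop(i-1)$. I will argue that in $\helpop$, the only way for $p$ not to stabilize $A[i-1]$ itself (via the $\SC$ at Line~\ref{lin:sc-I2} or the outer $\SC$ of a $\DSC$) is for the relevant $\SC$ on $A[i-1]$ to fail, which by definition means $A[i-1]$ was modified by another process — a propagation across $i-1$, contradicting the assumption. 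Otherwise $p$'s helpop proceeds to attempt the propagation from $A[i-1]$ to $A[i]$; success modifies $A[i-1]$ directly (contradiction), while failure of its inner $\SC$ on $A[i]$ together with the $\DSC$ re-check at Line~\ref{lin:dsc-checksucc} is analyzed using \invenumref{inv:stable-unstable}{4} and the fact that the only way $A[i].\mem$ can change is via a propagation from $A[i-1]$ to $A[i]$ (which modifies $A[i-1]$). In the generic subcase $h(b')\neq i$, a complementary argument shows that the lookup-split at Line~\ref{lin:lookup-elementnotpresent3} would have fired at $t_1$: the invariants give $b'\prio{>}{i-1}v$, and if $A[i].\mem$ at the split-check time satisfied $v\prio{>}{i}A[i].\mem$ (as it must at $t_2$, since $A[i].\mem$ is unchanged when $A[i-1]$ is unchanged), all split conditions hold and $\VL(A[i-1])$ passes, so $p$ would have returned false rather than reached $t_2$ — contradiction.

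The hard part will be the subcase $M'\in\{\I,\D\}$ with $h(b')=i$ (or the analogous $h(b')=i$ condition for deletes), because then the lookup-split clause $h(b)\neq i+1$ blocks the split return-false, and the cell $A[i-1]$ looks ``committed but not yet propagated'' to its natural target $A[i]$. Here I would close the argument by chasing the $\DSC$ inside $p$'s own $\helpop(i-1)$: the inner $\SC$ on $A[i]$ can only fail if $A[i]$ was modified, and the structural fact that $A[i].\mem$ moves only via propagation from $A[i-1]$ (Lemma~\ref{lem:propstructure} together with the case analysis in the proof of Lemma~\ref{lem:validorder}) forces that modification to also touch $A[i-1]$, contradicting our standing assumption. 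Combining the three cases yields the lemma.
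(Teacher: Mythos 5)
There is a genuine gap at the foundation of your argument. You negate the conclusion (``no operation is propagated across $A[i-1]$ during $[t_1,t_2]$'') and immediately conclude that $A[i-1]$ holds the same triple throughout the window, on the grounds that every modification of $A[i-1]$ is ``part of a propagation step across $A[i-1]$.'' That conflates three events that are \emph{not} propagations across cell $i-1$ in the sense of the lemma (and of the charging argument in Lemma~\ref{lemma:runtime}, which relies on each operation propagating across each position at most once, per Lemma~\ref{lem:proponce}): an \emph{initial write} into $A[i-1]$ (Lines~\ref{lin:sc-insert}, \ref{lin:sc-del}), a propagation \emph{into} $A[i-1]$ from $A[i-2]$, and a \emph{stabilization} of $A[i-1]$ whose associated propagation into $A[i]$ may have happened before $t_1$ (by Lemma~\ref{lem:propstructure} the propagation to $A[i]$ precedes the stabilization of $A[i-1]$, so only the stabilization need fall inside the window). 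Under the negated hypothesis all three can still occur during $[t_1,t_2]$, so the case analysis you build on ``$A[i-1]$ is frozen'' does not get off the ground; the same conflation reappears when you assert that a failed $\SC$ on $A[i-1]$ inside $\helpop$ ``by definition means \dots a propagation across $i-1$.''

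The repair --- and the paper's actual argument --- is to track $A[i].\mem$ rather than the immutability of $A[i-1]$. The fact you invoke only in passing is the whole proof: the only successful $\SC$s that change a \emph{value} slot are the inner $\SC$s of the $\DSC$s at Lines~\ref{lin:dsc-I1}, \ref{lin:dsc-I2}, \ref{lin:dsc-D1} and~\ref{lin:dsc-D2}, i.e., exactly the propagations of an operation from a cell into its successor; initial writes and stabilizations touch only the \emph{lookahead} and \emph{mark} slots. One then shows by case analysis on the mark of $A[i-1]$ at $t_1$ (using \invenumref{inv:stable-unstable}{2}--\invenumref{inv:stable-unstable}{4}, and for unstable marks the additional read of $A[i]$ at Line~\ref{lin:lookupllAi+1}) that at some point in the window $A[i].\mem = c'$ with $c' \prio{\geq}{i} x$, whereas the restart condition at $t_2$ gives $A[i].\mem \prio{<}{i} x$; hence the value slot of $A[i]$ changed, which forces a propagation from $A[i-1]$ into $A[i]$ within the window. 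Your $M'=\Stable$ case is essentially this argument already; the unstable cases should be reorganized around it instead of around the failure modes of $p$'s own $\helpop$ and $\DSC$ calls.
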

\begin{proof}
    Let $t_{i-1}$ be the time when cell $i-1$ was last read prior to restarting,
    and let $t_{i}$ be the time cell $i$ was read,
    causing the operation to restart.
    At time $t_{i-1}$
    we had $A[i-1] = \tup{a, b, M}$
    for some values $a,b \in \calU$ and $M\in \set{\I,\D,\Stable}$.
    If $a \neq x$ and $b \neq x$, 
    since the process then continued to read cell $i$, we know that $b \prio{>}{i} x$ (otherwise, the operation would either restart at cell $i - 1$, return, or make its initial write into cell $i - 1$).
    First, assume that $M = \Stable$ and the cell is stable. This must hold for an insert or delete operation, otherwise $p$ would have helped
    the operation working on cell $i-1$
    and then re-read the cell.
    If $a = x$ or $b = x$, the operation would return if it is $\lookup(x)$ or $\insr(x)$,
    and a $\del(x)$ would either make its initial write into cell $i-1$ if $b = x$,
    or it would step back to cell $i - 2$ if $a = x$
    and then either make its initial write there,
    continue moving backwards, resume moving forwards, or restart at cell $i - 2$.
    In all of these cases, time $t_{i-1}$ is not the last time that cell $i-1$ is read prior to restarting at cell $i$.
    Thus, we may assume that $a \neq x$ and $b \neq x$.
    All together, since cell $i - 1$ is stable at time $t_{i-1}$
    and its \emph{lookahead} slot is $b$,
    the \emph{value} slot of cell $i$ is also $b$.
    If $M \in \set{\I,\D}$, then this must be a lookup operation, as explained above.
    If $M = \I$, then $a\neq v$ and $b\neq v$, as otherwise, the operation returns.
    Thus, $b \prio{>}{i} v$. 
    Cell $i$ is read by $p$ between $t_{i-1}$ and $t_i$, and at that time, either the value of cell $i$ changes, which implies there was a propagation across cell $i - 1$, or
    $A[i] = \tup{c, \ast, \ast}$ such that $c \prio{\geq}{i} v$. 
    Since after the operation propagates to cell $i$ the \emph{value} slot of cell $i$ is equal to the \emph{lookahead} slot of cell $i-1$, at time $t_{i-1}$, either $A[i] = \tup{c, \ast, \ast}$ or $A[i] = \tup{b, \ast, \ast}$.
    If $M = \D$, then at time $t_{i-1}$, either $A[i] = \tup{b, \ast, \ast}$ or $A[i] = \tup{c, \ast, \ast}$ such that $b \prio{>}{i} c$. In the latter case, as $p$ helps the operation working on cell $i-1$ after it propagated to cell $i$, it must complete its propagation to cell $i$ by stabilizing cell $i$. Otherwise, either $a\neq v$ and $b\neq v$, which implies that $b \prio{>}{i} v$, or $b = v$. Either way, $b \prio{\geq}{i} v$.
    We showed that in all cases, during time $t_i$, $A[i] = \tup{c',\ast,\ast}$, where $c' \prio{\geq}{i} v$.
    
    Since we know that the process restarted after reading $A[i]$ at time $t_i$,
    at that time we had $A[i] = \tup{c, \ast, \ast}$ 
and $c \prio{<}{i} x$.
It follows that some operation modified the \emph{value} slot of cell $i$ from its previous value, $c' \prio{\geq}{i} x$,
to its current value, $c \prio{<}{i} x$.
This can only happen if some operation propagated across cell $i - 1$ and into cell $i$.
\end{proof}

For our analysis we also need to bound the third moment of the length of the run to which an arbitrary element belongs.
We do this by adapting the analysis from~\cite{Thorup}, which bounds the expected length of the run (i.e., the first moment).
\begin{lemma}
Let $\Op$ be a batch of $n$ sequential operations.
For any constants $k \in \mathbb{N}^+$ and $\alpha \in (0,1)$,
if the load on the table is at most $\alpha$,
then for any element $x \in \calU$,
if $N$ denotes the length of the run containing position $h(x)$,
we have $\E\left[ N^k \right] = O_k(1)$.
    \label{lemma:run_length}
\end{lemma}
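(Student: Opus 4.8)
The statement to prove is Lemma~\ref{lemma:run_length}: under load $\alpha < 1$, the $k$-th moment of the run length containing a fixed position is $O_k(1)$. The natural approach is to reduce the moment bound to a tail bound on run length, and then obtain the tail bound via a standard balls-into-bins / linear-probing argument. Specifically, I would first observe that $\E[N^k] = \sum_{t \ge 1} \bigl((t)^k - (t-1)^k\bigr)\Pr[N \ge t] \le k\sum_{t\ge 1} t^{k-1}\Pr[N \ge t]$, so it suffices to show $\Pr[N \ge t] \le c_1 e^{-c_2 t}$ for constants $c_1, c_2$ depending only on $\alpha$ (and then the sum converges, giving an $O_k(1)$ bound). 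So the real content is an exponential tail bound on the length of the run containing $h(x)$.

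For the tail bound I would follow the classical argument (this is the ``adapt the analysis of \cite{Thorup}'' step the lemma alludes to). The run containing position $h(x)$ has length $\ge t$ only if there is some interval $I$ of $t$ consecutive cells containing $h(x)$ such that all $t$ cells of $I$ are occupied and moreover position $\min(I)-1$ is empty (the start of the run); a slightly cleaner sufficient condition to union-bound over is: there exists a window $I$ of exactly $t$ consecutive positions, one of whose positions is $h(x)$, such that at least $t$ of the $n$ inserted elements hash into $I$. (In linear probing, a run of length $\ge t$ through position $h(x)$ forces such a window to receive at least $t$ hash values, since all elements in the run hashed into the maximal block of occupied cells containing it.) There are exactly $t$ such windows $I$ (those containing $h(x)$). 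For a fixed window $I$ of size $t$, the number of elements hashing into $I$ is $\mathrm{Bin}(n, t/m)$ with mean $\mu = nt/m \le \alpha t$; since $\alpha < 1$, a Chernoff bound gives $\Pr[\mathrm{Bin}(n,t/m) \ge t] \le e^{-c_\alpha t}$ for a constant $c_\alpha = c_\alpha(\alpha) > 0$ (e.g.\ $c_\alpha = (1-\alpha) - \alpha\ln(1/\alpha)$ via the standard bound, valid when $\alpha$ is bounded away from $1$ — if $\alpha$ is not small we can pick a slightly larger window to force the relative deviation to be a fixed constant, a minor bookkeeping adjustment). A union bound over the $t$ windows yields $\Pr[N \ge t] \le t\, e^{-c_\alpha t}$, which is summable against $t^{k-1}$.

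Putting it together: $\E[N^k] \le k\sum_{t \ge 1} t^{k-1} \cdot t\, e^{-c_\alpha t} = k \sum_{t\ge 1} t^{k} e^{-c_\alpha t}$, a convergent series whose sum depends only on $k$ and $\alpha$, hence $= O_k(1)$ as claimed. One technical caveat to handle: since the batch is applied to an initially empty table and consists of $n$ operations of which only insertions add load, I would first note that the set of elements present at the relevant moment is a subset of the $\le \alpha m$ inserted elements, and the hash values are independent (the hash function is a single fixed random choice, but the standard analysis uses that $h$ is drawn from a sufficiently independent family — I would either assume full independence as the paper implicitly does for the balls-in-bins analysis, or cite that $O(\log m)$-independence suffices for the Chernoff step, consistent with how \cite{Thorup} is used). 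Modular wraparound around position $m-1$ costs only a factor of $2$ in the union bound.

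**Main obstacle.** The genuinely delicate point is not the Chernoff/union-bound mechanics but making precise the probabilistic model: the paper fixes the hash function up front and then runs deterministically, yet treats hash values of distinct elements as independent for this analysis, and moreover the ``worst-case scheduler knows the hash function'' — so one must be careful that $N$ (length of run through $h(x)$ after all insertions are applied sequentially in the worst order) is a function of the hash values alone and does not depend on the schedule. This is true because Robin Hood hashing is history-independent: after all insertions, the table is in the canonical representation $\can(P)$ regardless of insertion order, so the run structure — and hence $N$ — depends only on the multiset of hash values of the inserted elements, not on the adversary's choices. I would state this reduction explicitly (invoking Lemma~\ref{lem:order-invariant} / the canonical-representation property) before doing the balls-in-bins computation, since it is what licenses treating $N$ as a purely combinatorial function of $n$ independent hash values.
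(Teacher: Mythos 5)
Your overall route---reduce $\E[N^k]$ to an exponential tail bound via $\E[N^k]\le k\sum_{t\ge 1} t^{k-1}\Pr[N\ge t]$, obtain the tail by a union bound plus Chernoff, and use history independence to argue that $N$ is a function of the hash values alone, independent of the adversarial schedule---is a legitimate and arguably more elementary alternative to the paper's proof. The paper instead follows Thorup's dyadic decomposition: it partitions the table into $\ell$-blocks of length $2^{\ell}$, cites the lemma that a run of length in $[2^{\ell+2},2^{\ell+3})$ forces one of $12$ nearby $\ell$-blocks to be nearly full, Chernoff-bounds that event, and sums $(2^{\ell+3})^k P_{\ell}$ over $\ell$ directly. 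Under the full-independence assumption that the paper makes explicit, either decomposition can be made to work, and your preliminary reduction (moments from tails, schedule-independence via the canonical representation) is consistent with how the paper defines $N$ (the run length after scheduling all insertions first).

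However, the specific combinatorial claim your union bound rests on is false as stated. You claim that a run of length $\ge t$ through $h(x)$ forces some window of \emph{exactly} $t$ consecutive cells containing $h(x)$ to receive at least $t$ hash values. Counterexample: take $t=3$, let four elements hash to cell $0$, one to cell $1$, one to cell $2$, and let $h(x)=5$. Any linear-probing table (Robin Hood included) occupies cells $0$ through $5$, so the run containing $h(x)$ has length $6\ge 3$, yet each of the three length-$3$ windows containing cell $5$, namely $\set{3,4,5}$, $\set{4,5,6}$, $\set{5,6,7}$, receives zero hash values: elements stored near $h(x)$ may have hashed far behind it, and the ``at least as many hash values as cells'' property holds only for prefixes of the run starting at its first cell, which need not contain $h(x)$. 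The standard fix preserves your architecture: if $N=r$, the run itself is an interval of length exactly $r$ containing $h(x)$ all of whose occupants hashed into it, so it receives at least $r$ hash values; union over the $r$ possible placements of that interval relative to $h(x)$ and over $r\ge t$ to get $\Pr[N\ge t]\le \sum_{r\ge t} r\,\Pr[\mathrm{Bin}(n,r/\ncell)\ge r]\le \sum_{r\ge t} r\,e^{-c_\alpha r}=O(e^{-c_\alpha t/2})$, which is still summable against $t^{k-1}$ and yields $\E[N^k]=O_k(1)$. With that correction (plus the minor bookkeeping you partly note: count only insertions, exclude $x$ itself, and handle wraparound), your argument goes through.
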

\begin{proof}
    As we said, we adapt the analysis of~\cite{Thorup},
    which partitions the table into dyadic intervals, and shows that a long run can be ``blamed'' on some interval having a higher than expected load --- a low-probability event.
    
    Fix the set $S \subseteq \calU$ of elements that are currently in the hash table,
    and recall that $\alpha = |S|/\ncell$ is the load on the hash table.
    For $i = 0,\ldots,\log \ncell$,
    an \emph{$\ell$-block} is a consecutive interval of $2^{\ell}$ cells starting from some cell of the form $i \cdot 2^{\ell}$ for $i \geq 0$, that is,
    $B = [i \cdot 2^{\ell}, (i+1) \cdot 2^{\ell} - 1]$.
    The \emph{load} on block $B$
    is given by $L_h(B) = \left| \set{ y \in S \setminus \set{x} : h(y) \in B } \right|$.
    A block is called \emph{nearly full} if 
    $L_h(B) \geq (1+\delta) \alpha |B|$,
    where $\delta \in (0,1)$
    is a parameter whose value will be fixed later,
    subject to the constraint that $(1+\delta)\alpha < 1$.

    Let $R$ be the run containing $h(x)$.
    As shown in~\cite{Thorup} (Lemma 3),
    for every $\ell \geq 0$,
    if $|R| \in [2^{\ell+2}, 2^{\ell+3})$,
    then 
    one of the $12$ consecutive $\ell$-blocks 
    starting $8$ blocks before the $\ell$-block containing $h(x)$ and ending $3$ blocks, is nearly full.
    Thus, if $P_{\ell}$ is the probability that an $\ell$-block is full,
    we have
    \begin{equation*}
        \Pr\left[ |R| \in [2^{\ell+2}, 2^{\ell + 3} ) \right]
        \leq
        12P_{\ell}.
    \end{equation*}

    The \emph{expected} load on an $\ell$-block $B$ is given by
    \begin{equation*}
        \E[ L_h(B) ] = \sum_{y \in S \setminus \set{x}} \Pr[h(y) \in B] = 
        |S| \cdot \frac{2^{\ell}}{\ncell} 
        = 2^{\ell} \cdot \alpha,
    \end{equation*}
    where $\alpha = |S|/\ncell$ is the load on the table.
    Thus, if $B$ is nearly full, we have
    $L_h(B) \geq (1+\delta) \E[L_h(B)]$.
    In~\cite{Thorup}, the hash function family is assumed to be 5-wise independent, but here we consider a fully independent hash function family,%
    \footnote{In fact, as in~\cite{Thorup}, a $t$-wise independent hash family for a sufficiently large constant $t$ which depends on $k$ should suffice.}
    so by Chernoff,
    \begin{equation*}
        P_{\ell} \leq \Pr\left[ L_h(B) - \E[L_h(B)] \geq (1+\delta) \E[L_h(B)] \right] \leq e^{- \delta^2 \E[L_h(B)] / 3} 
        = e^{-\delta^2 \alpha 2^{\ell} / 3}.
    \end{equation*}

    To bound the expected $k$-th moment of $N$,
    we can write
    \begin{align*}
        \E[ N^k ]
        &\leq
        1^k + 2^k + 
        \sum_{\ell = 0}^{\log \ncell} \left( 2^{\ell + 3} \right)^k \cdot P_{\ell}
        \\
        &
        \leq
        1 + 2^k
        \sum_{\ell = 0}^{\log \ncell} e^{(\ell+3)k \log e } \cdot e^{-\delta^2 \alpha 2^{\ell} / 3}
        \\
        &
        = 1 + 2^k + \sum_{\ell = 0}^{\log \ncell} e^{ (\ell+3)k \log e - \delta^2 \alpha 2^{\ell} / 3}.
    \end{align*}
    For sufficiently large $\ell$, the exponent is bounded from above by $- \delta^2 \alpha 2^{\ell} / 6$,
    and therefore, for some constant $\beta > 0$,
    the sum can be written as $O(1) + \sum_{\ell = \ell_0}^{\log \ncell} e^{- \beta \cdot 2^{\ell}} = O(1)$.
    All together, we have $\E[ N^k ] = O(1)$,
    where the $O(\cdot)$-notation hides constants depending on $k$ and $\alpha$.    
\end{proof}

For a hash function $h : \calU \rightarrow [\ncell]$,
a set of $n$ operations $\Op$, each performed by a different process,
and a schedule $\sigma \in [n]^{\omega}$,
let $T_h(\Op, \sigma)$ denote the total running time of all operations in $\Op$ when $h$ is the hash function and $\sigma$ is the schedule (not counting steps of processes that have already completed their operation).
Let $T_h(\Op) = \max_{\sigma \in [n]^{\omega}} T_h(\Op,\sigma)$ be the worst-case running time over all schedulers.
The maximum is well-defined: 
let $\mathcal{T}_h(\Op)$ be a tree representing all possible executions of the operations in $\Op$ with the hash function $h$:
every node of $\mathcal{T}_h(\Op)$ represents a prefix of an execution,
and has at most $n$ children, one for each process that may take the next step, if it has not completed its operation yet.
Nodes representing execution prefixes where all operations have completed have no children.
By lock-freedom, every path in $\mathcal{T}_h(\Op)$ is finite, as under every schedule, all operations eventually complete.
Also, the tree has finite branching factor (at most $n$).
Thus, by K\H{o}nig's Lemma, the tree must be finite,
and $T_h(\Op)$, which is the depth of the tree, is well-defined.

Given the contents $A : [\ncell] \rightarrow \calU \times \CalU \times \set{ \Stable, \I, \D}$
of the hash table,
we define the \emph{target position} $\posit(x)$
of an element $x \in \calU$ as follows:
\begin{itemize}
    \item If there is a cell $i$ such that $A[i] = \tup{a, b, \ast}$ where either $a = x$ or $b = x$, then $\posit(x)$ is the closest index of such a cell to $h(x)$ (in the sense of the directed distance going forward from $h(x)$).
    \item Otherwise, $\posit(x)$ is the index of the closest cell $i$ to $h(x)$ such that 
    $A[i] = \tup{a, b, \ast}$,
    with $a \prio{>}{i} x \prio{>}{i+1} b$.
\end{itemize}

\begin{lemma}
    Fix a set $\Op$ of $n$ operations,
    and let $c$ be the maximum number of operations 
    working on the same element (whether they be insertions, deletions or lookups).
    If $0 < \alpha < 1$ for some constant $\alpha$,
    Then
    \begin{equation*}
        \E_h \left[ T_h(\Op)  \right] = O( n \cdot c).
    \end{equation*}
\label{lemma:runtime}
\end{lemma}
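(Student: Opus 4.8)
The proof follows the charging-scheme sketch in Section~\ref{sec:amortized_sketch}, made precise. I would fix the set $\Op$ of operations and a worst-case schedule $\sigma$, and define for each element $x$ the random variable $N_x$ equal to the length of the run containing $h(x)$ after all insertions in $\Op$ are applied sequentially; by Lemma~\ref{lemma:run_length} (with $k=3$), $\E_h[N_x^3]=O(1)$, and moreover for any operation $o$ on $x$, the ``correct location'' for $x$ never travels more than $N_x$ cells from $h(x)$, because insertions only lengthen runs and deletions only shorten them --- so the set of cells any operation on $x$ ever touches, across the whole execution, lies within the run of $h(x)$ in the all-insertions configuration. I would also fix $P_x \le N_x \cdot c$ to be the number of processes whose element hashes into that same run (so they are the only processes that can ever collide with operations on $x$).

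**The charging scheme.** I would partition every step taken by every process into three categories and charge each to some operation. (1) \emph{Successful steps on behalf of an operation}: the $O(1)$ $\LL/\VL/\SC$ steps of a $\DSC$ or initial write that actually succeed in advancing operation $o$ one cell, or stabilizing a cell of $o$; charge these to $o$. Since $o$ propagates across each of its $\le N_x$ cells at most once (Lemma~\ref{lem:proponce}) and its initial write happens $O(1)$ times before success (each failed initial write costs a restart, handled in category 3), the total here is $O(N_x)$. (2) \emph{Failed steps during propagation/help}: when a process $p$ executes $\helpop$ or a $\DSC$ and its $\SC$ fails, by the propagation-structure analysis (Lemma~\ref{lem:propstructure}, Lemma~\ref{lem:validorder}) the failure is caused by exactly one successful $\SC$ of some operation $o'$ --- either the same operation $o$ advancing (performed by another process) or a fresh initial write of $o'$ into that cell; charge $p$'s $O(1)$ wasted steps to $o'$. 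Each successful $\SC$ of $o'$ on a given cell can fail at most $P_x$ processes (only processes in the run are present), and $o'$ performs $O(N_x)$ such successful $\SC$s total, so $o'$ absorbs $O(N_x \cdot P_x)$ charge here. (3) \emph{Restarts before the initial write}: if $p$ restarts its lookup/scan before ever writing into the table, it wasted at most $O(N_x)$ steps (it had scanned at most the run length); by Lemma~\ref{lemma:restart_blame}, some insert/delete operation $o'$ was propagated across the cell $i-1$ between $p$'s two reads, so charge the $O(N_x)$ wasted steps to that propagation event of $o'$. Similarly, a restart from a failed initial write (Line~\ref{lin:insr-restartloop1}, Line~\ref{lin:del-restartloop1}) is caused by a successful $\SC$ on that cell, charged to the operation that performed it. Each operation $o'$ has $O(N_x)$ propagation/initial-write events, each of which can trigger a restart in at most $P_x$ processes, so the restart charge to $o'$ is $O(N_x \cdot P_x \cdot N_x) = O(N_x^2 \cdot P_x)$.

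**Summing up.** Combining the three categories, the total charge to any single operation $o$ on element $x$ is $O(N_x) + O(N_x \cdot P_x) + O(N_x^2 \cdot P_x) = O(N_x^2 \cdot P_x) = O(N_x^2 \cdot N_x \cdot c) = O(N_x^3 \cdot c)$. Since every step in the execution is charged to exactly one operation, $T_h(\Op,\sigma) \le \sum_{o \in \Op} O(N_{x(o)}^3 \cdot c)$ where $x(o)$ is the element of $o$. Taking $\sigma$ to be the worst-case schedule does not change this bound (the charging argument is schedule-independent, and by the K\H{o}nig's-lemma remark preceding the lemma $T_h(\Op)$ is finite and achieved). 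Finally, taking expectation over $h$ and using linearity and Lemma~\ref{lemma:run_length} with $k=3$,
\begin{equation*}
    \E_h[T_h(\Op)] \le \sum_{o \in \Op} O(c) \cdot \E_h[N_{x(o)}^3] = \sum_{o \in \Op} O(c) \cdot O(1) = O(n \cdot c).
\end{equation*}

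**Main obstacle.** The routine parts are categories (1) and (2); the delicate part is (3) and, more generally, arguing that \emph{every} wasted step really is attributable to a distinct successful $\SC$ event and that each such event is blamed a bounded number of times. Concretely, I expect the hard work to be: (a) verifying via Lemma~\ref{lemma:restart_blame} and the $\helpop$ code that a process forced to restart always has a concrete propagation event to blame, and that between two consecutive restarts of the same process there is always a \emph{new} such event (so blame is not double-counted across a process's own restart sequence); and (b) bounding how many times one successful $\SC$ of operation $o'$ on cell $i$ can be blamed --- this requires that a process, once it observes the post-$\SC$ state, does not re-blame the same event, which follows because after the $\SC$ the cell's contents have changed and the process re-reads before its next attempt. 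Handling the puncture-a-run case (Line~\ref{lin:dsc-D2}, where a delete spawns a fresh $\prop$ call on the second run) needs a small extra argument that the spawned propagation's steps are charged to the same delete operation and stay within the same run, which Lemma~\ref{lem:finishprop} and Lemma~\ref{lem:respactiveopind} support.
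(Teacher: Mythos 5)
Your proposal is correct and follows essentially the same argument as the paper: the identical charging scheme (successful steps charged to the operation itself, failed $\SC$s charged to the operation whose successful $\SC$ caused them, restarts blamed via Lemma~\ref{lemma:restart_blame}), the same per-operation bound of $O(N^3\cdot c)$, and the same conclusion via schedule-independence of the charges, Lemma~\ref{lemma:run_length} with $k=3$, and linearity of expectation. The only differences are minor bookkeeping in how the charge categories are grouped, not in substance.
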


\begin{proof}

For a fixed hash function $h : \CalU \rightarrow [\ncell]$,
and an operation $o_i \in \Op$ on element $x \in \calU$,
define:
\begin{itemize}
        \item $N_i = N_i(h)$: the length of the run containing $h(x) - 1$, if we scheduled all the insertions from the batch $\Op$ and then $o_i$.
    This is an \emph{upper bound} on the distance from $h(x)$ to $\posit(x)$ at any point in any execution prefix of $\Op$.
    \item $P_i = P_i(h) \leq N_i \cdot c$: the number of processes whose elements belong to the same run as operation $o_i$'s element.
\end{itemize}

    We account for the steps of an operation $o \in \Op$ 
    on element $x \in \calU$ as follows.
    \begin{itemize}
    \item In the lookup phase, steps where the process moves from $h(x)-1$ to 
    $\posit(x)$
    and does not help another operation:
    \begin{itemize}
        \item If the process ultimately restarts, then all steps taken in the current pass from $h(x)-1$ to
        the location where the lookup restarts are charged to some insert or delete operation that caused the discrepancy:
        by Lemma~\ref{lemma:restart_blame},
        some insert or delete must have propagated across the position the process read just prior to restarting,
        and we charge this operation (or one of them, if there are multiple).
        Since an operation only propagates at most once over each position in the table,
        it can only be blamed once by each process for each position in the run.
        Therefore the total charge to any operation incurred by all restarts is $O( N_i \cdot N_i \cdot P_i) = O(N_i^3 \cdot c)$.
        \item If the process does not restart, this pass is charged to operation $o$ itself.
        The total charge is $O(N_i)$,
        as this is the maximum possible distance from  $h(x) - 1$ to the target position $\posit(x)$
        where the process either returns or attempts to make its initial write.
    \end{itemize}
    \item Failed initial writes, along with the pass from $h(x)-1$ to the target location that preceded (at most $N_i+1$ steps), are charged to the operation whose successful $\SC$ caused the failure.
    Since operations propagate at most once across any location,
    and propagating one step forward involves a constant number of successful $\SC$s,
    the total charge to an operation is $O( N_i \cdot N_i \cdot P_i ) = O( N_i^3 \cdot c)$.
    \item The successful initial write of an insert/delete operation $o$ is charged to operation $o$ itself.
    \item Following the initial write,
    steps that help an operation $o'$ are charged as follows.
    Propagating $o'$ one step forward can be broken up into a constant number of \emph{stages}, each of which consists of a constant number of $\LL$s, followed by a $\VL$ or $\SC$.      
            After all stages are successfully completed, the operation moves forward one location.
            This \emph{excludes} ``clearing the way'' by helping other operations advance (e.g., when two consecutive cells are unstable, the first cannot advance until the second becomes stable), as those steps will be charged to the other operations.

        Successful stages in the propagation of $o'$ are charged to $o'$.
        On the other hand, \emph{failed} stages (i.e., those that end with a failed $\VL$ or $\SC$) are charged to:
        \begin{itemize}
            \item Operation $o'$, if the successful $\SC$ that caused the failure is part of the propagation of $o'$; or
            \item An operation $o''$ whose initial write caused the failure.
        \end{itemize}
        These are the only two possibilities for a failed stage.
        The total charge to each operation is at most $O(1)$ per process and location in the run, as each operation propagates at most once across each location in the run.
        Therefore the total charge for an operation is $O( N_i \cdot P_i) = O(N_i^2 \cdot c)$.
   \end{itemize}

The total charge to an operation $o$ is bounded by $O(N_i^3 \cdot c)$.
Ignoring constants, the total cost of scheduling all $n$ operations is bounded by
\begin{align*}
    c \cdot \sum_{i = 1}^n N_i^3.
\end{align*}
This is true for \emph{any} schedule, and therefore it is true also for the maximum over all schedules.
Since $\E[N_i^3] = 1$, in expectation the total cost is $O(n \cdot c)$. 
\end{proof}

\section{Lower Bounds for History-Independent Dictionaries}
\label{sec:lbs}

To prove the lower bound (formalized below in Theorem~\ref{thm:natural-assig-formal}) 
we first use two properties of canonical memory assignments that serve 
as obstacles for wait-free history-independent dictionary implementations
(see Theorem~\ref{thm:waitfreelb} below).
We focus on the case where the number of memory cells $\ncell$ is smaller than the universe size $|\CalU|$. (If
$\ncell = |\CalU|$ then there is a trivial implementation where every memory cell represents a single element in the universe, and is set to 1 if the element is in the set, or 0 otherwise.)
Let $\can(q)$ denote the canonical memory representation of state $q\in Q$.


The first property captures a scenario 
where for some element $v \in \CalU$,
no single memory cell indicates whether $v$ is in the set:
for every memory cell $\ell$,
there exist two states $q, q'$ of the set
where memory cell $\ell$ has the same value ($\can\parens{q}[\ell] = \can\parens{q'}[\ell]$),
but state $q$ includes $v$ while $q'$ does not.

\begin{property}
\label{prop:lb-core}
    For element $v \in \calU$,
    there exists an initial configuration $C_0$
    such that for every $0 \leq \ell < \ncell$, there are states $q, q'\in Q$ such that $v\in q$, $v\notin q$ and $\can\parens{q}[\ell] = \can\parens{q'}[\ell]$.
\end{property}

When the implementation uses \emph{read-modify-write} 
primitives,
we can show that any wait-free history-independent implementation induces a canonical memory assignment where no element satisfies Property~\ref{prop:lb-core}.
These primitives read a memory cell and write a new value that depends on the old value in one atomic step.
In the case of the $\LL/\SC$ primitive, 
where the new value may depend not only on the current state of the cell but also on the history of previous operations,
we must introduce an additional property to reach the same impossibility result.
The second property captures a scenario 
where for some element $v \in \CalU$,
a memory cell cannot have the same value across all states that include 
$v$ or all states that do not include $v$.

\begin{property}
\label{prop:llsc}
For element $v \in \calU$,
    there exists an initial configuration $C_0$ 
    such that for every $0 \leq \ell < \ncell$,
    there are states $q_1, q'_1, q_2, q'_2\in Q$ 
    where $\can\parens{q_1}[\ell] \neq \can\parens{q'_1}[\ell]$ and $\can\parens{q_2}[\ell] \neq \can\parens{q'_2}[\ell]$,
    but $v \in q_1,q'_1$ and $v \notin q_2,q'_2$.   
\end{property}


Although Properties~\ref{prop:lb-core} and~\ref{prop:llsc}
are properties of individual elements, we abuse the terminology by saying that an \emph{implementation} satisfies Property~\ref{prop:lb-core} (resp.\ Property~\ref{prop:llsc}) if some element satisfies Property~\ref{prop:lb-core} (resp.\ Property~\ref{prop:llsc}).
We abuse the terminology even further by saying that an implementation satisfies \emph{both} properties if there is some element that satisfies both at the same time.

In the proof of the next theorem, we construct two executions $\alpha_0, \alpha_1$, such that in $\alpha_0$,
element $v$ is not in the dictionary throughout, and in $\alpha_1$, element $v$ \emph{is} in the dictionary throughout;
however, there is a $\lookup(v)$ operation that cannot distinguish $\alpha_0$ from $\alpha_1$, and therefore cannot return.
To ``confuse'' the lookup operation, every time it accesses a cell $\ell$
using an operation whose behavior depends only on the cell's current value (such as read/write or compare-and-swap),
we use Property~\ref{prop:lb-core} to extend executions $\alpha_0, \alpha_1$ in a way that the reader cannot distinguish between them and gain clear-cut information about the presence of $v$ in the dictionary;
and if the access is using $\VL$ or $\SC$,
we use Property~\ref{prop:llsc} to first ``overwrite'' the value of cell $\ell$, ensuring the $\VL$ or $\SC$ returns \emph{false} in both executions.
Since the $\lookup$ operation cannot distinguish $\alpha$ from $\alpha'$,
it is doomed to return an incorrect answer in one of them (or to never return).
The full proof appears in Section~\ref{sec:proof of waitfreelb}.

\begin{restatable}{theorem}{waitfreelb}
\label{thm:waitfreelb}
    There is no SQHI implementation of a dictionary with a wait-free $\lookup$ operation and obstruction-free $\insr$ and $\del$ operations from read-modify-write and $\LL/\SC$ primitives that induce 
    a canonical memory assignment satisfying both Property~\ref{prop:lb-core} and Property~\ref{prop:llsc}.
\end{restatable}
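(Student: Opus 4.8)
The plan is to carry out an adversarial indistinguishability argument against the $\lookup$ operation, building two executions in tandem that differ only in whether $v$ is present. First I would fix an element $v$ that satisfies both Property~\ref{prop:lb-core} and Property~\ref{prop:llsc} (such an element exists by the theorem's hypothesis), together with the associated initial configuration $C_0$. Using $C_0$, I would set up two executions $\alpha_0$ and $\alpha_1$: in $\alpha_0$ the logical state never contains $v$, while in $\alpha_1$ it always contains $v$. Concretely, these executions are obtained by running a sequence of (obstruction-free) $\insr$ and $\del$ operations run solo to completion, so that between such operations the memory is in a canonical representation; by SQHI, whenever no state-changing operation is pending the memory equals $\can(q)$ for the appropriate state $q$. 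The goal is to schedule a $\lookup(v)$ operation by some process $p$, interleaved with these solo state-changing operations, in such a way that $p$ sees exactly the same sequence of responses to its primitive steps in both $\alpha_0$ and $\alpha_1$, and hence, being deterministic after initialization, behaves identically and must return the same answer in both --- contradicting correctness in at least one of them (or, if it never returns, contradicting wait-freedom).

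The core of the construction is the step-by-step simulation. I would proceed by induction on the steps of $p$'s $\lookup(v)$ operation, maintaining the invariant that after the $k$-th step, $\alpha_0$ and $\alpha_1$ agree on $p$'s local state and on the value of every cell $p$ has so far accessed, while in $\alpha_0$ the state excludes $v$ and in $\alpha_1$ it includes $v$, and both executions are currently in a quiescent (canonical) configuration with no pending state-changing operation. For the $(k{+}1)$-st step, suppose $p$ is about to access cell $\ell$. If the step is a read, a write, or a read-modify-write primitive whose outcome and written value depend only on the current contents of cell $\ell$ (e.g.\ $\CAS$, $\LL$ without a subsequent $\SC$ counted here, fetch-and-add, swap): I invoke Property~\ref{prop:lb-core} to pick states $q \ni v$ and $q' \not\ni v$ with $\can(q)[\ell] = \can(q')[\ell]$; I run solo state-changing operations in $\alpha_1$ to drive its logical state to $q$ and in $\alpha_0$ to drive its logical state to $q'$, which is possible because the dictionary supports sets of arbitrary size and these operations are obstruction-free hence complete when run solo. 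Now cell $\ell$ has the same value in both executions, so $p$'s primitive step returns the same response and (for an RMW) writes the same new value in both; the resulting cell contents again coincide, and we can restore quiescence by further solo operations if needed. If instead the step is a $\VL(\ell)$ or $\SC(\ell,\cdot)$ following an earlier $\LL(\ell)$ by $p$, I use Property~\ref{prop:llsc}: there exist $q_1,q_1' \ni v$ with $\can(q_1)[\ell] \neq \can(q_1')[\ell]$ and $q_2,q_2' \not\ni v$ with $\can(q_2)[\ell] \neq \can(q_2')[\ell]$, so in $\alpha_1$ I can run solo state-changing operations that touch cell $\ell$ between $p$'s $\LL$ and $p$'s $\VL/\SC$, and similarly in $\alpha_0$; this forces the $\VL$ to report that the cell changed (equivalently the $\SC$ to fail) in \emph{both} executions, again yielding identical responses and --- since a failed $\SC$ does not write --- identical cell contents afterward. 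Throughout, because we only ever schedule $p$'s own steps plus solo runs of completed state-changing operations, SQHI and linearizability are respected and the quiescent states are genuine canonical representations.

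I would then conclude: by the invariant, $p$'s $\lookup(v)$ in $\alpha_0$ and in $\alpha_1$ is an identical sequence of local states and primitive responses, so $p$ takes the same steps and either returns the same value in both executions or never returns in either. If it returns, it is wrong in one of $\alpha_0,\alpha_1$ (since $v \notin$ state in $\alpha_0$ but $v \in$ state in $\alpha_1$ at every point, in particular at $p$'s return point, where one can read off the logical state from the linearization), contradicting linearizability; if it never returns despite running solo for arbitrarily long stretches (we can always stop inserting new obstructions after $p$'s last step if $p$ were to terminate, but $p$ does not), this contradicts wait-freedom of $\lookup$. Either way we have a contradiction, proving no such SQHI implementation exists.

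The main obstacle I anticipate is the bookkeeping that keeps the two executions genuinely well-formed and quiescent at every stage: I must verify that between consecutive steps of $p$ we can always reach the desired states $q,q'$ (or $q_1,q_2$, etc.) using only obstruction-free state-changing operations run solo, that doing so does not disturb cells $p$ has already inspected (here one uses that the relevant properties only constrain cell $\ell$, so I may need to be careful about collateral changes to other cells --- the cleanest route is to argue that $p$'s decision depends only on cells it has read, and re-establish agreement on exactly those cells, possibly re-driving the state after each step rather than trying to preserve it globally). A secondary subtlety is the $\LL/\SC$ case: I must make sure the adversary's intervening writes to cell $\ell$ are themselves the result of legitimate state-changing operations and occur strictly between $p$'s $\LL$ and its matching $\VL/\SC$ in real time, which Property~\ref{prop:llsc} is precisely designed to guarantee; formalizing the ``the cell provably changed'' claim from a single difference $\can(q_1)[\ell]\neq\can(q_1')[\ell]$ requires routing through an intermediate canonical configuration so that a concrete write to $\ell$ actually takes place in the execution.
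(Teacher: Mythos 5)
Your proposal is correct and follows essentially the same route as the paper: two executions built in tandem (one always avoiding $v$, one always containing it) that remain indistinguishable to the $\lookup(v)$ process, using Property~\ref{prop:lb-core} to equalize the accessed cell for value-dependent primitives and Property~\ref{prop:llsc} to force $\VL$/$\SC$ failures in both executions, extended inductively via solo, completed state-changing operations and concluded by the contradiction with wait-freedom/linearizability. The subtleties you flag (only the cell about to be accessed needs to agree, and an actual write to $\ell$ must occur between $\LL$ and $\VL$/$\SC$) are exactly the points the paper's extension lemma handles.
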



\subsection{Proof of Theorem~\ref{thm:waitfreelb}}
\label{sec:proof of waitfreelb}

Consider a SQHI implementation with wait-free $\lookup$ operation that, for initial configuration $C_0$, induces an assignment $\can$ that satisfies both Property~\ref{prop:lb-core} and Property~\ref{prop:llsc}.
The properties imply there is an element $v \in \CalU$ such that for every $0 \leq \ell < \ncell$ there are states $q, q'\in Q$ such that $v\in q$, $v'\notin q$ and $\can\parens{q}[\ell] = \can\parens{q'}[\ell]$, and states $q_1, q'_1, q_2, q'_2\in Q$ 
where $\can\parens{q_1}[\ell] \neq \can\parens{q'_1}[\ell]$ and $\can\parens{q_2}[\ell] \neq \can\parens{q'_2}[\ell]$,
but $v \in q_1,q'_1$ and $v \notin q_2,q'_2$. 

 We consider executions of the implementation with two processes, a ``reader'' process $r$ that executes a single $\lookup(v)$ operation and a ``writer" process $w$ that repeatedly executes $\insr$ and $\del$ operations.
The executions that we construct have the following form:
\begin{equation*}
    S_1, r_1, S_2, r_2, \ldots, S_k, r_k,
\end{equation*}
where $S_i$ is a (possibly empty) sequence of operations executed by the writer process, during which the reader process takes no steps,
and $r_i$ is a single step by the reader process.
We abuse notation and for $k=0$ we get an empty execution.

 In any linearization of $\alpha = S_1, r_1, \ldots, S_k, r_k$, the operations in the sequences $S_1, \ldots, S_k$ must be linearized in order, as they do not overlap.
Furthermore, the $\lookup(v)$ operation carried out by the reader is not state-changing.
Thus, the linearization of $\alpha$ ends with the object in state $q$, where $q$ is the state reached by applying the operation' sequence $S_1, \ldots, S_k$ from the initial state.
We abuse the terminology by saying that the execution ``ends at state $q$''.

We say that an execution $\alpha = S_1, r_1, S_2, r_2, \ldots, S_k, r_k$ \emph{avoids} (\emph{never avoids}) $v$ if for every state $q$ traversed during the sequence of operations $S_1, S_2, \ldots, S_k$, not including the initial state, we have $v\notin q$ ($v\in q$).
An empty execution $\alpha$ both avoids and never avoids $v$; this is fine for our purposes, because the reader 
only starts running after the first operation' sequence $S_1$.

\begin{restatable}{lemma}{diffreturn}
    \label{lem:diff-return}
        If execution $\alpha = S_1, r_1, \ldots, S_k, r_k$ avoids (resp., never avoids) $v\in \CalU$, then
        the $\lookup(v)$ operation can only return the value \textit{false} (resp., \textit{true}) at any point in $\alpha$.
\end{restatable}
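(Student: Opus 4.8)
The plan is to prove the statement by a ``hybrid'' argument that exploits linearizability together with the real-time ordering constraint. Let me set up the main idea and then spell out the steps.

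\medskip

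\textbf{Approach.}
Suppose, for contradiction, that $\alpha = S_1, r_1, \ldots, S_k, r_k$ avoids $v$ but the $\lookup(v)$ operation returns \textit{true} at some step $r_j$ (the symmetric case, where $\alpha$ never avoids $v$ and the lookup returns \textit{false}, is entirely analogous, swapping the roles of ``$v \in q$'' and ``$v \notin q$''). Consider the prefix $\beta = S_1, r_1, \ldots, S_j, r_j$; the completed $\lookup(v)$ operation in $\beta$ returned \textit{true}, so by linearizability there is a linearization $\pi$ of $\beta$ that is consistent with the sequential specification of a dictionary. The key observations are: (i) the operations in $S_1, \ldots, S_j$ do not overlap one another, so they must appear in $\pi$ in exactly that order; (ii) the $\lookup(v)$ operation is the only other operation in $\beta$, and it must be placed somewhere in $\pi$ (it is completed, so it is definitely in $\pi$); and (iii) $\pi$ must respect real-time order, so $\lookup(v)$ must be placed after every operation of $S_1$ (since $S_1$ finishes before $r_1$, the first step of the lookup) --- in other words, $\lookup(v)$ is inserted somewhere into the sequence $S_1 \,\|\, S_2 \,\|\, \cdots \,\|\, S_j$ at a position no earlier than the boundary between $S_1$ and $S_2$.

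\medskip

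\textbf{Key steps.}
First I would formalize (i)--(iii): the linearization $\pi$, when restricted to $S_1, \ldots, S_j$, is the concatenation in index order, and the only freedom is where to splice the single $\lookup(v)$ operation, subject to the constraint that it comes after all of $S_1$. Second, I would argue about the state just before the $\lookup(v)$ operation in $\pi$: this state is obtained by applying some prefix of the operation sequence $S_1, S_2, \ldots$ (a prefix that includes all of $S_1$) to the initial state $q_0$. Call this state $q^\ast$. Since the prefix includes all of $S_1$, $q^\ast$ is one of the states ``traversed during $S_1, \ldots, S_j$, not including the initial state''; here I need to be slightly careful --- a prefix of the concatenated sequence ending in the middle of some $S_t$ gives an intermediate state within $S_t$, but the hypothesis that $\alpha$ avoids $v$ says precisely that \emph{every} such state $q$ (excluding $q_0$) satisfies $v \notin q$. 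Hence $v \notin q^\ast$. Third, since $\lookup(v)$ is not state-changing, applying it in state $q^\ast$ according to the sequential specification must return ``is $v \in q^\ast$?'' $= $ \textit{false}. This contradicts the assumption that the operation returned \textit{true} in $\alpha$ (the return value is a property of the execution, independent of which linearization we pick). Therefore $\lookup(v)$ can only return \textit{false}, as claimed.

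\medskip

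\textbf{The main obstacle.}
The subtle point --- and the step I expect to require the most care --- is handling the ``not including the initial state'' clause and the edge cases: an empty execution, and the case where $j$ is small or some $S_t$ is empty. I need the state $q^\ast$ immediately preceding $\lookup(v)$ in $\pi$ to genuinely be a non-initial traversed state. Because $\pi$ respects real-time order and $r_1$ occurs after all of $S_1$, the lookup cannot be linearized before the last operation of $S_1$, so $q^\ast$ is reached only after applying at least all of $S_1$ --- which is nonempty exactly when the execution is nonempty and $S_1$ is nonempty. If $S_1$ is empty (so there are genuinely no state-changing operations before $r_1$), then $q^\ast$ could be the initial state $q_0$; but in that degenerate case the hypothesis ``$\alpha$ avoids $v$'' is vacuous about $q_0$, and indeed the lemma's convention (an empty execution both avoids and never avoids $v$, which is ``fine because the reader only starts running after $S_1$'') signals that one should read ``avoids $v$'' together with the implicit requirement that there is at least one non-initial traversed state before the reader's first step whenever the reader takes a step at all. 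Concretely: if $r_j$ occurs and returns a value, then $S_1$ (and a fortiori the set of state-changing operations preceding the lookup in $\pi$, if the lookup is the first operation) must produce at least one traversed state, and the constraint from real-time order forces $q^\ast$ to be one of them. Once this bookkeeping is nailed down, the rest is the straightforward linearizability argument above.
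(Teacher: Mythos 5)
Your proposal is correct and takes essentially the same route as the paper's proof: both use the facts that the writer's non-overlapping operations must be linearized in order, that real-time order forces the $\lookup(v)$ to be linearized after all of $S_1$, and hence that its linearization point sits at a non-initial traversed state, whose membership of $v$ dictates the only possible return value. Your extra bookkeeping about the empty-$S_1$ degenerate case corresponds to the caveat the paper itself notes (an empty execution both avoids and never avoids $v$, which is harmless because the reader only runs after $S_1$), so no substantive difference remains.
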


\begin{proof}
    Fix an execution $\alpha = S_1, r_1, S_2, r_2, \ldots, S_k, r_k$ and note that in any linearization,
    the operations in the sequences $S_1, \ldots, S_k$
    must be linearized in-order,
    as they are non-overlapping operations by the same process.
    The $\lookup(v)$ operation cannot be linearized before
    the first sequence $S_1$,
    because it is only invoked after this sequence of operations completes.
    Thus, the $\lookup(v)$ operation 
    either does not return in $\alpha$,
    or it is linearized after some operation in a sequence
    $S_j$, $1\leq j\leq k$. 
    In the latter case, if $\alpha$ avoids $v$, the $\lookup(v)$ operation is linearized after a state that does not contain element $v$ and returns \textit{false}. Similarly, if $\alpha$ never avoids $v$, the $\lookup(v)$ operation is linearized after a state that contains element $v$ and returns \textit{true}.
\end{proof}

To build an execution that avoids or never avoids $v$, we need to traverse between two states in a way that preserves the absence or presence of $v$ in the set:

\begin{restatable}{lemma}{opseq}
        \label{lem:op-seq}
        Let $v\in \CalU$.
        For every two states $q_1,q_2\in Q$ such that $v\in q_2$ (resp., $v\notin q_2$), there is a sequence of operations $\seq(q_1,q_2)$ that takes the set from state $q_1$ to state $q_2$ and goes through only states that include $v$ (resp., do not include $v$), not including the starting state $q_1$.
\end{restatable}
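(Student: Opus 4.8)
The plan is to build the sequence in two halves: first move from $q_1$ to a fixed ``anchor'' state while keeping $v$ out of the set, then move from the anchor to $q_2$ while keeping $v$ in (or out, in the symmetric case) of the set. I will treat the case $v \in q_2$; the case $v \notin q_2$ is analogous (and in fact simpler, since we never need to touch $v$).

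First I would fix a canonical ``small'' state to route through. Since $\ncell < |\CalU|$ and the dictionary supports sets up to size $\ncell$, pick an element $w \neq v$ (possible because $|\CalU| \geq 2$), and consider the states $\emptyset$ and $\{w\}$, neither of which contains $v$. The first half of $\seq(q_1,q_2)$ is obtained by emptying $q_1$: enumerate the elements of $q_1$ in some order and issue $\del$ operations for each of them, skipping $v$ itself if $v \in q_1$ and deleting it \emph{first} if present. After deleting $v$ (if it was there) we are at a state not containing $v$, and every subsequent deletion keeps $v$ out; so from that point on all traversed states avoid $v$. This brings us to $\emptyset$. The second half starts from $\emptyset$, inserts $v$ (reaching $\{v\}$, which contains $v$), and then inserts the remaining elements of $q_2 \setminus \{v\}$ one at a time; every intermediate state contains $v$ because we never delete it. The concatenation is $\seq(q_1,q_2)$, and by construction — excluding the starting state $q_1$ — every traversed state contains $v$ once we have reached $\{v\}$, but the states between $q_1$ and $\emptyset$ that occur \emph{before} $v$ is deleted may still contain $v$, which is fine since the requirement for the $v\in q_2$ case is only that the traversed states \emph{include} $v$; the issue is actually the reverse.

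Here is where I expect the real subtlety, and it forces a small reordering of the above. For the $v \in q_2$ case we need \emph{all} intermediate states (after $q_1$) to contain $v$, so we cannot route through $v$-free states like $\emptyset$. Instead: if $v \notin q_1$, first do a single $\insr(v)$ (now $v$ is in the set), and from then on never delete $v$; then freely delete all of $q_1$'s other elements and insert all of $q_2$'s other elements — every state from the point of that first insertion onward contains $v$, and the very first state after $q_1$ is $q_1 \cup \{v\}$ which contains $v$, as required (the lemma excludes only $q_1$ itself). If $v \in q_1$ already, skip the initial insertion. Symmetrically, for $v \notin q_2$: if $v \in q_1$, first $\del(v)$, then proceed freely; every state after $q_1$ omits $v$. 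In both cases the sequence uses only $\insr$ and $\del$ of elements of $\CalU$, and stays within sets of size at most $\max(|q_1|,|q_2|)+1 \leq \ncell + 1$ — so I should be slightly careful: if $q_1$ is already of maximum size $\ncell$ and $v \notin q_1$, inserting $v$ first overflows. To avoid this, in that boundary case delete one non-$v$ element of $q_1$ \emph{before} inserting $v$; this single pre-deletion keeps $v$ absent for one more step, but since it occurs immediately after $q_1$ and the lemma excludes $q_1$, I instead handle it by noting the traversed state requirement only begins after the first operation — wait, that is not quite right either, so the cleanest fix is: when $v\notin q_1$ and $|q_1| = \ncell$, the target $q_2$ with $v \in q_2$ has $|q_2\setminus\{v\}| \leq \ncell - 1$, so reorder as ``delete from $q_1$ down to $q_2 \cap q_1$ first, then insert $v$, then insert $q_2 \setminus (q_1 \cup \{v\})$'' — but this violates the $v$-presence requirement on the deletion prefix. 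The genuinely correct resolution, which I will use, is: the lemma only requires $v$-membership on states \emph{other than} $q_1$; so the single safest route is to interleave — delete one element of $q_1 \setminus q_2$, then insert $v$, then continue — making the \emph{second} state onward all contain $v$, and absorbing the one $v$-free intermediate state by observing it is reachable only when $|q_1|=\ncell$, in which case we can instead choose to first \emph{replace} an element (delete then insert $v$ as two steps) — the one bad step being the post-$q_1$ deletion. Since that genuinely conflicts with the statement, the honest conclusion is that the statement implicitly assumes capacity $\ncell$ with $|q|\le \ncell$ for all states and that the ``+1'' slack is available during transitions, or that $q_1$ is never full in the situations where the lemma is invoked; I will state this assumption explicitly and then the construction above goes through. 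The main obstacle, then, is not the existence of the sequence but managing the capacity bound during the transition, and I will resolve it by the pre-deletion trick together with an explicit note on the capacity convention.
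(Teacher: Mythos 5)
Your final construction is exactly the paper's proof: if $v\in q_2$ and $v\notin q_1$ begin with $\insr(v)$ (symmetrically, begin with $\del(v)$ when $v\notin q_2$ and $v\in q_1$), then delete the elements of $q_1\setminus q_2$, then insert those of $q_2\setminus q_1$ — so the approach is the same and the argument is correct; your opening ``route through $\emptyset$'' plan and its retraction are just detours. The long capacity discussion at the end is not needed for the lemma as stated (the sequence is defined at the level of the abstract set, and the paper's proof imposes no size accounting); it is a fair observation that in the capacity-bounded application a full $q_1$ with $v\notin q_1$ is a boundary case the paper glosses over, but your resolution — punting to an explicit extra assumption after several self-admitted false fixes — adds confusion rather than rigor and should simply be dropped or stated once as a one-line remark.
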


\begin{proof}
    If $v\in q_2$ and $v\notin q_1$, the sequence $\seq(q_1,q_2)$ begins with an $\insr(v)$ operation.
    If $v\notin q_2$ and $v\in q_1$, the sequence $\seq(q_1,q_2)$ begins with an $\del(v)$ operation.
    The sequence $\seq(q_1,q_2)$ continues
    with a $\del(u)$ operation for each element $u\in q_1 \setminus q_2$ in an arbitrary order, and then with a $\insr(u)$ operation for each element $u\in q_2 \setminus q_1$ in an arbitrary order.
    Since all the elements not in $q_2$ and in $q_1$ are removed and elements in $q_2$ and not in $q_1$ are inserted, this sequence takes the set from state $q_1$ to state $q_2$. If $v\in q_2$, element $v$ is inside the first inner state, and since it is never removed, any inner state between the two states also includes element $v$.
    Similarly, if $v\notin q_2$, element $v$ is not inside the first inner state, and since it is never inserted, any inner state between the two states also does not include element $v$.
\end{proof}

For the purpose of the impossibility result, we assume, that the local state of a process $p$ contains the complete history of $p$'s invocations and responses. 
Two finite executions $\alpha_1$ and $\alpha_2$ are \emph{indistinguishable} 
to the reader, denoted $\alpha_1 \localind{r} \alpha_2$, 
if the reader is in the same state in the final configurations of $\alpha_1$ and $\alpha_2$.

\begin{restatable}{lemma}{extendexecwf}
    \label{lem:extend-exec-wf}
    Fix $k \geq 0$,
    and suppose we are given two
    executions of the form
    $\alpha_1 = S^1_1, r_1, \ldots, S^1_k, r_k$ and $\alpha_2 = S^2_1, r_1, \ldots, S^2_k, r_k$
    such that $\alpha_1 \localind{r} \alpha_2$, $\alpha_1$ avoids $v$ and $\alpha_2$ never avoids $v$.
    Then we can extend $\alpha_1$
    into an execution $\alpha'_1 = 
    S^1_1, r_1, \ldots, S^1_k, r_k, S^1_{k+1}, r_{k+1}$ that also avoids $v$ and $\alpha_2$ into an execution $\alpha'_2 = 
    S^2_1, r_1, \ldots, S^2_k, r_k, S^2_{k+1}, r_{k+1}$ that also never avoids $v$,
    such that $\alpha'_1 \localind{r} \alpha'_2$.
\end{restatable}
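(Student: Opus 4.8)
The plan is to do a case analysis on the next step $r_{k+1}$ of the reader, which is a primitive operation on some memory cell $\ell$. The reader's step depends only on its local state (the same in both $\alpha_1$ and $\alpha_2$ since $\alpha_1 \localind{r} \alpha_2$) and on the value of cell $\ell$ in the final configuration of the execution. Let $q_1$ be the state at the end of $\alpha_1$ and $q_2$ the state at the end of $\alpha_2$ (so $v \notin q_1$, $v \in q_2$). The goal in each case is to choose operation sequences $S^1_{k+1}$ and $S^2_{k+1}$ — using $\seq(\cdot,\cdot)$ from Lemma~\ref{lem:op-seq} to preserve the absence/presence of $v$ — that drive the two executions to new states whose cell $\ell$ makes $r_{k+1}$ behave identically in both, so that $\alpha'_1 \localind{r} \alpha'_2$ is restored, while $\alpha'_1$ still avoids $v$ and $\alpha'_2$ still never avoids $v$.

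First I would split on the type of primitive. If $r_{k+1}$ is a \emph{read-modify-write} primitive (including plain read, write, $\CAS$, etc.) whose behavior and return value depend only on the \emph{current} value of cell $\ell$: invoke Property~\ref{prop:lb-core} to get states $q, q'$ with $v \in q$, $v \notin q'$, and $\can(q)[\ell] = \can(q')[\ell]$. Extend $\alpha_1$ by $S^1_{k+1} = \seq(q_1, q')$ and $\alpha_2$ by $S^2_{k+1} = \seq(q_2, q)$; by Lemma~\ref{lem:op-seq} these preserve avoidance/never-avoidance. Now the memory representation at cell $\ell$ is $\can(q')[\ell] = \can(q)[\ell]$ in both executions (here one uses that after a quiescent sequence of state-changing operations by a single process the memory is in canonical form — this is SQHI applied at the quiescent point before $r_{k+1}$). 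The reader, starting in the same local state and seeing the same value in cell $\ell$, performs the same transition and reaches the same new local state, so $\alpha'_1 \localind{r} \alpha'_2$.

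If instead $r_{k+1}$ is a $\VL(\ell)$ or $\SC(\ell, \cdot)$ — whose behavior depends on whether cell $\ell$ was written since the reader's last $\LL(\ell)$, not just on the current value — the strategy is to force a return value of \emph{false} in both executions by \emph{overwriting} cell $\ell$. Here I would use Property~\ref{prop:llsc}: it gives, for the current side, states that differ in cell $\ell$ while preserving membership of $v$. Concretely, pick a state $q'$ with $v \notin q'$ and $\can(q')[\ell] \neq$ (the value of cell $\ell$ at the end of $\alpha_1$), and a state $q$ with $v \in q$ and $\can(q)[\ell] \neq$ (the value at the end of $\alpha_2$) — Property~\ref{prop:llsc} guarantees such states exist since for at least one of the two differing pairs we can avoid matching the current value. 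Extend by $\seq(q_1, q')$ and $\seq(q_2, q)$; then in both executions cell $\ell$ has been written since the reader's last $\LL(\ell)$, so the $\VL$ or $\SC$ returns \emph{false} in both, and (for $\SC$) does not modify memory. Again the reader makes the same transition in both executions, so indistinguishability is preserved; avoidance is preserved by Lemma~\ref{lem:op-seq}.

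The main obstacle I expect is the $\VL$/$\SC$ case: we must ensure the \emph{first} step of any operation in $\seq$ actually touches cell $\ell$ (so that $\VL$/$\SC$ detects a write), and that the overwrite is consistent with what Property~\ref{prop:llsc} provides — in particular, handling the subcase where the only way to change cell $\ell$ forces it through a value equal to the current one, which is why Property~\ref{prop:llsc} supplies \emph{two} differing pairs per side so one can always be chosen to genuinely change the cell's value relative to the reader's last $\LL$. A secondary subtlety is that if the reader has never performed $\LL(\ell)$, a $\VL(\ell)$/$\SC(\ell,\cdot)$ is itself ill-defined or trivially fails; that edge case is handled separately and trivially. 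Finally, one should note that after each extension the memory is again in canonical form at the quiescent point, so the argument can be iterated — this is exactly what lets the lemma be applied repeatedly in the proof of Theorem~\ref{thm:waitfreelb} to build the infinite pair of executions that defeat the wait-free $\lookup$.
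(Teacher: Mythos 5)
Your proposal is correct and matches the paper's argument in substance: Property~\ref{prop:lb-core} is used to make the cell value identical for value-dependent accesses, Property~\ref{prop:llsc} to force a write to the accessed cell so that a $\VL$/$\SC$ fails in both executions, with Lemma~\ref{lem:op-seq} preserving avoidance and SQHI guaranteeing canonical memory at the quiescent points. The only (cosmetic) difference is that the paper avoids your case split on the reader's step type by always appending a single uniform extension $\seq(q_i,q'_i)\seq(q'_i,\cdot)$ that first passes through a state whose cell-$\ell$ value differs (handling $\VL$/$\SC$) and then ends at a state whose cell-$\ell$ value matches the other execution (handling reads/RMW).
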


\begin{proof}
    By assumption, the reader is in the same local state at the end of both executions $\alpha_1$ and $\alpha_2$
    and so its next step is the same in all of them. Let cell $\ell$, $0\leq \ell < \ncell$,
    be the memory cell
    accessed by the reader in its next step
    in both executions.

    Since $\can$ satisfies property~\ref{prop:lb-core} for element $v$, there are states $q, q'\in Q$ such that $v\in q$, $v\notin q'$ and $\can\parens{q}[\ell] = \can\parens{q}[\ell]$.
    Let $q_1$ be the state $\alpha_1$ ends at, and $q_2$ the state $\alpha_2$ ends at.
    Since $\can$ satisfies property~\ref{prop:llsc} for element $v$, there a state $q'_1$ such that $v\notin q'_1$ and $\can\parens{q_1}[\ell] \neq \can\parens{q'_1}[\ell]$. Similarly, there a state $q'_2$ such that $v\in q'_1$ and $\can\parens{q_2}[\ell] \neq \can\parens{q'_2}[\ell]$.
    
    We extend $\alpha_1 = S^1_1, r_1, \ldots, S^1_k, r_k$
    into $\alpha'_1 = S^1_1, r_1, \ldots, S^1_{k+1}, r_{k+1}$,
    by appending
    a sequence of complete operations $S^1_{k+1} = \seq(q_1, q'_1)\seq(q'_1, q')$ according to Lemma~\ref{lem:op-seq}, followed by a single
    step of the reader.
    The resulting execution still avoids $v$ by Lemma~\ref{lem:op-seq} as $v\notin q'_1, q'$.
    We extend  $\alpha_2 = S^2_1, r_1, \ldots, S^2_k, r_k$
    into $\alpha'_2 = S^2_1, r_1, \ldots, S^2_{k+1}, r_{k+1}$,
    by appending
    a sequence of complete operations $S^2_{k+1} = \seq(q_2, q'_2)\seq(q'_2, q)$ according to Lemma~\ref{lem:op-seq}, followed by a single step of the reader.
    Since we assume inserts and deletes are obstruction-free, there is such an execution.
    The execution still never avoids $v$ by Lemma~\ref{lem:op-seq}, as $v\in q'_2, q$.
    If the reader's next step is an $\SC$ or $\VL$ operation, since 
    $\can\parens{q_1}[\ell] \neq \can\parens{q'_1}[\ell]$ and $\can\parens{q_2}[\ell] \neq \can\parens{q'_2}[\ell]$, the value of cell $\ell$ changes since the reader's most recent $\LL$ in both executions. Hence, the $\SC$ or $\VL$ operation must fail in both $\alpha'_1$ and $\alpha'_2$.
    Otherwise,
    since $\alpha'_1$ ends at state $q'$ and $\alpha'_2$ ends at state $q$, and $\can\parens{q}[\ell] = \can\parens{q'}[\ell]$, 
    when the reader takes its step,
    it observes the same state and response for the memory cell $\ell$ that it accesses in both executions.
    Therefore, the reader cannot distinguish
    the new executions from one another.
\end{proof}

\begin{proof}
[Proof of Theorem~\ref{thm:waitfreelb}]
    We construct two
    arbitrarily long executions, in each of which a $\lookup$
    operation takes infinitely many steps but never returns.
    The construction uses Lemma~\ref{lem:extend-exec-wf}
    inductively:
    we begin with empty executions,
    $\alpha_1^0$ and $\alpha_2^0$.
    These executions trivially satisfy the conditions of Lemma~\ref{lem:extend-exec-wf},
    as the reader has yet to take a single step
    in any of them and is in the same local state in both executions.
    We repeatedly apply Lemma~\ref{lem:extend-exec-wf}
    to extend these executions,
    obtaining for each $k \geq 0$ two executions
     $\alpha_1^k$ and $\alpha_2^k$,
    such that $\alpha_1^k$ avoids $v$, $\alpha_2^k$ never avoids $v$ and the reader cannot distinguish the executions from one another.

    Suppose for the sake of contradiction
    that the reader returns a value $bool$
    at some point in $\alpha_1^k$.
    Then it returns the same value $bool$
    at some point in execution $\alpha_2^k$,
    as it cannot distinguish these executions,
    and its local state encodes
    all the steps it has taken,
    including whether it has 
    returned a value,
    and if so, what value.
    However, by Lemma~\ref{lem:diff-return}, the reader can only return the value \textit{false} in execution $\alpha_1^k$ and the value \textit{true} in execution $\alpha_2^k$, in contradiction.    
\end{proof}

\remove{
\begin{proof}
Consider a history-independent implementation with wait-free $\lookup$ operation that induces an assignment $\can$ that satisfies Property~\ref{prop:lb-core}.
For the purpose of the impossibility result, we assume, that the local state of a process $p$ contains the complete history of $p$'s invocations and responses. 
Two finite executions $\alpha_1$ and $\alpha_2$ are \emph{indistinguishable} to the reader, 
denoted $\alpha_1 \localind{r} \alpha_2$, 
if the reader is in the same state in the final configurations of $\alpha_1$ and $\alpha_2$.

Intuitively, an observer that reads one memory cell at a time cannot locally distinguish between a canonical memory representation indicating that $v$ is in the set and another one indicating that $v$ is not in the set.
This allows an adversary to construct two executions that are indistinguishable to a process executing a $\lookup$ operation, 
such that one execution only goes through states that include $v$,
while the other only goes through states that do not include $v$. 
In such case, the $\lookup$ operation cannot return either \textit{true} or \textit{false}, and therefore, never finishes. 

In more detail, fix an element $v \in \CalU$ such that for every $0 \leq \ell < b$ there are states $q, q'\in Q$ such that $v\in q$, $v'\notin q$ and $\can\parens{q}[\ell] = \can\parens{q'}[\ell]$.
We consider executions of the implementation with two processes, 
a ``reader'' process $r$ that executes a single $\lookup(v)$ operation and a ``writer" process $w$ that repeatedly executes $\insr$ and $\del$ operations.

We construct executions of the form 
$\alpha = S_1, r_1, S_2, r_2, \ldots, S_k, r_k$,
where $S_i$ is a (possibly empty) sequence of operations executed by the writer process, during which the reader process takes no steps,
and $r_i$ is a single step by the reader process.
We abuse notation to have the empty execution when $k=0$.

In any linearization of $\alpha$, the operations in the sequences $S_1, \ldots, S_k$ must be linearized in order, as they do not overlap.
Furthermore, the $\lookup(v)$ operation carried out by the reader does not change the state of the object.
Thus, the linearization of $\alpha$ ends with the object in state $q$, where $q$ is the state reached by applying the sequence of operations $S_1, \ldots, S_k$ from the initial state.
We abuse the terminology by saying that the execution ``ends at state $q$''.

We say that an execution $\alpha = S_1, r_1, S_2, r_2, \ldots, S_k, r_k$ \emph{avoids} (resp., \emph{never avoids}) $v$ if for every state $q$ traversed during the sequence of operations $S_1, S_2, \ldots, S_k$, 
not including the initial state, we have $v\notin q$ (resp., $v\in q$).
An empty execution $\alpha$ both avoids and never avoids $v$; 
this is fine for our purpose, because the reader 
only starts running after the first operation' sequence $S_1$.
Lemma~\ref{lem:diff-return} (stated and proved in the appendix) 
shows that if 
$\alpha$ avoids (resp., never avoids) $v\in \CalU$, 
then the $\lookup(v)$ operation can only return 
\textit{false} (resp., \textit{true}) at any point in $\alpha$.


To build an execution that avoids or never avoids $v$, we have to traverse between two states in a way that preserves the absence or presence of $v$ in the set.
This is done by the next lemma (proved in the appendix):

\begin{restatable}{lemma}{opseq}
        \label{lem:op-seq}
        Let $v\in \CalU$, for every two states $q_1,q_2\in Q$ such that $v\in q_2$ or $v\notin q_2$, there is a sequence of operations $\seq(q_1,q_2)$ that takes the set from state $q_1$ to state $q_2$ and goes through only states that include $v$ or do not include $v$ respectively, not including the starting state $q_1$.
\end{restatable}

We complete the proof by showing (Lemma~\ref{lem:extend-exec}, 
stated and proved in the appendix) for every $k \geq 0$:
Given two executions 
$\alpha_1 = S^1_1, r_1, \ldots, S^1_k, r_k$ and $\alpha_2 = S^2_1, r_1, \ldots, S^2_k, r_k$
such that $\alpha_1 \localind{r} \alpha_2$, 
$\alpha_1$ avoids $v$ and $\alpha_2$ never avoids $v$.
Then we have two executions 
$\alpha'_1 = S^1_1, r_1, \ldots, S^1_k, r_k, S^1_{k+1}, r_{k+1}$ that avoids $v$ and 
$\alpha'_2 =     S^2_1, r_1, \ldots, S^2_k, r_k, S^2_{k+1}, r_{k+1}$ that never avoids $v$,
such that $\alpha'_1 \localind{r} \alpha'_2$.
\ns{Since $\alpha'_1 \localind{r} \alpha'_2$, if the reader returns in one of the executions, it must also return the same answer in the other one. However, since $\alpha'_1$ avoids $v$ it must return \textit{false}, and since $\alpha'_2$ never avoids $v$ it must return \textit{true}, thus, the reader never return in both executions.}
%
\end{proof}}

\subsection{Natural Assignments}

We now consider \emph{natural assignments},
where each memory cell can hold either a hash of an element that is currently in the dictionary,
or $\bot$, indicating a vacancy
(it is permitted to store multiple copies of the same element in memory.)
The next definition formalizes the notion of natural assignments:

\begin{definition}
    An assignment $\can$ is \emph{natural} if for every state $q\in Q$:
    \begin{enumerate}
        \item For every $0\leq \ell <\ncell$, either $\can\parens{q}[\ell] = \bot$ or $\can\parens{q}[\ell] \in q$.
        \item For every $v\in q$ there is an $0\leq \ell < \ncell$ such that $\can\parens{q}[\ell] = v$.
    \end{enumerate}
\end{definition}

A \emph{natural assignment with $k$-bits metadata} is an assignment $\can$
such that for every $q\in Q$, $\can(q) = (\can_{\mathit{nat}}(q), \mathit{meta}(q))$, where $\can_{\mathit{nat}}$ is a natural assignment and $\mathit{meta} : Q \to \set{0,1}^k$ is an arbitrary mapping of $Q$ to $k$ bits of information.

The proof of theorem~\ref{thm:natural-assig} 
uses Property~\ref{prop:lb-core} and Property~\ref{prop:llsc}.
First,
we show that a natural assignment using $\ncell < \ndomain$ memory cells cannot avoid satisfying
Property~\ref{prop:lb-core}, and for sufficiently large $\ndomain$, the same holds for a natural assignment with $O(\log \ndomain)$-bit metadata using $\ncell < \sqrt{\ndomain}$ memory cells.
A \emph{$(\ndomain,\nelem)$-dictionary} is a set that contains at most $\nelem$ elements from a domain of size $\ndomain$.

\begin{lemma}
\label{lem:natural-assig}
    Consider a natural assignment with $k$-bits metadata to the state space of an $(\ndomain,\nelem)$-dictionary, $k\geq 0$, $\nelem \geq 2$ and $\ncell < \ndomain$.
    If $k = 0$, the assignment satisfies Property~\ref{prop:lb-core}. Furthermore, for  $k = \lceil \log \ndomain \rceil + c$, where $c$ is a constant, $\nelem > 3$ and $\ncell < \sqrt{\ndomain}$, 
    there exists a sufficiently large $\ndomain$ for which
    the assignment satisfies Property~\ref{prop:lb-core}.
\end{lemma}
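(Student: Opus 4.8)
The plan is to prove Lemma~\ref{lem:natural-assig} by a counting/pigeonhole argument, separately for the two cases $k=0$ and $k = \lceil \log \ndomain \rceil + c$.

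\textbf{The case $k = 0$.} First I would pick a fixed initial configuration $C_0$ (say the one whose canonical representation corresponds to the empty set) and fix an arbitrary element $v \in \calU$. The goal is to show that for \emph{every} cell $\ell$, there exist states $q \ni v$ and $q' \not\ni v$ with $\can(q)[\ell] = \can(q')[\ell]$. Suppose not: there is a cell $\ell$ such that every state containing $v$ and every state not containing $v$ disagree at cell $\ell$. By naturalness, $\can(q)[\ell] \in \{\bot\} \cup q$ for every $q$. Consider the two singleton-or-small states I have freedom to choose (here I use $\ncell < \ndomain$ and $\nelem \geq 2$): I want to find a state $q' \not\ni v$ whose cell $\ell$ holds some value $w$, and a state $q \ni v$ whose cell $\ell$ also holds $w$. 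The key observation is that the value stored at cell $\ell$ in state $q'$ is either $\bot$ or an element of $q'$; since there are $\ndomain > \ncell$ possible elements, and since we may freely add to or remove from a set any element other than $v$, I can always ``transplant'' whatever element $w \neq v$ sits in cell $\ell$ of some $v$-free state into a state that also contains $v$ — formally, take a $v$-containing state $q$ that also contains $w$ (possible since $\nelem \geq 2$) and argue that the assignment must store $w$ (or $\bot$) in cell $\ell$ for some such state. The cleanest way is probably: if cell $\ell$ ever holds $\bot$ in a $v$-containing state and also in a $v$-free state, we are done; otherwise cell $\ell$ always holds an actual element, and then a counting argument over the $\ndomain$ elements versus $\ncell$ cells forces a collision between a $v$-state and a non-$v$-state. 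I would make this precise by counting how many distinct values cell $\ell$ can take across all $v$-free states versus all $v$-containing states and showing the ranges must intersect.

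\textbf{The case $k = \lceil \log\ndomain\rceil + c$.} Here the metadata can encode a full element plus $O(1)$ extra bits, so the previous argument breaks: the metadata could ``announce'' whether $v$ is present. The plan is to count more carefully. Restrict attention to states of size exactly $2$ (using $\nelem > 3$ gives room); there are roughly $\binom{\ndomain}{2} = \Theta(\ndomain^2)$ such states, of which $\Theta(\ndomain)$ contain $v$. For each cell $\ell$, the pair $(\can_{\mathit{nat}}(q)[\ell], \mathit{meta}(q))$ ranges over at most $(\ndomain+1) \cdot 2^k = O(\ndomain^2)$ values — but crucially the \emph{natural} part at a single cell takes only $O(\ndomain)$ values and is constrained to lie in $q \cup \{\bot\}$, a set of size $3$. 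I want to show that for a random (or well-chosen) $v$-free $2$-state $q'$ and a well-chosen $v$-containing $2$-state $q$, the full cell contents agree. This should follow from a pigeonhole over the $\Theta(\ndomain^2)$ $v$-free $2$-states: since the cell content of such a state is determined by at most $3$ natural values times $2^k = O(\ndomain)$ metadata values, $= O(\ndomain)$ possibilities per state once we also note the natural value must be one of the two elements of the state, there are enough states that many share the same metadata string; then among states sharing a metadata string, the natural value at cell $\ell$ can be matched by picking a $v$-state containing that same element and argue the metadata must coincide for at least one such state — and here is where $\ncell < \sqrt\ndomain$ enters, bounding a union over all cells. I would aggregate over all $\ncell$ cells: the number of ``bad'' $v$-free states (those colliding with no $v$-state at some cell) can be bounded by $\ncell$ times a per-cell bound of $O(\ndomain)$-ish, which is $o(\ndomain^2)$ when $\ncell < \sqrt\ndomain$ and $\ndomain$ is large, leaving a non-bad state, which gives the property.

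\textbf{Main obstacle.} The genuinely delicate part is the second case: making the counting tight enough that $O(\log\ndomain)$ metadata bits do not suffice to defeat Property~\ref{prop:lb-core}, while $\ncell < \sqrt{\ndomain}$. The subtlety is that the metadata is an \emph{arbitrary} function of the state, so one cannot reason cell-by-cell independently — a single metadata block is shared across all cells. I expect the right move is to fix the metadata string first (pigeonhole over $\Theta(\ndomain^2)$ $2$-element $v$-free states into $2^k = O(\ndomain)$ metadata classes, yielding a class with $\Omega(\ndomain)$ states all sharing the same metadata), and only \emph{then} argue within that class that some cell content (natural value $+$ the now-fixed metadata) is matched by a $v$-containing state whose metadata also happens to be that string — which requires a second pigeonhole showing some $v$-containing $2$-state shares the metadata string too, using $\ncell < \sqrt\ndomain$ to guarantee enough room. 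I would also need to handle the $\bot$ entries and the tie-break/ordering conventions carefully, but those are routine once the main counting skeleton is in place. Throughout I would keep $C_0$ fixed as the empty-set configuration, which is legitimate since the lemma only asserts existence of a suitable initial configuration.
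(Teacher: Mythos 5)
There is a genuine gap, and it is in the quantifier structure. You fix an \emph{arbitrary} element $v$ and try to prove that Property~\ref{prop:lb-core} holds for that $v$; but that stronger statement is false, and no counting argument can rescue it. For $k=0$, consider a natural assignment that ``reserves'' cell $0$ for a particular element $v$: whenever $v\in q$ the canonical representation stores $v$ in cell $0$, and whenever $v\notin q$ naturalness guarantees cell $0$ never contains $v$. Then the set of values cell $0$ takes over $v$-containing states ($\{v\}$) is disjoint from the set it takes over $v$-free states, so your ``ranges must intersect'' step fails for this $v$ --- containing $w$ does not force the assignment to place $w$ (or $\bot$) in cell $\ell$. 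For the metadata case the failure is even starker: with $k=\lceil\log u\rceil+c$ bits, one of the spare bits can simply be the indicator of $v\in q$, so \emph{every} cell separates $v$-states from non-$v$-states for that $v$. The lemma only asserts that \emph{some} element satisfies the property (this is the paper's stated convention), and the proof must therefore \emph{select} $v$. The paper does this by bounding, for each fixed cell $\ell$, the number of ``bad'' elements that cell $\ell$ can fully distinguish: if $t+1$ elements were all bad at cell $\ell$, then the states supported on those $t+1$ elements split into $\sum_{l\le\min(t+1,n)}\binom{t+1}{l}$ membership-pattern classes with pairwise distinct cell-$\ell$ values, yet (by naturalness) their cell-$\ell$ contents range over at most $(t+2)2^{k}$ values --- a contradiction for suitable $t$. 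With at most $t$ bad elements per cell and $t\cdot\ncell<\ndomain$, a union bound over cells yields an element good for all cells simultaneously; $t=1$ gives the $k=0$ case and $t=\sqrt{\ndomain}$ gives the metadata case.

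Separately, your second-case pigeonhole does not close even for a well-chosen $v$: there are only $\Theta(\ndomain)$ $v$-containing $2$-states but $2^{k}=\Theta(\ndomain)$ metadata strings, so after fixing a popular metadata string among the $\Theta(\ndomain^{2})$ $v$-free $2$-states you cannot force any $v$-containing state to share it; and, as noted, if the metadata encodes $[v\in q]$ no such match exists for that $v$ at all. The per-cell bound on bad elements followed by the choice of $v$ is the missing idea, and it is exactly where the hypotheses $\ncell<\ndomain$ (resp.\ $\ncell<\sqrt{\ndomain}$) enter.
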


\begin{proof}
    We show the following claim: 

    \begin{claim}
    \label{clm:natural-assig1}
        If $\ncell < \ndomain / t$ such that $t < \frac{\sum_{\ell =0}^{\min(t+1,n)} {t+1 \choose \ell}}{2^k} - 2$, then $\can$ satisfies Property~\ref{prop:lb-core}.
    \end{claim}

    \begin{proof}
         Let $0\leq \ell < \ncell$ and assume that for $t+1$ distinct elements $v_1, \ldots, v_{t+1} \in \CalU$, for every $j\in [t+1]$ and all states $q,q'\in Q$ such that $v_j\in q$ and $v_j \notin q'$, 
    $\can\parens{q}[\ell] \neq \can\parens{q'}[\ell]$.

     We can partition the state set $Q$ according to the elements $v_1 ,\ldots, v_{t+1}$ as follows: The set $X_{i_1,\ldots i_{t+1}}$, where $i_j \in \set{0,1}$, $j \in [t+1]$,  and for at most $\min(t+1,\nelem)$
    locations $i_j = 1$, contains all the states $q\in Q$ such that $v_j \in q$ if and only if $i_j = 1$.
    The number of sets in this partition is $\sum_{l=0}^{\min(t+1,\nelem)} \binom{t+1}{l}$.
    Consider two sets $X_{i_1,\ldots i_{t+1}} \neq X_{i'_1,\ldots i'_{t+1}}$ and states $q \in X_{i_1,\ldots,i_{t+1}}$ and $q' \in X_{i'_1,\ldots,i'_{t+1}}$.
    As for some $j\in [t+1]$ $i_j \neq i'_j$, either $v_j\in q$ and $v_j \notin q'$ or $v_j\in q'$ and $v_j \notin q$, and $\can\parens{q}[\ell] \neq \can\parens{q'}[\ell]$.
    This implies that
    there are at least $\sum_{l=0}^{\min(t+1,\nelem)} \binom{t+1}{l}$ distinct values in the $\ell$-th entry of the canonical representations of all the states.

    Consider states that contain only elements of $\set{v_1, \ldots, v_{t+1}}$. For every such state $q$, since $\can$ is natural, $\can\parens{q}[\ell][0] \in \set{\bot, v_1, \ldots, v_{t+1}}$.
    Hence, the number of distinct values in the $\ell$-th memory cell for these states is at most $(t+2) 2^k$. However, by the lemma assumption, $(t+2) 2^k < \sum_{l=0}^{\min(t+1,\nelem)} \binom{t+1}{l}$, in contradiction.

    This shows that for every $0\leq \ell < \ncell$ there are at most $t$ elements $v \in \CalU$, where for all states $q,q'\in Q$ such that $v\in q$ and $v\notin q'$, $\can\parens{q}[\ell] \neq \can\parens{q'}[\ell]$. Thus, this inequality holds for at most $t \cdot \ncell < \ndomain$ elements over all memory cells and there is at least one element for which Property~\ref{prop:lb-core} holds.
    \end{proof}

    For $k = 0$, the inequality in Claim~\ref{clm:natural-assig1} holds for $t = 1$, which proves the first part of the lemma.
    Assume that $k =\lceil \log \ndomain \rceil + c$ for a constant $c$.
    For $\nelem\geq 4$ and $t\geq 3$, we get that
    $$\sum_{\ell =0}^{\min(t+1,\nelem)} {t+1 \choose \ell} > {t+1 \choose 4} > \frac{t^4}{24} .$$ 
     By setting $t = \sqrt{\ndomain}$, we get
     $$\frac{\sum_{\ell =0}^{\min(t+1,n)} {t+1 \choose \ell}}{2^k} - 2 >  \frac{t^4}{24 \cdot 2^k} = \frac{\ndomain^2}{24 \cdot 2^{\lceil \log \ndomain \rceil} \cdot 2^c} - 2.$$ Since $c$ is a constant and $\ndomain^2/2^{\lceil \log \ndomain \rceil} = \Theta(\ndomain)$, for a sufficiently large $\ndomain$, the following inequality holds: $\sqrt{\ndomain} < {\ndomain^2}/{24 \cdot 2^{\lceil \log \ndomain \rceil} \cdot 2^c} - 2$. Thus, by Claim~\ref{clm:natural-assig1},
     the assignment satisfies Property~\ref{prop:lb-core} for $\ncell < \sqrt{\ndomain} = \ndomain/t$. \qedhere
\end{proof}

Next, we show that if the dictionary is allowed to contain $\ncell$ elements,
then Property~\ref{prop:lb-core} implies Property~\ref{prop:llsc} for the same element.

\begin{lemma}
\label{lem:natural-assig-prop2}
    Consider a natural assignment with $k$-bits metadata $\can$ to the state space of a $(\ndomain,\ncell)$-dictionary.
    If Property~\ref{prop:lb-core} holds for element $v$ then Property~\ref{prop:llsc} also holds for element $v$.
\end{lemma}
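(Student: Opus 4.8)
The plan is to use the extra "room" provided by the assumption that the dictionary can store up to $\ncell$ elements (not just $\nelem < \ncell$) to convert a single fixed witness for Property~\ref{prop:lb-core} into the pair of witnesses that Property~\ref{prop:llsc} demands. Recall that Property~\ref{prop:lb-core} for $v$ gives us, for every cell $\ell$, states $q$ (with $v \in q$) and $q'$ (with $v \notin q'$) such that $\can(q)[\ell] = \can(q')[\ell]$; and Property~\ref{prop:llsc} asks, for every cell $\ell$, for states $q_1, q_1'$ (both containing $v$) with $\can(q_1)[\ell] \neq \can(q_1')[\ell]$, and states $q_2, q_2'$ (both not containing $v$) with $\can(q_2)[\ell] \neq \can(q_2')[\ell]$. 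So the task is: given that some cell value is "ambiguous" about $v$, produce two $v$-containing states whose $\ell$-th cells differ, and two $v$-avoiding states whose $\ell$-th cells differ.

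First I would fix a cell $\ell$ and take the states $q, q'$ from Property~\ref{prop:lb-core}, with $v \in q$, $v \notin q'$, and $w := \can(q)[\ell] = \can(q')[\ell]$. The key idea is that $w$ is a "small" object: since $\can$ is a natural assignment with $k$-bit metadata, $w = (\can_{\mathit{nat}}(\cdot)[\ell], \mathit{meta}(\cdot))$, and its natural part is either $\bot$ or a single element of the universe. In either case, because $\ncell < \ndomain$ there is an element $u \in \calU$ with $u \neq v$ and $u$ not equal to the natural part of $w$ (if the natural part is an element, avoid that one element and $v$; if it is $\bot$, just avoid $v$). Now I build a new $v$-containing state $q_1 := (q \setminus X) \cup \{u\}$ for a suitable choice that keeps the size at most $\ncell$ — concretely, start from $q$, and if $|q| = \ncell$ remove one element other than $v$ before adding $u$; this is where the "$\ncell$-dictionary" hypothesis is used. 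I then argue $\can(q_1)[\ell] \neq w$: the natural part of $\can(q_1)[\ell]$ must be either $\bot$ or an element of $q_1$, and I have engineered things so the relevant value is forced to differ — the cleanest route is to choose $u$ so that it is not in $q$ and then observe that in the canonical representation of $q_1$, element $u$ must appear in some cell (naturalness, part 2); by iterating the construction over all cells and a large enough supply of fresh elements $u$ (available since $\ncell < \ndomain$, so there are always unused universe elements), at least one of the resulting $v$-containing states has $\ell$-th cell different from $\can(q)[\ell]$. Setting $q_1 := q$ and $q_1'$ := that state gives the first half.

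For the second half I do the symmetric thing starting from $q'$ (which does not contain $v$): build $q_2' $ by swapping in a fresh universe element $u' \neq v$ not currently represented in cell $\ell$, keeping $|q_2'| \leq \ncell$, and keeping $v \notin q_2'$. The same naturalness argument forces $\can(q_2')[\ell] \neq \can(q')[\ell] = w$, and we set $q_2 := q'$. This yields states $q_2, q_2'$ both avoiding $v$ with $\can(q_2)[\ell] \neq \can(q_2')[\ell]$, completing Property~\ref{prop:llsc} for $v$ at cell $\ell$; since $\ell$ was arbitrary, we are done.

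The main obstacle I anticipate is making the "fresh element forces the cell to change" step fully rigorous: merely adding an unrepresented element $u$ to a state does not by itself guarantee that cell $\ell$ specifically changes — $u$ could be placed in some other cell. The fix is to exploit that there are only $\ncell$ cells but $\ndomain > \ncell$ candidate elements, so by a counting/pigeonhole argument over a family of modified states $\{q \cup \{u\} : u \text{ fresh}\}$ (adjusting sizes to stay $\leq \ncell$), the values $\can(\cdot)[\ell]$ cannot all equal $w$ while naturalness forces each fresh $u$ to be represented somewhere; pushing this through — possibly by choosing among the fresh elements one whose only available "home" under naturalness is cell $\ell$, or by a direct count of how many distinct states can share the cell value $w$ — is the delicate part. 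I would structure the final write-up around a small counting claim of exactly this shape, mirroring the style of Claim~\ref{clm:natural-assig1}.
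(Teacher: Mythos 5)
Your write-up has a genuine gap at exactly the point you flag yourself: nothing in the proposal actually forces cell $\ell$ to change. A natural assignment is free to keep $\can(\cdot)[\ell]$ fixed when you add or swap in a ``fresh'' element $u$, since $u$ can be placed in any cell and need not displace the current occupant of cell $\ell$; this remains true for every member of your family $\set{q \cup \set{u} : u \text{ fresh}}$ as long as the table is not full, so the pigeonhole/counting claim you defer to is not just delicate but, as stated, false as a mechanism --- there is no contradiction in all those states sharing the value $w$ at cell $\ell$. The leverage has to come from the \emph{extreme} states, which your sketch never invokes: for a state of size exactly $\ncell$, naturalness (every element of the state must appear somewhere, and only elements of the state or $\bot$ may appear) forces all $\ncell$ cells to hold distinct elements, so no cell is $\bot$; while for the states $\emptyset$ and $\set{v}$ every cell is $\bot$ (resp.\ $\bot$ or $v$). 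This is where the $(\ndomain,\ncell)$-dictionary hypothesis really enters, not in ``making room'' to add $u$.

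Concretely, the paper argues by contraposition: if Property~\ref{prop:llsc} fails for $v$ at cell $\ell$, then either all $v$-containing states or all $v$-avoiding states share a single natural value $x$ there. In the $v$-avoiding case, $q=\emptyset$ forces $x=\bot$, but a full state of $\ncell$ elements avoiding $v$ (possible since $\ncell<\ndomain$) has no $\bot$ cell --- contradiction. In the $v$-containing case, $q=\set{v}$ forces $x\in\set{v,\bot}$, a full state containing $v$ rules out $x=\bot$, and $x=v$ contradicts Property~\ref{prop:lb-core}, since no $v$-avoiding state can have $v$ in cell $\ell$. Your direct approach can be repaired along the same lines (e.g., pair $\emptyset$ with a full $v$-avoiding state, and pair $\set{v}$ or a Property~\ref{prop:lb-core} witness with a full $v$-containing state), but as written the central ``fresh element forces the cell to change'' step does not go through and the counting claim you would need is not the right tool.
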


\begin{proof}
    Let element $v$ be
    some element for which Property~\ref{prop:llsc} does not hold.
    This implies that Property~\ref{prop:llsc} also does not hold for the underlying natural assignment $\can_{\mathit{nat}}$.
    We show that this implies that Property~\ref{prop:lb-core} also does not hold for element $v$, which proves the lemma.
    Since Property~\ref{prop:llsc} does not hold, 
    for some memory cell $\ell$,
    either there is some value $x$
    such that for all $q \in Q$ where $v \in q$ we have $\can_{\mathit{nat}}\parens{q}[\ell] = x$,
    or there is some value $x$ such that for all $q \in Q$ where $v \notin q$ we have $\can_{\mathit{nat}}\parens{q}[\ell] = x$.

    Consider the first case, where for all states $q$ that contain $v$ we have $\can_{\mathit{nat}}\parens{q}[\ell] = x$.
    If $x = v$, then Property~\ref{prop:lb-core} is violated for $v$: in every state $q'$ such that $v \notin q'$ we have $\can_{\mathit{nat}}\parens{q'}[\ell] \neq v$ (by definition of natural assignments), and in every state $q$ such that $v \in q$ we have $\can_{\mathit{nat}}\parens{q}[\ell] = v$,
    so in every pair of states $q, q'$ such that $v \in q$ but $v \notin q'$ we have $\can_{\mathit{nat}}\parens{q}[\ell] \neq \can\parens{q'}[\ell]$, violating Property~\ref{prop:lb-core}.
    The only other possible value for $x$ is $\bot$,
    because there exists a state, $q = \set{v}$, where every memory cell stores either $v$ or $\bot$, and since $v \in q$, we have $\can_{\mathit{nat}}\parens{q}[\ell] = x$.
    But this also cannot be: let $q'$ be a set of $\ncell$ elements, including $v$. In $\can_{\mathit{nat}}\parens{q}$, every memory cell must store some distinct element, and in particular we cannot have $\can_{\mathit{nat}}\parens{q}[\ell] = \bot$.

    Now suppose that for all states $q$ that do \emph{not} contain $v$ we have $\can_{\mathit{nat}}\parens{q}[\ell] = x$ for some value $x$.
    Then we must have $x = \bot$, because in state $\emptyset$,
    which does not include $v$, all base elements store $\bot$.
    But on the other hand, there exists a state $q'$ containing $\ncell$ elements distinct from $v$ (since $\ncell < u$),
    and in $\can_{\mathit{nat}}\parens{q'}$, we cannot have any cells that store $\bot$.
    Therefore this case is impossible.
\end{proof}

Theorem~\ref{thm:waitfreelb}, which we prove below,
together  with
Lemma~\ref{lem:natural-assig-prop2} and Lemma~\ref{lem:natural-assig}
prove the following theorem:
\begin{theorem}[Theorem~\ref{thm:natural-assig}, stated formally.]
\label{thm:natural-assig-formal}
Given any collection of memory cells supporting any read-modify-write shared memory primitive operations
as well as $\LL/\VL/\SC$,
    there is a no wait-free SQHI implementation of a $(\ndomain, \ncell)$-dictionary from $\ncell < \ndomain$ memory cells whose canonical memory representation induces a natural assignment.
    Furthermore, for a sufficiently large $\ndomain$, the same holds for implementations from $3 < \ncell < \sqrt{\ndomain}$ memory cells,
    whose canonical memory representation induces a natural assignment with $k$-bits metadata such that $k = \lceil \log \ndomain \rceil + c$ for some constant $c$.
\end{theorem}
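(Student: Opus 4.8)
The plan is to assemble the theorem from the three results that precede it. First I would recall that, in the fixed-randomness setting adopted in this paper, a history-independent implementation is equivalent to one in which every logical state $q$ of the dictionary has a unique canonical memory representation $\can(q)$ fixed at initialization; thus it suffices to derive a contradiction from the existence of a wait-free SQHI implementation of a $(\ndomain,\ncell)$-dictionary whose induced canonical assignment $\can$ is natural (in the first case, with $\ncell < \ndomain$) or natural with $k$-bit metadata for $k = \lceil \log \ndomain \rceil + c$ (in the second case, with $3 < \ncell < \sqrt{\ndomain}$ and $\ndomain$ sufficiently large). In both cases I instantiate $\nelem = \ncell$; in the first case $\ncell \geq 2$ may be assumed, and in the second it is forced by $\ncell > 3$.

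Next I would invoke Lemma~\ref{lem:natural-assig}: in the first regime it states directly that $\can$ satisfies Property~\ref{prop:lb-core}, and in the second regime (for sufficiently large $\ndomain$) it again gives that $\can$ satisfies Property~\ref{prop:lb-core}. Let $v \in \calU$ be the element witnessing Property~\ref{prop:lb-core}. Because the state space is that of a $(\ndomain,\ncell)$-dictionary --- so there are states containing $\ncell$ distinct elements --- Lemma~\ref{lem:natural-assig-prop2} applies and yields that Property~\ref{prop:llsc} holds for the very same element $v$. Hence $\can$ satisfies Property~\ref{prop:lb-core} and Property~\ref{prop:llsc} simultaneously for $v$, i.e., the implementation satisfies both properties in the sense defined earlier.

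Finally I would apply Theorem~\ref{thm:waitfreelb}. Its hypotheses are met: the allowed primitives there are read-modify-write primitives together with $\LL/\SC$, which matches the $\LL/\VL/\SC$-plus-read-modify-write primitive set in the statement; the $\lookup$ operation is wait-free by assumption; and $\insr$, $\del$ are obstruction-free, since wait-freedom of every operation implies obstruction-freedom (an operation that completes in finitely many of its own steps under arbitrary concurrency completes in particular when run solo). Theorem~\ref{thm:waitfreelb} then contradicts the existence of a SQHI implementation whose canonical assignment satisfies both properties, which completes the proof. I do not expect a genuine obstacle here --- the argument is purely a composition of the three results --- and the only points needing care are checking that the parameter ranges in Lemma~\ref{lem:natural-assig} line up with the two cases, that Lemma~\ref{lem:natural-assig-prop2}'s ``the dictionary can contain $\ncell$ elements'' hypothesis is exactly the $(\ndomain,\ncell)$-dictionary assumption, and that wait-freedom subsumes the obstruction-freedom required by Theorem~\ref{thm:waitfreelb}.
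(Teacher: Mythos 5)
Your proposal is correct and matches the paper's own argument, which proves Theorem~\ref{thm:natural-assig-formal} precisely by composing Lemma~\ref{lem:natural-assig} (to get Property~\ref{prop:lb-core}), Lemma~\ref{lem:natural-assig-prop2} (to upgrade it to Property~\ref{prop:llsc} for the same element, using that the dictionary can hold $\ncell$ elements), and Theorem~\ref{thm:waitfreelb}. The parameter checks and the observation that wait-freedom implies the required obstruction-freedom are exactly the routine verifications the paper leaves implicit.
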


\subsection{Designing a History-Independent  Linear-Probing Hash Table}
\label{app:linearprob}

In this section, we consider a general linear probing hashing scheme with an arbitrary \emph{total priority ordering} for each cell $0\leq i < \ncell$, which can be compared using the binary relation $<_{p_i}$.
Using the linear probing scheme of~\cite{NaorTe01, BlellochGo07}, 
the hash function $h$ and the collection of priorities determine a unique canonical representation for each state in $Q$,
let $\can$ be this assignment.
(See the description for Robin Hood hashing in Section~\ref{sec:hash_overview}, replacing the specific priorities with general ones.)
For simplicity, we assume that the maximal number of elements that can be stored in the table is $\ncell - 1$, that is, it is not fully occupied.
In previous work, both the insert and delete operations of this scheme assume there is an empty cell in the table~\cite{BlellochGo07}.
Specifically for Robin Hood hashing, we showed that deletion also works when there is no empty cell, and the propagation of the operation ends before reaching the initial hash location of the deleted element (Lemma~\ref{lem:finishprop}).
In this section, we prove that the priority scheme in Robin Hood hashing is the only scheme that has the key property on which our algorithm is based.
First, we define \emph{linear} assignments, which are assignments where elements move at most one cell during an insertion or deletion.
We show that if an assignment is not linear, it cannot satisfy the property.
Finally, we show that linear assignments and Robin Hood hashing are equivalent.

\subsubsection{Linear Assignments}

Our algorithm requires that for every $q\in Q$ and $v \notin q$, we can determine that $v$ is not part of $q$ by only reading two consecutive cells in $\can(q)$.
We show that the underlying assignment $\can$ must be 
\emph{linear} to satisfy this property. In a linear assignment, 
for every $q\in Q$ and $v \notin q$, 
the position of every $v' \in q$ moves down at most one cell in 
$\can\parens{q\cup \set{v}}$ compared to $\can\parens{q}$.

The next lemma gives an equivalent definition to linear assignments that considers deletion instead of insertion.
\begin{lemma}
\label{lem:linassigeq}
    Consider an assignment $\can$, the following two statements are equivalent:
    \begin{enumerate}
        \item Let $q\in Q$, $q \neq \emptyset$, and $v \in q$, the position of every $v' \in q \setminus \set{v}$ moves up at most one cell in $\can\parens{q\setminus \set{v}}$ compared to $\can\parens{q}$. 
        \item Let $q\in Q$ and $v \notin q$, the position of every $v' \in q$ moves down at most one cell in $\can\parens{q\cup \set{v}}$ compared to $\can\parens{q}$. 
    \end{enumerate}
\end{lemma}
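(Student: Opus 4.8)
The two statements are the same statement read in two directions, connected by the substitution $\hat q = q\cup\set{v}$ (equivalently $q = \hat q\setminus\set{v}$). This is a bijection between the pairs $(q,v)$ with $q\in Q$, $v\notin q$ (and $q\cup\set{v}\in Q$) quantified in statement~2 and the pairs $(\hat q,v)$ with $\hat q\in Q$, $\hat q\neq\emptyset$, $v\in\hat q$ quantified in statement~1: the map $(\hat q,v)\mapsto(\hat q\setminus\set{v},v)$ and its inverse $(q,v)\mapsto(q\cup\set{v},v)$ are well defined since $Q$ consists of all subsets of $\calU$ of size at most $\ncell-1$. So the plan is to show that, after this substitution, ``moves up at most one cell'' and ``moves down at most one cell'' express the very same constraint on $\can$.

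\textbf{The monotonicity ingredient.} For a state $q$ and $v'\in q$, write $\pos_q(v')$ for the rank of the cell occupied by $v'$ in $\can(q)$ along $v'$'s probe sequence (well defined because tables are never full). The key fact I would establish is: \emph{inserting an element moves the elements of $q$ only forward, and deleting an element moves the remaining elements only backward}, i.e.\ for every $q$, $v\notin q$, $v'\in q$ we have $\pos_{q\cup\set{v}}(v')\ge\pos_q(v')$, and dually for every $q\neq\emptyset$, $v\in q$, $v'\in q\setminus\set{v}$ we have $\pos_{q\setminus\set{v}}(v')\le\pos_q(v')$. To prove this I would run the generic insertion (resp.\ deletion) procedure of the scheme --- exactly the one sketched in Section~\ref{sec:hash_overview}, but with the arbitrary priorities $<_{p_i}$ in place of the Robin Hood ones --- starting from $\can(q)$. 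The insertion procedure only ever displaces elements forward, in a chain of displacements that ends at the first empty cell of a run, so no element's rank decreases; the deletion procedure only ever shifts elements backward toward their hash locations, so no element's rank increases. Each procedure preserves the generalized ordering invariant, so by uniqueness of the canonical representation the result is exactly $\can(q\cup\set{v})$ (resp.\ $\can(q\setminus\set{v})$), which yields the inequalities. Note that these two inequalities are themselves each other under the substitution above, so it suffices to prove one of them.

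\textbf{Putting it together.} Granting monotonicity, statement~1 says that for all $\hat q\neq\emptyset$, $v\in\hat q$, $v'\in\hat q\setminus\set{v}$ one has $\pos_{\hat q\setminus\set{v}}(v')\ge\pos_{\hat q}(v')-1$, which combined with $\pos_{\hat q\setminus\set{v}}(v')\le\pos_{\hat q}(v')$ is equivalent to $\pos_{\hat q}(v')-1\le\pos_{\hat q\setminus\set{v}}(v')\le\pos_{\hat q}(v')$. Statement~2 says that for all $q$, $v\notin q$, $v'\in q$ one has $\pos_{q\cup\set{v}}(v')\le\pos_q(v')+1$, which combined with $\pos_{q\cup\set{v}}(v')\ge\pos_q(v')$ is equivalent to $\pos_q(v')\le\pos_{q\cup\set{v}}(v')\le\pos_q(v')+1$. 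Substituting $\hat q=q\cup\set{v}$ turns the second characterization into the first, so the two statements are equivalent.

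\textbf{Main obstacle.} The only nonroutine part is making the monotonicity claim fully rigorous for an \emph{arbitrary} total priority ordering rather than the specific Robin Hood one: one must verify that the generic insert/delete procedures are well defined, terminate (using that there is always a free cell, guaranteed by $|q|\le\ncell-1$), maintain the generalized ordering invariant, and move elements in only one direction along their probe sequences. This is where all the content of the lemma sits; once it is in hand, the equivalence is the trivial change of variables described above.
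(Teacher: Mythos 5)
Your change of variables is exactly the paper's proof, and it already finishes the job on its own: for a fixed element $v'$, write $r_1$ for its cell in $\can\parens{q}$ and $r_2$ for its cell in $\can\parens{q\cup\set{v}}$. ``$v'$ moves up at most one cell in $\can\parens{q}$ compared to $\can\parens{q\cup\set{v}}$'' and ``$v'$ moves down at most one cell in $\can\parens{q\cup\set{v}}$ compared to $\can\parens{q}$'' are the same constraint on the pair $(r_1,r_2)$ --- and this is so under either reading of the phrase, as a bound on signed displacement or as ``moves only in that direction, by at most one cell.'' Hence statement~1 instantiated at the state $q\cup\set{v}$ with element $v$ is literally statement~2 for the pair $(q,v)$, and conversely; the paper's proof consists of exactly these two re-readings and nothing more.

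Consequently, the monotonicity ingredient you introduce (insertions only push elements forward along their probe sequences, deletions only pull them backward) is not needed for this lemma, and it is also the only part of your argument that is not actually proved --- you flag it yourself as the main obstacle. The claim is true for arbitrary per-cell priorities, and your plan for it (run the generic displacement procedure, check that each cell's occupant only increases in that cell's priority order so the ordering invariant is preserved, then invoke uniqueness of the canonical representation) can be made rigorous; the paper tacitly relies on the same picture later, e.g.\ in the proof of Lemma~\ref{lem:nonlinearnoimpl}, where a deletion produces a chain of elements that move only up. But routing the equivalence through it turns a two-line observation into an argument that still owes its key lemma, and your closing remark that ``this is where all the content of the lemma sits'' is off the mark: the lemma has no content beyond the substitution you describe in your first paragraph, so dropping the monotonicity detour both completes and simplifies your proof.
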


\begin{proof}
    Assume that for every $q\in Q$, $q \neq \emptyset$, and $v \in q$, the position of every $v' \in q \setminus \set{v}$ has moved up at most one cell in $\can\parens{q}$ compared to $\can\parens{q\setminus \set{v}}$. 
    Let $q\in Q$ and $v \notin q$, by the assumption the position of every $v' \in q$ has moved up at most one cell in $\can\parens{q}$ compared to $\can\parens{q\cup \set{v}}$. This implies that every $v' \in q$ has moved down at most one cell in $\can\parens{q\cup \set{v}}$ compared to $\can\parens{q}$.

     Assume that for every $q\in Q$ and $v \notin q$, the position of every $v' \in q$ has moved down at most one cell in $\can\parens{q\cup \set{v}}$ compared to $\can\parens{q}$.
     Let $q\in Q$, $q \neq \emptyset$, and $v \in q$, by the assumption the position of every $v' \in Q$ has moved down at most one cell in $\can\parens{q}$ compared to $\can\parens{q\setminus \set{v}}$. This implies that every $v' \in q$ has moved up at most one cell in $\can\parens{q\setminus \set{v}}$ compared to $\can\parens{q}$.
\end{proof}

We prove that if $\can$ is not linear, then for some $q\in Q$ such that $v\notin q$, we cannot determine that $v$ is not in $q$ only by reading two consecutive cells in $\can(q)$. 
This is because the values of any two consecutive cells in $\can(q)$ are the same as those in $\can(q')$ for some state $q'$ where $v\in q'$.
Intuitively, if a deletion causes an element to move two cells up compared to its initial location, an assignment with a lookahead only to the next cell will not detect this deletion in the cells in between this change (except for the last cell before the initial location of the element).
This allows us to ``hide'' this element in different parts of the table, making a part of the table look identical to a table without this element.
Let $\dist(j,i) = \parens{i - j} \bmod \ncell$, $0\leq i,j < \ncell$, be the distance of position $i$ from starting position $j$,
and let $\pos(q,v)$, $v\in q$, be the position of element $v$ in $\can\parens{q}$.

\begin{lemma}
\label{lem:nonlinearnoimpl}
    If the assignment $\can$ is not linear, 
    then there is state $q\in Q$ and element $v \notin q$ such that for every $0 \leq i < \ncell$ there is $q'\in Q$ such that $v\in q'$ and $\can(q)[i] = \can(q')[i]$ and $\can(q)[i + 1] = \can(q')[i + 1]$.
\end{lemma}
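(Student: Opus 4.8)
The plan is to start from the failure of linearity and extract a concrete ``bad'' state. By Lemma~\ref{lem:linassigeq}, linearity fails for insertion if and only if it fails for deletion, so I would work with the deletion formulation: there exist a state $q_0 \in Q$, an element $v \in q_0$, and some element $w \in q_0 \setminus \set{v}$ whose position moves up by at least two cells in $\can(q_0 \setminus \set{v})$ compared to $\can(q_0)$. Set $q = q_0 \setminus \set{v}$, so $v \notin q$, and $q_0 = q \cup \set{v}$; let $p_v = \pos(q_0, v)$, and let $w$ be the element witnessing the non-linearity, with $p = \pos(q_0, w)$ and $p' = \pos(q, w)$, where $\dist(h(w), p) \geq \dist(h(w), p') + 2$, i.e.\ $w$ jumps backward (toward $h(w)$) by at least two positions when $v$ is removed. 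The intuition is that the cells strictly between $p'$ and $p$ in $\can(q_0)$ form a ``run segment'' that, after deleting $v$, gets shifted one cell toward $h(w)$; so the segment looks identical whether $v$ is present (shifted right by one) or absent (shifted left by one), as long as we only ever see a single-cell lookahead, which cannot reveal the shift except right at the seam.

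The key construction is a family of states $\set{q'_i}$ — one for each cell index $i$ — such that each $q'_i$ contains $v$ but agrees with $q$ on cells $i$ and $i+1$. First I would handle the cells that lie ``outside'' the region affected by $v$: by the ordering/canonical structure, deleting $v$ only changes cells in the range of the probe sequences that run through $\pos(q_0,v)$; for a cell $i$ outside this affected region, $\can(q)[i] = \can(q_0)[i]$ and likewise for $i+1$ (possibly after a small case check at the boundary of the region), so we can simply take $q'_i = q_0$. Next, for cells $i$ inside the affected region, I would use the two-cell gap created by $w$'s double jump: the region between $p'$ and $p$ is a stretch of consecutive occupied cells in both $\can(q)$ and $\can(q_0)$, and because $w$ (and the elements displaced along with it) shift by exactly one when $v$ is removed, for each such cell index $i$ there is a ``rotation'' of which elements sit where — formalized as a state $q'_i$ obtained by moving $v$ to a different spot within this stretch (or equivalently re-choosing which displaced element absorbs the slack), such that $v \in q'_i$ and the pair $(\can(q'_i)[i], \can(q'_i)[i+1])$ equals $(\can(q)[i], \can(q)[i+1])$. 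The double-jump hypothesis is exactly what guarantees there is ``room'' to place $v$ somewhere that leaves both cell $i$ and cell $i+1$ looking like the $v$-free configuration; a single-cell jump would not leave enough room, which is why linearity is the precise threshold.

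The main obstacle I anticipate is making the middle case rigorous: I need to argue carefully about how the canonical representation of a linear-probing scheme with arbitrary per-cell priorities behaves under inserting/deleting $v$ inside a run, and in particular that the ``affected stretch'' between $p'$ and $p$ really does just translate by one cell and that $v$ can be reinserted at a controlled position. This requires invoking the ordering invariant (Invariant~\ref{inv:order-invar}) and the uniqueness of the canonical representation to pin down exactly where each element lands, and then doing a position-by-position comparison of $\can(q)$, $\can(q_0)$, and the candidate $\can(q'_i)$. I would organize this as: (i) identify the affected stretch and show it is a contiguous block of occupied cells in both $\can(q)$ and $\can(q_0)$; (ii) show removing $v$ shifts this block by one toward $h(w)$; (iii) for each target cell index $i$ in the block, define $q'_i$ by relocating $v$ within the block so that the two-cell window at $i$ matches $\can(q)$, using the double-jump slack; (iv) for $i$ outside the block, take $q'_i = q_0$ and verify agreement, with a short separate check for the one or two seam cells at the block's ends. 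Steps (iii) and the seam check in (iv) are where essentially all the work lies; everything else is bookkeeping with $\dist$ and $\pos$.
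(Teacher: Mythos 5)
There is a genuine gap, and it sits exactly where you predicted ``essentially all the work lies.'' Your step (iii) proposes to obtain $q'_i$ ``by moving $v$ to a different spot within this stretch (or equivalently re-choosing which displaced element absorbs the slack),'' but the canonical representation is a function of the set alone: you cannot relocate $v$ without changing the set, and you never say which sets to use. Worse, your choice of witness pair --- hiding the \emph{deleted} element $v$ and taking the ambient state to be $q_0\setminus\set{v}$ --- can fail outright for a general (non-linear) priority scheme. If, say, $v$ has the highest priority at cell $h(v)$ among all elements of $\CalU$, then by the ordering invariant \emph{every} state containing $v$ stores $v$ in cell $h(v)$; yet in $\can(q_0\setminus\set{v})$ that cell holds a different value, so for $i=h(v)$ no state $q'_i\ni v$ can agree on cells $i,i+1$. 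Such a scheme can still be non-linear (non-linearity is a property of how the deletion chain behaves downstream), so your witness is not merely unproven but wrong in general.

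The paper's proof avoids this by hiding a different element: not the deleted element $v=v_0$, but the element $v_i=v'$ that jumps by at least two. The ambient state is $q\setminus\set{v_i}$, and the witnesses are just three explicit sets, each containing $v_i$: the full state $q$, the state $q\setminus\set{v_0}$ (whose deletion chain reproduces, strictly after position $\pos(q,v_{i-1})$, the shift caused by removing $v_i$ --- this is precisely where the gap $\dist(\pos(q,v_{i-1}),\pos(q,v_i))>1$ is needed), and the state $q\setminus\set{u}$, where $u$ is the element adjacent to $v_i$'s position, whose deletion reproduces the tail shift from $\pos(q,v_i)$ onward while keeping $v_i$ physically present. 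Every two-cell window of $\can(q\setminus\set{v_i})$ then coincides with the same window of one of these three representations. So the repair is not a matter of tightening your seam analysis: you need to switch which element is hidden and replace the vague ``relocation'' by concrete single-element deletions of this kind. Also note, as a smaller point, that the affected block does not ``translate by one'' as a whole --- only the chain elements $v_1,\ldots,v_k$ move, interleaved with stationary elements --- which is another reason the window-matching argument must be done chainwise rather than by a block shift.
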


\begin{proof}
    Since $\can$ is not linear, there are $q\in Q$, $q \neq \emptyset$ and $v \in q$, and element $v' \in q \setminus \set{v}$ that moves up at least two cells in $\can\parens{q}$ compared to $\can\parens{q\setminus \set{v}}$.
    Deleting element $v$ from $q$ results in a chain of elements that move up along the table, $v = v_0, \ldots, v_k$, $k\geq 1$.
    Element $v_{j-1}$ is replaced with element $v_j$, $1\leq j\leq k$, and the value $v_{k+1} = \bot$ is placed in the cell where the last element $v_k$ was.
    By the assumption, element $v'$ is part of the chain, hence, $v_i = v'$ for $1\leq i \leq k$ and 
    $\dist(\pos(q,v_{i-1}), \pos(q,v_{i})) > 1$.
    For any element $u\in q$ such that $\pos(q,v_{j-1})< \pos(q,u) <\pos(q,v_{j})$, i.e., $u$ is placed in between $v_{j-1}$ and $v_j$ in $\can(q)$, it must hold that $\rank(u,\pos(q,v_{j-1})) > \rank(u, \pos(q,u))$, hence, such element can only possibly move because of a deletion of an element between these positions. 

    Since $|q|\leq \ncell-1$, there is an empty cell in the $\can(q)$. This implies the deletion of element $v_j$ from $q$, $1\leq j\leq k$ results in the deletion chain $v_j, \ldots, v_k$.
    Let $u$ be the element in position $\pos(q,v_{i}) + 1$ in $\can(q)$ and consider the deletion of this element from $q$. In $\can\parens{q\setminus\set{u}}$, the position of $v_i$ is $\pos(q,v_{i}) + 1$ and the position of $v_{j+1}$ for $i\leq j \leq k$ is $\pos(q,v_{j})$. The position of all other elements remain the same in $\can\parens{q\setminus \set{u}}$ as in $\can\parens{q}$.

    Consider the deletion of $v_i$ from $q$. In $\can\parens{q\setminus\set{v_i}}$, 
    the position of $v_{j+1}$ is $\pos(q,v_{j})$, $i\leq j \leq k$, and the position of all other elements remain the same in $\can\parens{q\setminus \set{v_i}}$ as in $\can\parens{q}$.
    Let $\canla$ be the assignment $\can$, where each cell also includes the value of the next cell. 
    We divide the table into three parts, in each part $\canla\parens{q\setminus\set{v_i}}$ is equal to $\canla\parens{q}$, $\canla\parens{q\setminus\set{v_0}}$ or $\canla\parens{q\setminus\set{u}}$ (see Figure~\ref{fig:nonlinearnoimpl}):
    \begin{enumerate}
        \item Between $\pos(q,v_{0})$ and $\pos(q,v_{i-1}) + 1$, $\can\parens{q\setminus\set{v_i}}$ is equal to $\can\parens{q}$, which implies that between $\pos(q,v_{0})$ and $\pos(q,v_{i-1})$, $\canla\parens{q\setminus\set{v_i}}$ is equal to $\canla\parens{q}$.

        \item Between $\pos(q,v_{i-1}) + 1$ and $\pos(q,v_{i})$, $\can\parens{q\setminus\set{v_i}}$ is equal to $\can\parens{q\setminus\set{v_0}}$, which implies that between $\pos(q,v_{i-1}) + 1$ and $\pos(q,v_{i-1}) - 1$, $\canla\parens{q\setminus\set{v_i}}$ is equal to $\canla\parens{q\setminus\set{v_0}}$.

        \item Between $\pos(q,v_{i})$ and $\pos(q,v_{0})$, $\can\parens{q\setminus\set{v_i}}$ is equal to $\can\parens{q\setminus\set{u}}$, which implies that between $\pos(q,v_{i})$ and $\pos(q,v_{0}) - 1$, $\canla\parens{q\setminus\set{v_i}}$ is equal to $\canla\parens{q\setminus\set{u}}$.
    \end{enumerate}
     Since $v_i \notin q\setminus\set{v_i}$ and $v_i \in q, q\setminus\set{v_0}, q\setminus\set{u}$, this concludes the proof.
\end{proof}

\begin{table}
\centering
\begin{tabular}{c c c c}
$\canla\parens{q\setminus\set{v_i}}$ & $\canla\parens{q}$ & $\canla\parens{q\setminus\set{u}}$ & $\canla\parens{q\setminus\set{v_0}}$ \\
\begin{tabular}{|c|c|}
    \hline
    \textcolor{Red}{$v_0$} & \textcolor{Red}{$\ldots$} \\ \hline
    \textcolor{Red}{$\ldots$} & \textcolor{Red}{$v_1$} \\ \hline
    \textcolor{Red}{$v_1$} & \textcolor{Red}{$\ldots$}\\ \hline
    \textcolor{Red}{$\ldots$} & \textcolor{Red}{$v_{i-1}$} \\ \hline  
    \textcolor{Red}{$v_{i-1}$} & \textcolor{Red}{$\ldots$}\\ \hline
    \textcolor{blue}{$\ldots$} & \textcolor{blue}{$u$}\\ \hline   
    \textcolor{blue}{$u$} & \textcolor{blue}{$v_{i+1}$}\\ \hline
    \textcolor{Green}{$v_{i+1}$} & \textcolor{Green}{$\ldots$} \\ \hline   
    \textcolor{Green}{$\ldots$} & \textcolor{Green}{$\bot$} \\ \hline   
    \textcolor{Green}{$\bot$} & \textcolor{Green}{$\ldots$} \\ \hline   
    \textcolor{Green}{$\ldots$} & \textcolor{Green}{$v_0$} \\ \hline   
\end{tabular}
&
\begin{tabular}{|c|c|}
    \hline
    \textcolor{Red}{$v_0$} & \textcolor{Red}{$\ldots$} \\ \hline
    \textcolor{Red}{$\ldots$} & \textcolor{Red}{$v_1$} \\ \hline
    \textcolor{Red}{$v_1$} & \textcolor{Red}{$\ldots$} \\ \hline
    \textcolor{Red}{$\ldots$} & \textcolor{Red}{$v_{i-1}$} \\ \hline   
    \textcolor{Red}{$v_{i-1}$} & \textcolor{Red}{$\ldots$}\\ \hline
    $\ldots$ & \\ \hline   
    $u$ & \\ \hline
    $v_i$ & \\ \hline   
    $\ldots$ & \\ \hline   
    $v_k$ & \\ \hline   
    $\ldots$ & \\ \hline   
\end{tabular}
&
\begin{tabular}{|c|c|}
    \hline
    $v_0$ & \\ \hline
    $\ldots$ & \\ \hline
    $v_1$ & \\ \hline
    $\ldots$ & \\ \hline   
    $v_{i-1}$ & \\ \hline
    $\ldots$ & \\ \hline   
    $v_i$ & \\ \hline
    \textcolor{Green}{$v_{i+1}$} & \textcolor{Green}{$\ldots$} \\ \hline   
    \textcolor{Green}{$\ldots$} & \textcolor{Green}{$\bot$} \\ \hline   
    \textcolor{Green}{$\bot$} & \textcolor{Green}{$\ldots$} \\ \hline   
    \textcolor{Green}{$\ldots$} & \textcolor{Green}{$v_0$} \\ \hline   
\end{tabular}
&
\begin{tabular}{|c|c|}
    \hline
    $v_1$ & \\ \hline
    $\ldots$ & \\ \hline
    $v_2$ & \\ \hline
    $\ldots$ & \\ \hline   
    $v_{i}$ & \\ \hline
    \textcolor{blue}{$\ldots$} & \textcolor{blue}{$u$} \\ \hline   
    \textcolor{blue}{$u$} & \textcolor{blue}{$v_{i+1}$} \\ \hline
    $v_{i+1}$ & \\ \hline   
    $\ldots$ & \\ \hline   
    $\bot$ & \\ \hline   
    $\ldots$ & \\ \hline   
\end{tabular}
\end{tabular}
\caption{Illustrating the proof of Lemma~\ref{lem:nonlinearnoimpl}}
\label{fig:nonlinearnoimpl}
\end{table}

\subsubsection{Linear Assignments are Equivalent to Robin Hood Hashing}
\label{app:robinhood}


For a given hash function $h$ consider the next collection of priorities:

\begin{definition}[Age-rules~\cite{NaorTe01}] 
\label{def:age-rule}
    For every $0\leq i < \ncell$ and elements $v_1,v_2\in \CalU$,
    if $\rank(v_1,i) > \rank(v_2,i)$ then $v_1 >_{p_i} v_2$, and if $\rank(v_1,i) < \rank(v_2,i)$ then $v_1 <_{p_i} v_2$, otherwise $\rank(v_1,i) = \rank(v_2,i)$ (i.e., $h(v_1) = h(v_2)$), the priority is determined according to some fixed total order $<$ on the elements in $\CalU$.
\end{definition}

It is easy to see that using age-rule is equivalent to \emph{Robin Hood hashing}~\cite{CelisLaMu85}, with tie-breaking using some given total order, where the priority is given to elements that are further away from their initial hash location.
Next, we prove that linear assignments \emph{must} use age rules.

The next lemma shows that age rules yield a linear assignment:

\begin{restatable}{lemma}{ageruleslin}
\label{lem:agerules->lin}
    An assignment $\can$ induced by a hash function $h$ and age-rules is linear.
\end{restatable}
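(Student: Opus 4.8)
The plan is to prove linearity straight from the definition: given $q\in Q$ and $v\notin q$, I would show that every $v'\in q$ sits in $\can(q\cup\set{v})$ at most one cell forward of where it sits in $\can(q)$. The idea is to describe the canonical representation one \emph{run} at a time, and then to read off an insertion, run‑locally, as the insertion of a single element into a sorted list.

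First I would nail down the structure of a run under age rules. For a state $P$ and a run $[s,e]$ of $\can(P)$ (a maximal block of occupied cells), every element stored in the run has hash location inside $[s,e]$, since otherwise its probe sequence crosses the empty cell $s-1$ and it would be placed no later than $s-1$. Unwinding Invariant~\ref{inv:order-invar} for age rules, $\can(P)[k']\prio{\geq}{k'}w$ is equivalent to $h(\can(P)[k'])<h(w)$, or $h(\can(P)[k'])=h(w)$ with $\can(P)[k']$ and $w$ in the right position of the fixed order on $\CalU$; hence within a run the hash locations are nondecreasing with cell index (ties broken by that fixed order). Since the multiset of hash locations in the run is fixed, this forces the run's elements to occupy $s,\dots,e$ in the unique order obtained by sorting them by $(h(\cdot),\cdot)$, and conversely this sorted arrangement satisfies the ordering invariant because ``$[s,e]$ is a run'' is exactly the parking condition that makes $h(\can(P)[i])\le i$ throughout the run. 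By uniqueness of the canonical representation (equivalently, by Lemma~\ref{lem:order-invariant}) this is $\can(P)$ restricted to that run.

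Next I would analyze the insertion. Let $f$ be the first empty cell in the probe sequence from $h(v)$; the Robin Hood displacement chain flips exactly cell $f$ from empty to occupied and changes no other cell's status, so the only run of $\can(q\cup\set{v})$ that differs from $\can(q)$ is the run $R'$ now containing $f$ — it is the run $R=[s,f-1]$ through $h(v)$ (possibly empty, if $h(v)$ is itself empty) with cell $f$ appended, together with a neighboring run merged in if the cell just past $f$ (or, in the degenerate case where $h(v)$ is empty, the cell just before $h(v)$) was occupied. An element of a merged‑in neighbor has hash location strictly on the appropriate side of $f$, so in the sorted‑by‑$(h(\cdot),\cdot)$ order of $R'$ it still occupies exactly the same cells as before and does not move. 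An element $v'\in q$ that lay at cell $s+(j-1)$ of $R$ (so $j$ is its rank in $R$'s sorted order) now lies inside $[s,f]$, which holds the elements of $R$ together with $v$, again sorted by $(h(\cdot),\cdot)$; since inserting one element into a sorted list shifts every existing element's rank by $0$ or $1$, $v'$ moves to cell $s+(j-1)$ or $s+j$. Everything else is unchanged, so every $v'\in q$ moves forward by at most one cell — which is exactly linearity.

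I expect the main obstacle to be the bookkeeping in the second step: arguing cleanly that only cell $f$ changes status, tracking which neighbor (if any) merges with $R$ and on which side, and handling the degenerate case where $h(v)$ is empty; together with making the tie‑break in the first step airtight, so that ``the ordering invariant restricted to a run'' coincides precisely with ``sorted by $(h(\cdot),\cdot)$''. The remaining ingredients — uniqueness of the canonical representation, the elementary fact about inserting into a sorted list, and (should one instead prefer a displacement‑chain argument) the per‑cell monotonicity already available in Lemmas~\ref{lem:consistentprio} and~\ref{lem:monotone-priority} — are all routine.
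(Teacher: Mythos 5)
Your proposal is correct, but it takes a genuinely different route from the paper. The paper argues by contradiction through the deletion-based reformulation of linearity (Lemma~\ref{lem:linassigeq}): assuming some element moves up by two or more cells when $v$ is deleted, it looks at the deletion chain $v=v_0,\ldots,v_k$, picks an element $u$ lying strictly between two consecutive chain members, argues that $u$ must sit at its own hash location, and derives a contradictory chain of rank inequalities of the form $0=\rank(u,\cdot)>\rank(v_i,\cdot)\geq 0$ --- a purely local argument that never needs a global description of $\can$. You instead work directly with the insertion definition and first establish a structural characterization of Robin Hood tables: each maximal run consists exactly of the elements hashing into it, arranged in sorted order by hash location with a fixed tie-break, so that inserting $v$ changes only the run absorbing the first empty probe cell $f$, where existing elements' ranks in the sorted order shift by $0$ or $1$, and any merged-in neighboring run is untouched because its elements hash strictly past $f$ (or strictly before $h(v)$ in the degenerate case). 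Your route costs more setup --- the run characterization, the fact that the occupied-cell set grows by exactly $\set{f}$, and uniqueness via Lemma~\ref{lem:order-invariant} --- but it buys a stronger, reusable statement: a complete description of how $\can$ changes under a single insertion, from which both the insertion and deletion forms of linearity are immediate, whereas the paper's argument is shorter but yields only the lemma itself. The loose ends you flag are indeed routine; the two points to nail are that ties within a run are ordered with the larger (higher-priority) element first, and that the empty cell bounding a run forces every element of the run to hash into the segment between the run's start and its own cell, which is exactly what lets you read the ordering invariant as monotonicity of hash locations along the run and justifies the parking condition in the converse direction.
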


\begin{proof}
    Assume the priority functions are determined by age rules and $\can$ is not linear.
    Then there are $q\in Q$, $q \neq \emptyset$ and $v \in q$, and element $v' \in q \setminus \set{v}$ that moves up at least two cells in $\can\parens{q}$ compared to $\can\parens{q\setminus \set{v}}$.
    Deleting element $v$ from $q$ results in a chain of elements that move up along the table, $v = v_0, \ldots, v_k$, $k\geq 1$.
    By the assumption, element $v'$ is part of the chain, hence, $v_i = v'$ for $1\leq i \leq k$ and $\dist(\pos(q,v_{i-1}), \pos(q,v_{i})) > 1$.
    Let $u$ be the element in $\pos(q,v_{i-1}) - 1$. Since $u$ is not part of the deletion chain and is positioned between $v_{i-1}$ and $v_i$, $\rank(u, \pos(q,v_{i-1})) > \rank(u, \pos(q,v_{i-1}) - 1)$. This can only happen if $h(u) = \pos(q,v_{i-1}) - 1 = \pos(q,u)$.
    In addition, since the value $v_{i-1}$ is replaced with $v_i$ in the deletion chain, $\rank(v_i, \pos(q,v_{i-1})) < \rank(v_i,\pos(q,v_{i}))$.
    In the deletion of $v_i$ from $q$, $u$ stays in the same position in $\can(q\setminus\set{v_i})$ as in $\can(q)$. Since
    $\rank(v_i, \pos(q,v_{i-1}) - 1) < \rank(v_i, \pos(q,v_{i}))$, in the insertion of $v_i$ to $q\setminus\set{v_i}$ the insertion process examines $u$ and finds that $u >_{\pos(q,v_{i-1}) - 1} v_i$. Therefore, by the choice of priority functions, 
    $\rank(u,\pos(q,v_{i-1}) - 1) > \rank(v_i,\pos(q,v_{i-1}) - 1)$.
    However, we get that  $0 = \rank(u,\pos(q,v_{i-1}) - 1) > \rank(v_i,\pos(q,v_{i-1}) - 1) > \rank(v_i,\pos(q,v_{i-1})) \geq 0$, in contradiction.
\end{proof}

Next, we show that a choice of priorities other than 
age-rules yields a non-linear assignment:

\begin{restatable}{lemma}{linagerules}
\label{lem:lin->agerules}
    Consider a linear assignment $\can$,
    for any $q\in Q$ and $u,v \neq \bot$, if $u$ is in position $i$ in $\can(q)$ and $v$ is in position $i+1 \neq h(v)$ in $\can(q)$, then $\rank(u,i) \geq \rank(v,i)$.
\end{restatable}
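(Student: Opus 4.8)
\textbf{Proof plan for Lemma~\ref{lem:lin->agerules}.}

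The plan is to argue by contradiction: suppose $\can$ is linear, but there is a state $q$, a position $i$, and elements $u, v \neq \bot$ with $u$ in position $i$ of $\can(q)$, $v$ in position $i+1 \neq h(v)$ of $\can(q)$, yet $\rank(u,i) < \rank(v,i)$. Since $v$ is not in its hash location (position $i+1 \neq h(v)$), linear probing guarantees that $v$'s hash location $h(v)$ lies strictly before position $i+1$, and by the ordering invariant of the underlying linear-probing scheme, every cell between $h(v)$ and $i+1$ is occupied; in particular cell $i$ (occupied by $u$) lies on $v$'s probe sequence, so $\rank(v,i) = \rank(v,i+1) - 1 \geq 0$ makes sense and $h(v) \leq i$. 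The key is to exploit the gap $\rank(u,i) < \rank(v,i)$: intuitively $u$ "shouldn't" be sitting in front of $v$ under a linear assignment, because $u$ is closer to its own hash location than $v$ is to position $i$.

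The core of the argument is to find a single deletion that pulls some element up by two or more cells, contradicting linearity (or its equivalent formulation via Lemma~\ref{lem:linassigeq}). First I would identify the run containing positions $i$ and $i+1$ and consider deleting an element to create a hole just before where $u$ sits. The natural candidate: delete the element occupying position $h(v)$ (or walk back to the start of the relevant maximal occupied segment), forcing a cascade of shifts. I want to set things up so that, in $\can(q \setminus \set{w})$ for an appropriate $w$, element $u$ would move up by exactly one cell (from $i$ to $i-1$) while element $v$ would move up by two cells (from $i+1$ to $i-1$) — which is impossible because two distinct elements cannot occupy the same cell, OR, more cleanly, so that $v$ moves up by two cells while $u$ does not move at all, directly violating the one-cell bound. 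Concretely: because $\rank(u,i) < \rank(v,i)$, when we delete to create a hole at position $i$, the reinsertion/shift logic of linear probing will compare $u$ and $v$ (both candidates to fill the hole that propagates up to position $i$), and the displaced-priority rule puts $v$ ahead — so $v$ fills position $i$, then the hole at $i+1$ must be filled by whatever was at $i+2$, and so on, but crucially $u$ has now been left behind at its current position and a later step in the cascade may move it only to accommodate something else. I would instead delete the element at position $i$'s predecessor in the run to get a hole that, when propagated, lets $v$ leapfrog into an earlier cell than $u$'s new cell — forcing $v$ to travel two cells.

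Rather than re-deriving the cascade combinatorics from scratch, the cleanest route is probably to mirror the structure of the proof of Lemma~\ref{lem:agerules->lin} (which handled the converse direction, age-rules $\Rightarrow$ linear) and Lemma~\ref{lem:nonlinearnoimpl}. So the plan is: (1) from $\rank(u,i) < \rank(v,i)$ and $h(v) \leq i$, deduce $h(u) > $ some appropriate bound, placing $u$'s hash location inside the interval $(h(v), i]$; (2) delete an element $w$ positioned so that $w$ lies between $h(v)$ and $u$'s current location, creating a deletion chain $w = w_0, w_1, \ldots$; (3) show using the priority/ordering rules that this chain, upon reaching $v$, shifts $v$ up while $u$ either does not move or moves only one cell but ends up in a cell at distance $\geq 2$ behind $v$'s new position relative to the shift — contradicting that each element moves at most one cell. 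The main obstacle I anticipate is getting the bookkeeping exactly right on \emph{which} element to delete and \emph{which} element ends up moving two cells: the gap hypothesis $\rank(u,i) < \rank(v,i)$ has to be converted into a concrete displacement-comparison during a cascade, and I need to be careful that the ordering invariant of the base linear-probing scheme actually forces the cells between $h(v)$ and $i$ to be full so that the cascade really does reach both $u$ and $v$. Once the right deletion is pinned down, the contradiction with linearity (via Lemma~\ref{lem:linassigeq}, statement~1) should follow by the same kind of rank-inequality chain $0 \leq \cdots < \cdots < \cdots \leq 0$ used at the end of Lemma~\ref{lem:agerules->lin}.
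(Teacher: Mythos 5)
You have the right overall strategy --- assume $\rank(u,i) < \rank(v,i)$ and derive a contradiction with linearity by exhibiting a deletion after which some element must move up by two cells --- and this is indeed the shape of the paper's argument. But the step you yourself flag as unresolved (``which element to delete and which element ends up moving two cells'') is exactly the missing idea, and none of your candidate single deletions closes it. If you delete the element at $h(v)$, or the predecessor of $u$ in the run, and $u$ is \emph{not} at its hash location, then $u$ simply absorbs the freed cell and the whole suffix (including $v$) shifts up by one cell; no element moves two cells, so no contradiction is reached. The scenario you sketch where ``$u$ moves one cell and $v$ moves two cells into the same cell'' also does not arise from a single deletion in the original table, and the ``two elements in one cell'' framing is not how the contradiction can be obtained --- a two-cell move by $v$ alone is already the violation you need.

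The paper resolves this with a normalization step before the decisive deletion: from $\rank(u,i)<\rank(v,i)$ it follows that $h(v)$ lies strictly before $h(u)$, and all cells between $h(v)$ and $i+1$ are occupied. First delete \emph{all} elements occupying cells between $h(u)$ and $i-1$; in the canonical representation of the resulting state $q'$, $u$ sits at its hash location $h(u)$ and $v$ sits at $h(u)+1$ (it cannot be placed later, since its rank there is smaller than at $i$). Only now delete the single element at cell $h(u)-1$ (which exists because $h(v)\le h(u)-1$). Since $u$ is at its hash location it cannot move into the vacated cell, while $v$, whose hash location is strictly earlier, cannot remain at $h(u)+1$ with $h(u)-1$ empty without violating the ordering invariant of the canonical representation; hence in $\can(q'')$ element $v$ moves up at least two cells, contradicting linearity (Lemma~\ref{lem:linassigeq}). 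So the essential device is pinning $u$ at $h(u)$ via the preliminary deletions (equivalently, one could induct on $\rank(u,i)$, deleting $u$'s predecessor one step at a time); without it your plan stalls exactly where you predicted it would.
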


\begin{proof}
    Assume there is a $q\in Q$ and $u,v \neq \bot$ such that $u$ is in position $i$ in $\can(q)$, $v$ is in position $i+1 \neq h(v)$ in $\can(q)$ and $\rank(u,i) < \rank(v,i)$. 
    All positions between $h(u)$ and $i$ must be occupied by some element. Similarly, all positions between $h(v)$ and $i+1$ must be occupied by some element. Let $q'$ be the state resulting from deleting all elements between $h(u)$ and $i - 1$ in $\can(q)$. This results in $u$ being in position $h(u)$ in $\can(q')$ and $v$ in position $h(u) + 1$ in $\can(q')$ since $v$ is not in its hash location in $\can(q)$ and $\rank(v,h(u) + 1) < \rank(v,i)$. 
    Let $q''$ be the state resulting from deleting the element in position $h(u)+1$ in $\can(q')$.
    Since $u$ is in its hashing position in $q'$ it will not go up a cell in $\can(q'')$. It must hold that $\rank(v, h(u)-1) < \rank(v, h(u) + 1)$ and $v$ prefers to be two positions above its current position in $\can(q')$. However, this contradicts the linearity of the assignment.  
\end{proof}




\section{Related Work}
\label{sec:related}

\paragraph{History-independent data structures.}
Efficient sequential history-independent data structure constructions include
fast HI constructions for cuckoo hash tables~\cite{NaorSeWi08}, 
linear-probing hash tables~\cite{BlellochGo07,GoodrichKoMi17}, 
other hash tables~\cite{BlellochGo07,NaorTe01}, 
trees~\cite{Micciancio97,AcarBlHa04},
memory allocators~\cite{NaorTe01,GoodrichKoMi17}, 
write-once memories~\cite{MoranNaSe07},
priority queues~\cite{BuchbinderPe03}, 
union-find data structures~\cite{NaorTe01}, 
external-memory dictionaries~\cite{Golovin08,Golovin09,Golovin10,BenderBeJo16}, 
file systems~\cite{BajajSi13b, BajajSi13a,BajajChSi15, RocheAvCh15},
cache-oblivious dictionaries~\cite{BenderBeJo16}, 
order-maintenance data structures~\cite{BlellochGo07},
packed-memory arrays/list-labeling data structures~\cite{BenderBeJo16,BenderCoFa22},
and geometric data structures~\cite{Tzouramanis12}.
Given the strong connection between history independence and unique representability~\cite{HartlineHoMo02,HartlineHoMo05}, 
some earlier data structures 
can also be made history independent, 
including 
hashing variants~\cite{AmbleKn74,CelisLaMu85}, 
skip lists~\cite{Pugh90}, treaps~\cite{AragonSe89}, 
and other less well-known deterministic data structures 
with canonical representations~\cite{SundarTa90,AnderssonOt91,AnderssonOt95,PughTe89,Snyder77}.
There is work on the time-space tradeoff for strongly history-independent hash tables, 
see e.g.,~\cite{Kus23,LLYZ24, LLYZ23} and references therein.

The algorithmic work on history independence has 
found its way into systems.
There are now voting machines~\cite{BethencourtBoWa07}, 
file systems~\cite{BajajSi13b,BajajChSi16,BajajChSi15}, 
databases~\cite{BajajSi13a,PoddarBoPo16,RocheAvCh16},
and other storage systems~\cite{ChenSi15} that support 
history independence as an essential feature.

\paragraph{History independence for concurrent data structures.}
To our knowledge, the only work to study history independence in a fully concurrent setting is~\cite{AtBeFaCoOsSc24},
which investigated several possible definitions for history independence in a concurrent system that might never quiesce.
Several lower bounds and impossibility results are proven in~\cite{AtBeFaCoOsSc24}.
In particular, it is shown that if the observer is allowed to examine the memory at any point in the execution,
then in any concurrent implementation that is obstruction free,
for any two logical states $q, q'$ of the ADT
such that some operation can cause a transition from $q$ to $q'$,
the memory representations of $q, q'$
in the implementation must differ by exactly one base object.
This rules out open-addressing hash tables, unless their size is $|\CalU|$,
because it does not allow us to move elements around.

\paragraph{Sequential hash tables.}
There are several popular design paradigms for hash tables, including
probing, both linear and other probing patterns (e.g.,~\cite{AmbleKn74,Knuth63,CelisLaMu85,HopgoodDa72,Maurer68});
cuckoo hashing~\cite{PaghRodlerCuckooHash};
and chained hashing~\cite{Kruse84}.
These can all be made history independent~\cite{NaorSeWi08,NaorTe01,BlellochGo07,GoodrichKoMi17},
typically by imposing a canonical order on the elements
in the hash table.
This involves moving elements around,
which can be challenging for concurrent implementations.
Notably, for sequential history-independent 
hash tables,
history independence does not come at the cost of increasing the width of memory cells---it suffices to
store one element per cell;
our work shows that in the concurrent setting, storing one element per cell is not enough.

\paragraph{Concurrent hash tables.}
There is a wealth of literature on 
non-history-independent
concurrent hash tables;
we discuss here only the
work directly related to ours,
which includes concurrent implementations of linear probing hash tables in general, and Robin Hood hashing in particular.
We note that while~\cite{AtBeFaCoOsSc24} gives a universal wait-free history-independent construction that can be used to implement any abstract data type, including a dictionary,
the construction does not allow for true concurrency:
processes compete to perform operations,
with only one operation going through at a time.
In addition,
the universal construction of~\cite{AtBeFaCoOsSc24} uses memory cells whose width is linear in the state of the data type (in the case of a dictionary, the size of the set that it represents),
and in the number of processes.

There are several implementations of concurrent linear-probing hash tables, e.g.,~\cite{SSBHW10,GGH05}.
It is easier to implement a concurrent linear-probing hash table if one is not concerned with history independence: 
since the elements do not need to be stored in a canonical order, we can simply place newly-inserted elements in the first empty cell we find, and we do not need to move elements around. Tombstones can be used to mark deleted elements, thus avoiding the need to move elements once inserted.  

As for Robin Hood hashing, which gained popularity due to its use in the Rust programming language,
there are not many concurrent implementations.
Kelly et al.~\cite{KellyPM2018} present a lock-free concurrent Robin Hood hash table 
that is not history independent.
Their design is based on a $k$-CAS primitive,
which is then replaced with an optimized implementation 
from CAS~\cite{ArbelravivBrown2017}.
This implementation uses standard synchronization techniques that require large memory cells:
timestamps, operation descriptors, and so on.
Bolt~\cite{Kahssay2021} is a concurrent resizable Robin Hood hash table,
with a lock-free fast path and a lock-based slow path.
This implementation is not lock-free, although it does avoid the use of locks in lightly-contended cases (the fast path).


Shun and Blelloch~\cite{ShunBlellochSPAA14} present a \emph{phase-concurrent} deterministic 
hash table based on linear probing, which supports one type of operation within a synchronized phase. Different types of operations (inserts, deletes and lookups)
may not overlap.
The hash table of~\cite{ShunBlellochSPAA14} is based
on the same history-independent scheme used in~\cite{NaorTe01, BlellochGo07}.
In this scheme, each memory cell stores a single element;
this does not contradict our impossibility result, because none of these implementations support overlapping lookups and operations that modify the hash table (inserts and/or deletes).




\section*{Acknowledgments}
Hagit Attiya is partially supported by the Israel Science Foundation (grant number 22/1425).
Rotem Oshman and Noa Schiller are funded by NSF-BSF Grant No. 2022699.
Michael Bender and Mart{\'\i}n Farach-Colton are funded by NSF Grants 
CCF-2420942, CCF-2423105, 
CCF-2106827, 
CCF-2247577. 


\bibliographystyle{plain}
\bibliography{cite}

\end{document}